\definecolor{medgreen}{rgb}{0.0, 0.75, 0.0}
\theoremstyle{definition}
\newtheorem{theorem}{Theorem}[section]
\newtheorem{fact}[theorem]{Fact}
\newtheorem{proposition}[theorem]{Proposition}
\newtheorem{corollary}[theorem]{Corollary}
\newtheorem{example}[theorem]{Example}
\newtheorem{definition}[theorem]{Definition}
\newtheorem{lemma}[theorem]{Lemma}
\newtheorem{remark}[theorem]{Remark} 
\newtheorem{conjecture}[theorem]{Conjecture} 
\begin{document} 
  
 \title{Split Cycle: A New Condorcet Consistent Voting Method Independent of Clones and Immune to Spoilers}
 \author{Wesley H. Holliday$^\dagger$ and Eric Pacuit$^\ddagger$ \\ \\
 $\dagger$ University of California, Berkeley {\normalsize (\href{mailto:wesholliday@berkeley.edu}{wesholliday@berkeley.edu})} \\
$\ddagger$ University of Maryland {\normalsize (\href{mailto:epacuit@umd.edu}{epacuit@umd.edu})}}
 
 \date{{\normalsize Published in \textit{Public Choice}, Vol.~197, 1-62, 2023.}}
\maketitle

\begin{abstract} We propose a Condorcet consistent voting method that we call Split Cycle. Split Cycle belongs to the small family of known voting methods satisfying the anti-vote-splitting criterion of \textit{independence of clones}. In this family, only Split Cycle satisfies a new criterion we call \textit{immunity to spoilers}, which concerns adding candidates to elections, as well as the known criteria of \textit{positive involvement} and \textit{negative involvement}, which concern adding voters to elections. Thus, in contrast to other clone-independent methods, Split Cycle mitigates both ``spoiler effects'' and ``strong no show paradoxes.''\end{abstract}

\tableofcontents

\section{Introduction}\label{Introduction}

A voting method is Condorcet consistent if in any election in which one candidate is preferred by majorities to each of the other candidates, this candidate---the Condorcet winner---is the unique winner of the election. Condorcet consistent voting methods form an important class of methods in the theory of voting (see, e.g., \citealt{Fishburn1977}; \citealt[\S~8]{Brams2002}; \citealt[\S~2.4]{Zwicker2016}; \citealt[\S~3.1.1]{Pacuit2019}). Although Condorcet methods are not currently used in government elections, they have been used by several private organizations (see \citealt{wikiCondorcet}) and in over 30,000 polls through the Condorcet Internet Voting Service (\href{https://civs.cs.cornell.edu}{https://civs.cs.cornell.edu}). Recent initiatives in the U.S. to make available Instant Runoff Voting (\citealt{Kambhampaty2019}), which uses the same ranked ballots needed for Condorcet methods, bring Condorcet methods closer to political application. Indeed, Eric Maskin and Amartya Sen have recently proposed the use of Condorcet methods in U.S. presidential primaries (\citealt{Maskin2016,Maskin2017a,Maskin2017b}). In the meantime, Condorcet methods continue to be used by committees, clubs, etc.

In this paper, we propose a Condorcet consistent voting method that we call Split Cycle, which has a number of attractive axiomatic properties.\footnote{After submitting this paper, we learned from Jobst Heitzig of his notion of the ``immune set'' discussed in a 2004 post on the Election-Methods mailing list (\citealt{Heitzig2004}), which is equivalent to the set of winners for Split Cycle after replacing `stronger' with `at least as strong' in Heitzig's definition in the post. See Remark \ref{HeitzigRemark} for further connections with \citealt{Heitzig2002}. We subsequently learned from Markus Schulze of Steve Eppley's notion of the ``Beatpath Criterion Method'' in a 2000 post on the Election-Methods mailing list (\citealt{Eppley2000}), which is defined analogously to Split Cycle except that it measures strength of majority preference using winning votes (the number of voters who rank $x$ above $y$) whereas Split Cycle uses the margin of victory (the number of voters who rank $x$ above $y$ minus the number of voters who rank $y$ above $x$). As far as we know, the Split Cycle voting method has not been studied in the research literature. In a companion paper, \citealt{HP2021}, we study Split Cycle as what is known as a \textit{collective choice rule} in the social choice theory literature.} Split Cycle responds to a concern well expressed by a 2004 letter to the Washington Post sent by a local organizer of the Green Party, as quoted by Miller \citeyearpar[p.~119]{Miller2019}:
\begin{quote}
[Electoral engineering] isn't rocket science. Why is it that we can put a man on the moon but can't come up with a way to elect our president that allows voters to vote for their favorite candidate, allows multiple candidates to run and present their issues and\dots [makes] the `spoiler' problem\dots go away?
\end{quote}
Starting with the problem of spoilers, Split Cycle satisfies not only the \textit{independence of clones} criterion proposed by Tideman \citeyearpar{Tideman1987} as an anti-vote-splitting criterion but also a new criterion we call \textit{immunity to spoilers} that rules out spoiler effects not ruled out by independence of clones. What the Green Party organizer meant by a voting method that ``allows voters to vote for their favorite candidate'' is open to multiple interpretations; if it means a reasonable voting method that never provides an incentive for strategic voting, as Miller takes it to mean, then such a method is unavailable by well-known theorems on strategic voting (see \citealt{Gibbard1973}, \citealt{Satterthwaite1973}, \citealt{Taylor2005}). More modestly, one may ask for a voting method such that at the very least, voters will never cause their favorite candidate to be defeated by going to the polls and expressing that their favorite candidate is their favorite. Understood this way, one is asking for a voting method that satisfies the criterion of \textit{positive involvement} (\citealt{Saari1995}). Split Cycle satisfies this criterion, as well as a number of other desirable criteria, including the Condorcet loser criterion, independence of Smith-dominated alternatives, negative involvement, non-negative responsiveness, reversal symmetry, and a criterion concerning the possibility of ties among winners that we call rejectability. In fact, Split Cycle can be distinguished from all voting methods we know of in any of the following three ways:
\begin{itemize}
\item Only Split Cycle satisfies independence of clones, positive involvement, and at least one of  Condorcet consistency, non-negative responsiveness, and immunity to spoilers.\footnote{Among proposed non-Condorcet methods, we believe only Instant Runoff satisfies both independence of clones and positive involvement, but it fails the non-negative responsiveness criterion, which Split Cycle satisfies, as well as immunity to spoilers and negative involvement (see Appendix \ref{RankedChoiceAppendix}).} 
\item Only Split Cycle satisfies independence of clones and negative involvement.
\item Only Split Cycle satisfies independence of clones, immunity to spoilers, and rejectability.
\end{itemize}

Split Cycle is an example of a head-to-head (or pairwise) voting method. We compare each pair of candidates $a$ and $b$ in a \textit{head-to-head match}. If more voters rank $a$ above $b$ than rank $b$ above $a$, then $a$ \textit{wins} the head-to-head match and $b$ \textit{loses} the head-to-head match. If $a$ wins against $b$, then the number of voters who rank $a$ above $b$ minus the number who rank $b$ above $a$ is $a$'s \textit{margin of victory over $b$}. If one candidate wins its matches against all other candidates, that candidate is the winner of the election. But there is a chance that every candidate will lose a match to some other candidate.\footnote{It is also possible that while no one candidate wins its matches against all others, there is at least one candidate who wins \textit{or ties} its matches against all others, where a match between $a$ and $b$ is tied if the same number of voters rank $a$ above $b$ as rank $b$ above $a$. All such candidates will count as \textit{undefeated} according to the definition of Split Cycle below.} When this happens, there is a \textit{majority cycle}: a list of candidates where each candidate wins against the next in the list, and the last candidate wins against the first. For example, candidates $a,b,c$ form a majority cycle if $a$ wins against $b$, $b$ wins against $c$, and $c$ wins against $a$ (for a real election with such a cycle, see the 2021 Minneapolis City Council election at \href{https://github.com/voting-tools/election-analysis}{https://github.com/voting-tools/election-analysis}). There can also be cycles involving more than three candidates.

Split Cycle deals with the problem of majority cycles as follows:\footnote{There is a more computationally efficient way to calculate the Split Cycle winners (see Footnote \ref{FloydWarshall}), but this simple two-step procedure is appropriate for explaining the method to voters.}
\begin{enumerate}
\item In each cycle, identify the head-to-head win(s) with the smallest margin of victory in that cycle.
\item After completing step 1 for all cycles, discard the identified wins. All remaining wins count as \textit{defeats} of the losing candidates.
\end{enumerate}
For example, if $a$ wins against $b$ by 1,000 votes, $b$ wins against $c$ by 2,000 votes, and $c$ wins against $a$ by $3,000$ votes, then $a$'s win against $b$ is discarded. Candidate $b$'s win against $c$ counts as a defeat of $c$ unless it appears in another cycle (involving some other candidates) with the smallest margin of victory in that cycle. The same applies to $c$'s win against $a$.  Crucially, after step 2, there is always an \textit{undefeated} candidate (see Section \ref{Defining}). If there is only one, that candidate wins the election. If there is more than one, then a tiebreaker must be used (see Section \ref{Comparison}).

In the rest of this introduction, we provide additional background to the benefits of Split Cycle: we spell out the problem of ``spoiler effects'' that the independence of clones and immunity to spoiler criteria mitigate (Section \ref{SpoilerSection}), followed by the ``strong no show paradox'' that the positive involvement and negative involvement criteria rule out (Section \ref{NoShowSection}). We then provide a roadmap of the rest of the paper in Section~\ref{Roadmap}. 

\subsection{The Problem of Spoilers}\label{SpoilerSection}

Let us begin with one of the most famous recent examples of a spoiler effect in a U.S. election.

\begin{example}\label{BushNaderGore} In the 2000 U.S. Presidential election in Florida, run using the Plurality voting method, George W. Bush, Al Gore, and Ralph Nader received the following votes:
\begin{center}
\begin{minipage}{2in}\begin{tabular}{ccc}
$2,912,790$ & $2,912,253$    &  $97,488$  \\\hline
Bush & Gore  &  Nader  \\
\end{tabular}\end{minipage}
\end{center}
It is reasonable to assume that if Nader had dropped out before Election Day, then a sufficiently large number of his 97,488 voters would have voted for Gore so that Gore would have won the election (\citealt{Magee2003}, \citealt{Herron2007}). It is also reasonable to assume that while for some Gore voters, Nader may have been their favorite but they strategically voted for Gore, still many more voters preferred Gore to Nader than vice versa. So Nader would pose no direct threat to Gore in a two-person election, but by drawing enough votes away from Gore in the three-person election, he handed the election to Bush. Thus,  Nader ``spoiled'' the election for Gore.\end{example}

In elections where voters submit rankings of the candidates, rather than only indicating their favorite, we can give  precise content to the claim that one candidate spoiled the election for another. One possible formalization uses Tideman's \citeyearpar{Tideman1987} criterion of \textit{independence of clones}. A set $C$ of two or more candidates is a set of clones in an election with ranked ballots if no candidate outside of $C$ appears in between two candidates from $C$ on any voter's ballot. Suppose, for example, that if we had collected ranked ballots in the 2000 Florida election, the results would have been as follows:
\begin{center}
\begin{minipage}{2in}\begin{tabular}{ccc}
$2,912,790$ & $2,912,253$    &  $97,488$  \\\hline
Bush & Gore  &  Nader  \\
Gore & Nader & Gore \\
Nader & Bush & Bush
\end{tabular}\end{minipage}
\end{center}
In this imaginary election with ranked ballots, $\{\mbox{Gore}, \mbox{Nader}\}$ is a set of clones, because Bush never appears between Gore and Nader on any ballot. The independence of clones criterion says (in part) that a non-clone candidate---in this case, Bush---should win in an election if and only if they would win after the removal of a clone from the election. But if we remove Nader, who is a clone of Gore, we obtain the following election:
\begin{center}
\begin{minipage}{2in}\begin{tabular}{ccc}
$2,912,790$ & $2,912,253$    &  $97,488$  \\\hline
Bush & Gore  &  Gore  \\
Gore & Bush  & Bush
\end{tabular}\end{minipage}
\end{center}
In this election, Gore wins. Thus, this imaginary example shows that the Plurality voting method violates independence of clones. Of course, independence of clones would not literally account for the sense in which Nader spoiled the 2000 Florida election for Gore, even if we had in fact collected ranked ballots. For surely some ballots would have had Bush in between Gore and Nader. 

Next we give an example of a spoiler effect that cannot be captured by independence of clones.

\begin{example}\label{IRVExample} This example involving Instant Runoff Voting (IRV), also known as Ranked Choice Voting, the Alternative Vote, or the Hare method, comes from ElectionScience.org \\ (\href{https://www.electionscience.org/library/the-spoiler-effect/}{https://www.electionscience.org/library/the-spoiler-effect/}, accessed 2/19/2020) as ``a simplified approximation of what happened in the 2009 IRV mayoral election in Burlington, Vermont.'' Consider the following election for two candidates, a Democrat $d$ and a Progressive $p$ (where we split the voters who prefer $d$ to $p$ into two columns for comparison with the table to follow):
\begin{center}
\begin{tabular}{ccc}
$37$ & $29$    &  $34$  \\\hline
$d$ & $d$  &  $p$  \\
$p$ & $p$  &  $d$  \\
\end{tabular}
\end{center}
For two candidates, Instant Runoff is simply Majority Voting, so the Instant Runoff winner is $d$. But now suppose an additional Republican candidate $r$ joins the race:
\begin{center}
\begin{tabular}{ccc}
$37$ & $29$    &  $34$  \\\hline
$r$ & $d$  &  $p$  \\
$d$ & $p$  &  $d$  \\
$p$ & $r$  &  $r$  
\end{tabular}
\end{center}
Instant Runoff works by first removing the candidate who received the fewest first place votes---in this case, candidate $d$---from all ballots, resulting in the following:
\begin{center}
\begin{tabular}{ccc}
$37$ & $29$    &  $34$  \\\hline
$r$ & $p$  &  $p$  \\
$p$ & $r$  &  $r$  \\
\end{tabular}
\end{center}
Now $p$ has a majority of first place votes, so $p$ is declared the Instant Runoff winner. Note, however, that in the three-person election, $d$ was the Condorcet winner: a majority of voters (66) prefer $d$ to $p$, and a majority of voters (63) prefer $d$ to $r$. Yet the addition of $r$ kicks $d$ out of the winning spot and results in $p$ being the Instant Runoff winner. Thus, $r$ spoiled the election for $d$.

The independence of clones criterion cannot account for the sense in which $r$ spoiled the election for $d$, because $r$ is not a clone of any candidate. Moreover, Instant Runoff satisfies independence of clones. Thus, independence of clones does not address all spoiler effects. A more recent example occurred in the 2022 Special General Election for U.S. Representative in Alaska on August 16, 2022: had one of the Republicans, Palin, not been on the ballot, then (holding voter rankings fixed) the other Republican, Begich, would have won; moreover,  a majority of voters ranked Begich above Palin; and yet with Palin included, Instant Runoff elected the Democrat in the race, making Palin a spoiler (see \href{https://github.com/voting-tools/election-analysis}{https://github.com/voting-tools/election-analysis}). Below we will propose a criterion of \textit{immunity to spoilers} that accounts for cases like this and that of the Burlington mayoral election.\end{example}

\noindent The cost of using a voting method that allows spoiler effects is not just that elections will actually be spoiled, as in Examples \ref{BushNaderGore} and \ref{IRVExample} (for discussion of the 2016 U.S. Presidential election, see \citealt{Kurrild-Klitgaard2018}, \citealt{Woon2020}, and \citealt{Potthoff2021}, and for examples outside of the U.S., see \citealt[\S~20.3.2]{Kaminski2015} and \citealt{Feizi2020}). Another cost is that potential candidates may be discouraged from entering close races in the first place on the grounds that they might be spoilers. 

What kind of ``spoiler effects'' should we try to prevent? This question mixes the conceptual question of what a ``spoiler'' is and the normative question of what effects we should prevent. Note that here we are dealing only with spoiler effects in single-office elections, as matters are more complicated in multi-office elections (see, e.g., \citealt{Kaminski2018}).

First consider an obviously flawed definition of a spoiler: $b$ is a ``spoiler'' for $a$ just in case $a$ would win without $b$ in the election, but when $b$ joins, then $b$ but not $a$ wins. This is of course not the relevant notion, since spoilers are not winners.

Thus, consider a second definition: $b$ is a ``spoiler'' for $a$ just in case  $a$ would win without $b$ in the election, but when $b$ joins, neither $a$ nor $b$ wins. It is clearly necessary, in order for $b$ to be a spoiler for $a$, that neither $a$ nor $b$ wins after $b$ joins, but is it sufficient? Whether or not it is sufficient according to the ordinary concept of a spoiler, we do not think that we should prevent \textit{all} such effects.\footnote{Thus, we think it is too strong to require that a voting method satisfy the condition known as the \textit{A\"{i}zerman property} (\citealt[p.~41]{Laslier1997}) or \textit{weak superset property} or $\widehat{\alpha}_\subseteq$ (\citealt{Brandt2018}), which is equivalent to the condition that if $a$ would win were no candidate from a set $N$ in the election, then after the candidates in $N$ (the ``newcomers'') join the election, if none of the candidates in $N$ wins, then $a$ still wins. For the same reason, we think it is too strong to require that a voting method satisfy the \textit{strong candidate stability} property (studied for resolute voting methods in \citealt{Dutta2001} and \citealt{Ehlers2003} and generalized to irresolute methods in \citealt{Eraslan2004} and \citealt{Rodriguez-Alvarez2006}), which implies that if $b$ would not win were $b$ to join the election, then $a$ would win with $b$ in the election if and only if $a$ would win without $b$ in the election (cf.~$\widehat{\alpha}$ in \citealt{Brandt2018}). The problem with these conditions is that they ignore the majority preference relations between $a$ and the new candidates, which our condition of immunity to spoilers takes into account.} Consider the following example, where the diagram on the right indicates that, e.g., the number of voters who prefer $a$ to $c$ is one greater than the number who prefer $c$ to $a$:
\begin{center}
\begin{minipage}{2in}\begin{tabular}{ccc}
$2$ & $3$ & $4$   \\\hline
$b$ & $a$ &  $c$ \\
$a$ &  $c$ & $b$ \\
$c$ &  $b$ &  $a$ \\
\end{tabular}\end{minipage}\begin{minipage}{2in}\begin{tikzpicture}

\node[circle,draw, minimum width=0.25in] at (0,0) (a) {$a$}; 
\node[circle,draw,minimum width=0.25in] at (3,0) (b) {$b$}; 
\node[circle,draw,minimum width=0.25in] at (1.5,1.5) (c) {$c$}; 

\path[->,draw,thick] (c) to node[fill=white] {$5$} (b);
\path[->,draw,thick] (b) to node[fill=white] {$3$} (a);
\path[->,draw,thick] (a) to node[fill=white] {$1$} (c);

\end{tikzpicture}
\end{minipage}\end{center}
For this election, we agree with proponents of voting methods such as Minimax, Ranked Pairs, and Beat Path (all defined in Appendix \ref{OtherMethodsAppendix}) that $c$ should be the winner. Everyone suffers a majority loss to someone, but while $c$ suffers a slight majority loss to $a$, $a$ suffers a larger majority loss to $b$, who suffers an even larger majority loss to $c$. The electorate is in a sense incoherent, and the fairest way to respond in this case is to elect $c$.\footnote{At least for a deterministic voting method. A voting method that outputs a probability distribution on the set of candidates (see \citealt{Brandt2017}) could assign nonzero probabilities to each candidate in this example. But in this paper we do not consider probabilistic voting methods.} But if $b$ had not been in the election, so we would not have had to account for the majority preferences for $b$ over $a$ and for $c$ over $b$, then $a$ would have been the appropriate winner in the two-person election. Since we agree with all of these verdicts, we do not think a voting method should prevent all effects of the kind described in the second definition.

Similar remarks apply to a third definition (from \citealt{WikiSpoiler}): $b$ is a ``spoiler'' for $a$ just in case $a$ would win without $b$ in the election, \textit{and \textnormal{(}most of\textnormal{)} the voters who prefer $b$ over $c$ also prefer $a$ over $c$}, but when $b$ joins, neither $a$ nor $b$ wins but rather $c$ wins. Based on the example above, in which \textit{all} voters who prefer $b$ over $c$ also prefer $a$ over $c$, we do not think a voting method should prevent all such~effects.

The problem with the definitions of spoiler effects above is that they ignore the voters' preferences for $a$ vs. $b$. If a majority of voters prefer $b$ to $a$, then $b$ may legitimately make $a$ a loser, even if $b$ does not replace $a$ as a winner. Thus, the only spoiler effects that we ought to rule out are those in which $a$ is majority preferred to $b$. This leads to the idea that one ought to use a voting method with the following property:

\begin{itemize}
\item Immunity to spoilers: if $a$ would win without $b$ in the election, \textit{and more voters prefer $a$ to $b$ than prefer $b$ to $a$}, then it is not the case that when $b$ joins the election, both $a$ and $b$ lose.
\end{itemize}
This captures Example \ref{BushNaderGore} (in the imaginary version with ranked ballots), as a majority of voters prefer Gore to Nader. Unlike independence of clones, it also captures Example \ref{IRVExample}, as a majority of voters prefer the Democrat to the Republican in the Burlington election. 

One way to avoid a spoiler effect of the kind identified in immunity to spoilers is that when $a$ would win without $b$ in the election, and a majority of voters prefer $a$ to $b$, then when $b$ joins the election, $a$ loses but $b$ \textit{wins}. In this case, we say that $b$ \textit{steals the election from $a$}. It is hardly more desirable for $b$ to steal the election from $a$ than to spoil the election for $a$, so we propose the property~of
\begin{itemize}
\item Immunity to stealers: if $a$ would win without $b$ in the election, and more voters prefer $a$ to $b$ than prefer $b$ to $a$, then it is not the case that when $b$ joins the election, $a$ loses but $b$ wins.
\end{itemize}
The combination of immunity to spoilers and immunity to stealers is equivalent to a criterion we call 
\begin{itemize}
\item Stability for winners: if $a$ would win without $b$ in the election, and more voters prefer $a$ to $b$ than prefer $b$ to $a$, then when $b$ joins, $a$ still wins.
\end{itemize}
This criterion can be seen as extending the idea of Condorcet consistency\footnote{More accurately, the idea that any Condorcet winner ought to be at least tied for winning the election.} to the variable-candidate setting: a candidate who would be a winner without the newcomers and is majority preferred to all the newcomers remains a winner after the addition of the newcomers.  We will show that Split Cycle satisfies stability for winners and hence immunity to spoilers and stealers.

Note that we are not claiming that in a particular election using a particular voting method, if there is a  candidate $a$ such that for some $b$, $a$ would win in the election without $b$, and more voters prefer $a$ to $b$ than prefer $b$ to $a$, then $a$ \textit{ought to win}. For example, suppose that in an election using the method of \textit{dictatorship}, $a$ would have won without $b$ in the election, i.e., $a$ is the favorite candidate of the dictator besides $b$, and a majority of voters prefer $a$ to $b$. It does not follow that $a$ ought to win the election, as, e.g., there may be a Condorcet winner $c$ who ought to win. Our claim is rather that one ought to use a voting method satisfying stability for winners, which rules out dictatorship as a candidate voting method in the first place.

Another qualification is that in the statement of the axioms above, by `win' we really mean \textit{at least tie for the win}. It is too much to require that these axioms hold even when we incorporate tiebreaking to select a single winner. For a simple example, suppose that in a two-candidate election, the same number of voters prefer $a$ to $c$ as prefer $c$ to $a$, so there is a perfect tie; and suppose that with $b$ included in the election, both $a$ and $c$ beat $b$ head-to-head, but $a$'s margin of victory over $b$ is much larger than $c$'s margin of victory over $b$.  In this case, \textit{both} $a$ and $c$ can argue that without $b$ in the election, they would have won, and each of them beats $b$ head-to-head, so $a$ and $c$ should both still win, thereby avoiding a spoiler effect with respect to $b$. Indeed, we will count both $a$ and $c$ as \textit{undefeated}, according to Split Cycle. But if we need to select a single winner, we think it is reasonable to break the tie in $a$'s favor when $b$ is included in the election, as $a$ has a larger majority victory over $b$ than $c$ does, thereby breaking the symmetry between $a$ and $c$. Thus, while stability for winners should hold for the method of selecting the pre-tiebreaking winners, we should not require it of tiebreaking procedures. We propose weaker axioms on tiebreaking procedures in Section \ref{NewCriteria}.

\subsection{The Strong No Show Paradox}\label{NoShowSection}

The term ``no show paradox'' was coined by Fishburn and Brams \citeyearpar{Fishburn1983} for violations of what is now called the \textit{negative involvement} criterion (see \citealt{Perez2001}). This criterion states that if a candidate $x$ is not among the winners in an initial election scenario, then if we add to that scenario some new voters who rank $x$ as the (unique) last place candidate on their ballots, then the addition of those voters should not make $x$ a winner. Perhaps surprisingly, well-known voting methods such as Instant Runoff, Ranked Pairs, and Beat Path fail to satisfy the negative involvement criterion. In an example of Fishburn and Brams, two voters are unable to make it to an Instant Runoff election, due to their car breaking down. They later realize that had they voted in the election, their \textit{least favorite} candidate would have won. For a simplified version of the Fishburn and Brams example, consider the following example for Instant Runoff (\citealt[\S~3.3]{Pacuit2019}):
\begin{center}
\begin{tabular}{cccc}
$2$ & $3$ & $1$ & $3$   \\\hline
$a$ & $b$ &  $c$ & $c$ \\
$b$ &  $c$ & $a$ & $b$ \\
$c$ &  $a$ &  $b$ & $a$ \\
\end{tabular}
\end{center}
Candidate $a$ receives the fewest first place votes, so $a$ is eliminated in the first round. With $a$ eliminated from the ballots, $b$ receives a majority of first place votes and hence wins according to Instant Runoff. But now suppose that two additional voters with the ranking $abc$ (so $a$ is preferred to $b$ and $c$, and $b$ is preferred to $c$) make it to the election---their car does not break down---resulting in the following:
\begin{center}
\begin{tabular}{cccc}
$4$ & $3$ & $1$ & $3$   \\\hline
$a$ & $b$ &  $c$ & $c$ \\
$b$ &  $c$ & $a$ & $b$ \\
$c$ &  $a$ &  $b$ & $a$ \\
\end{tabular}
\end{center}
Now candidate $b$ receives the fewest first place votes, so $b$ is eliminated in the first round. With $b$ eliminated from the ballots, $c$ receives a majority of first place votes and hence wins according to Instant Runoff. Thus,  the addition of two voters who rank $c$ last makes $c$ the winner. This is a failure of negative involvement. Remarkably, Instant Runoff has
violated negative involvement in real elections, such as the Alaska election mentioned in
Example \ref{IRVExample} (again see \href{https://github.com/voting-tools/election-analysis}{https://github.com/voting-tools/election-analysis}).

The dual of the negative involvement criterion is the \textit{positive involvement} criterion (again see \citealt{Saari1995}, \citealt{Perez2001}).\footnote{Also see \citealt{Kasper2019}, where positive and negative involvement are called the ``Top Property'' and ``Bottom Property'', respectively. A closely related criterion for unique winners is given by Richelson \citeyearpar{Richelson1978} under the name `voter adaptability'.} This criterion states that if a candidate $x$ is among the winners in an initial election scenario, then if we add to that scenario some new voters who rank $x$ as the (unique) first place candidate on their ballots, then the addition of these new voters should not make $x$ a loser. Moulin \citeyearpar{Moulin1988} gives the following example of a failure of positive involvement for the Sequential Elimination voting method in which $a$ faces $b$ in the first round, and then the winner of the first round faces $c$.\footnote{Unlike Instant Runoff, this is a Condorcet consistent voting method. We have slightly modified Moulin's example to avoid the use of a tiebreaking rule, at the expense of adding two new voters rather than one in the second election scenario.} In the initial election scenario, we have the following ballots:
\begin{center}
\begin{tabular}{ccc}
$2$ & $2$ & $1$     \\\hline
$a$ & $b$ &  $c$   \\
$b$ &  $c$ & $a$   \\
$c$ &  $a$ &  $b$   \\
\end{tabular}
\end{center}
In the first round, $a$ beats $b$ (3 voters prefer $a$ to $b$, and only $2$ prefer $b$ to $a$), and then in the second, $c$ beats $a$ (3 voters prefer $c$ to $a$, and only $2$ prefer $a$ to $c$). But now suppose two additional voters make it to the election with the ballot $cba$, so we have:
\begin{center}
\begin{tabular}{cccc}
$2$ & $2$ & $1$ & $2$   \\\hline
$a$ & $b$ &  $c$ & $c$ \\
$b$ &  $c$ & $a$ & $b$ \\
$c$ &  $a$ &  $b$ & $a$ \\
\end{tabular}
\end{center}
Now in the first round, $b$ beats $a$ (4 voters prefer $b$ to $a$, and only 3 prefer $a$ to $b$), and then in the second, $b$ beats $c$ (4 voters prefer $b$ to $c$, and only $3$ prefer $c$ to $b$). Thus, adding voters whose favorite candidate is $c$ turns $c$ from being a winner to a loser. This is a failure of positive involvement.

What is wrong with voting methods that fail negative or positive involvement? Our objection to them is \textit{not} that they incentivize a certain kind of strategic (non-)voting. All reasonable voting methods incentivize some kind or other of strategic voting (again see \citealt{Taylor2005}).  Suppose we have a group of voters who will definitely cast their ballots and vote sincerely, regardless of the electoral consequences. Thus, the voters in the previous example who rank $c$ first will come to the polls and cast their ballots, resulting in $c$ losing the election that otherwise $c$ would have won. Since the voters do not stay home strategically, is the fact that the voting method fails positive involvement unproblematic? Not at all. The problem is that the voting method is responding in the wrong way to additional unequivocal support from a voter for a candidate ($c$ is the voter's unique favorite). As an analogy, a voting method failing the non-negative responsiveness criterion (see Section \ref{MonotonicitySection}) also means that it can incentivize a certain kind of strategic voting; but even for a group of always sincere voters, failing non-negative responsiveness is a flaw of a voting method because it means that the voting method is responding in the wrong way to voters purely improving a candidate's position relative to other candidates.

The failure of positive or negative involvement is now sometimes called the ``strong no show paradox.''\footnote{Perez \citeyearpar{Perez2001} calls violations of positive involvement the ``positive strong no show paradox'' and violations of negative involvement the ``negative strong no show paradox.'' Felsenthal and Tideman \citeyearpar{Felsenthal2013} and Felsenthal and Nurmi \citeyearpar{Felsenthal2016} call them the ``P-TOP'' and ``P-BOT'' paradoxes, respectively.} The reason seems to be that Moulin \citeyearpar{Moulin1988} changed the meaning of ``no show paradox'' to stand not for a violation of negative (or positive) involvement but rather for a violation of the \textit{participation} criterion: if a candidate $x$ is the winner in an initial election, then if we add to that scenario some new voters who rank $x$ above $y$, then the addition of these new voters should not make $y$ the winner. Crucially, it is not required here that $x$ is at the top of the new voters' ballots or that $y$ is at the bottom. In our view, this participation criterion is problematic. To see why (as made precise in Appendix \ref{ParticipationAppendix}), note that if the new voters do not rank $x$ at the top of their ballots and do not rank $y$ at the bottom, then in the presence of majority cycles, new voters having ballots with the ranking $x' x\, y\, y'$ increase the number of people who prefer $x'$ to $x$, which may result in $x'$ knocking $x$ out of contention, and increase the number of people who prefer $y$ to $y'$, which may result in $y'$ no longer knocking $y$ out of contention. No wonder, then, that the winner may change from $x$ to $y$. In fact, remarkably, adding new voters who rank $x$ above $y$ may make $y$'s new position vis-\`{a}-vis other candidates perfectly symmetrical to $x$'s old position vis-\`{a}-vis other candidates, up to a renaming of the candidates (again see Appendix \ref{ParticipationAppendix}). A certain kind of neutrality then requires that the winner changes from $x$ to $y$. Of course, a method not satisfying participation will incentivize some strategic non-voting, as the voters in question will have an incentive not to vote (sincerely). But again, all voting methods incentivize strategic behavior. Thus, we are not so troubled by results showing that all Condorcet consistent voting methods fail versions of participation\footnote{\label{IrresNote}Note that Moulin \citeyearpar{Moulin1988} only proves participation failure for Condorcet consistent voting methods that are \textit{resolute}, i.e., always pick a unique winner, which requires imposing an arbitrary tiebreaking rule that violates anonymity or neutrality (see Section \ref{ANSection}). Since none of the standard Condorcet consistent voting methods are resolute, one may wonder about the significance of the fact that resolute Condorcet methods all fail participation. For discussion of the irresolute case, see \citealt{Perez2001}, \citealt{Jimeno2009}, and \citealt{Sanver2012}.} and therefore incentivize some strategic behavior. By contrast, we are troubled by failures of positive or negative involvement, as this shows that the method responds in the wrong way to unequivocal support for (resp.~rejection of) a candidate.

Unlike well-known voting methods such as Instant Runoff, Ranked Pairs, and Beat Path, the method we propose in this paper, Split Cycle, satisfies positive and negative involvement. Hence it is not only immune to spoilers but also immune to the strong no show paradox. Figure \ref{RoadsFig} illustrates how solving the spoiler problem and the strong no show paradox leads uniquely to Split Cycle as opposed to standard voting methods.

\begin{figure}[h]
\begin{center}
\begin{tikzpicture}

    \path[-,rounded corners,fill=medgreen!30,opacity=0.35] (1.95,2.5) to  (1.65, 0) to (-1.5, -0.5) to (-1.5, -5.35) to (-6, -5.35) to (-6, 0) to (-2.75, 2.5);
    
    \path[-,rounded corners,fill=red!30,opacity=0.35](-1,2.5) to (-0.85, 0) to  (-0.85, -5.35) to (2, -5.35) to (2.15, 0) to (2.35,2.5);

\path[-,rounded corners,fill=blue!30,opacity=0.35] (-0.75, 2.5) to (-0.5,0) to (3, -0.5) to (3, -5.35) to  (5.85, -5.35) to (5.85, 0)  to (3.35,2.5) ;

    \path[-,draw,rounded corners,]  (1.95,2.5) to  (1.65, 0) to (-1.5, -0.5) to (-1.5, -5.35) to (-6, -5.35) to (-6, 0) to (-2.75, 2.5);
\path[-,draw,rounded corners,](-1,2.5) to (-0.85, 0) to  (-0.85, -5.35) to (2, -5.35) to (2.15, 0) to (2.35,2.5);
\path[-,draw,rounded corners,] (-0.75, 2.5) to (-0.5,0) to (3, -0.5) to (3, -5.35) to  (5.85, -5.35) to (5.85, 0)  to (3.35,2.5) ;
 \node at (0.625,1.3) {\bf \begin{minipage}{1in}\begin{center}Split Cycle\end{center}\end{minipage}};

 \node[gray] at (-3.9, -0.25) {\small\begin{minipage}{1in}\begin{center}GETCHA/GOCHA\end{center}\end{minipage}};

\node[gray] at (-3.75, -0.7) {\small\begin{minipage}{1in}\begin{center}Uncovered\ Set\end{center}\end{minipage}};
\node at (-3.75,-1.65) {\begin{minipage}{1in}\begin{center}Immunity to \\ Spoilers\end{center}\end{minipage}};
\path[-,draw,thick] (-5.5,-2.25) to (-2,-2.25);
\node[gray] at (-3.75,-3.05) {\small \begin{minipage}{1in}\begin{center}Beat Path\\ Ranked Pairs Instant Runoff\end{center}\end{minipage}};
\node at (-3.75,-4.5) {\begin{minipage}{1.5in}\begin{center}Independence of \\ Clones\end{center}\end{minipage}};
 \node at (-3.75,-5.75) {Spoiler Problem};
 \node at (2.5,-5.75) {Strong No Show Paradox};
\node[gray] at (0.575, -0.7) {\small \begin{minipage}{1.25in}\begin{center}Minimax\end{center}\end{minipage}};
\node at (0.575, -1.65) { \begin{minipage}{1.25in}\begin{center}Condorcet\\ Winner\end{center}\end{minipage}};
\path[-,draw,thick] (-0.65,-2.25) to (1.85,-2.25);
\node[gray] at (0.615, -2.85) {\small \begin{minipage}{1.25in}\begin{center}Instant Runoff\ Plurality\end{center}\end{minipage}};
\node at (0.5,-4.5) { \begin{minipage}{1.25in}\begin{center}Positive \\ Involvement\end{center}\end{minipage}};
\node at (4.45, -1.65) { \begin{minipage}{1.25in}\begin{center}Condorcet\\ Loser\end{center}\end{minipage}};
\path[-,draw,thick] (3.25,-2.25) to (5.55,-2.25);
\node[gray] at (4.45, -2.85) {\small \begin{minipage}{1.25in}\begin{center}Minimax\\ Plurality\end{center}\end{minipage}};
\node at (4.45,-4.5) {\begin{minipage}{1.25in}\begin{center}Negative \\ Involvement\end{center}\end{minipage}};
 \path[-,draw,rounded corners,  thick] (1.95,2.5) to  (1.65, 0) to[sharp corners] (0.65,-0.16) to[rounded corners]  (-0.5,0) to  (-0.75, 2.5);
\end{tikzpicture}
\end{center}
\caption{An illustration of how solving the Spoiler Problem and the Strong No Show Paradox leads uniquely to Split Cycle as opposed to standard voting methods, some of which are displayed in gray (and defined in Appendix~\ref{OtherMethodsAppendix}). For each of the three ``roads'' in the diagram and each of the displayed voting methods, a voting method is shown on a road if and only if it satisfies all of the criteria that appear in the road lower in the diagram than the voting method. For example, Minimax satisfies Condorcet Winner and Positive Involvement, explaining its location on the middle road; it satisfies Negative Involvement but not Condorcet Loser, explaining its location on the right road; and it satisfies Immunity to Spoilers but not Independence of Clones, explaining its absence from the left road.  All voting methods except Split Cycle are blocked from entering the overlap of the three roads, but some are blocked even earlier by other criteria, as indicated by the horizontal lines. For example, Instant Runoff is  blocked by the Condorcet Winner criterion.}\label{RoadsFig}
\end{figure}

\subsection{Organization}\label{Roadmap}

The rest of the paper is organized as follows. In Section \ref{PrelimSection}, we review some preliminary notions: profiles, margin graphs, and voting methods, as well as operations on profiles. In Section \ref{SplitCycleSection}, we motivate and define our proposed voting method, Split Cycle. In Section \ref{NewCriteria}, we discuss our new voting criteria concerning spoilers, stealers, and stability, as well as a stronger criterion based on Sen's \citeyearpar{Sen1971,Sen1993} choice-functional condition of expansion consistency. In Section \ref{OtherCriteria}, we test Split Cycle against a number of other criteria from the literature.  Our axiomatic analysis of Split Cycle and other methods is summarized in Figure \ref{AxiomTable}. We conclude in Section~\ref{Conclusion} with a brief summary and directions for further research on Split Cycle. Appendices \ref{ClonesAppendix} and \ref{ParticipationAppendix} contain proofs deferred from the main text. Appendices \ref{OtherMethodsAppendix} and \ref{Quant} contain definitions of other voting methods in Figure \ref{AxiomTable} and data from simulations of Split Cycle and other voting methods, respectively.

\begin{remark}\label{CodeRemark} The Split Cycle voting method is currently in use at \href{https://stablevoting.org}{stablevoting.org}, as described in Section~\ref{Comparison}. An implementation in Python of Split Cycle and other methods referenced in this paper is available at \href{https://github.com/epacuit/splitcycle}{https://github.com/epacuit/splitcycle}. All of the examples in the paper have been verified in a Jupyter notebook available in the linked repository. Most of the proofs of properties of Split Cycle have been formalized in the Lean Theorem Prover at \href{https://github.com/chasenorman/Formalized-Voting}{https://github.com/chasenorman/Formalized-Voting}, as described in \citealt{HNP2021}.\end{remark}

\renewcommand{\arraystretch}{1.3}

\begin{figure}
\begin{center}\setlength{\tabcolsep}{3.25pt}\begin{tabular}{l|c|c|c|c|c|c|c|c|c|}
 &  \makecell{  Split\\[-2pt] Cycle}  &  \makecell{  Ranked\\[-2pt] Pairs} & \makecell{  Beat\\[-2pt] Path} & \makecell{Mini-\\[-2pt] max} & Copeland   &   \makecell{  GETCHA\\[-2pt] /GOCHA}    &  \makecell{  Uncovered\\[-2pt] Set} &   \makecell{  Instant\\[-2pt] Runoff} &  Plurality\\\Xhline{3\arrayrulewidth}

 \makecell[l]{ {Immunity to}\\[-2pt] {Spoilers} (\ref{SpoilersSection})} & \cellcolor{gray!50}$\checkmark$&\cellcolor{gray!50}$-$ & \cellcolor{gray!50}$-$& \cellcolor{gray!50}$\checkmark$&  \cellcolor{gray!50}$\checkmark$ & \cellcolor{gray!50}{$\checkmark$} & \cellcolor{gray!50}{$\checkmark$} & \cellcolor{gray!50}$-$ &\cellcolor{gray!50}$-$ \\\hline
 
 \makecell[l]{ {Immunity to}\\[-2pt] {Stealers} (\ref{StealersSection})}  & \cellcolor{white}{$\checkmark$} & \cellcolor{white}{$\checkmark^\star$}&\cellcolor{white}{$-$} &\cellcolor{white}{$-$} & \cellcolor{white}{$-$} & \cellcolor{white}{$\checkmark$} &\cellcolor{white}{$\checkmark$}& \cellcolor{white}{$-$} & \cellcolor{white}$-$\\\hline
 
  \makecell[l]{ {Stability for}\\[-2pt] {Winners} (\ref{StabilitySection})} & \cellcolor{gray!50}$\checkmark$&\cellcolor{gray!50}$-$ & \cellcolor{gray!50}$-$& \cellcolor{gray!50}$-$& \cellcolor{gray!50}$-$ & \cellcolor{gray!50}{$\checkmark$} & \cellcolor{gray!50}{$\checkmark$} & \cellcolor{gray!50}$-$ &\cellcolor{gray!50}$-$ \\\hline
  
\makecell[l]{{Expansion}\\[-2pt] {Consistency, $\gamma$ (\ref{ExpansionSection})}} & \cellcolor{white}$\checkmark$&\cellcolor{white}$-$ & \cellcolor{white}$-$& \cellcolor{white}$-$& \cellcolor{white}$-$ & \cellcolor{white}{$\checkmark$/$-$} & \cellcolor{white}{$\checkmark^\dagger$}& \cellcolor{white}$-$ &\cellcolor{white}$-$ \\\Xhline{3\arrayrulewidth}

 \makecell[l]{ {Anonymity and}\\[-2pt] {Neutrality} (\ref{ANSection})}  & \cellcolor{gray!50}$\checkmark$&\cellcolor{gray!50}$\checkmark$& \cellcolor{gray!50}$\checkmark$&\cellcolor{gray!50}$\checkmark$ & \cellcolor{gray!50}$\checkmark$ & \cellcolor{gray!50}$\checkmark$ &\cellcolor{gray!50}$\checkmark$& \cellcolor{gray!50}$\checkmark$ &\cellcolor{gray!50}$\checkmark$ \\\hline

 \makecell[l]{ {Reversal}\\[-2pt] {Symmetry} (\ref{RSSection})}  & \cellcolor{white}$\checkmark$&\cellcolor{white}$\checkmark$& \cellcolor{white}$\checkmark$&\cellcolor{white}$-$ & \cellcolor{white}$\checkmark$ & \cellcolor{white}$\checkmark$ &\cellcolor{white}$\checkmark$& \cellcolor{white}$-$ &\cellcolor{white}$-$ \\\Xhline{3\arrayrulewidth}

{Pareto (\ref{ParetoSection})} & \cellcolor{gray!50}{$\checkmark$} & \cellcolor{gray!50}{$\checkmark$} & \cellcolor{gray!50}{$\checkmark$} & \cellcolor{gray!50}{$\checkmark$} & \cellcolor{gray!50}{$\checkmark$} & \cellcolor{gray!50}{$-$} &\cellcolor{gray!50}{$\checkmark$}& \cellcolor{gray!50}{$\checkmark$} &  \cellcolor{gray!50}{$\checkmark$} \\\hline

 \makecell[l]{ {Condorcet}\\[-2pt] {Winner} (\ref{CondorcetSection})} & \cellcolor{white}{$\checkmark$} & \cellcolor{white}{$\checkmark$} & \cellcolor{white}{$\checkmark$} & \cellcolor{white}{$\checkmark$} & \cellcolor{white}{$\checkmark$} & \cellcolor{white}{$\checkmark$} &\cellcolor{white}{$\checkmark$}& \cellcolor{white}{$-$} &  \cellcolor{white}{$-$} \\\hline

 \makecell[l]{ {Condorcet}\\[-2pt] {Loser} (\ref{CondorcetSection})}& \cellcolor{gray!50}{$\checkmark$} & \cellcolor{gray!50}{$\checkmark$}& \cellcolor{gray!50}{$\checkmark$}& \cellcolor{gray!50}{$-$} & \cellcolor{gray!50}{$\checkmark$} & \cellcolor{gray!50}{$\checkmark$}&\cellcolor{gray!50}{$\checkmark$} & \cellcolor{gray!50}{$\checkmark$} &  \cellcolor{gray!50}{$-$} \\\hline

\makecell[l]{{Smith (\ref{SmithSchwartz})}} & \cellcolor{white}$\checkmark$&\cellcolor{white}$\checkmark$& \cellcolor{white}$\checkmark$&\cellcolor{white}$-$ & \cellcolor{white}$\checkmark$ & \cellcolor{white}$\checkmark$ &\cellcolor{white}$\checkmark$ &\cellcolor{white}$-$ &\cellcolor{white}$-$ \\\Xhline{3\arrayrulewidth}

\makecell[l]{{ISDA (\ref{ISDASection})}} & \cellcolor{gray!50}$\checkmark$&\cellcolor{gray!50}$\checkmark$& \cellcolor{gray!50}$\checkmark$&\cellcolor{gray!50}$-$ &  \cellcolor{gray!50}$\checkmark$ & \cellcolor{gray!50}$\checkmark$ & \cellcolor{gray!50}$\checkmark$ & \cellcolor{gray!50}$-$ &\cellcolor{gray!50}$-$ \\\hline

\makecell[l]{{Independence of}\\[-2pt] {Clones (\ref{ClonesSection})}} & \cellcolor{white}$\checkmark$&\cellcolor{white}$\checkmark^*$ &\cellcolor{white}$\checkmark$ &\cellcolor{white}$-$ &\cellcolor{white} $-$  &\cellcolor{white}$\checkmark$ &\cellcolor{white}$\checkmark^\dagger$&\cellcolor{white}$\checkmark^\ddagger$&\cellcolor{white}$-$ \\\Xhline{3\arrayrulewidth}

{Rejectability (\ref{Rejectability})} & \cellcolor{gray!50}$\checkmark$& \cellcolor{gray!50}$\checkmark$& \cellcolor{gray!50}$\checkmark$ & \cellcolor{gray!50}$\checkmark$& \cellcolor{gray!50}$-$ & \cellcolor{gray!50}$-$ & \cellcolor{gray!50}$-$ & \cellcolor{gray!50}$\checkmark$ & \cellcolor{gray!50}$\checkmark$\\\hline

\makecell[l]{Resolvability (\ref{Resolvability})} & \cellcolor{white}$-$&\cellcolor{white}$\checkmark$&\cellcolor{white}$\checkmark$ &\cellcolor{white}$\checkmark$&\cellcolor{white}$-$ &\cellcolor{white}$-$ &\cellcolor{white}$-$ &\cellcolor{white}$\checkmark$ &\cellcolor{white}$\checkmark$\\\Xhline{3\arrayrulewidth}

\makecell[l]{{Non-negative}\\[-2pt] {Responsiveness (\ref{MonotonicitySection})}} & \cellcolor{gray!50}{$\checkmark$} & \cellcolor{gray!50}{$\checkmark$}&\cellcolor{gray!50}{$\checkmark$} &\cellcolor{gray!50}{$\checkmark$} & \cellcolor{gray!50}{$\checkmark$}&\cellcolor{gray!50}{$\checkmark$} &\cellcolor{gray!50}{$\checkmark$}& \cellcolor{gray!50}{$-$} & \cellcolor{gray!50}$\checkmark$\\\hline

\makecell[l]{{Positive}\\[-2pt] {Involvement (\ref{InvolvementSection})}} & \cellcolor{white}$\checkmark$&\cellcolor{white}$-$& \cellcolor{white}$-$&\cellcolor{white}$\checkmark$ & \cellcolor{white}$-$ & \cellcolor{white}$-$ &\cellcolor{white}$-$ & \cellcolor{white}$\checkmark$ &\cellcolor{white}$\checkmark$ \\\hline

\makecell[l]{{Negative}\\[-2pt] {Involvement (\ref{InvolvementSection})} } & \cellcolor{gray!50}$\checkmark$&\cellcolor{gray!50}$-$& \cellcolor{gray!50}$-$&\cellcolor{gray!50}$\checkmark$ & \cellcolor{gray!50}$-$ & \cellcolor{gray!50}$-$ & \cellcolor{gray!50}$-$ & \cellcolor{gray!50}$-$ &\cellcolor{gray!50}$\checkmark$ \\\hline

\end{tabular}
\end{center}
\caption{Comparison of Split Cycle to standard voting methods in terms of selected voting criteria. A $\checkmark$ indicates that the criterion is satisfied, while $-$ indicates that it is not. The $\checkmark^\star$ indicates that Ranked Pairs satisfies immunity to stealers in uniquely-weighted election profiles (Definition \ref{MarginGraphDef}) but not in general. The $\checkmark^*$ indicates that there are subtleties in how one must define Ranked Pairs to ensure full  independence of clones (together with anonymity), as discussed in Remark \ref{RPnote}. For the Uncovered Set column, there are several definitions of the Uncovered Set that are equivalent for an odd number of voters with linear ballots but inequivalent in general; the $\checkmark^\dagger$ indicates that while one version of the Uncovered Set (\citealt{Fishburn1977}) fails to satisfy independence of clones and expansion consistency for all profiles, other definitions satisfy both axioms for all profiles, and all definitions do so for profiles with an odd number of voters with linear ballots. The $\checkmark^\ddagger$ indicates that whether Instant Runoff satisfies independence of clones depends on how ties for the fewest first-place votes are handled. For proofs of these claims and those in the table about voting methods other than Split Cycle, see Appendix~\ref{OtherMethodsAppendix}.}\label{AxiomTable}

\end{figure}

\newpage

\section{Preliminaries}\label{PrelimSection}

\subsection{Profiles, Margin Graphs, and Voting Methods}\label{ProfilesSection}

Fix infinite sets $\mathcal{V}$ and $\mathcal{X}$ of \textit{voters} and \textit{candidates}, respectively. A given election will involve only finite subsets $V\subseteq\mathcal{V}$ and $X\subseteq\mathcal{X}$, but we want no upper bound on the number of voters or candidates who may participate in elections. A binary relation $P$ on $X$ is \textit{asymmetric} if for all $x,y\in X$, if $xPy$, then \textit{not} $yPx$. Let  $\mathcal{B}(X)$ be the set of all asymmetric binary relations on $X$.

\begin{definition} A \textit{profile} is a pair $(\mathbf{P}, X(\mathbf{P}))$ where $\mathbf{P}: V(\mathbf{P})\to \mathcal{B}(X(\mathbf{P}))$ for some nonempty finite $X(\mathbf{P})\subseteq \mathcal{X}$ and nonempty finite $V(\mathbf{P})\subseteq \mathcal{V}$. We conflate the profile with the function $\mathbf{P}$.\footnote{We officially define a profile as a pair $(\mathbf{P}, X(\mathbf{P}))$ due to a technicality: unlike the set of voters, the set of candidates cannot necessarily be recovered from the function $\mathbf{P}$.} We call $X(\mathbf{P})$  and $V(\mathbf{P})$ the sets of \textit{candidates in $\mathbf{P}$} and \textit{voters in $\mathbf{P}$}, respectively. We call $\mathbf{P}(i)$ voter $i$'s \textit{ballot}, and we write `$x\mathbf{P}_iy$' for $(x,y)\in\mathbf{P}(i)$.
\end{definition}
\noindent As usual, we take $x\mathbf{P}_iy$ to mean that voter $i$ strictly prefers candidate $x$  to candidate $y$. It is standard to assume that $\mathbf{P}_i$ satisfies additional constraints beyond asymmetry, such as transitivity and even negative transitivity (if not $x\mathbf{P}_iy$ and not $y\mathbf{P}_iz$, then not $y\mathbf{P}_iz$). More generally, one may consider the following classes of profiles: $\mathscr{P}$, the class of all profiles; $\mathscr{A}$, the class of \textit{acyclic profiles}, in which each voter's ballot is \textit{acyclic}, meaning that there are no $x_1,\dots,x_n\in X(\mathbf{P})$ with $n>1$ such that for $k\in \{1,\dots,n-1\}$, we have $x_{k}\mathbf{P}_i x_{k+1}$, and $x_n= x_1$; $\mathscr{S}$, the class of \textit{strict weak order profiles}, in which each voter's ballot is a \textit{strict weak order}, meaning that it is asymmetric and negatively transitive (which together imply transitivity); $\mathscr{L}$, the class of \textit{linear profiles},  in which each voter's ballot is a \textit{linear order}, meaning that it is transitive and for all $x,y\in X(\mathbf{P})$ with $x\neq y$, we have either $x\mathbf{P}_iy$ or $y\mathbf{P}_ix$. 

Proving that a voting method satisfies some universal axiom with respect to a larger class of profiles, like $\mathscr{P}$ or $\mathscr{A}$, is obviously stronger than proving that it satisfies the universal axiom with respect to a smaller class of profiles, like $\mathscr{L}$. On the other hand, proving that a voting method does not satisfy some universal axiom with respect to a smaller class of profiles, like $\mathscr{L}$, is stronger than proving that it does not satisfy the universal axiom with respect to a larger class of profiles, such as $\mathscr{P}$ or $\mathscr{A}$.

\begin{remark} By not requiring linear profiles, we can represent elections in which voters are not required to rank all the candidates up for election. Depending on the official interpretation of what it means to leave a candidate unranked, a ballot with unranked candidates could mean either that (i) all ranked candidates are strictly preferred to all unranked candidates, and there are no strict preferences between unranked candidates or that (ii) there are no strict preferences at all involving unranked candidates.\end{remark}

Next we define the notions of an abstract margin graph and the margin graph of a particular profile.

\begin{definition} A \textit{margin graph}  is a weighted directed graph $\mathcal{M}$ with positive integer weights whose edge relation is asymmetric. We say $\mathcal{M}$ has \textit{uniform parity} if all weights of edges are even or all weights of edges are odd, and if there are two nodes with no edge between them, then all weights are even.\end{definition}

Two examples of margin graphs already appeared in Section \ref{SpoilerSection}.

\begin{definition}\label{MarginGraphDef} Let $\mathbf{P}$ be a profile and $a,b\in X(\mathbf{P})$. Then \[Margin_\mathbf{P}(a,b)=\vert \{i\in V(\mathbf{P})\mid a\mathbf{P}_ib\}\vert -\vert \{i\in V(\mathbf{P})\mid b\mathbf{P}_ia\}\vert .\]
The \textit{margin graph of $\mathbf{P}$}, $\mathcal{M}(\mathbf{P})$, is the weighted directed graph whose set of nodes is $X(\mathbf{P})$ with an edge from $a$ to $b$ weighted by $Margin_\mathbf{P}(a,b)$ when $Margin_\mathbf{P}(a,b)>0$, in which case we say that \textit{$a$ is majority preferred to $b$}. We write $a\overset{\alpha}{\to}_\mathbf{P}b\mbox{ if }\alpha = Margin_\mathbf{P}(a,b)> 0$, omitting the $\alpha$ when the size of the margin is not important and $\mathbf{P}$ when the profile in question is clear. We say that $\mathbf{P}$ is \textit{uniquely weighted} if for all $x,y,x',y'\in X(\mathbf{P})$, if $x\neq y$, $x'\neq y'$, and $(x,y)\neq (x',y')$, then $Margin_\mathbf{P}(x,y)\neq Margin_\mathbf{P}(x',y')$. 

We call the unweighted directed graph underlying $\mathcal{M}(\mathbf{P})$ the \textit{majority graph} of $\mathbf{P}$, denoted $M(\mathbf{P})$, and we call the edge relation of $M(\mathbf{P})$ the \textit{majority relation} of $\mathbf{P}$.\end{definition}

The key fact about the relation between margin graphs and profiles is given by Debord's Theorem.

\begin{theorem}[\citealt{Debord1987}]\label{DebordThm} For any margin graph $\mathcal{M}$, there is a strict weak order profile $\mathbf{P}$ such that $\mathcal{M}$ is the margin graph of $\mathbf{P}$; and if $\mathcal{M}$ has uniform parity, then there is a linear profile $\mathbf{P}$ such that $\mathcal{M}$ is the margin graph of $\mathbf{P}$.
\end{theorem}

Finally, we define what we mean by a voting method for the purposes of this paper.

\begin{definition}\label{VotingMethodDef} Given a set $\mathscr{D}$ of profiles, a \textit{voting method on $\mathscr{D}$} is a function $F$ such that for all profiles $\mathbf{P}\in\mathscr{D}$, we have $\varnothing\neq F(\mathbf{P})\subseteq X(\mathbf{P})$. We call $F(\mathbf{P})$ the \textit{set of winners} or \textit{winning set for $\mathbf{P}$ under $F$}. We write $\mathrm{dom}(F)$ for the set $\mathscr{D}$ on which $F$ is defined.
\end{definition}

\noindent As usual, if $F(\mathbf{P})$ contains multiple winners, we assume that some further tiebreaking process would then apply, though we do not fix the nature of this process (see \citealt[pp.~14-5]{Schwartz1986} for further discussion). Options include the use of a deterministic tiebreaking procedure, an even-chance lottery on $F(\mathbf{P})$, a runoff election with the candidates in $F(\mathbf{P})$ in which a different set of voters may participate, etc.

\subsection{Operations on Profiles}\label{OperationsSection}

Sometimes we will be interested in combining two profiles for the same set of candidates and disjoint sets of voters, for which we use the following notation.

\begin{definition}\label{DisjointUnion} Given profiles $\mathbf{P}$ and $\mathbf{P}'$ such that $X(\mathbf{P})=X(\mathbf{P}')$ and $V(\mathbf{P})\cap V(\mathbf{P}')=\varnothing$, we define the profile $\mathbf{P}+\mathbf{P}': V(\mathbf{P})\cup V(\mathbf{P}')\to \mathcal{B}(X(\mathbf{P}))$ such that $(\mathbf{P}+\mathbf{P}')(i)=\mathbf{P}(i)$ if $i\in V(\mathbf{P}) $ and $(\mathbf{P}+\mathbf{P}')(i)=\mathbf{P}'(i)$ if  $i\in V(\mathbf{P}')$. To add $\mathbf{P}$ to itself, we may take $\mathbf{P}+\mathbf{P}^*$ where $\mathbf{P}^*$ is a copy of $\mathbf{P}$ with a disjoint set of voters.\footnote{I.e., $X(\mathbf{P})=X(\mathbf{P}^*)$, $V(\mathbf{P})\cap V(\mathbf{P}^*)=\varnothing$, and there is a bijection $h:V(\mathbf{P})\to V(\mathbf{P}^*)$ such that for all $i\in V(\mathbf{P})$ and $x,y\in X(\mathbf{P})$, we have $x\mathbf{P}_i y$ if and only if $x\mathbf{P}^*_{h(i)} y$.}
\end{definition}

We will also be interested in deleting some candidates from every ballot in a profile, as follows.

\begin{definition} Given a profile $\mathbf{P}$ and nonempty $Y\subseteq X(\mathbf{P})$, define the restricted profile $\mathbf{P}\vert _{ Y}$ to be the profile with $X(\mathbf{P}\vert _{ Y})=Y$ and $V(\mathbf{P}\vert _{ Y})=V(\mathbf{P})$ such that for each $i\in V(\mathbf{P}\vert _{ Y})$, $\mathbf{P}\vert _{ Y}(i)$ is the restriction of the relation $\mathbf{P}(i)$ to $Y$. As a special case, when $\vert X(\mathbf{P})\vert >1$ and $x\in X(\mathbf{P})$, let $\mathbf{P}_{-x} = \mathbf{P}\vert _{X(\mathbf{P})\setminus\{x\}}$, i.e., the result of removing candidate $x$ from each ballot.\end{definition}

\section{Split Cycle}\label{SplitCycleSection}

\subsection{Three Main Ideas}\label{ThreeIdeasSection}

The ``Paradox of Voting'' is the phenomenon that cycles may occur in the margin graph of a profile, e.g., $a$ is majority preferred to $b$, $b$ is majority preferred to $c$, and $c$ is majority preferred to $a$. Recall the formal definition of a  cycle.

\begin{definition} Given a directed graph $\mathcal{G}$ (e.g., a margin graph), a \textit{path in $\mathcal{G}$} is a sequence $\langle x_1,\dots,x_n\rangle$ of nodes from $\mathcal{G}$ such that $n>1$ and for all $i\in \{1,\dots,n-1\}$, we have $x_i\to x_{i+1}$, where $\to $ is the edge relation of the graph. A \textit{cycle in $\mathcal{G}$} is defined in the same way but requiring $x_1=x_n$. The cycle is \textit{simple} if for all distinct $i,j\in \{1,\dots,n\}$, $x_i=x_j$  only if $i,j\in \{1,n\}$ (i.e., all nodes are distinct except $x_1=x_n$).\end{definition}

The voting method we propose in this paper, Split Cycle, provides a way of dealing with the problem of majority cycles. It is based on three main ideas:

\begin{paragraph}{1. Group incoherence raises the threshold for one candidate to defeat another, but not infinitely.} By ``group incoherence'' we mean cycles in the majority relation. Consider the margin graph on the left again:

\begin{center}
\begin{minipage}{2in}\begin{tikzpicture}

\node[circle,draw, minimum width=0.25in] at (0,0) (a) {$a$}; 
\node[circle,draw,minimum width=0.25in] at (3,0) (b) {$b$}; 
\node[circle,draw,minimum width=0.25in] at (1.5,1.5) (c) {$c$}; 

\path[->,draw,thick] (c) to node[fill=white] {$5$} (b);
\path[->,draw,thick] (b) to node[fill=white] {$3$} (a);
\path[->,draw,thick] (a) to node[fill=white] {$1$} (c);

\end{tikzpicture}
\end{minipage} \begin{minipage}{2in}\begin{tikzpicture}

\node[circle,draw, minimum width=0.25in] at (0,0) (a) {$a$}; 
\node[circle,draw,minimum width=0.25in] at (3,0) (b) {$b$}; 
\node[circle,draw,minimum width=0.25in] at (1.5,1.5) (c) {$c$}; 

\path[->,draw,thick] (c) to node[fill=white] {$5$} (b);
\path[->,draw,thick] (b) to node[fill=white] {$3$} (a);

\end{tikzpicture}
\end{minipage}\end{center}
Due to the group incoherence, the margin of $1$ for $a$ over $c$ is not sufficient for $a$ to defeat $c$. But if we raise the threshold for defeat to \textit{winning by more than $1$}, and we redraw the graph with an arrow from $x$ to $y$ if and only if $Margin_\mathbf{P}(x,y)>1$, as on the right, then the group is no longer incoherent at this threshold. Since the group is no longer incoherent with respect to the \textit{win by more than $1$} threshold, we think it is reasonable to take $c$ to defeat $b$ and $b$ to defeat $a$, leaving $c$ as the winner. Thus, as suggested, group incoherence does not raise the threshold for $b$ to defeat $a$ \textit{infinitely} but rather only enough to eliminate any incoherence in which $b$ and $a$ are involved. This  shows that our proposal differs from the GETCHA and GOCHA methods (Section \ref{SmithSchwartz}), which take all 3-cycles to result in three-way ties regardless of the margins.\footnote{Cf.~Tideman \citeyearpar[p.~206]{Tideman1987}: ``The GOCHA rule, in a sense, is only half a voting rule. It does not address the issue of what should be done to resolve cycles.''}\end{paragraph}

\begin{paragraph}{2. Incoherence can be localized.} Consider the following margin graph: 
\begin{center}
\begin{minipage}{2in}\begin{tikzpicture}

\node[circle,draw, minimum width=0.25in] at (0,0) (a) {$a$}; 
\node[circle,draw,minimum width=0.25in] at (3,0) (b) {$b$}; 
\node[circle,draw,minimum width=0.25in] at (1.5,1.5) (c) {$c$}; 

\node[circle,draw,minimum width=0.25in] at (1.5,-1.5) (d) {$d$}; 

\path[->,draw,thick] (c) to node[fill=white] {$3$} (b);
\path[->,draw,thick] (b) to[pos=.7] node[fill=white] {$3$} (a);
\path[->,draw,thick] (a) to node[fill=white] {$3$} (c);

\path[->,draw,thick] (c) to[pos=.7] node[fill=white] {$1$} (d);
\path[->,draw,thick] (b) to node[fill=white] {$1$} (d);
\path[->,draw,thick] (a) to node[fill=white] {$1$} (d);

\end{tikzpicture}
\end{minipage}\end{center}
It would be a mistake to think that the margin of 1 for $a$ over $d$ is not sufficient for $a$ to defeat $d$, due to the incoherence involving $a$, $b$, and $c$, which is only eliminated by raising the threshold to \textit{win by more than 3}. For there is no incoherence with respect to $d$ and the other candidates, all of whom are majority preferred to $d$, so they all defeat $d$. The lesson from this example is that when deciding whether the margin of $a$ over $d$ is sufficient for $a$ to defeat $d$, we set the threshold in terms of \textit{the cycles \textnormal{(}if any\textnormal{)} involving $a$ and $d$}. This shows that our proposal differs from the Minimax method, which takes the winner in the example above to be the Condorcet loser~$d$ (see Definition \ref{CondWinLoss}).\end{paragraph}

\begin{paragraph}{3. Defeat is direct.} On our view, for a candidate $x$ to defeat a candidate $y$, so that $y$ is not in the set of winners, $x$ must have a positive margin over $y$. Consider the following margin graph (note that if there is no edge between two candidates, then the margin of each candidate over the other is 0):
\begin{center}
\begin{tikzpicture}

\node[circle,draw, minimum width=0.25in] at (0,0) (b) {$b$}; 
\node[circle,draw,minimum width=0.25in] at (3,0) (a) {$a$}; 
\node[circle,draw,minimum width=0.25in] at (1.5,1.5) (c) {$c$};  

\path[->,draw,thick] (b) to node[fill=white] {$4$} (a);
\path[->,draw,thick] (c) to node[fill=white] {$4$} (b);
\path[->,draw,thick] (a) to node[fill=white] {$4$} (c);

\node[circle,draw, minimum width=0.25in] at (5,0) (e) {$e$}; 
\node[circle,draw,minimum width=0.25in] at (8,0) (d) {$d$}; 
\node[circle,draw,minimum width=0.25in] at (6.5,1.5) (f) {$f$};  

\path[->,draw,thick] (e) to node[fill=white] {$4$} (d);
\path[->,draw,thick] (f) to node[fill=white] {$4$} (e);
\path[->,draw,thick] (d) to node[fill=white] {$4$} (f);

\path[->,draw,thick, bend left] (a) to node[fill=white] {$4$} (f);
\path[->,draw,thick, bend left] (d) to node[fill=white] {$2$} (a);

  \end{tikzpicture}
  \end{center}
In this case, we think $a$ should defeat $f$, but $a$ should not defeat $d$. Some other voting methods, such as Beat Path, commit one to a view that we find dubious: that even though $a$ is not majority preferred to $d$, nonetheless $a$ should kick $d$ out of the set of winners because of the indirect path from $a$ to $f$ to $e$ to $d$ with margins of 4 at each step. By contrast, we adopt a direct pairwise perspective: \textit{for $a$ to kick $d$ out of the winning set, $a$ must be majority preferred to $d$}. We find it difficult to try to explain to $d$'s supporters that although $a$ was not majority preferred to $d$, nonetheless $a$ kicks $d$ out of the winning set because of $a$'s relation to \textit{other candidates}, $f$ and $e$, \textit{neither of whom defeat $d$}!\footnote{We assume that $e$ does not defeat $d$ because of the perfect cycle involving $d$, $f$, and $e$.} Of course reasonable definitions of defeat cannot fully satisfy the independence of irrelevant alternatives (IIA) criterion (\citealt{Arrow1963}),\footnote{Here we take IIA to state that if two profiles are alike with respect to how everyone votes on $x$ vs.~$y$, then $x$ defeats $y$ in the one profile if and only if $x$ defeats $y$ in the other.} but in our view this seems too flagrant a violation of the idea behind IIA. We endorse the following weakening of IIA, known as \textit{weak IIA} (\citealt{Baigent1987}): if two profiles are alike with respect to how everyone votes on $x$ vs.~$y$, then it should not be possible that in  one profile, $x$ defeats $y$, while in the other, $y$ defeats $x$ (though it should be possible that in one, $x$ defeats $y$, while in the other, neither $x$ defeats $y$ nor $y$ defeats $x$, due to a cycle). Let $\mathbf{P}$ be a profile whose margin graph is shown above, and let $\mathbf{P}'$ be a profile just like $\mathbf{P}$ with respect to how everyone votes on $a$ vs. $d$ but in which all voters have either $a$ followed by $d$ or $d$ followed by $a$ at the top of their ballots, followed by the linear order $b\mathbf{P}_i'c\mathbf{P}_i'e\mathbf{P}_i'f$. In $\mathbf{P}'$, since $d$ is majority preferred to $a$ by 2 and there are no cycles, surely $d$ should defeat $a$, kicking $a$ out of the winning set. Then it follows by weak IIA that in $\mathbf{P}$, $a$ does not defeat $d$. Thus, weak IIA is inconsistent with the indirect notion of defeat according to Beat Path. By contrast, it is satisfied by the direct notion of defeat we will define for Split~Cycle.\footnote{In fact, in \citealt{HP2021}, we characterize the Split Cycle defeat relation using an axiom of Coherent IIA that is stronger than weak IIA.}\end{paragraph}

\subsection{Defining Split Cycle}\label{Defining}

To define Split Cycle, in line with our first idea above, we first measure the degree of incoherence of a cycle by the smallest margin occurring on an edge in the cycle---for if we raise our threshold above that margin, then we split the cycle, restoring coherence at the higher threshold as in the second graph in Section \ref{ThreeIdeasSection}.

\begin{definition} Let $\mathbf{P}$ be a profile and $\rho$ a simple cycle in $\mathcal{M}(\mathbf{P})$. The \textit{splitting number} of $\rho$, $Split\#_\mathbf{P}(\rho)$, is the smallest margin between consecutive candidates in $\rho$ (e.g., the splitting number of $a\overset{3}{\to} b \overset{1}{\to} c \overset{5}{\to} a$ is~$1$). We omit the subscript for $\mathbf{P}$ when the profile is clear from context. 
\end{definition}
\noindent Thus, for example, the splitting number of the cycle in the three-candidate margin graph in Section \ref{ThreeIdeasSection} is 1, while the splitting number of the cycle in the four-candidate margin graph in Section \ref{ThreeIdeasSection} is 3.

In line with our second idea that incoherence can be localized, when deciding whether $a$ defeats $b$, we look at all and only the simple cycles containing $a$ and $b$ (not at the other cycles that do not contain $a$ and $b$); and in line with our third idea about the directness of defeat, for $a$ to defeat $b$, we require that \textit{the direct margin of $a$ over $b$} exceeds the splitting number of every simple cycle containing $a$ and $b$, which means that  that direct margin survives after we raise the threshold above those splitting numbers.

\begin{definition}\label{DefeatDef} Let $\mathbf{P}$ be a profile and $a,b\in X(\mathbf{P})$. Then \textit{$a$ defeats $b$ in $\mathbf{P}$} if $Margin_\mathbf{P}(a,b)>0$ and
\[Margin_\mathbf{P}(a,b)>Split\#(\rho)\mbox{ for every simple cycle $\rho$ in }\mathcal{M}(\mathbf{P})\mbox{ containing $a$ and $b$}.\]
A candidate $b$ is \textit{undefeated in $\mathbf{P}$} if there is no candidate who defeats $b$.
\end{definition}

\begin{remark} Just as some sports have a \textit{win~by~$2$} rule for defeat, Split Cycle says that for $a$ to defeat $b$, $a$ must win by \textit{more than} $n$ over $b$, where $n$ is the smallest number such that there are no cycles involving $a$ and $b$ in the \textit{wins by more than $n$} relation, defined by $xW_\mathbf{P}^ny$ if $Margin_\mathbf{P}(x,y)>n$. 
\end{remark}

Finally, we can define the voting method we call Split Cycle:

\begin{definition}\label{SCDef} For any profile $\mathbf{P}$, the set of Split Cycle winners, $SC(\mathbf{P})$, is the set of candidates who are undefeated in $\mathbf{P}$.
\end{definition}

\noindent As explained in Section \ref{Introduction}, one can determine  $SC(\mathbf{P})$ in a simple two-step process (see Footnote \ref{FloydWarshall} for a faster algorithm): 1. For each simple cycle, identify the edges with the smallest margin in that cycle. 2. After completing step 1 for all simple cycles, discard the identified edges. All remaining edges count as defeats.

\begin{remark}\label{SCofMarginGraph} Since the only information Split Cycle uses about a profile $\mathbf{P}$ is its margin graph, we can also think of Split Cycle as assigning to each margin graph $\mathcal{M}$ a set $SC(\mathcal{M})$ of winners.
\end{remark}

Let us consider some examples of calculating the set of Split Cycle winners.

\begin{example} The Split Cycle winners for the margin graphs illustrating our three main ideas in Section \ref{ThreeIdeasSection} are as follows: in the three-candidate example, the unique Split Cycle winner is $c$; in the four-candidate example, the Split Cycle winners are $a$, $b$, and $c$; and in the six-candidate example, the Split Cycle winners are all candidates except $f$.\end{example}

\begin{example}\label{OverlappingCycles} For a more complicated example, consider the following margin graph, repeated three times to highlight the three different simple cycles:
\begin{center}
\begin{minipage}{1.5in}\begin{tikzpicture}

\node[circle,draw, minimum width=0.25in] at (0,0) (b) {$b$}; 
\node[circle,draw,minimum width=0.25in] at (3,0) (a) {$a$}; 
\node[circle,draw,minimum width=0.25in] at (1.5,1.5) (c) {$c$}; 
\node[circle,draw,minimum width=0.25in] at (1.5,-1.5) (d) {$d$}; 

\path[->,draw,thick] (b) to (a);
\path[->,draw,thick,red] (d) to (c);
\path[->,draw,thick,red] (c) to node[fill=white] {$8$} (b);
\path[->,draw,thick] (a) to node[fill=white] {$6$} (c);
\path[->,draw,thick] (a) to node[fill=white] {$4$} (d);
\path[->,draw,thick,red] (b) to node[fill=white] {$4$} (d);

\node[fill=white] at (1.5,.5)  {{\color{red}$6$}}; 
\node[fill=white] at (2,0)  {$8$}; 

\node[circle,draw,minimum width=0.25in] at (1.5,-3.5) (e) {$e$}; 
\path[->,draw,thick] (d) to node[fill=white] {$2$} (e);

  \end{tikzpicture}
\end{minipage}\hspace{.25in}\begin{minipage}{1.5in}\begin{tikzpicture}

\node[circle,draw, minimum width=0.25in] at (0,0) (b) {$b$}; 
\node[circle,draw,minimum width=0.25in] at (3,0) (a) {$a$}; 
\node[circle,draw,minimum width=0.25in] at (1.5,1.5) (c) {$c$}; 
\node[circle,draw,minimum width=0.25in] at (1.5,-1.5) (d) {$d$}; 

\path[->,draw,thick,blue] (b) to (a);
\path[->,draw,thick] (d) to (c);
\path[->,draw,thick,blue] (c) to node[fill=white] {$8$} (b);
\path[->,draw,thick,blue] (a) to node[fill=white] {$6$} (c);
\path[->,draw,thick] (a) to node[fill=white] {$4$} (d);
\path[->,draw,thick] (b) to node[fill=white] {$4$} (d);

\node[fill=white] at (1.5,.5)  {$6$}; 
\node[fill=white] at (2,0)  {{\color{blue}$8$}}; 

\node[circle,draw,minimum width=0.25in] at (1.5,-3.5) (e) {$e$}; 
\path[->,draw,thick] (d) to node[fill=white] {$2$} (e);

  \end{tikzpicture}
\end{minipage}\hspace{.25in}\begin{minipage}{1.5in}\begin{tikzpicture}

\node[circle,draw, minimum width=0.25in] at (0,0) (b) {$b$}; 
\node[circle,draw,minimum width=0.25in] at (3,0) (a) {$a$}; 
\node[circle,draw,minimum width=0.25in] at (1.5,1.5) (c) {$c$}; 
\node[circle,draw,minimum width=0.25in] at (1.5,-1.5) (d) {$d$}; 

\path[->,draw,thick,medgreen] (b) to (a);
\path[->,draw,thick,medgreen] (d) to (c);
\path[->,draw,thick,medgreen] (c) to node[fill=white] {$8$} (b);
\path[->,draw,thick] (a) to node[fill=white] {$6$} (c);
\path[->,draw,thick,medgreen] (a) to node[fill=white] {$4$} (d);
\path[->,draw,thick] (b) to node[fill=white] {$4$} (d);

\node[fill=white] at (1.5,.5)  {{\color{medgreen}$6$}}; 
\node[fill=white] at (2,0)  {{\color{medgreen}$8$}}; 

\node[circle,draw,minimum width=0.25in] at (1.5,-3.5) (e) {$e$}; 
\path[->,draw,thick] (d) to node[fill=white] {$2$} (e);

  \end{tikzpicture}
\end{minipage}\end{center}
The splitting number of the cycle $b\to d \to c\to b$ is 4; the splitting number of the cycle $b\to a \to c\to b$ is 6; and the splitting number of the cycle $b\to a\to d\to c\to b$ is 4. In each cycle, the edge with the smallest margin in that cycle is not a defeat. After discarding these edges (i.e., the $b\to d$ edge in the red cycle, the $a\to c$ edge in the blue cycle, and the $a\to d$ edge in the green cycle), the remaining edges are defeats:
\begin{center}
\begin{minipage}{1.5in}\begin{tikzpicture}

\node[circle,draw, minimum width=0.25in] at (0,0) (b) {$b$}; 
\node[circle,draw,minimum width=0.25in] at (3,0) (a) {$a$}; 
\node[circle,draw,minimum width=0.25in] at (1.5,1.5) (c) {$c$}; 
\node[circle,draw,minimum width=0.25in] at (1.5,-1.5) (d) {$d$}; 

\path[->,draw,thick] (b) to (a);
\path[->,draw,thick] (d) to (c);
\path[->,draw,thick] (c) to node[fill=white] {$D$} (b);

\node[fill=white] at (1.5,.5)  {$D$}; 
\node[fill=white] at (2,0)  {$D$}; 

\node[circle,draw,minimum width=0.25in] at (1.5,-3.5) (e) {$e$}; 
\path[->,draw,thick] (d) to node[fill=white] {$D$} (e);

  \end{tikzpicture}
  \end{minipage}
  \end{center}
Since $d$ is the only undefeated candidate, $d$ is the unique Split Cycle winner.\end{example}

Let us now show that the set of Split Cycle winners is always nonempty.

\begin{lemma}\label{NoCycles} For a profile $\mathbf{P}$, let the \textit{defeat graph of $\mathbf{P}$} be the directed graph whose set of nodes is $X(\mathbf{P})$ with an edge from $a$ to $b$ when $a$ defeats $b$ in $\mathbf{P}$.  Then for any profile $\mathbf{P}$, the defeat graph of $\mathbf{P}$ contains no cycles. Thus, $SC(\mathbf{P})\neq\varnothing$.
\end{lemma}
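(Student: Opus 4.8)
The plan is to argue by contradiction, exploiting the fact that Split Cycle's definition of defeat was engineered precisely so that the minimum-weight edge of any simple cycle can never be a defeat. Suppose the defeat graph of $\mathbf{P}$ contains a cycle. Since any cycle in a directed graph contains a simple cycle, we may fix a simple cycle $\rho = \langle x_1,\dots,x_n,x_1\rangle$ in the defeat graph, so that $x_i$ defeats $x_{i+1}$ for each $i\in\{1,\dots,n-1\}$ and $x_n$ defeats $x_1$.

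The first observation is that $\rho$ is not merely a cycle in the defeat graph but also a simple cycle in the margin graph $\mathcal{M}(\mathbf{P})$: by Definition \ref{DefeatDef}, whenever $a$ defeats $b$ we have $Margin_\mathbf{P}(a,b)>0$, i.e.\ there is an edge from $a$ to $b$ in $\mathcal{M}(\mathbf{P})$. Hence every consecutive pair in $\rho$ is joined by an edge of $\mathcal{M}(\mathbf{P})$, and the nodes are distinct (apart from $x_1=x_n$) because $\rho$ was chosen simple. Now let the edge from $x_j$ to $x_{j+1}$ realize the smallest margin along $\rho$, so that $Margin_\mathbf{P}(x_j,x_{j+1}) = Split\#(\rho)$ by the definition of the splitting number. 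But $x_j$ defeats $x_{j+1}$, and $\rho$ is a simple cycle in $\mathcal{M}(\mathbf{P})$ containing both $x_j$ and $x_{j+1}$, so Definition \ref{DefeatDef} requires $Margin_\mathbf{P}(x_j,x_{j+1}) > Split\#(\rho)$. Together with the previous equality this gives $Split\#(\rho) > Split\#(\rho)$, a contradiction. Therefore the defeat graph of $\mathbf{P}$ has no cycles.

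Finally, to conclude $SC(\mathbf{P})\neq\varnothing$, I would invoke the standard fact that a finite nonempty directed graph with no cycles has a node of in-degree $0$: starting from any candidate and repeatedly passing to a predecessor (a defeater), the sequence of distinct candidates visited cannot continue forever, since $X(\mathbf{P})$ is finite and a repeat would produce a cycle; the terminal candidate has no defeater and hence lies in $SC(\mathbf{P})$.

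I do not expect any real obstacle here: the content of the lemma is entirely front-loaded into Definition \ref{DefeatDef}, and the only points requiring a word of care are (i) reducing an arbitrary defeat-cycle to a \emph{simple} cycle so that it counts as a witness in the quantifier ``every simple cycle $\rho$ containing $a$ and $b$,'' and (ii) noting that a defeat-cycle is automatically a cycle of $\mathcal{M}(\mathbf{P})$ so that ``$Split\#(\rho)$'' is even defined. The nonemptiness claim is then just finiteness of $X(\mathbf{P})$ plus acyclicity.
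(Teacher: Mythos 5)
Your proof is correct and follows essentially the same route as the paper's: reduce an alleged defeat cycle to a simple cycle in $\mathcal{M}(\mathbf{P})$ and derive a contradiction with Definition \ref{DefeatDef}, since the minimum-margin edge of that cycle cannot exceed its own splitting number. Your explicit finiteness argument for $SC(\mathbf{P})\neq\varnothing$ is also the intended one (and mirrors the reasoning used later in Lemma \ref{BeatPathFromWinner}).
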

\begin{proof} Suppose there is a cycle $a_1Da_2D\dots Da_nDa_1$ in the defeat graph of $\mathbf{P}$, which we may assume is simple (since if there is any cycle, there is a simple one). This yields a simple cycle $\rho=a_1 \overset{\alpha_1}{\longrightarrow} a_2 \overset{\alpha_2}{\longrightarrow} \dots \overset{\alpha_{n-1}}{\longrightarrow} a_n \overset{\alpha_n}{\longrightarrow} a_1$ in  $\mathcal{M}(\mathbf{P})$ where each margin $\alpha_i$ is greater than the splitting number of any simple cycle containing $a_i,a_{i+1\;\mathrm{mod}\; n}$ and hence greater than the splitting number of $\rho$ itself, which is impossible.\end{proof}

\begin{remark} Like defeat relations in sports tournaments, the Split Cycle defeat relation is not necessarily transitive: it may be, as in Example \ref{OverlappingCycles}, that $d$ defeated $c$, and $c$ defeated $b$, while $d$ is not among those who defeated $b$---nonetheless, $b$ is not among the winners of the tournament, having been defeated by $c$.  \textit{Acyclicity}, as in Lemma \ref{NoCycles}, is sufficient for there always to be a nonempty set of winners---transitivity is not required. That the Split Cycle defeat relation is acyclic but not necessarily transitive explains how it can satisfy weak IIA without contradicting Baigent's \citeyearpar{Baigent1987} generalization of Arrow's impossibility theorem (cf.~\citealt{Campbell2000}), which states that under Arrow's axioms but with IIA weakened to weak IIA, there must be a weak dictator (a voter $i$ such that if $i$ prefers $x$ to $y$, then $y$ does not defeat $x$ socially). Baigent's theorem requires that the social defeat relation is not only acyclic but a strict weak order.\footnote{Baigent's theorem also assumes that profiles assign strict weak orders to voters, not just linear orders, but his result also holds for the domain of all linear profiles (also see \citealt{Campbell2000}).} Implicit here is that we can view Split Cycle as a \textit{collective choice rule}, i.e., a function mapping each profile $\mathbf{P}$ to a binary relation on $X(\mathbf{P})$ (cf.~\citealt[Ch.~2*]{Sen2017}), by taking the binary relation to be the defeat relation. This is the perspective on Split Cycle adopted in \citealt{HP2021}. However, in this paper we focus on Split Cycle as a voting method (as in Definition \ref{VotingMethodDef}) that maps each profile to a set of winners.\end{remark}

Another useful lemma about Split Cycle is that if a candidate $z$ is not a winner for a profile $\mathbf{P}$, then there is some winner $x$ and a path in the defeat graph of $\mathbf{P}$ from $x$ to $z$.

\begin{lemma}\label{BeatPathFromWinner} For any profile $\mathbf{P}$ and $z\in X(\mathbf{P})\setminus SC(\mathbf{P})$, there is an $x\in SC(\mathbf{P})$ and distinct ${y_1,\dots,y_n\in X(\mathbf{P})}$ with $y_1=x$ and $y_n=z$ such that $y_1Dy_{2}D\dots Dy_{n-1} D y_n$.
\end{lemma}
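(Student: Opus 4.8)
The plan is to induct on the length of the longest path in the defeat graph that terminates at $z$, or equivalently to follow defeat edges backwards from $z$ until we reach an undefeated candidate. Since $z\notin SC(\mathbf{P})$, by Definition \ref{SCDef} there is some candidate $w_1$ that defeats $z$. If $w_1\in SC(\mathbf{P})$ we are done (take $n=2$, $x=w_1$). Otherwise $w_1$ is itself defeated by some $w_2$, and so on. First I would formalize this as: repeatedly pick a defeater, generating a sequence $z=v_0, v_1, v_2,\dots$ with $v_{i+1}Dv_i$ for all $i$. By Lemma \ref{NoCycles} the defeat graph is acyclic, and since $X(\mathbf{P})$ is finite, this backwards chain cannot repeat a vertex and must therefore terminate; it can only terminate at a vertex with no defeater, i.e., an element of $SC(\mathbf{P})$. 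Reversing the chain gives $y_1 = x \in SC(\mathbf{P})$, $y_n = z$, with $y_1 D y_2 D \cdots D y_n$.

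The one subtlety to address carefully is the requirement that $y_1,\dots,y_n$ be \emph{distinct}. The naive backwards-chain argument as described already gives distinctness, because a repeat $v_i = v_j$ with $i<j$ would yield a cycle $v_i D v_{i-1} D \cdots D v_j = v_i$ wait --- more precisely, the edges go $v_{k+1} D v_k$, so a repeat produces a directed cycle in the defeat graph, contradicting Lemma \ref{NoCycles}. Hence no vertex recurs, the sequence has length at most $|X(\mathbf{P})|$, and distinctness is automatic. So I would phrase the proof to make this explicit: the finiteness of $X(\mathbf{P})$ together with acyclicity (Lemma \ref{NoCycles}) guarantees the backwards process halts at an undefeated candidate after finitely many distinct steps.

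I do not expect any real obstacle here; the lemma is essentially an immediate consequence of Lemma \ref{NoCycles} (acyclicity) plus finiteness. The only thing worth being slightly careful about is the bookkeeping of indices when reversing the chain and invoking acyclicity to rule out repeats --- that is the ``main'' step, though it is routine. An alternative packaging, if one prefers to avoid talking about an iterative process, is to consider the set $U$ of all candidates from which there is a defeat-path to $z$ (including defeat-paths of length $0$, so $z\in U$); since $z\notin SC(\mathbf{P})$, pick within $U$ a candidate $x$ that is $D$-maximal in the sense of having no defeater inside $U$ --- such exists because $D$ restricted to $U$ is acyclic on a finite set --- and argue $x$ has no defeater at all in $X(\mathbf{P})$ (any defeater of $x$ would again lie in $U$ by concatenating paths), so $x\in SC(\mathbf{P})$; then extract a path from $x$ to $z$ and prune it to a simple (hence vertex-distinct) path. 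Either formulation works; I would go with the backwards-chain version as it is the most transparent.
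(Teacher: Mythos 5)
Your backwards-chain argument is correct and is essentially the same as the paper's own proof, which also builds the sequence $w_1=z$, $w_2 D w_1$, $w_3 D w_2,\dots$ and invokes finiteness of $X(\mathbf{P})$ together with acyclicity of the defeat graph (Lemma \ref{NoCycles}) to terminate at an undefeated candidate, then relabels to get $y_1 D y_2 D \dots D y_n$. Your extra remark on why the vertices are automatically distinct is a fine (and correct) elaboration of the same idea.
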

\begin{proof} We first find $w_1,\dots,w_n\in X(\mathbf{P})$ such that $w_nDw_{n-1}D\dots Dw_2 D w_1$ and then relabel $w_1,\dots,w_n$ as $y_n,\dots,y_1$, so that $y_1Dy_{2}D\dots Dy_{n-1} D y_n$. If $z\in X(\mathbf{P})\setminus SC(\mathbf{P})$, then setting $w_1=z$, there is a $w_2$ such that $w_2Dz$. If $w_2\in SC(\mathbf{P})$, then we are done with $x=w_2$; otherwise, there is a $w_3$ such that $w_3 Dw_2$; and so on. Since $X(\mathbf{P})$ is finite and there are no cycles in the defeat graph of $\mathbf{P}$ by Lemma \ref{NoCycles}, we eventually find the desired $w_n\in SC(\mathbf{P})$.\end{proof}

Yet another useful lemma about Split Cycle is that to check whether $a$ defeats $b$, it suffices to check the splitting number of just the simple cycles in which $b$ immediately follows $a$, rather than all simple cycles containing $a$ and $b$.

\begin{lemma}\label{OnlySomeCycles} Let $\mathbf{P}$ be a profile and $a,b\in X(\mathbf{P})$. Then $a$ defeats $b$ in $\mathbf{P}$ if and only if $Margin_\mathbf{P}(a,b)>0$ and \[Margin_\mathbf{P}(a,b)>Split\#(\rho)\mbox{ for every simple cycle $\rho$ in }\mathcal{M}(\mathbf{P})\mbox{ of the form } a \rightarrow b\rightarrow x_1\rightarrow \dots\rightarrow x_n\rightarrow a.\]\end{lemma}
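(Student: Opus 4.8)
The plan is to prove the two directions separately. The left-to-right direction is immediate: if $a$ defeats $b$ in the sense of Definition \ref{DefeatDef}, then $Margin_\mathbf{P}(a,b)>0$ and $Margin_\mathbf{P}(a,b)>Split\#(\rho)$ for \emph{every} simple cycle containing $a$ and $b$, hence in particular for every simple cycle of the restricted form $a\to b\to x_1\to\dots\to x_n\to a$.

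For the right-to-left direction, assume $Margin_\mathbf{P}(a,b)>0$ and that $Margin_\mathbf{P}(a,b)$ exceeds the splitting number of every simple cycle in which $b$ immediately follows $a$; I want to conclude that $a$ defeats $b$, i.e.\ that $Margin_\mathbf{P}(a,b)>Split\#(\sigma)$ for \emph{every} simple cycle $\sigma$ in $\mathcal{M}(\mathbf{P})$ containing both $a$ and $b$. Fix such a $\sigma$. Traversing $\sigma$ from $a$, it decomposes into a directed arc $\pi_1$ from $a$ to $b$ followed by a directed arc $\pi_2$ from $b$ back to $a$, and the edge sets of $\pi_1$ and $\pi_2$ partition the edge set of $\sigma$. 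The key move is to replace the arc $\pi_1$ by the single edge $a\to b$ (which exists since $Margin_\mathbf{P}(a,b)>0$): let $\rho$ be the cycle consisting of the edge $a\to b$ followed by $\pi_2$. Since the nodes traversed by $\rho$ form a subsequence of the pairwise-distinct nodes of the simple cycle $\sigma$, the cycle $\rho$ is again simple, and by construction it has the required form with $b$ immediately following $a$.

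Now apply the hypothesis: $Margin_\mathbf{P}(a,b)>Split\#(\rho)$. Because one edge of $\rho$ is $a\to b$ itself, whose margin $Margin_\mathbf{P}(a,b)$ strictly exceeds $Split\#(\rho)$, the minimum defining $Split\#(\rho)$ must be realized on an edge lying in $\pi_2$; hence some edge of $\pi_2$ has margin $Split\#(\rho)<Margin_\mathbf{P}(a,b)$. But every edge of $\pi_2$ is an edge of $\sigma$, so $Split\#(\sigma)\le Split\#(\rho)<Margin_\mathbf{P}(a,b)$, which is exactly what was needed. Since $\sigma$ was an arbitrary simple cycle containing $a$ and $b$, this shows $a$ defeats $b$.

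The argument is short, and the only points requiring care are the bookkeeping for $\rho$ — verifying it is genuinely a \emph{simple} cycle (node-distinctness inherited from $\sigma$, the $a\to b$ edge supplied by $Margin_\mathbf{P}(a,b)>0$) — and noting that the degenerate cases are handled by exactly the same reasoning with no change: if $b$ immediately follows $a$ in $\sigma$ then $\pi_1$ is just the edge $a\to b$ and $\rho=\sigma$; if $b$ immediately precedes $a$ then $\pi_2$ is just the edge $b\to a$. The crux of the proof is the observation that, \emph{under the hypothesis}, the splitting number of $\rho$ is forced to be attained within $\pi_2$, whose edges all sit inside $\sigma$; I do not expect any real obstacle beyond stating this cleanly.
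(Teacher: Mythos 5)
Your proof is correct, and it rests on the same key construction as the paper's: given an arbitrary simple cycle $\sigma$ through $a$ and $b$, replace the arc of $\sigma$ from $a$ to $b$ by the direct edge $a\to b$ (available since $Margin_\mathbf{P}(a,b)>0$) to obtain a simple cycle $\rho$ of the restricted form. Where you differ is in how the conclusion is extracted. The paper first fixes a simple cycle through $a$ and $b$ of \emph{maximal} splitting number and argues, by contradiction with that maximality, that the surrogate cycle has splitting number at least as large; you instead observe directly, for an arbitrary $\sigma$, that the hypothesis $Margin_\mathbf{P}(a,b)>Split\#(\rho)$ forces the minimum defining $Split\#(\rho)$ to be attained on an edge of $\pi_2\subseteq\sigma$, whence $Split\#(\sigma)\le Split\#(\rho)<Margin_\mathbf{P}(a,b)$. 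This dispenses with the extremal choice and the contradiction, so your version is a genuine streamlining of the published argument, at no loss of generality. One small remark: the degenerate case in which $\pi_2$ is the single edge $b\to a$ (so that $\rho$ would be a $2$-cycle, not of the stated form $a\to b\to x_1\to\dots\to x_n\to a$) never arises, because the edge relation of a margin graph is asymmetric and $Margin_\mathbf{P}(a,b)>0$ excludes an edge $b\to a$; it would be cleaner to note this vacuity explicitly rather than to claim that case is handled ``by the same reasoning.''
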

\begin{proof}  Obviously if $Margin_\mathbf{P}(a,b)$ is greater than the splitting number of every simple cycle containing $a$ and $b$, then it is greater than the splitting number of every simple cycle of the form $a \rightarrow b\rightarrow x_1\rightarrow \dots\rightarrow x_n\rightarrow a$. Conversely, assume $Margin_\mathbf{P}(a,b)$ is greater than the splitting number of every simple cycle of the form $a \rightarrow b\rightarrow x_1\rightarrow \dots\rightarrow x_n\rightarrow a$. To show that $Margin_\mathbf{P}(a,b)$ is greater than the splitting number of every simple cycle containing $a$ and $b$, let $\rho$ be a simple cycle containing $a$ and $b$ whose splitting number is maximal among all such cycles. If $\rho$ contains $a\to b$, then we are done. So suppose $\rho$ does not contain $a\to b$. Without loss of generality, we may assume $\rho$ is of the form $b\to x_1\to \dots \to x_n\to a \to y_1 \to\dots \to y_m\to b$. Let $\rho'$ be $a\to b \to x_1\to\dots \to x_n \to a$. It follows from our initial assumption that the splitting number of $\rho'$ is not equal to the margin of the $a\to b$ edge.  Since $\rho$ has maximal splitting number of any simple cycle containing $a$ and $b$, it follows that one of the edges in $\rho'$ after the $a\to b$ edge has this splitting number as its margin; for if none of the edges in $\rho'$ after the $a\to b$  edge has this splitting number as its margin, then since the splitting number is defined as a minimum,  $\rho'$ has a higher splitting number than $\rho$, contradicting the fact that $\rho$ has maximal splitting number of any simple cycle containing $a$ and $b$. Thus, $\rho'$ has splitting number at least that of $\rho$, and by assumption $Margin_\mathbf{P}(a,b)> Split\#(\rho')$, so we have $Margin_\mathbf{P}(a,b)>Split\#(\rho)$. Thus, $Margin_\mathbf{P}(a,b)$ is greater than the splitting number of every simple cycle containing $a$ and $b$.\end{proof}

\begin{remark}\label{HeitzigRemark} After submitting this paper, we learned from Markus Schulze that Lemma \ref{OnlySomeCycles} relates Split Cycle to the notion of \textit{immunity to binary arguments} in \citealt{Heitzig2002}. In particular, Split Cycle (along with Beat Path and Ranked Pairs) satisfies all of Heitzig's axioms $(\mathrm{Im}_{M_\alpha})$ for $1/2<\alpha\leq 1$. Although when defining choice rules, Heitzig \citeyearpar[Lemma 2 and following]{Heitzig2002} only defines rules based on his notion of \textit{strong} immunity to binary arguments,\footnote{Compare Heitzig's notion of strong immunity to Schwartz's \citeyearpar{Schwartz1986} characterization of GOCHA in Lemma \ref{GOCHALem} below.} which includes Beat Path (in his notation, the rule that selects the common optimal elements of the chain $\{\mathrm{tr}_S(M_\alpha)\mid \frac{1}{2}<\alpha\leq 1\}$), not Split Cycle, it is natural in that setting to consider the Split Cycle rule formulated as in Lemma \ref{OnlySomeCycles} as well. Heitzig's axioms $(\mathrm{Im}_{M_\alpha})$ are also closely related to the notion of a \textit{stack} from \citealt{ZavistTideman1989}, defined in Section \ref{Comparison}.\end{remark}

It will facilitate reasoning about the defeat relation to introduce one more convenient piece of notation.

\begin{definition} Let $\mathbf{P}$ be a profile and $a,b\in X(\mathbf{P})$. The \textit{cycle number of $a$ and $b$ in $\mathbf{P}$} is \[Cycle\#_\mathbf{P}(a,b)=\mathrm{max}(\{0\}\cup \{Split\#(\rho) \mid \rho\mbox{ a simple cycle of the form } a \rightarrow b\rightarrow x_1\rightarrow \dots\rightarrow x_n\rightarrow a\}) .\]
\end{definition}

Then we can equivalently rewrite the definition of the defeat relation as follows.

\begin{lemma}\label{CycleNumLem} Let $\mathbf{P}$ be a profile and $a,b\in X(\mathbf{P})$. Then \textit{$a$ defeats $b$ in $\mathbf{P}$} if and only if \[Margin_\mathbf{P}(a,b)>Cycle\#_\mathbf{P}(a,b).\]
\end{lemma}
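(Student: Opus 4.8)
The plan is to derive this directly from Lemma \ref{OnlySomeCycles}, which already reduces the defeat condition to checking only simple cycles of the form $a \rightarrow b\rightarrow x_1\rightarrow \dots\rightarrow x_n\rightarrow a$. By that lemma, $a$ defeats $b$ in $\mathbf{P}$ iff $Margin_\mathbf{P}(a,b)>0$ and $Margin_\mathbf{P}(a,b)>Split\#(\rho)$ for every such cycle $\rho$. I would then observe that $Cycle\#_\mathbf{P}(a,b)$ is precisely the maximum of $\{0\}$ together with all the relevant splitting numbers, so that the quantified inequality ``$Margin_\mathbf{P}(a,b)>Split\#(\rho)$ for all such $\rho$'' is equivalent to the single inequality $Margin_\mathbf{P}(a,b)>Cycle\#_\mathbf{P}(a,b)$ — modulo a small amount of care about the positivity clause.

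The one point needing attention is how the clause $Margin_\mathbf{P}(a,b)>0$ is absorbed, and here I would split into two cases. If there is no simple cycle of the relevant form containing $a$ and $b$, then $Cycle\#_\mathbf{P}(a,b)=0$, the universally quantified condition in Lemma \ref{OnlySomeCycles} is vacuous, and both sides reduce to $Margin_\mathbf{P}(a,b)>0$, which is the same as $Margin_\mathbf{P}(a,b)>Cycle\#_\mathbf{P}(a,b)$. If there is at least one such cycle, then since every edge in a margin graph carries a positive integer weight, every splitting number is at least $1$, so $Cycle\#_\mathbf{P}(a,b)\geq 1$; hence $Margin_\mathbf{P}(a,b)>Cycle\#_\mathbf{P}(a,b)$ already entails $Margin_\mathbf{P}(a,b)>0$, and the positivity clause is redundant. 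In this case the equivalence of the quantified inequality with $Margin_\mathbf{P}(a,b)>Cycle\#_\mathbf{P}(a,b)$ is just the definition of maximum over a nonempty finite set.

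This argument is essentially bookkeeping once Lemma \ref{OnlySomeCycles} is in hand, so I do not anticipate a genuine obstacle; the only thing one must not overlook is that $Cycle\#$ takes the maximum with $\{0\}$ adjoined precisely so that the ``no relevant cycle'' case is handled uniformly, and that positivity of margins makes the $Margin_\mathbf{P}(a,b)>0$ conjunct automatic whenever $Cycle\#_\mathbf{P}(a,b)>0$. I would present the proof as a two-case verification along these lines, citing Lemma \ref{OnlySomeCycles} for the reduction and the definition of $Cycle\#_\mathbf{P}(a,b)$ for the rewriting.
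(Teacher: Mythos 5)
Your proposal is correct and matches the paper's intent: the paper states Lemma \ref{CycleNumLem} as an immediate rewriting of Lemma \ref{OnlySomeCycles} via the definition of $Cycle\#_\mathbf{P}(a,b)$, offering no separate proof, and your two-case bookkeeping (no relevant cycle vs.\ at least one, with positivity of edge weights absorbing the $Margin_\mathbf{P}(a,b)>0$ clause) is exactly the verification left implicit there.
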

\noindent We will often apply Lemmas \ref{OnlySomeCycles} and \ref{CycleNumLem} in proofs without comment.

\subsection{Refinements of Split Cycle}\label{Comparison}

\citealt{HP2021} argues that the Split Cycle defeat relation from Definition \ref{DefeatDef} provides the right notion of one candidate defeating another in a democratic election using ranked ballots. Let us say that a voting method $F$ is regarded as a \textit{pre-tiebreaking voting method} if one regards $F(\mathbf{P})$ as the set of undefeated candidates and regards any further narrowing of $F(\mathbf{P})$ as ``tiebreaking.'' The political significance of this distinction is that if $F(\mathbf{P})$ contains a single winner, then that winner may be viewed as having a stronger mandate from voters, as a result of a more unambiguous election, than a candidate who is among several undefeated candidates in $F(\mathbf{P})$ but wins by some further tiebreaking process. In this paper, we are proposing Split Cycle as a pre-tiebreaking voting method.

Since there can be multiple undefeated candidates, the question arises of how to pick an ultimate winner from among the undefeated. In addition to the usual non-anonymous, or non-neutral, or non-deterministic tiebreaking procedures (e.g., let the Chair decide among the undefeated, or use seniority to decide among the undefeated, or randomly choose an undefeated candidate), one can apply an anonymous, neutral, and deterministic tiebreaker before resorting to tiebreakers that violates one of these properties. Indeed, one can view voting methods that refine Split Cycle as deterministic tiebreakers. On this approach, an election result consists in an announcement of undefeated candidates according to Split Cycle and, in the event of multiple undefeated candidates, the announcement of a tiebreak winner. This is precisely how Split Cycle is used on the election website \href{https://stablevoting.org}{stablevoting.org}, where the tiebreaking procedure is the recently proposed Stable Voting method (\citealt{HP2022}).

Other refinements of Split Cycle are the well-known Beat Path (\citealt{Schulze2011,Schulze2022}) and Ranked Pairs (\citealt{Tideman1987,ZavistTideman1989}) voting methods, as well as variants of Ranked Pairs such as the River method (\citealt{Heitzig2004b}). These methods may pick different candidates from among the undefeated candidates according to Split Cycle, as shown in Example \ref{TwoCondorcetian} below in the case of Ranked Pairs and Beat Path. All of these methods, including Stable Voting, satisfy the following property, which implies that non-anonymous, non-neutral, or non-deterministic tiebreaking is only needed in non-uniquely weighted profiles.

\begin{definition} A voting method $F$ is \textit{quasi-resolute} if for every uniquely-weighted $\mathbf{P}\in\mathrm{dom}(F)$,  ${\vert F(\mathbf{P})\vert  =1}$.
\end{definition}
\noindent That Split Cycle is not quasi-resolute is shown by Example \ref{TwoCondorcetian} in the next section.

 According to Beat Path, $a$ wins in $\mathbf{P}$ if for every other candidate $b$, the \textit{strongest path} from $a$ to $b$ in $\mathcal{M}(\mathbf{P})$ is at least as strong as the strongest path from $b$ to $a$ in $\mathcal{M}(\mathbf{P})$, where the strength of a path is the smallest margin between consecutive candidates in the path. We can relate Split Cycle to  Beat Path using the following lemma.\footnote{\label{FloydWarshall}Lemma 3.17 immediately suggests an efficient algorithm for computing the Split Cycle defeat relation: for all candidates $a$ and $b$, if the margin of $a$ over $b$ is positive, check if there is a path from $b$ back to $a$ of strength at least the margin of $a$ over $b$; if not, $a$ defeats $b$. Lemma 3.17 also shows we can compute Split Cycle using a modification of the Floyd-Warshall algorithm used by Schulze  \citeyearpar{Schulze2011} to compute Beat Path. See the Python implementations at https://github.com/epacuit/splitcycle.}

\begin{lemma}\label{PathStrengthLem} Let $\mathbf{P}$ be a profile and $a,b\in X(\mathbf{P})$. Then $a$ defeats $b$ in $\mathbf{P}$ if and only if $Margin_\mathbf{P}(a,b)>0$ and $Margin_\mathbf{P}(a,b)>\mbox{the strength of the strongest path from $b$ to $a$}$.\end{lemma}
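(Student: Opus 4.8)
The plan is to read off the result from Lemma~\ref{OnlySomeCycles} by translating ``simple cycle through the edge $a\to b$'' into ``simple path from $b$ to $a$.'' By Lemma~\ref{OnlySomeCycles}, $a$ defeats $b$ in $\mathbf{P}$ iff $Margin_\mathbf{P}(a,b)>0$ and $Margin_\mathbf{P}(a,b)>Split\#(\rho)$ for every simple cycle $\rho$ of the form $a\to b\to x_1\to\dots\to x_n\to a$. Since the edge relation of $\mathcal{M}(\mathbf{P})$ is asymmetric, every such $\rho$ is obtained from a unique simple path $\pi=b\to x_1\to\dots\to x_n\to a$ by prepending the edge $a\to b$, and conversely; this is a bijection between the simple cycles quantified over in Lemma~\ref{OnlySomeCycles} and the simple paths from $b$ to $a$ in $\mathcal{M}(\mathbf{P})$.

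First I would observe that for such a pair $(\rho,\pi)$ we have $Split\#(\rho)=\min\{Margin_\mathbf{P}(a,b),\,\sigma(\pi)\}$, where $\sigma(\pi)$ denotes the strength of $\pi$ (the least margin on an edge of $\pi$), simply because the edge set of $\rho$ is the edge set of $\pi$ together with $a\to b$. Consequently $Margin_\mathbf{P}(a,b)>Split\#(\rho)$ holds if and only if $\sigma(\pi)<Margin_\mathbf{P}(a,b)$: if $\sigma(\pi)\ge Margin_\mathbf{P}(a,b)$ the minimum equals $Margin_\mathbf{P}(a,b)$, which is not strictly below itself, while if $\sigma(\pi)<Margin_\mathbf{P}(a,b)$ the minimum equals $\sigma(\pi)$. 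Hence the condition in Lemma~\ref{OnlySomeCycles} is equivalent to: $Margin_\mathbf{P}(a,b)>0$ and every simple path from $b$ to $a$ has strength less than $Margin_\mathbf{P}(a,b)$, i.e., the strongest \emph{simple} path from $b$ to $a$ has strength less than $Margin_\mathbf{P}(a,b)$.

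The last step is to upgrade ``strongest simple path'' to ``strongest path'': any walk from $b$ to $a$ can be turned into a simple path from $b$ to $a$ by excising closed sub-walks, which only removes edges and hence can only increase (or leave unchanged) the minimum margin along the path; thus the supremum of strengths over all paths from $b$ to $a$ is attained by a simple path. If there is no path from $b$ to $a$ at all, then there are no simple cycles through $a\to b$, so Lemma~\ref{OnlySomeCycles} holds vacuously, matching the convention in Appendix~\ref{BeatpathAppendix} that the strength of the nonexistent strongest path is $0<Margin_\mathbf{P}(a,b)$. Combining the three steps, together with the clause $Margin_\mathbf{P}(a,b)>0$ that both Lemma~\ref{OnlySomeCycles} and the present statement carry verbatim, yields the result. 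I expect the only points needing care to be the bookkeeping in the first step---confirming that $Split\#(\rho)$ really does take the weight of the $a\to b$ edge into account---and the treatment of the no-path case, so that the vacuous quantification in Lemma~\ref{OnlySomeCycles} lines up with the convention for path strengths used in Appendix~\ref{BeatpathAppendix}.
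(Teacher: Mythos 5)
Your proof is correct and follows essentially the same route as the paper's: invoke Lemma~\ref{OnlySomeCycles} and translate simple cycles through the edge $a\to b$ into simple paths from $b$ to $a$, noting that the condition $Margin_\mathbf{P}(a,b)>Split\#(\rho)$ amounts to exceeding the strength of the corresponding path. Your explicit bookkeeping that $Split\#(\rho)=\min\{Margin_\mathbf{P}(a,b),\sigma(\pi)\}$ and your handling of the no-path case (strength $0$ by convention) merely spell out details the paper's shorter proof leaves implicit, and the ``walk to simple path'' step is not even needed since Appendix~\ref{BeatpathAppendix} already defines paths to be simple.
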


\begin{proof} By Lemma \ref{OnlySomeCycles}, $a$ defeats $b$ if and only if $Margin_\mathbf{P}(a,b)$ is greater than 0 and the splitting number of every simple cycle of the form $a \rightarrow b\rightarrow x_1\rightarrow \dots\rightarrow x_n\rightarrow a$. But this is equivalent to  $Margin_\mathbf{P}(a,b)$ being greater than 0 and the strength of every path of the form $b\rightarrow x_1\rightarrow \dots\rightarrow x_n\rightarrow a$, which is equivalent to $Margin_\mathbf{P}(a,b)$ being greater than 0 and the strength of the strongest path from $b$ to $a$.\end{proof}

\begin{lemma}\label{SubsetLem} For any profile $\mathbf{P}$, $BP(\mathbf{P})\subseteq SC(\mathbf{P})$, where $BP$ is the Beat Path method.
\end{lemma}
\begin{proof} Suppose $a\not\in SC(\mathbf{P})$, so there is a $b\in X(\mathbf{P})$ such that $b$ defeats $a$ according to Split Cycle. Hence $Margin_\mathbf{P}(b,a)$ is greater than the strength of the strongest path from $a$ to $b$ by Lemma \ref{PathStrengthLem}. Since $b\to a$ is a path from $b$ to $a$, it follows that the strength of the strongest path  from $b$ to $a$ is greater than the strength of the strongest path from $a$ to $b$. Hence $a\not\in BP(\mathbf{P})$.\end{proof}

We can prove an analogous lemma for Ranked Pairs. To compute the Ranked Pairs winners, given a linear order $T$ of the edges of $\mathcal{M}(\mathbf{P})$, order the edges in $\mathcal{M}(\mathbf{P})$ from largest to smallest margin, breaking ties according to $T$. Considering each edge in turn, ``lock in'' the edge if adding the edge to the list of already locked-in edges does not create a cycle of locked-in edges. Then $a\in RP(\mathbf{P})$ if there is some $T$ such that after running the above algorithm with $T$, there is no locked-in edge pointing to $a$. We also make use of an alternative characterization of Ranked Pairs due to Zavist and Tideman \citeyearpar{ZavistTideman1989}. Given a profile $\mathbf{P}$, a linear order $L$ on $X(\mathbf{P})$ is a \textit{stack for $\mathbf{P}$} if for any $a,b\in X(\mathbf{P})$, if $aLb$, then there are distinct $x_1,\dots,x_n\in X(\mathbf{P})$ with $x_1=a$ and $x_n=b$ such that $x_i L x_{i+1}$ and $Margin_\mathbf{P}(x_i, x_{i+1})\geq Margin_\mathbf{P}(b,a)$ for all $i\in \{1,\dots, n-1\}$.

\begin{lemma}[\citealt{ZavistTideman1989}]\label{RPlem2} For any profile $\mathbf{P}$ and $a\in X(\mathbf{P})$, we have $a\in RP(\mathbf{P})$ if and only if $a$ is the maximum element in some stack for $\mathbf{P}$.\footnote{Using this lemma, we can also prove a strengthened version of Lemma \ref{SubsetLem2}: for any profile $\mathbf{P}$ and $a,b\in X(\mathbf{P})$, if $b$ defeats $a$ according to Split Cycle (Definition \ref{DefeatDef}), then $bLa$ for any stack $L$ for $\mathbf{P}$.}
\end{lemma}

\begin{lemma}\label{SubsetLem2} For any profile $\mathbf{P}$, $RP(\mathbf{P})\subseteq SC(\mathbf{P})$, where $RP$ is the Ranked Pairs method.
\end{lemma}
\begin{proof} Suppose $a\not\in SC(\mathbf{P})$, so there is some $b\in X(\mathbf{P})$ such that $Margin_\mathbf{P}(b,a)>0$ and $Margin_\mathbf{P}(b,a)>Split\#(\rho)$ for every simple cycle $\rho$ containing $b$ and $a$. Now suppose for contradiction that $a\in RP(\mathbf{P})$. Then by Lemma \ref{RPlem2}, there are distinct $y_1,\dots, y_m \in X(\mathbf{P})$ such that $a\overset{\alpha_0}{\longrightarrow} y_1\to\dots\to y_m\overset{\alpha_m}{\longrightarrow} b$ with $\alpha_i\geq Margin_\mathbf{P}(b,a)$ for each $i\in \{0,\dots,m\}$. But then $\rho:= b\to a\overset{\alpha_0}{\longrightarrow} y_1\to\dots\to y_m\overset{\alpha_m}{\longrightarrow} b$ is a simple cycle such that $Margin_\mathbf{P}(b,a)\not > Split\#(\rho)$, which is a contradiction. Hence $a\not\in RP(\mathbf{P})$.\end{proof}

The Stable Voting method is a refinement of Split Cycle by definition: to find the Stable Voting winner in $\mathbf{P}$, order the pairs $(a,b)$ of candidates such that $a$ is undefeated in $\mathbf{P}$ from largest to smallest value of $Margin_\mathbf{P}(a,b)$, and declare as Stable Voting winners the candidate(s) $a$ from the earliest pair(s) $(a,b)$ such that $a$ is a Stable Voting winner in $\mathbf{P}_{-b}$. Thus, Stable Voting is defined \textit{recursively}, where in the case of a profile with a single candidate, that candidate is the Stable Voting winner. Remarkably, the Simple Stable Voting procedure that is defined just like Stable Voting but without the requirement that $a$ is undefeated appears to always select from the undefeated candidates anyway in profiles that are uniquely weighted.\footnote{For non-uniquely weighted profiles, there are extremely rare examples in which $SSV(\mathbf{P})\not\subseteq SC(\mathbf{P})$. However, if one defines Simple Stable Voting to use parallel-universe tiebreaking of tied margins in the style of Ranked Pairs (see \citealt{Wangetal2019}), then Conjecture \ref{SVConjecture} implies that $SSV_{PUT}(\mathbf{P})\subseteq SC(\mathbf{P})$ and $SSV_{PUT}(\mathbf{P})= SV_{PUT}(\mathbf{P})$ for all profiles $\mathbf{P}$.}

\begin{conjecture}\label{SVConjecture} For any uniquely-weighted profile $\mathbf{P}$, $SSV(\mathbf{P})\subseteq SC(\mathbf{P})$ and $SSV(\mathbf{P})= SV(\mathbf{P})$, where $SSV$ and $SV$ are the Simple Stable Voting and Stable Voting methods, respectively. 
\end{conjecture}

The cost of deterministic tiebreaking is the violation of variable-candidate and variable-voter axioms from Sections \ref{SpoilerSection} and \ref{NoShowSection} that are satisfied by Split Cycle. Beat Path and Ranked Pairs violate even weaker axioms of \textit{partial} immunity to spoilers and stealers, defined in Section \ref{NewCriteria}, while Beat Path, Ranked Pairs, and Stable Voting all violate positive involvement. Of the known quasi-resolute refinements of Split Cycle, we prefer Stable Voting on the grounds that it satisfies an axiom of \textit{stability for winners with tiebreaking}, also defined in Section \ref{NewCriteria}, which implies partial immunity to spoilers and stealers.

\section{Spoilers, Stealers, and Stability}\label{NewCriteria}

In Section \ref{SpoilerSection}, we discussed the problem of spoiler effects in elections with more than two candidates. As noted, the \textit{independence of clones} criterion (\citealt{Tideman1987}) is often cited as an anti-spoiler axiom. In Section \ref{ClonesSection}, we show that Split Cycle satisfies this axiom. However, independence of clones only rules out a special type of spoiler effect, namely, vote splitting by the introduction of a similar candidate. But sometimes a candidate $b$ can spoil the election for a dissimilar candidate $a$, as in Example \ref{IRVExample}, where the Republican spoiled the election for the Democrat; and in real cases of vote splitting, such as Example \ref{BushNaderGore}, the ``similar candidate'' will almost never qualify as a clone in the formal sense (see Definition \ref{CloneDef}). 

To capture perverse effects of the introduction of a candidate not covered by independence of clones, in Section \ref{SpoilerSection} we suggested the concepts of \textit{spoilers} and \textit{stealers}, defined formally as follows.

\begin{definition} Let $F$ be a voting method, $\mathbf{P}\in\mathrm{dom}(F)$, and $a,b\in X(\mathbf{P})$. Then we say that:
\begin{enumerate}
\item \textit{$b$ spoils the election for $a$ in $\mathbf{P}$} if $a\in F(\mathbf{P}_{-b})$, $Margin_\mathbf{P}(a,b)>0$, $a\not\in F(\mathbf{P})$, and $b\not\in F(\mathbf{P})$; 
\item $b$ \textit{steals the election from $a$ in $\mathbf{P}$} if $a\in F(\mathbf{P}_{-b})$, $Margin_\mathbf{P}(a,b)>0$, $a\not\in F(\mathbf{P})$, and $b\in F(\mathbf{P})$.
\end{enumerate}
\end{definition}

Recall the three axioms from Section \ref{SpoilerSection}, now defined formally as well.

\begin{definition}\label{SpoilStealStable} Let $F$ be a voting method.
\begin{enumerate}
\item $F$ satisfies \textit{immunity to spoilers} if for  $\mathbf{P}\in\mathrm{dom}(F)$ and $a,b\in X(\mathbf{P})$, $b$ does not spoil the election for $a$.
\item $F$ satisfies \textit{immunity to stealers} if for  $\mathbf{P}\in\mathrm{dom}(F)$ and $a,b\in X(\mathbf{P})$, $b$ does not steal the election from~$a$.
\item $F$ satisfies \textit{stability for winners} if for  $\mathbf{P}\in\mathrm{dom}(F)$ and $a,b\in X(\mathbf{P})$, if   $a\in F(\mathbf{P}_{-b})$ and ${Margin_\mathbf{P}(a,b)>0}$, then $a\in F(\mathbf{P})$.
\end{enumerate}
\end{definition}

The following is immediate from the definition.

\begin{fact} Stability for winners is equivalent to the conjunction of immunity to spoilers and stealers.
\end{fact}

It is useful to have terminology for a candidate $a$  who would win without another candidate $b$ in the election, such that a majority of voters prefer $a$ to $b$.

\begin{definition} Given a voting method $F$, profile $\mathbf{P}$, and $a\in X(\mathbf{P})$, we say that \textit{$a$ is Condorcetian for $F$ in $\mathbf{P}$} if there is some $b\in X(\mathbf{P})$ such that  $a\in F(\mathbf{P}_{-b})$ and $Margin_\mathbf{P}(a,b)>0$.\footnote{In \citealt{HP2022}, such an $a$ is called \textit{stable for $F$ in $\mathbf{P}$}.}

That $a$ is \textit{weakly Condorcetian for $F$ in $\mathbf{P}$} is defined in the same way but with $Margin_\mathbf{P}(a,b)\geq 0$.
\end{definition}
Recall that $a$ is a Condorcet winner if $a$ wins against every other candidate head-to-head. In a similar spirit, a candidate who is Condorcetian wins against $X(\mathbf{P}_{-b})$ according to $F$ and wins against $b$ head-to-head. In fact, the notions of a Condorcet winner and a Condorcetian candidate are related as follows. Recall that $F$ is Condorcet consistent if $F(\mathbf{P})=\{a\}$ whenever $a$ is the Condorcet winner in $\mathbf{P}\in \mathrm{dom}(F)$.

\begin{lemma}\label{CondorcetianLem} Let $F$ be a voting method. Then (i) if $F$ is Condorcet consistent, $\mathbf{P}\in\mathrm{dom}(F)$  with $\vert X(\mathbf{P})\vert >1$, and $c$ is a Condorcet winner in $\mathbf{P}$, then $c$ is the unique Condorcetian candidate for $F$ in $\mathbf{P}$; and (ii) if for any $\mathbf{P}\in\mathrm{dom}(F)$ with a unique Condorcetian candidate $c$, $F(\mathbf{P})=\{c\}$, then $F$ is Condorcet consistent.
\end{lemma}
\begin{proof} For part (i), let $\mathbf{P}$ and $c$ be as in the statement. Then for any $a,b\in X(\mathbf{P})$, if $Margin_\mathbf{P}(a,b)>0$, then $b\neq c$, so $c\in X(\mathbf{P}_{-b})$ and hence $c$ is the Condorcet winner in $\mathbf{P}_{-b}$, so $F(\mathbf{P}_{-b})=\{c\}$ by the assumption that $F$ is Condorcet consistent. It follows that $c$ is the unique Condorcetian candidate.

For part (ii), suppose $c$ is a Condorcet winner in $\mathbf{P}$. If $\vert X(\mathbf{P})\vert  =1$, then $F(\mathbf{P})=\{c\}$. If $\vert X(\mathbf{P})\vert  >1$, then by part (i), $c$ is the unique Condorcetian candidate in $\mathbf{P}$ and hence by the assumption on $F$, $F(\mathbf{P})=\{c\}$. Thus, $F$ is Condorcet consistent.\end{proof}

Note that in a given profile, there may be no Condorcetian candidates. For example, let $F$ be a voting method that applies majority rule in two-candidate profiles. Then in an election with three candidates in a majority cycle, no candidate is Condorcetian for $F$; for if $a$ is majority preferred to $b$, then removing $b$ from the election results in a two-candidate profile in which $a$ loses. On the other hand, there may be more than one Condorcetian candidate, as in the following example.

\begin{example}\label{TwoCondorcetian} Assume $F$ is a voting method such that in any three-candidate profile with a majority cycle, if there is a candidate who has both the uniquely largest majority victory and the uniquely smallest majority loss, that candidate is among the winners. Then in a four-candidate profile with the following margin graph, both candidates $a$ and $c$ are Condorcetian for $F$ in $\mathbf{P}$ (for $a$, remove $c$, and for $c$, remove $b$):

\begin{center}
\begin{minipage}{2in}\begin{tikzpicture}
\node[circle,draw,minimum width=0.25in] at (0,0)      (a) {$b$}; 
\node[circle,draw,minimum width=0.25in] at (3,0)      (b) {$d$}; 
\node[circle,draw,fill=blue!25,minimum width=0.25in] at (1.5,1.5)  (c) {$c$}; 
\node[circle,draw,fill=blue!25,minimum width=0.25in] at (1.5,-1.5) (d) {$a$};
\path[->,draw,thick] (a) to[pos=.7] node[fill=white] {$10$} (b);
\path[->,draw,thick] (c) to node[fill=white] {$2$} (a);
\path[->,draw,thick] (d) to node[fill=white] {$12$} (a);
\path[->,draw,thick] (c) to node[fill=white] {$8$} (b);
\path[->,draw,thick] (b) to node[fill=white] {$6$} (d);
\path[->,draw,thick] (d) to[pos=.7]  node[fill=white] {$4$} (c);
\end{tikzpicture}
\end{minipage}
\end{center}
Both $a$ and $c$ are undefeated according to Split Cycle; Ranked Pairs picks only $a$; and Beat Path picks only~$c$.\end{example}

While what we called a pre-tiebreaking method in Section \ref{Comparison} can---and we think should---satisfy stability for winners, Example \ref{TwoCondorcetian} suggests that a voting method that incorporates tiebreaking cannot. Indeed, we prove the following impossibility theorem in other work.

\begin{theorem}[\citealt{HPZ2022}] There is no voting method (whose domain contains all profiles with up to four candidates) satisfying anonymity, neutrality, stability for winners, and quasi-resoluteness.
\end{theorem}
\noindent We will prove a related impossibility theorem in Section \ref{Resolvability}. 

While tiebreaking is therefore inconsistent with selecting \textit{all} the Condorcetian candidates, it is compatible with selecting from \textit{among} the Condorcetian candidates. These observations lead us to the following modified axioms applicable to tiebreaking procedures.

\begin{definition}\label{PartialAxioms} Let $F$ be a voting method.
\begin{enumerate}
 \item $F$ satisfies \textit{partial immunity to spoilers} if for all $\mathbf{P}\in\mathrm{dom}(F)$ and $a,b\in X(\mathbf{P})$, if $a$ is the unique Condorcetian candidate in $\mathbf{P}$, then $b$ does not spoil the election for $a$.
\item $F$ satisfies \textit{partial immunity to stealers} if for all $\mathbf{P}\in\mathrm{dom}(F)$ and $a,b\in X(\mathbf{P})$, if $a$ is the unique Condorcetian candidate in $\mathbf{P}$, then $b$ does not steal the election from~$a$.
\item $F$ satisfies \textit{partial stability for winners} if for all $\mathbf{P}\in\mathrm{dom}(F)$  and $a\in X(\mathbf{P})$, if $a$ is the unique Condorcetian candidate in $\mathbf{P}$, then $a\in F(\mathbf{P})$.
\item\label{PartialAxioms4} A voting method $F$ satisfies \textit{stability for winners with tiebreaking} if for all $\mathbf{P}\in\mathrm{dom}(F)$ , if some candidate is Condorcetian for $F$ in $\mathbf{P}$, then all candidates in $F(\mathbf{P})$ are Condorcetian for $F$ in $\mathbf{P}$.
\end{enumerate}
\end{definition}

The logical relations between these axioms are recorded in the following.

\begin{fact}\label{AxRels} (i) Partial stability for winners is equivalent to the conjunction of partial immunity to spoilers and partial immunity to stealers. (ii) Stability for winners with tiebreaking implies partial stability for winners. (iii) Stability for winners with tiebreaking and stability for winners are incomparable in strength. 
\end{fact}

Parts (i) and (ii) are immediate from the definitions, while part (iii) follows from facts proved in Section~\ref{StabilitySection}: Split Cycle satisfies stability for winners but not stability for winners with tiebreaking, whereas Stable Voting satisfies stability for winners with tiebreaking but not stability for winners. But again, stability for winners and stability for winners with tiebreaking are intended as axioms on different kinds of functions: we regard stability for winners as an appropriate axiom for \textit{pre-tiebreaking} voting methods, whereas stability for winners with tiebreaking is an appropriate axiom for voting methods that are conceived as incorporating tiebreaking: if there are Condorcetian candidates, then the ultimate tiebreak winner must be one of them.

\subsection{Spoilers}\label{SpoilersSection}

It is easy to check that GETCHA, GOCHA, Uncovered Set, and Minimax satisfy immunity to spoilers. However, Beat Path and Ranked Pairs do not.

\begin{proposition}\label{BPspoiler} Beat Path does not satisfy even partial immunity to spoilers. 

In fact, there are uniquely-weighted linear profiles $\mathbf{P}$ and distinct $a,b,c\in X(\mathbf{P})$ such that with respect to Beat Path: $b$ spoils the election for $a$ in $\mathbf{P}$; $a$ is the unique Condorcetian candidate;  the largest margin in $\mathbf{P}$ is $Margin_\mathbf{P}(a,b)$; $BP(\mathbf{P})=\{c\}$, $c$ is the Condorcet loser in $\mathbf{P}_{-b}$, and $Margin_\mathbf{P}(a,c)>0$.\end{proposition}
\begin{proof} Let $\mathbf{P}$ be a linear profile whose margin graph is displayed on the right below, so the margin graph of $\mathbf{P}_{-b}$ is displayed on the left below:
\begin{center}
\begin{minipage}{2.5in}
\begin{tikzpicture}
\node[circle,draw,minimum width=0.25in] at (2,2)  (a) {$d$}; 
\node[circle,draw,minimum width=0.25in] at (-1,2)  (b) {$e$}; 
\node[circle,draw,minimum width=0.25in,fill=red!50] at (-1,-2) (c) {$c$}; 
\node[circle,draw,fill=medgreen!50,minimum width=0.25in] at (2,-2) (d) {$a$}; 
\path[->,draw,thick] (a) to node[fill=white] {$16$} (b);
\path[->,draw,thick] (d) to node[fill=white] {$18$} (a);
\path[->,draw,thick] (b) to node[fill=white] {$6$} (c);
\path[->,draw,very thick] (d) to node[fill=white] {$8$} (c);
\path[->,draw,thick] (a) to[pos=.7] node[fill=white] {$4$} (c);
\path[->,draw,thick] (b) to[pos=.7] node[fill=white] {$10$} (d);
\end{tikzpicture}
\end{minipage}
\begin{minipage}{2.5in}
\begin{tikzpicture}
\node[circle,draw,minimum width=0.25in] at (2,2)  (a) {$d$}; 
\node[circle,draw,minimum width=0.25in] at (-1,2)  (b) {$e$}; 
\node[circle,draw,fill=medgreen!50,minimum width=0.25in] at (-1,-2) (c) {$c$}; 
\node[circle,draw,minimum width=0.25in] at (2,-2) (d) {$a$}; 
\node[circle,draw,minimum width=0.25in] at (4,0)  (e) {$b$};
\path[->,draw,thick] (a) to node[fill=white] {$16$} (b);
\path[->,draw,thick] (d) to node[fill=white] {$18$} (a);
\path[->,draw,thick] (e) to node[fill=white] {$14$} (a);
\path[->,draw,thick] (b) to node[fill=white] {$6$} (c);
\path[->,draw,very thick] (d) to node[fill=white] {$8$} (c);
\path[->,draw,thick] (d) to node[fill=gray!25] {$\mathbf{20}$} (e);
\path[->,draw,thick] (a) to[pos=.7] node[fill=white] {$4$} (c);
\path[->,draw,thick] (b) to[pos=.7] node[fill=white] {$10$} (d);
\path[->,draw,thick] (b) to[pos=.7] node[fill=white] {$2$} (e);
\path[->,draw,thick] (c) to[pos=.7] node[fill=white] {$12$} (e);
\end{tikzpicture}
\end{minipage}
\end{center}
First, it is easy to see that $BP(\mathbf{P}_{-b})=\{a\}$. Then since $Margin_\mathbf{P}(a,b)>0$,  $a$ is Condorcetian for Beat Path in $\mathbf{P}$.  One can also check that $BP(\mathbf{P})=\{c\}$. Therefore, $b$ is a spoiler for $a$ in $\mathbf{P}$. Moreover, no other candidate is Condorcetian for Beat Path in $\mathbf{P}$, since $b\not\in BP(\mathbf{P}_{-d})$, $c\not\in BP(\mathbf{P}_{-b})$, $d\not\in BP(\mathbf{P}_{-c})$, $d\not\in BP(\mathbf{P}_{-e})$, $e\not\in BP(\mathbf{P}_{-a})$, $e\not\in BP(\mathbf{P}_{-b})$, and $e\not\in BP(\mathbf{P}_{-c})$. Finally, observe that the largest margin in $\mathbf{P}$ is $Margin_\mathbf{P}(a,b)$, that in $\mathbf{P}_{-b}$, $c$ is the Condorcet loser, and that $Margin_\mathbf{P}(a,c)>0$.\end{proof}
\begin{remark} For $\mathbf{P}_{-b}$ as in the proof of Proposition \ref{BPspoiler}, Split Cycle also picks $\{a\}$, so Ranked Pairs and Stable Voting do as well, while Minimax picks the Condorcet loser $\{c\}$. In $\mathbf{P}$, Split Cycle picks $\{a,c\}$, Ranked Pairs and Stable Voting pick $\{a\}$, and Minimax picks $\{c\}$.\end{remark}

\begin{proposition}\label{RPspoiler} Ranked Pairs does not satisfy even partial immunity to spoilers. 

In fact, there are uniquely-weighted linear profiles $\mathbf{P}$ and distinct $a,b,c\in X(\mathbf{P})$ such that with respect to Ranked Pairs: $b$ spoils the election for $a$ in $\mathbf{P}$; $a$ is the unique Condorcetian candidate in $\mathbf{P}$; the largest margin in $\mathbf{P}$ is $Margin_\mathbf{P}(a,b)$;
$RP(\mathbf{P})=\{c\}$ and $Margin_\mathbf{P}(a,c)>0$.

\end{proposition}

\begin{proof} Let $\mathbf{P}$ be a profile  whose margin graph is displayed on the right below, so the margin graph of $\mathbf{P}_{-b}$ is displayed on the left below:
\begin{center} 
\begin{minipage}{2.5in}
\begin{tikzpicture}
\node[circle,draw,fill = medgreen!50, minimum width=0.25in] at (2,2)  (a) {$a$}; 
\node[circle,draw,minimum width=0.25in] at (-1,2)  (b) {$d$}; 
\node[circle,draw,minimum width=0.25in] at (-1,-2) (c) {$e$}; 
\node[circle,draw,minimum width=0.25in] at (2,-2) (d) {$c$}; 
\path[->,draw,thick] (b) to node[fill=white] {$8$} (a);
\path[->,draw,very thick] (a) to node[fill=white] {$6$} (d);
\path[->,draw,thick] (c) to node[fill=white] {$4$} (b);
\path[->,draw,thick] (c) to node[fill=white] {$12$} (d);
\path[->,draw,thick] (a) to[pos=.7] node[fill=white] {$18$} (c);
\path[->,draw,thick] (d) to[pos=.7] node[fill=white] {$10$} (b);
\end{tikzpicture}
\end{minipage}\begin{minipage}{2.5in}
\begin{tikzpicture}
\node[circle,draw,minimum width=0.25in] at (2,2)  (a) {$a$}; 
\node[circle,draw,minimum width=0.25in] at (-1,2)  (b) {$d$}; 
\node[circle,draw,minimum width=0.25in] at (-1,-2) (c) {$e$}; 
\node[circle,draw,fill = medgreen!50,minimum width=0.25in] at (2,-2) (d) {$c$}; 
\node[circle,draw,minimum width=0.25in] at (4,0)  (e) {$b$};
\path[->,draw,thick] (b) to node[fill=white] {$8$} (a);
\path[->,draw,very thick] (a) to node[fill=white] {$6$} (d);
\path[->,draw,thick] (a) to node[fill=gray!25] {$\mathbf{20}$} (e);
\path[->,draw,thick] (c) to node[fill=white] {$4$} (b);
\path[->,draw,thick] (c) to node[fill=white] {$12$} (d);
\path[->,draw,thick] (d) to node[fill=white] {$14$} (e);
\path[->,draw,thick] (a) to[pos=.7] node[fill=white] {$18$} (c);
\path[->,draw,thick] (d) to[pos=.7] node[fill=white] {$10$} (b);
\path[->,draw,thick] (b) to[pos=.7] node[fill=white] {$2$} (e);
\path[->,draw,thick] (e) to[pos=.7] node[fill=white] {$16$} (c);
\end{tikzpicture}
\end{minipage}
\end{center}
First, we claim that $RP(\mathbf{P}_{-b})=\{a\}$. We lock in edges in the following order: $(a,e)$, $(e,c)$, $(c,d)$; then ignore $(d,a)$, since locking it in would create a cycle ($aecda$); then $(a,c)$, $(e,d)$. The only candidate with no incoming edge locked in is $a$, so indeed $RP(\mathbf{P}_{-b})=\{a\}$. Then since $Margin_\mathbf{P}(a,b)>0$, $a$ is Condorcetian for Ranked Pairs in $\mathbf{P}$.  Next, we claim that $RP(\mathbf{P})=\{c\}$. We lock in edges in the following order: $(a,b)$, $(a,e)$, $(b,e)$, $(c,b)$; then ignore $(e,c)$, since locking it in would create a cycle ($ecbe$); then $(d,c)$, $(d,a)$; then ignore $(a,c)$, since locking it in would create a cycle ($acda$); also ignore $(e,d)$, since locking it in would create a cycle ($daed$); finally, $(d,b)$. The only candidate with no incoming edge locked in is $c$, so indeed $RP(\mathbf{P})=\{c\}$. Therefore, $b$ is a spoiler for $a$ in $\mathbf{P}$. Moreover, no other candidate is Condorcetian for Ranked Pairs in $\mathbf{P}$, since $b\not\in RP(\mathbf{P}_{-e})$, $c\not\in RP(\mathbf{P}_{-b})$, $c\not\in RP(\mathbf{P}_{-d})$, $d\not\in RP(\mathbf{P}_{-a})$,  $d\not\in RP(\mathbf{P}_{-b})$, $e\not\in RP(\mathbf{P}_{-c})$, and $e\not\in RP(\mathbf{P}_{-d})$.  Finally, observe that the largest margin in $\mathbf{P}$ is $Margin_\mathbf{P}(a,b)$ and that $Margin_\mathbf{P}(a,c)>0$.\end{proof}

\begin{remark} For $\mathbf{P}_{-b}$  in the proof of Proposition \ref{RPspoiler}, Split Cycle, Beat Path, Stable Voting, and Minimax all pick $\{a\}$. In $\mathbf{P}$, Split Cycle picks $\{a,c\}$, while Beat Path, Stable Voting, and Minimax all pick~$\{a\}$.\end{remark}

\subsection{Stealers}\label{StealersSection}

As observed in \S~\ref{SpoilerSection}, when $a$ would win without $b$ in the election,  and more voters prefer $a$ to $b$ than prefer $b$ to $a$, yet $a$ loses when $b$ joins, it hardly improves the situation to find that $b$ wins; in this case, although $b$ does not \textit{spoil} the election for $a$---since spoilers are by definition losers---we say that $b$ \textit{steals} the election from $a$. Recall the formal definition of immunity to stealers from Definition \ref{SpoilStealStable}.

\begin{proposition}\label{MinimaxBPstealers} Minimax and Beat Path do not satisfy even partial immunity to stealers. 

In fact, there are profiles $\mathbf{P}$ such that with respect to Minimax: $b$ steals the election from $a$; $a$ is the unique Condorcetian candidate; and $b$ is the Condorcet loser in $\mathbf{P}$. The same holds for Beat Path but without the condition that $b$ is a Condorcet loser.
\end{proposition}
\begin{proof} First, for Minimax, let $\mathbf{P}$ be a profile whose margin graph is displayed in the middle below, so the margin graph of $\mathbf{P}_{-b}$ is displayed on the left:
 \begin{center}
\begin{minipage}{2in}\begin{tikzpicture}

\node[circle,draw, minimum width=0.25in] at (0,0) (b) {$c$}; 
\node[circle,fill = medgreen!50,draw,minimum width=0.25in] at (3,0) (a) {$a$}; 
\node[circle,draw,minimum width=0.25in] at (1.5,1.5) (c) {$d$};  

\path[->,draw,thick] (b) to node[fill=white] {$3$} (a);
\path[->,draw,thick] (c) to node[fill=white] {$5$} (b);
\path[->,draw,thick] (a) to node[fill=white] {$7$} (c);
  \end{tikzpicture}
\end{minipage}\begin{minipage}{2in}\begin{tikzpicture}

\node[circle,draw, minimum width=0.25in] at (0,0) (b) {$c$}; 
\node[circle,draw,minimum width=0.25in] at (3,0) (a) {$a$}; 
\node[circle,draw,minimum width=0.25in] at (1.5,1.5) (c) {$d$}; 
\node[circle,fill = medgreen!50,draw,minimum width=0.25in] at (1.5,-1.5) (d) {$b$}; 

\path[->,draw,thick] (b) to (a);
\path[<-,draw,thick] (d) to (c);
\path[->,draw,thick] (c) to node[fill=white] {$5$} (b);
\path[->,draw,thick] (a) to node[fill=white] {$7$} (c);
\path[->,draw,very thick] (a) to node[fill=white] {$1$} (d);
\path[->,draw,thick] (b) to node[fill=white] {$1$} (d);

\node[fill=white] at (1.5,.5)  {$1$}; 
\node[fill=white] at (2,0)  {$3$}; 

  \end{tikzpicture}
\end{minipage}\begin{minipage}{2in}\begin{tikzpicture}

\node[circle,draw, minimum width=0.25in] at (0,0) (b) {$c$}; 
\node[circle,draw,minimum width=0.25in] at (3,0) (a) {$a$}; 
\node[circle,draw,minimum width=0.25in] at (1.5,1.5) (c) {$d$}; 
\node[circle,fill = medgreen!50,draw,minimum width=0.25in] at (1.5,-1.5) (d) {$b$}; 

\path[->,draw,thick] (b) to (a);
\path[->,draw,thick] (d) to (c);
\path[->,draw,thick] (c) to node[fill=white] {$5$} (b);
\path[->,draw,thick] (a) to node[fill=white] {$7$} (c);
\path[->,draw,very thick] (a) to node[fill=white] {$1$} (d);
\path[->,draw,thick] (b) to node[fill=white] {$1$} (d);

\node[fill=white] at (1.5,.5)  {$3$}; 
\node[fill=white] at (2,0)  {$3$}; 

  \end{tikzpicture}
\end{minipage}\end{center}
Observe that $Minimax(\mathbf{P}_{-b})=\{a\}$ and $Margin_\mathbf{P}(a,b)>0$. Yet $Minimax(\mathbf{P})=\{b\}$, so Minimax violates immunity to stealers. Moreover, $a$ is the only candidate who is Condorcetian for Minimax in $\mathbf{P}$, and the stealer $b$ is the Condorcet loser in $\mathbf{P}$.

For Beat Path, let $\mathbf{P}$ be a profile whose margin graph is displayed on the right above, so again the margin graph of $\mathbf{P}_{-b}$ is displayed on the left. Observe that $BP(\mathbf{P}_{-b})=\{a\}$, that $Margin_\mathbf{P}(a,b)>0$, and yet $BP(\mathbf{P})=\{b\}$, so Beat Path violates immunity to stealers. Moreover, $a$ is the only candidate who is Condorcetian for Beat Path in $\mathbf{P}$.\end{proof}

Immunity to stealers reveals an interesting axiomatic difference between Beat Path and Ranked Pairs.

\begin{proposition}\label{RPStealer} Ranked Pairs satisfies immunity to stealers on uniquely-weighted profiles, though not on all profiles.
\end{proposition}
\begin{proof} Toward a contradiction, suppose there is a uniquely-weighted profile $\mathbf{P}$ and $a,b\in X(\mathbf{P})$ with ${a\in RP(\mathbf{P}_{-b})}$, ${Margin_\mathbf{P}(a,b)>0}$, and $b\in RP(\mathbf{P})$. Since $a\in RP(\mathbf{P}_{-b})$, by Lemma \ref{RPlem2}, there is a stack $az_1\dots z_n$ for $\mathbf{P}_{-b}$. Now let $\mathcal{M}'$ be a margin graph obtained from $\mathcal{M}(\mathbf{P})$ by changing all of $b$'s negative margins (if any) against candidates in $X(\mathbf{P})\setminus\{a\}$ to be positive and such that all margins between distinct candidates in $\mathcal{M}'$ are distinct, while keeping all other margins from $\mathcal{M}(\mathbf{P})$ the same. Let $\mathbf{P}'$ be a profile with $\mathcal{M}(\mathbf{P}')=\mathcal{M}'$, which exists by Theorem \ref{DebordThm}. Since  $b\in RP(\mathbf{P})$, it follows that $b\in RP(\mathbf{P}')$ (\citealt[p.~204]{Tideman1987}) and hence $RP(\mathbf{P}')=\{b\}$ since $\mathbf{P}'$ is uniquely weighted. However, it is easy to see that $abz_1\dots z_n$ is a stack for $\mathbf{P}'$, so $a\in RP(\mathbf{P}')$ by Lemma \ref{RPlem2}, a contradiction. 

To see that Ranked Pairs does not satisfy immunity to stealers on all profiles, let $\mathbf{P}$ be a profile whose margin graph is displayed on the right below, so the margin graph of $\mathbf{P}_{-b}$ is displayed on the left:
\begin{center}
\begin{minipage}{2.75in}
\begin{tikzpicture}
\node[circle,draw,minimum width=0.25in] at (5,2.25)  (c) {$c$};
\node[circle,draw,minimum width=0.25in] at (2,2.25)  (d) {$d$};
\node[circle,fill = medgreen!50,draw,minimum width=0.25in] at (0,0)  (e) {$e$};
\node[circle,draw,minimum width=0.25in] at (2,-2.25)  (f) {$f$};
\node[circle,fill = medgreen!50,draw,minimum width=0.25in] at (5,-2.25)  (a) {$a$};
\path[->,draw,thick] (a) to[pos=.85] node[fill=white] {$24$} (e);
\path[->,draw,thick] (f) to[pos=.7] node[fill=white] {$24$} (a);
\path[->,draw,thick] (c) to[pos=.7] node[fill=white] {$2$} (a);
\path[->,draw,thick] (c) to[pos=.7] node[fill=white] {$12$} (d);
\path[->,draw,thick] (f) to[pos=.7] node[fill=white] {$10$} (c);
\path[->,draw,thick] (a) to[pos=.89] node[fill=white] {$28$} (d);
\path[->,draw,thick] (c) to[pos=.7] node[fill=white] {$12$} (d);
\path[->,draw,thick] (e) to[pos=.7] node[fill=white] {$14$} (d);
\path[->,draw,thick] (c) to[pos=.85] node[fill=white] {$4$} (e);
\path[->,draw,thick] (e) to[pos=.7] node[fill=white] {$14$} (d);
\path[->,draw,thick] (e) to[pos=.7] node[fill=white] {$26$} (f);
\path[->,draw,thick] (f) to[pos=.7] node[fill=white] {$24$} (a);
\path[->,draw,thick] (f) to[pos=.85] node[fill=white] {$6$} (d);
\path[->,draw,thick] (e) to[pos=.7] node[fill=white] {$26$} (f);
\end{tikzpicture}
\end{minipage}\begin{minipage}{3in}
\begin{tikzpicture}
\node[circle,fill = medgreen!50,draw,minimum width=0.25in] at (7,0)  (b) {$b$};
\node[circle,fill = medgreen!50,draw,minimum width=0.25in] at (5,2.25)  (c) {$c$};
\node[circle,draw,minimum width=0.25in] at (2,2.25)  (d) {$d$};
\node[circle,draw,minimum width=0.25in] at (0,0)  (e) {$e$};
\node[circle,draw,minimum width=0.25in] at (2,-2.25)  (f) {$f$};
\node[circle,draw,minimum width=0.25in] at (5,-2.25)  (a) {$a$};
\path[->,draw,very thick] (a) to[pos=.7] node[fill=white] {$16$} (b);
\path[->,draw,thick] (a) to[pos=.85] node[fill=white] {$24$} (e);
\path[->,draw,thick] (f) to[pos=.7] node[fill=white] {$24$} (a);
\path[->,draw,thick] (a) to[pos=.7] node[fill=white] {$16$} (b);
\path[->,draw,thick] (c) to[pos=.5] node[fill=white] {$8$} (b);
\path[->,draw,thick] (b) to[pos=.85] node[fill=white] {$22$} (e);
\path[->,draw,thick] (c) to[pos=.85] node[fill=white] {$2$} (a);
\path[->,draw,thick] (c) to[pos=.7] node[fill=white] {$12$} (d);
\path[->,draw,thick] (f) to[pos=.89] node[fill=white] {$10$} (c);
\path[->,draw,thick] (a) to[pos=.89] node[fill=white] {$28$} (d);
\path[->,draw,thick] (d) to[pos=.85] node[fill=white] {$18$} (b);
\path[->,draw,thick] (c) to[pos=.7] node[fill=white] {$12$} (d);
\path[->,draw,thick] (e) to[pos=.7] node[fill=white] {$14$} (d);
\path[->,draw,thick] (c) to[pos=.85] node[fill=white] {$4$} (e);
\path[->,draw,thick] (e) to[pos=.7] node[fill=white] {$14$} (d);
\path[->,draw,thick] (e) to[pos=.7] node[fill=white] {$26$} (f);
\path[->,draw,thick] (f) to[pos=.7] node[fill=white] {$24$} (a);
\path[->,draw,thick] (f) to[pos=.85] node[fill=white] {$20$} (b);
\path[->,draw,thick] (f) to[pos=.85] node[fill=white] {$6$} (d);
\path[->,draw,thick] (e) to[pos=.7] node[fill=white] {$26$} (f);
\end{tikzpicture}
\end{minipage}
\end{center}
The only tie in margins is between $(a,e)$ and $(f,a)$. In $\mathbf{P}_{-b}$, if we break the tie between $(a,e)$ and $(f,a)$ in favor of $(a,e)$, then $a$ is the Ranked Pairs winner, whereas if we break it in favor of $(f,a)$, then $e$ is winner. Hence $RP(\mathbf{P}_{-b})=\{a,e\}$. In $\mathbf{P}$, if we break the tie in favor of $(a,e)$, then $c$ is the Ranked Pairs winner, whereas if we break it in favor of $(f,a)$, then $b$ is the winner. Hence $RP(\mathbf{P})=\{b,c\}$. Then since $Margin_\mathbf{P}(a,b)>0$, Ranked Pairs violates immunity to stealers.\end{proof}

\subsection{Stability}\label{StabilitySection}

In virtue of violating (partial) immunity to spoilers or (partial) immunity to stealers, Beat Path, Ranked Pairs, and Minimax all violate (partial) stability for winners. By contrast, we will now prove that Split Cycle satisfies stability for winners. In fact, it satisfies the following slightly stronger property. 

\begin{definition} A voting method $F$ satisfies \textit{strong stability for winners} if for all $\mathbf{P}\in\mathrm{dom}(F)$, all candidates who are weakly Condorcetian for $F$ in $\mathbf{P}$ belong to $F(\mathbf{P})$.
\end{definition}

\begin{proposition}\label{SpoilersProp} Split Cycle satisfies strong stability for winners.
\end{proposition}

\begin{proof} If $a\in SC(\mathbf{P}_{-b})$, then for all $c\in X(\mathbf{P}_{-b})$, $Margin_{\mathbf{P}_{-b}} (c,a)\leq Cycle\#_{\mathbf{P}_{-b}}(c,a)$. As $Margin_{\mathbf{P}_{-b}} (c,a)= Margin_{\mathbf{P}}(c,a)$ and $Cycle\#_{\mathbf{P}_{-b}}(c,a)\leq Cycle\#_{\mathbf{P}}(c,a)$, we have that  (i) for all $c\in X(\mathbf{P}_{-b})$, $Margin_{\mathbf{P}} (c,a)\leq Cycle\#_{\mathbf{P}}(c,a)$. By the assumption that $Margin_{\mathbf{P}} (a,b)\geq 0$, we have $Margin_{\mathbf{P}} (b,a)\leq 0$ and hence (ii) $Margin_{\mathbf{P}} (b,a)\leq Cycle\#_{\mathbf{P}}(b,a)$. By (i) and (ii), $a\in SC(\mathbf{P})$.
\end{proof}
\noindent Informally, the explanation of why Split Cycle satisfies strong stability for winners is the following, using two of our main ideas from Section \ref{SplitCycleSection}: since \textit{defeat is direct}, if $Margin_\mathbf{P}(a,b)\geq 0$, then $b$ does not defeat $a$; and since \textit{incoherence raises  the threshold for defeat}, and adding a candidate can increase incoherence\footnote{This is why Split Cycle may allow a candidate $x$ who is not among the winners in $\mathbf{P}_{-b}$ to become a winner in $\mathbf{P}$.} but cannot decrease incoherence in the initial set of candidates, if a candidate $x$ did not defeat $a$ before the addition of $b$, then it does not defeat $a$ after.

As for other voting methods, GOCHA satisfies stability for winners but not strong stability for winners (see the proof of Proposition \ref{SchwartzProp}), while GETCHA satisfies both.\footnote{To see that GETCHA satisfies strong stability for winners, using the definition of GETCHA in Definition \ref{GETCHA}, suppose $a\in GETCHA(\mathbf{P}_{-b})$ and $Margin_\mathbf{P}(a,b)\geq 0$. Further suppose for contradiction that $a\not\in GETCHA(\mathbf{P})$, which with $Margin_\mathbf{P}(a,b)\geq 0$ implies $b\not\in GETCHA(\mathbf{P})$, so $GETCHA(\mathbf{P})\subseteq X(\mathbf{P}_{-b})$. It follows that $GETCHA(\mathbf{P})$ is $\to_{\mathbf{P}_{-b}}$-dominant, so $GETCHA(\mathbf{P}_{-b})\subseteq GETCHA(\mathbf{P})$, which contradicts the facts that $a\in GETCHA(\mathbf{P}_{-b})$ and $a\not\in GETCHA(\mathbf{P})$.} Uncovered Set satisfies stability for winners, as well as strong stability for winners under some definitions (see Appendix \ref{UncoveredAppendix}).

Let us now turn to stability for winners \textit{with tiebreaking}. In virtue of violating immunity to spoilers or immunity to stealers in profiles with a \textit{unique} Condorcetian candidate, Beat Path, Ranked Pairs, and Minimax all violate stability for winners with tiebreaking (recall Fact \ref{AxRels}). In fact, so does Split Cycle, as there are profiles with multiple undefeated candidates, including non-Condorcetian undefeated candidates, and Split Cycle does not perform any further tiebreaking among undefeated candidates.

\begin{proposition}\label{SCSWTB} Split Cycle does not satisfy stability for winners with tiebreaking.
\end{proposition} 
\begin{proof} Let $\mathbf{P}$ be a profile whose margin graph is shown below:
\begin{center}
\begin{tikzpicture}
\node[circle,draw,minimum width=0.25in] at (0,0)      (a) {$c$}; 
\node[circle,draw,fill = medgreen!50,minimum width=0.25in] at (3,0)      (b) {$b$}; 
\node[circle,draw,fill = medgreen!50,minimum width=0.25in] at (1.5,1.5)  (c) {$a$}; 
\node[circle,draw,minimum width=0.25in] at (1.5,-1.5) (d) {$d$};
\path[->,draw,thick] (b) to[pos=.7] node[fill=white] {$2$} (a);
\path[->,draw,thick] (a) to node[fill=white] {$8$} (c);
\path[->,draw,thick] (d) to node[fill=white] {$10$} (a);
\path[->,draw,thick] (c) to node[fill=white] {$4$} (b);
\path[->,draw,thick] (b) to node[fill=white] {$6$} (d);
\path[->,draw,thick] (c) to[pos=.7]  node[fill=white] {$12$} (d);
\end{tikzpicture}
\end{center}
It is easy to check that $SC(\mathbf{P})=\{a,b\}$. Moreover, candidate $a$ is Condorcetian for $SC$ in $\mathbf{P}$, as witnessed by the fact that $a\in SC(\mathbf{P}_{-b})$ and $Margin_\mathbf{P}(a,b)>0$. However, $b$ is not Condorcetian, as $b$ does not win in $\mathbf{P}_{-c}$ or $\mathbf{P}_{-d}$. Thus, stability for winners with tiebreaking requires that $b$ not win in $\mathbf{P}$.\end{proof}

\begin{remark} In the profile used in the proof of Proposition \ref{SCSWTB}, Ranked Pairs and Stable Voting select only $a$, while Beat Path and Minimax select only $b$. Here Beat Path violates partial immunity to stealers, as $a$ is the only Condorcetian candidate for Beat Path and yet $b$ steals the election from $a$.\end{remark}

We think Split Cycle delivers the correct verdict that neither $a$ nor $b$ is \textit{defeated} in the profile used for Proposition \ref{SCSWTB}; for if a candidate  $x$ is undefeated without a candidate $y$ in the election, and more voters prefer $x$ to $y$ than $y$ to $x$, then $x$ should still be undefeated (see \citealt{HP2021} for extensive discussion). Yet we also recognize the practical imperative to break ties, as well as the appeal in some cases of breaking ties deterministically as far as possible. Thus, in response to the profile used for Proposition~\ref{SCSWTB}, one may reasonably maintain that although neither $a$ nor $b$ is defeated, we can break the tie in favor of $a$ on the grounds that $a$ would have won without $b$ in the election, a majority of voters prefer $a$ to $b$, and no other candidate is Condorcetian in this way. This idea leads to the axiom of stability for winners with tiebreaking (Definition \ref{PartialAxioms}.\ref{PartialAxioms4}), which requires that only Condorcetian candidates be selected, if such candidates exist. For example, the Stable Voting method breaks ties among Condorcetian candidates by selecting the Condorcetian candidate(s) $a$ with the largest margin over a candidate who shows $a$ to be Condorcetian. Thus, Stable Voting satisfies stability for winners with tiebreaking, as well as the following strengthening of the axiom.

\begin{definition}A voting method $F$ satisfies \textit{strong stability for winners with tiebreaking} if not only does $F$ satisfy stability for winners with tiebreaking but also for all $\mathbf{P}\in\mathrm{dom}(F)$, if some candidate is weakly Condorcetian for $F$ in  $\mathbf{P}$ but no candidate is Condorcetian for $F$ in $\mathbf{P}$, then all candidates who win in $\mathbf{P}$ are weakly Condorcetian for $F$ in $\mathbf{P}$.\end{definition}

\begin{proposition} Stable Voting satisfies strong  stability for winners with tiebreaking.
\end{proposition}
\noindent For a proof of stability for winners with tiebreaking that easily adapts to the strong version, see \citealt{HP2022}.

Ultimately the question of whether a voting method $F$ should satisfy stability for winners or stability for winners with tiebreaking depends on the distinction from Section \ref{Comparison} of thinking of $F$ as a pre-tiebreaking method for selecting undefeated candidates or as incorporating tiebreaking among undefeated candidates.

Finally, to use the axioms discussed in this section to choose between voting methods, one can go beyond showing that a voting method $F$ satisfies an axiom while another method $F'$ violates it. One can attempt to study the probability that $F'$ will violate the axiom. In fact, we think the better question is: what is the probability that $F'$ will violate the axiom, \textit{conditional on} $F$ and $F'$ picking different sets of winners in the election? After all, when choosing between $F$ and $F'$ based on how well they pick winners---as opposed to, e.g., their computational cost---all that matters is the elections in which the voting methods disagree. By analogy, when choosing between two insurance policies of the same cost, what matters is the different coverage the policies provide in the event of an accident; the fact that an accident is improbable is not a reason to dismiss as unimportant the differences in coverage.

Let us illustrate the methodology of estimating the probability that a voting method will violate an axiom conditional on its picking different winners than another voting method that satisfies the axiom, deferring a systematic treatment to future work (as in \citealt{HP2021b} for the axiom of positive involvement). First, we imagine Beat Path being proposed as a voting method for selecting the undefeated candidates in an election, a problem for which we think the axiom of stability for winners should be satisfied. Table \ref{SSWTable} shows that in linear profiles with four candidates, the least number for which Beat Path can disagree with Split Cycle, when Beat Path does disagree with Split Cycle, Beat Path almost always violates the strong stability for winners axiom that Split Cycle satisfies, according to several standard probability models. 

\begin{table}[h]
\begin{center}
\begin{tabular}{r|rrrrrrrrrr}
&     10/11  &    20/21 &    50/51 &    100/101 &    500/501 &    1,000/1,001 &    5,000/5,001 \\
\hline
  IC &   100\%    & 99\% & 95\% & 93\% & 91\% & 90\% & 90\%  \\
  IAC  &  100\%    & 97\% & 93\% & 91\% & 90\% & 90\% & 90\% \\
  Urn &   100\%    & 97\% & 93\% & 92\% & 91\% & 90\% & 90\%  \\
  Mallows &   100\%    & 99\% & 97\% & 94\% & 91\% & 89\% & 90\%  
\end{tabular}
\end{center}
\caption{Of four-candidate linear profiles in which $BP(\mathbf{P})\neq SC(\mathbf{P})$, the percentage in which  Beat Path violates strong stability for winners, i.e., there are $a,b\in X(\mathbf{P})$ such that $a\in BP(\mathbf{P}_{-b})$, $Margin_\mathbf{P}(a,b)\geq 0$, and $a\not\in BP(\mathbf{P})$. For each column labeled by $n$/$n+1$, we sampled 1,000,000 profiles with $n$ voters and 1,000,000 with $n+1$ voters using the probability model named in the row, defined in Appendix \ref{Quant}.}\label{SSWTable}
\end{table}

\noindent Second, we imagine Beat Path proposed as a voting method for breaking ties among the undefeated candidates as determined by Split Cycle, a problem for which we think the axiom of stability for winners with tiebreaking should hold. Table \ref{SSWTBTable} shows that in linear profiles with four candidates, among those in which Beat Path disagrees with Stable Voting, a non-trivial percentage involve violations by Beat Path of the strong stability for winners with tiebreaking axiom satisfied by Stable Voting, using the same probability~models.

\begin{table}[h]
\begin{center}
\begin{tabular}{r|rrrrrrrrrr}

 &     10/11  &    20/21 &    50/51 &    100/101 &    500/501 &    1,000/1,001 &    5,000/5,001 \\
\hline
  IC &   29\%    & 24\% & 21\% & 20\% & 21\% & 22\% & 29\%  \\
    IAC  &   27\%    & 21\% & 19\% & 20\% & 32\% & 41\% & 59\% \\
  Urn &   29\%    & 23\% & 20\% & 21\% & 31\% & 38\% & 52\%  \\
  Mallows &   17\%    & 13\% & 11\% & 10\% & 11\% & 14\% & 16\%  
\end{tabular}
\end{center}
\caption{Of four-candidate linear profiles in which $BP(\mathbf{P})\neq SV(\mathbf{P})$, the percentage in which  Beat Path violates strong stability for winners with tiebreaking. For each column labeled by $n$/$n+1$, we sampled 1,000,000 profiles with $n$ voters and 1,000,000 with $n+1$ voters using the probability model named in the row, defined in Appendix~\ref{Quant}.}\label{SSWTBTable}
\end{table}

\subsection{Expansion}\label{ExpansionSection}

Stability for winners concerns when a winner $a$ in a profile $\mathbf{P}_{-b}$ remains a winner in a profile $\mathbf{P}$ with one new candidate $b$ added. The profile $\mathbf{P}$ may be viewed as combining the profile $\mathbf{P}_{-b}$ without $b$ and the two-candidate profile $\mathbf{P}_{ab}$ with $X(\mathbf{P}_{ab})=\{a,b\}$ that agrees with $\mathbf{P}$ on the ranking of $a$ vs.~$b$ assigned to each voter. As $Margin_\mathbf{P}(a,b)\geq 0$ is equivalent (for voting methods based on majority voting) to $a$ being a winner in the two-candidate profile $\mathbf{P}_{ab}$, (strong) stability for winners can be restated as follows:  if $a$ is a winner in both $\mathbf{P}_{-b}$ and $\mathbf{P}_{ab}$, then $a$ is a winner in the full profile $\mathbf{P}$. We can generalize this idea to apply not only to profiles of the form $\mathbf{P}_{-b}$ and $\mathbf{P}_{ab}$ but to any two subprofiles of $\mathbf{P}$ such that every candidate in $\mathbf{P}$ appears in one of the subprofiles (``if $a$ can win the battles separately, then $a$ can win the war'').

\begin{definition}\label{ExpandCon} A voting method $F$ satisfies \textit{expansion consistency} if for all $\mathbf{P}\in \mathrm{dom}(F)$ and nonempty $Y,Z\subseteq X(\mathbf{P})$ with $Y\cup Z=X(\mathbf{P})$, we have $F(\mathbf{P}\vert _{ Y})\cap F(\mathbf{P}\vert _{ Z}) \subseteq F(\mathbf{P})$.
\end{definition}

Since expansion consistency implies strong stability for winners for voting methods that agree with majority rule on two-candidate profiles, all the voting methods shown to violate stability for winners in the previous sections also violate expansion consistency.

\begin{proposition} Split Cycle satisfies expansion consistency.
\end{proposition}

\begin{proof} The proof is similar to that of Proposition \ref{SpoilersProp}. If $a\in SC(\mathbf{P}\vert _{ Y})\cap SC(\mathbf{P}\vert _{ Z})$, then for all $c\in Y$,  $Margin_{\mathbf{P}\vert _{ Y}} (c,a)\leq Cycle\#_{\mathbf{P}\vert _{ Y}}(c,a)$, and for all $c\in Z$, $Margin_{\mathbf{P}\vert _{ Z}} (c,a)\leq Cycle\#_{\mathbf{P}\vert _{ Z}}(c,a)$. Clearly for all $c\in Y$, $Margin_{\mathbf{P}} (c,a) = Margin_{\mathbf{P}\vert _{ Y}} (c,a)$, and for all $c\in Z$, $Margin_{\mathbf{P}} (c,a)=Margin_{\mathbf{P}\vert _{ Z}} (c,a)$. Moreover, for all $c\in Y$, $Cycle\#_{\mathbf{P}\vert _{ Y}} (c,a) \leq Cycle\#_{\mathbf{P}} (c,a)$, and for all $c\in Z$, $Cycle\#_{\mathbf{P}\vert _{ Z}} (c,a) \leq Cycle\#_{\mathbf{P}} (c,a)$. Thus, for all $c\in Y\cup Z= X(\mathbf{P})$, $Margin_{\mathbf{P}} (c,a)\leq Cycle\#_{\mathbf{P}} (c,a)$. Hence $a\in SC(\mathbf{P})$.
\end{proof}
\noindent Intuitively, the reason Split Cycle satisfies expansion consistency is the following. First, given $a\in F(\mathbf{P}\vert _{ Y})\cap F(\mathbf{P}\vert _{ Z})$, the margins for each candidate $x$ over $a$ do not change from $\mathbf{P}\vert _{ Y},\mathbf{P}\vert _{ Z}$ to  $\mathbf{P}$. How then could $a$ suddenly be defeated by a candidate $x$ in  $\mathbf{P}$? By our first and third main ideas in Section \ref{SplitCycleSection}, this would require the margin for $x$ over $a$ to meet the threshold for defeat determined by incoherence; but incoherence can only \textit{increase} from $\mathbf{P}\vert _{ Y},\mathbf{P}\vert _{ Z}$ to $\mathbf{P}$, not decrease, so if $x$'s margin over $a$ was not sufficient to defeat $a$ in $\mathbf{P}\vert _{ Y}$ (resp.~$\mathbf{P}\vert _{ Z}$), then it is not sufficient to defeat $a$ in $\mathbf{P}$.

An example of a voting method satisfying stability for winners but not expansion consistency is the Banks voting method, defined as follows.\footnote{We thank Felix Brandt for providing this example.} Say that a \textit{chain} in $M(\mathbf{P})$ is a subset of $X(\mathbf{P})$ linearly ordered by the majority relation of $\mathbf{P}$. Then $a\in Banks(\mathbf{P})$ if $a$ is the maximum element with respect to the majority relation of some maximal chain in $M(\mathbf{P})$.
 
\begin{proposition} Banks satisfies strong stability for winners but not expansion consistency.
\end{proposition}
\begin{proof} For strong stability for winners, suppose $a\in Banks(\mathbf{P}_{-b})$ and $Margin_\mathbf{P}(a,b)\geq 0$. Hence  $a$ is the maximum element of some maximal chain $R$ in $M(\mathbf{P}_{-b})$. Let $R'$ be a maximal chain in $M(\mathbf{P})$ with $R\subseteq R'$. Since $Margin_\mathbf{P}(a,b)\geq 0$, $b$ is not the maximum  of $R'$, so $a$ is the maximum  of $R'$, so $a\in Banks(\mathbf{P})$.

For the violation of expansion consistency, consider a profile $\mathbf{P}$ with the majority graph below:
\begin{center}
\begin{minipage}{2.5in}
\begin{tikzpicture}
\node[circle,draw,minimum width=0.25in] at (-1,2)  (a) {$a$}; 
\node[circle,draw,minimum width=0.25in] at (2,2)  (b) {$b$}; 
\node[circle,draw,minimum width=0.25in] at (4,0)  (c) {$c$};
\node[circle,draw,minimum width=0.25in] at (2,-2) (d) {$d$}; 
\node[circle,draw,minimum width=0.25in] at (-1,-2) (e) {$e$}; 
\node[circle,draw,minimum width=0.25in] at (-3,-1)  (f) {$f$};
\node[circle,draw,minimum width=0.25in] at (-3,1)  (g) {$g$};

\path[->,draw,thick] (a) to node {} (b);
\path[<-,draw,thick] (a) to node {} (c);
\path[<-,draw,thick] (a) to node {} (d);
\path[<-,draw,thick] (a) to node {} (e);
\path[->,draw,thick] (a) to node {} (f);
\path[->,draw,thick] (a) to node {} (g);

\path[<-,draw,thick] (b) to node {} (c);
\path[<-,draw,thick] (b) to node {} (d);
\path[->,draw,thick] (b) to node {} (e);
\path[->,draw,thick] (b) to node {} (f);
\path[<-,draw,thick] (b) to node {} (g);

\path[->,draw,thick] (c) to node {} (d);
\path[->,draw,thick] (c) to node {} (e);
\path[<-,draw,thick] (c) to node {} (f);
\path[->,draw,thick] (c) to node {} (g);

\path[->,draw,thick] (d) to node {} (e);
\path[->,draw,thick] (d) to node {} (f);
\path[<-,draw,thick] (d) to node {} (g);

\path[->,draw,thick] (e) to node {} (f);
\path[->,draw,thick] (e) to node {} (g);

\path[->,draw,thick] (f) to node {} (g);

\end{tikzpicture}
\end{minipage}
\end{center}
The maximal chains of $M(\mathbf{P})$, ordered by the majority relation, are as follows:
\begin{eqnarray*}
&&\mbox{$(c, a, g, b)$\quad $(c, g, d, b)$\quad $(c, e, a, g)$\quad $(c, d, b, e)$\quad $(c, d, e, a)$}\\
&&\mbox{$(f,c,g)$\quad$(e, a, f, g)$\quad $(d, b, e, f)$\quad $(d, a, b, f)$\quad$(d, e, a, f)$.}
\end{eqnarray*}
Then one can check that  $Banks(\mathbf{P}\vert _{ \{a,b,c,e,f\}}) \cap Banks(\mathbf{P}\vert _{ \{a, d,f,g\}}) \ni a \not\in Banks(\mathbf{P})=\{c, d, e, f\}$.\end{proof}

\begin{remark} Expansion consistency as in Definition \ref{ExpandCon} is the analogue for voting methods of Sen's \citeyearpar[p.~314]{Sen1971} expansion consistency (also known as $\gamma$) condition on choice functions. A \textit{choice function} on a nonempty set $X$ is a function $C:\wp(X)\setminus\{\varnothing\}\to\wp(X)\setminus\{\varnothing\}$ such that for all nonempty $S\subseteq X$, $\varnothing\neq C(S)\subseteq S$. Then $C$ satisfies expansion consistency if for all nonempty $S,T\subseteq X$, $C(S)\cap C(T)\subseteq C(S\cup T)$. 

To relate choice consistency conditions to social choice, Sen \citeyearpar{Sen1993} defines, for some fixed nonempty sets $X$ and $V$, a \textit{functional collective choice rule} (FCCR) for $(X,V)$ to be a function $f$ mapping each profile $\mathbf{P}$ with $X(\mathbf{P})=X$ and $V(\mathbf{P})=V$ to a choice function $f(\mathbf{P},\cdot)$ on $X(\mathbf{P})$. Let a \textit{variable-election FCCR} (VFCCR) be a function $f$ mapping each profile $\mathbf{P}$ (i.e., allowing $X(\mathbf{P})$ and $V(\mathbf{P})$ to vary with $\mathbf{P}$) to a choice function  $f(\mathbf{P},\cdot)$ on $X(\mathbf{P})$. For voting theory, we take $X(\mathbf{P})$ to be the set of \textit{candidates who appeared on the ballot} in the election scenario modeled by $\mathbf{P}$---since there is no practical point in considering voting procedures that have access to ranking information not on submitted ballots---but after the ballots are collected, some candidates may withdraw from consideration, be rejected by higher authorities, become incapacitated, etc., leaving us to choose winners from some ``feasible set'' $S\subsetneq X(\mathbf{P})$. 

There are three ways to apply the notion of expansion consistency to a VFCCR $f$:
\begin{itemize}
\item $f$ satisfies \textit{feasible expansion consistency} if for all profiles $\mathbf{P}$, $f(\mathbf{P},\cdot)$ satisfies expansion consistency, i.e., for all $S, T\subseteq X(\mathbf{P})$, $f(\mathbf{P},S)\cap f(\mathbf{P},T)\subseteq f(\mathbf{P},S\cup T)$;
\item $f$ satisfies \textit{profile expansion consistency} if for all profiles $\mathbf{P}$ and $Y,Z\subseteq X(\mathbf{P})$ with $Y\cup Z=X(\mathbf{P})$, $f(\mathbf{P}\vert _{ Y},Y)\cap f(\mathbf{P}\vert _{ Z},Z)\subseteq f(\mathbf{P},X(\mathbf{P}))$;
\item $f$ satisfies \textit{full expansion consistency} if for all profiles $\mathbf{P}$ and $Y,Z\subseteq X(\mathbf{P})$ with $Y\cup Z=X(\mathbf{P})$, $S\subseteq Y$, and $T\subseteq Z$, we have $f(\mathbf{P}\vert _{ Y}, S)\cap f(\mathbf{P}\vert _{ Z}, T)\subseteq f(\mathbf{P}, S\cup T)$.
\end{itemize}
 Thus, profile expansion consistency and full expansion consistency constrain the relation between sets of winners for \textit{different election scenarios} involving rankings of different (but possibly overlapping or extended) sets of candidates, modeled by different profiles; by contrast, feasible expansion consistency constrains the choices of winners from feasible sets based on a fixed profile of rankings. All three notions of expansion consistency are equivalent for VFCCRs satisfying the independence condition that for all profiles $\mathbf{P}$ and nonempty $S\subseteq X(\mathbf{P})$, $f(\mathbf{P},S)=f(\mathbf{P}\vert _{ S},S)$. However, they are not equivalent in general. For example, consider the \textit{global Borda count} VFCCR (cf.~\citealt[p.~71]{Kelly1988}): $f(\mathbf{P},S)$ is the set of all $x\in S$ such that for all $y\in S$, the Borda score of $x$ calculated with respect to the full profile $\mathbf{P}$ is at least that of $y$.\footnote{The \textit{local Borda count} VFCCR (cf.~\citealt[p.~74]{Kelly1988}) takes $f(\mathbf{P},S)$ to be the set of all $x\in S$ such that for all $y\in S$, the Borda score of $x$ calculated with respect to the restricted profile $\mathbf{P}\vert _{ S}$ is at least that of $y$.} Global Borda count satisfies feasible expansion consistency but not profile expansion consistency. By contrast, both local and global VFCCR versions of Split Cycle\footnote{Global Split Cycle takes $f(\mathbf{P},S)$ to be the set of all $x\in S$ such that for all $y\in S$, $y$ does not defeat $x$ according to the defeat relation calculated with respect to $\mathbf{P}$, whereas local Split Cycle uses the defeat relation calculated with respect to $\mathbf{P}\vert _{ S}$.} satisfy full expansion consistency.\footnote{It is more difficult to describe VFCCRs satisfying both feasible and profile but not full expansion consistency. However, the essential point can be seen by considering a binary choice function $C$ that for $\varnothing\neq S\subseteq Y\subseteq X(\mathbf{P})$ returns a nonempty $C(Y,S)\subseteq S$; think of $Y$ as determining the profile $\mathbf{P}\vert _{ Y}$ and $S$ as determining the feasible set $S\subseteq Y$. Let $X(\mathbf{P})=\{a,b,c\}$ and let $C$ be as follows: $C(\{a,b,c\}, \{a,b,c\})= \{a,b\}$, $C(\{a,b,c\}, \{a,b\})=\{b\}$, $C(\{a,b,c\},\{a,c\})=\{a\}$, $C(\{a,b,c\},\{b,c\})=\{b\}$; $C(\{a,b\},\{a,b\})=\{a\}$; $C(\{a,c\},\{a,c\})=\{a\}$; $C(\{b,c\},\{b,c\})=\{b\}$. This violates full expansion consistency because ${a\in C(\{a,b\},\{a,b\})\cap C(\{a,c\}, \{a\})}$ but $a\not\in C(\{a,b,c\}, \{a,b\})$. Yet for each nonempty $Y\subseteq X(\mathbf{P})$, $C(Y,\cdot)$ is a choice function satisfying expansion consistency, so $C$ satisfies feasible expansion consistency, and the choice function $C'$ defined by $C'(S)=C(S,S)$ satisfies expansion consistency, so $C$ satisfies profile expansion consistency.}\end{remark}
 
 \begin{remark}Expansion consistency may remind one of the \textit{reinforcement} criterion (see \citealt{Pivato2013b}), but that criterion concerns combining profiles for disjoint sets of voters voting on the same set of candidates, whereas expansion consistency concerns  profiles for the same set of voters voting on different sets of candidates. Reinforcement states that for any profiles $\mathbf{P}$ and $\mathbf{P}'$ with $V(\mathbf{P})\cap V(\mathbf{P}')=\varnothing$ and $X(\mathbf{P})=X(\mathbf{P}')$, if $F(\mathbf{P})\cap F(\mathbf{P}')\neq\varnothing$, then $F(\mathbf{P}+\mathbf{P}')= F(\mathbf{P})\cap F(\mathbf{P}')$. No Condorcet consistent voting method satisfies reinforcement (see, e.g., \citealt[Proposition 2.5]{Zwicker2016}); we know of no non-trivial voting method that satisfies both expansion consistency and reinforcement;\footnote{Expansion consistency implies a weakening of Condorcet consistency---it implies that if there is a Condorcet winner, that candidate must be \textit{among} the winners. But with no other axioms, expansion consistency and reinforcement are consistent, as they are both satisfied by the trivial voting method that always picks all candidates as winners.} and we do not find reinforcement normatively compelling for all voting contexts.\footnote{In the context where there is a ``true'' ranking of the candidates of which voters have noisy perceptions, reinforcement is satisfied by any voting method that can be rationalized as the maximum likelihood estimator for some noise model with i.i.d. votes (see \citealt{Conitzer2005,Pivato2013}).} Why when the voting method is given the full information in $\mathbf{P}+\mathbf{P}'$ should it be constrained by what it outputs when given only the limited information of $\mathbf{P}$ and only the limited information in $\mathbf{P}'$? For example, for three candidates $a,b,c$, suppose $\mathbf{P}$ is the classic Condorcet paradox profile with 6 voters such that $Margin_\mathbf{P}(a,b)=2$,  $Margin_\mathbf{P}(b,c)=2$, and $Margin_\mathbf{P}(c,a)=2$, so $F(\mathbf{P})=\{a,b,c\}$ by fairness considerations (i.e., for any anonymous and neutral method $F$), while $\mathbf{P}'$ is a profile with $3$ voters such that $Margin_{\mathbf{P}'}(b,a)=1$, $Margin_{\mathbf{P}'}(a,c)=1$, and $Margin_{\mathbf{P}'}(b,c)=1$, so $F(\mathbf{P}')=\{b\}$ by Condorcet consistency. When we look at all the information in $\mathbf{P}+\mathbf{P}'$, we see that $a$ is majority preferred to every other candidate---$a$ is the Condorcet winner, so $F(\mathbf{P}+\mathbf{P}')=\{a\}$. Note that $b$ is only majority preferred to $a$ by a small margin in $\mathbf{P}'$, whereas $a$ is majority preferred to $b$ by a larger margin in $\mathbf{P}$. Due to $c$'s poor performance in $\mathbf{P}'$, there is no cycle in the full profile $\mathbf{P}+\mathbf{P}'$, so although $a$'s margin over $b$ failed to make $a$ the winner in $\mathbf{P}$ due to fairness considerations, $a$'s margin over $b$ in the full cycle-free profile makes $a$ the winner, as it should. Note this also shows it is a mistake to think that given a profile like $\mathbf{P}+\mathbf{P}'$, one can assume that the ``Condorcet component'' $\mathbf{P}$ can be deleted from the profile, as if it contains no information, while not changing the winning set (cf.~the property of \textit{cancelling properly} in \citealt[p.~77]{Balinski2010}). The Condorcet component $\mathbf{P}$ contains the important information that $Margin_\mathbf{P}(a,b)=2$.\end{remark}

\section{Other Criteria}\label{OtherCriteria}

In this section, we consider how Split Cycle fairs with respect to a number of other criteria for voting methods. We organize the criteria into five groups: symmetry criteria (\ref{SymmetrySection}), dominance criteria (\ref{DominanceSection}), independence criteria (\ref{IndependenceSection}), resoluteness criteria (\ref{ResolutenessSection}), and monotonicity criteria (\ref{MonSection}).

\subsection{Symmetry Criteria}\label{SymmetrySection}

\subsubsection{Anonymity and Neutrality}\label{ANSection}

The two most basic symmetry criteria say that permuting voter names does not change the election result (anonymity) and permuting candidate names changes the result according to the permutation (neutrality).

\begin{definition} Let $F$ be a voting method.
\begin{enumerate}
\item $F$ satisfies \textit{anonymity} if for any $\mathbf{P},\mathbf{P}'\in\mathrm{dom}(F)$ and $X(\mathbf{P})=X(\mathbf{P}')$, if there is a permutation $\pi$ of $\mathcal{V}$ such that $\pi[V(\mathbf{P})]=V'(\mathbf{P})$ and $\mathbf{P}(i)=\mathbf{P}'(\pi(i))$ for all $i\in V(\mathbf{P})$, then $F(\mathbf{P})=F(\mathbf{P}')$;
 \item $F$ satisfies \textit{neutrality} if for any $\mathbf{P},\mathbf{P}'\in\mathrm{dom}(F)$ with $V(\mathbf{P})=V(\mathbf{P}')$, if there is a permutation $\tau$ of $\mathcal{X}$ such that $\tau[X(\mathbf{P})]=X(\mathbf{P}')$ and for each $i\in V(\mathbf{P})$ and $x,y\in X(\mathbf{P})$, $(x,y)\in \mathbf{P}(i)$ if and only if $(\tau(x),\tau(y))\in \mathbf{P}'(i)$, then $\tau[F(\mathbf{P})]= F(\mathbf{P}')$.
\end{enumerate}
\end{definition}

The following is obvious from the definition of Split Cycle.

\begin{proposition} Split Cycle satisfies anonymity and neutrality.
\end{proposition}

\subsubsection{Reversal Symmetry}\label{RSSection}

Next we consider a criterion due to Saari (\citealt[\S~3.1.3]{Saari1994}; \citealt[\S~7.1]{Saari1997}) that can be seen as related to neutrality. Neutrality implies that if we swap the places of two candidates $a$ and $b$ on every voter's ballot, then if $a$ won the election before the swap, $b$ should win the election after the swap. Reversal symmetry extends this idea from pairwise swaps to full reversals of voters'~ballots. 

\begin{definition} A voting method $F$ satisfies \textit{reversal symmetry} if for any $\mathbf{P}\in\mathrm{dom}(F)$ with $\vert X(\mathbf{P})\vert >1$, if $F(\mathbf{P})=\{x\}$, then $x\not\in F(\mathbf{P}^{r})$, where $\mathbf{P}^{r}$ is the profile such that $\mathbf{P}^r_i= \{(x,y)\mid (y,x)\in \mathbf{P}_i\}$.
\end{definition}

\begin{proposition} Split Cycle satisfies reversal symmetry.
\end{proposition}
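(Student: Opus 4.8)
The plan is to exploit the transparent effect of ballot reversal on margin graphs. Since $Margin_{\mathbf{P}^{r}}(a,b)=-Margin_{\mathbf{P}}(a,b)=Margin_{\mathbf{P}}(b,a)$ for all $a,b\in X(\mathbf{P})=X(\mathbf{P}^{r})$, the margin graph $\mathcal{M}(\mathbf{P}^{r})$ is obtained from $\mathcal{M}(\mathbf{P})$ by reversing the direction of every edge while keeping its weight. Hence reversal induces a bijection between the simple cycles of $\mathcal{M}(\mathbf{P})$ and those of $\mathcal{M}(\mathbf{P}^{r})$ which preserves the multiset of edge weights and therefore preserves splitting numbers: if $\rho'$ is the simple cycle $c_{1}\to c_{2}\to\dots\to c_{k}\to c_{1}$ in $\mathcal{M}(\mathbf{P}^{r})$, then $\rho:=c_{1}\to c_{k}\to\dots\to c_{2}\to c_{1}$ is a simple cycle in $\mathcal{M}(\mathbf{P})$ with $Split\#(\rho')=Split\#(\rho)$, and $\rho'$ contains a given pair of candidates if and only if $\rho$ does.

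First I would observe that, because $|X(\mathbf{P})|>1$ and $SC(\mathbf{P})=\{x\}$, there is some $d\in X(\mathbf{P})$ that $x$ defeats in $\mathbf{P}$. Indeed, pick any $z\in X(\mathbf{P})\setminus\{x\}$; then $z\notin SC(\mathbf{P})$, so Lemma \ref{BeatPathFromWinner} yields distinct $y_{1},\dots,y_{n}$ with $y_{1}=x$, $y_{n}=z$, and $y_{1}Dy_{2}D\dots Dy_{n}$; since $y_{1}\neq y_{n}$ we have $n\geq 2$, so $x=y_{1}$ defeats $d:=y_{2}$.

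Next I would show that $d$ defeats $x$ in $\mathbf{P}^{r}$, which immediately gives $x\notin SC(\mathbf{P}^{r})$, as required. By Definition \ref{DefeatDef}, since $x$ defeats $d$ in $\mathbf{P}$ we have $Margin_{\mathbf{P}}(x,d)>0$ and $Margin_{\mathbf{P}}(x,d)>Split\#(\rho)$ for every simple cycle $\rho$ of $\mathcal{M}(\mathbf{P})$ containing both $x$ and $d$. Now $Margin_{\mathbf{P}^{r}}(d,x)=Margin_{\mathbf{P}}(x,d)>0$, and for any simple cycle $\rho'$ of $\mathcal{M}(\mathbf{P}^{r})$ containing $d$ and $x$, the reversed cycle $\rho$ described above is a simple cycle of $\mathcal{M}(\mathbf{P})$ containing $x$ and $d$ with $Split\#(\rho')=Split\#(\rho)<Margin_{\mathbf{P}}(x,d)=Margin_{\mathbf{P}^{r}}(d,x)$. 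Thus $Margin_{\mathbf{P}^{r}}(d,x)$ exceeds the splitting number of every simple cycle of $\mathcal{M}(\mathbf{P}^{r})$ containing $d$ and $x$, so $d$ defeats $x$ in $\mathbf{P}^{r}$, and hence $x\notin SC(\mathbf{P}^{r})$.

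There is essentially no real obstacle here beyond bookkeeping; the one step that requires a moment's care is establishing that the unique Split Cycle winner actually defeats somebody (a profile on more than one candidate in which $x$ defeated no one would have $|SC(\mathbf{P})|>1$), which is precisely what Lemma \ref{BeatPathFromWinner} provides. Everything else reduces to the routine fact that a splitting number, being the minimum weight along a cycle, is invariant under reversing that cycle.
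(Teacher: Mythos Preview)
Your proof is correct and follows essentially the same approach as the paper's: both use Lemma~\ref{BeatPathFromWinner} to find a candidate $d$ that $x$ defeats in $\mathbf{P}$, and then observe that ballot reversal swaps the roles so that $d$ defeats $x$ in $\mathbf{P}^{r}$. The only cosmetic difference is that the paper phrases the key invariance as $Cycle\#_{\mathbf{P}}(x,d)=Cycle\#_{\mathbf{P}^{r}}(d,x)$ via Lemma~\ref{CycleNumLem}, whereas you work directly with Definition~\ref{DefeatDef} and the splitting-number-preserving bijection between cycles.
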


\begin{proof} Suppose $\mathbf{P}$ is such that $\vert X(\mathbf{P})\vert >1$ and $SC(\mathbf{P})=\{x\}$. It follows by Lemma \ref{BeatPathFromWinner} that there is a $y\in X(\mathbf{P})$ such that $x$ defeats $y$ in $\mathbf{P}$, i.e., such that $Margin_{\mathbf{P}}(x,y)> Cycle\#_{\mathbf{P}}(x,y)$. But then since $Margin_{\mathbf{P}}(x,y)=Margin_{\mathbf{P}^{r}}(y,x)$ and $Cycle\#_{\mathbf{P}}(x,y)=Cycle\#_{\mathbf{P}^{r}}(y,x)$, we have  $Margin_{\mathbf{P}^{r}}(y,x)> Cycle\#_{\mathbf{P}^{r}}(x,y)$, so $x\not\in SC(\mathbf{P}^{r})$.\end{proof}

\subsection{Dominance Criteria}\label{DominanceSection}

\subsubsection{Pareto}\label{ParetoSection}

Our first dominance criterion is the well-known Pareto principle (see, e.g., \citealt[Definition 2.6]{Zwicker2016}), stating that Pareto-dominated candidates cannot be elected.

\begin{definition}A voting method $F$ satisfies \textit{Pareto} if for any $\mathbf{P}\in \mathrm{dom}(F)$ and $a,b\in X(\mathbf{P})$, if all voters in $\mathbf{P}$ rank $a$ above $b$, then $b\not\in F(\mathbf{P})$. 
\end{definition}

\begin{proposition} Restricted to the class of acyclic profiles, Split Cycle satisfies Pareto. 
\end{proposition}
\begin{proof} Suppose all voters in $\mathbf{P}$ rank $a$ above $b$, so $Margin_\mathbf{P}(a,b)=\vert V(\mathbf{P})\vert $. Since it is impossible to have a cycle $\rho=a \overset{\vert V(\mathbf{P})\vert }{\longrightarrow} b \overset{\vert V(\mathbf{P})\vert }{\longrightarrow} x_1 \overset{\vert V(\mathbf{P})\vert }{\longrightarrow}\dots \overset{\vert V(\mathbf{P})\vert }{\longrightarrow} x_n \overset{\vert V(\mathbf{P})\vert }{\longrightarrow} a$ in an acyclic profile, one edge of any such cycle must have weight less than $\vert V(\mathbf{P})\vert $, so $a$ defeats $b$ by Lemma \ref{OnlySomeCycles}.\end{proof}

\subsubsection{Condorcet Winner and Loser}\label{CondorcetSection}

The next notions of dominance are based on majority preference rather than unanimity: the \textit{Condorcet} (\textit{winner}) \textit{criterion} states that a candidate who is majority preferred to every other candidate must be the unique winner, while the \textit{Condorcet loser criterion} states that a candidate who is majority dispreferred to every other candidate must not be among the winners.

\begin{definition} For a profile $\mathbf{P}$  and $x\in X(\mathbf{P})$, we say that $x$ is a \textit{Condorcet winner} (resp.~\textit{Condorcet loser}) \textit{in $\mathbf{P}$} if for every $y\in X(\mathbf{P})\setminus\{x\}$, we have $Margin(x,y)>0$ (resp.~$Margin(y,x)>0$ and $X(\mathbf{P})\neq\{x\}$).
\end{definition}

\begin{definition}\label{CondWinLoss}  A voting method $F$ satisfies the \textit{Condorcet} (\textit{winner}) \textit{criterion} (resp.~\textit{Condorcet loser criterion}) if for every $\mathbf{P}\in\mathrm{dom}(F)$ and $x\in X(\mathbf{P})$, if $x$ is the Condorcet winner (resp.~Condorcet loser), then $F(\mathbf{P})=\{x\}$ (resp.~$x\not\in F(\mathbf{P})$). If $F$ satisfies the Condorcet criterion, we say that $F$ is \textit{Condorcet consistent}.
\end{definition}

\begin{proposition} Split Cycle satisfies the Condorcet criterion and the Condorcet loser criterion.
\end{proposition}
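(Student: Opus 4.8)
The plan is to exploit a simple structural fact: a Condorcet winner $x$ has no incoming edges in $\mathcal{M}(\mathbf{P})$, and a Condorcet loser $x$ has no outgoing edges, so in either case no simple cycle of the margin graph passes through $x$. Once that is observed, the defeat conditions of Definition \ref{DefeatDef} become vacuous in exactly the places we need them to be.

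For the Condorcet criterion, suppose $x$ is a Condorcet winner in $\mathbf{P}$. First I would note that $x$ is undefeated: by Definition \ref{DefeatDef}, any $y$ defeating $x$ would need $Margin_\mathbf{P}(y,x)>0$, but $Margin_\mathbf{P}(x,y)>0$ forces $Margin_\mathbf{P}(y,x)<0$; hence $x\in SC(\mathbf{P})$. Next I would show that $x$ defeats every $y\in X(\mathbf{P})\setminus\{x\}$. We have $Margin_\mathbf{P}(x,y)>0$, and by Lemma \ref{OnlySomeCycles} it suffices to check that $Margin_\mathbf{P}(x,y)>Split\#(\rho)$ for every simple cycle $\rho$ of the form $x\to y\to x_1\to\dots\to x_n\to x$. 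Any such cycle would contain an edge into $x$, contradicting that $x$ is majority preferred to every other candidate; so there are no such cycles, and the condition holds vacuously (equivalently, $Cycle\#_\mathbf{P}(x,y)=0<Margin_\mathbf{P}(x,y)$, using Lemma \ref{CycleNumLem}). Thus every $y\neq x$ is defeated by $x$, and so $SC(\mathbf{P})=\{x\}$.

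For the Condorcet loser criterion, suppose $x$ is a Condorcet loser in $\mathbf{P}$; then by definition $X(\mathbf{P})\neq\{x\}$, so we may fix some $y\in X(\mathbf{P})\setminus\{x\}$. Since $Margin_\mathbf{P}(y,x)>0$, I would show that $y$ defeats $x$ using Lemma \ref{OnlySomeCycles}: it suffices that $Margin_\mathbf{P}(y,x)>Split\#(\rho)$ for every simple cycle of the form $y\to x\to x_1\to\dots\to x_n\to y$. Any such cycle contains an edge out of $x$, contradicting that $x$ is majority dispreferred to every other candidate; so there are no such cycles, the condition is vacuous, and $y$ defeats $x$. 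Hence $x\notin SC(\mathbf{P})$.

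Neither direction poses a genuine obstacle. The only point requiring a little care is to route the argument through Lemma \ref{OnlySomeCycles} (or the equivalent Lemma \ref{CycleNumLem}), so that the cycles one must rule out all have $x$ appearing with a designated successor or predecessor, making the absence of incoming (resp.\ outgoing) edges at $x$ immediately decisive; alternatively one can simply observe up front that no simple cycle of $\mathcal{M}(\mathbf{P})$ can contain $x$ at all, which works just as well. The Condorcet loser part also relies on the clause $X(\mathbf{P})\neq\{x\}$ built into the definition of Condorcet loser to ensure a defeater of $x$ exists.
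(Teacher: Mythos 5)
Your proposal is correct and follows essentially the same route as the paper's proof: observe that a Condorcet winner (resp.\ loser) has no incoming (resp.\ outgoing) edges in the margin graph, hence lies on no simple cycle, so all relevant cycle numbers are $0$ and the positive margins immediately yield defeats via Lemma \ref{OnlySomeCycles}/\ref{CycleNumLem}. Your extra explicit check that the Condorcet winner is itself undefeated is a harmless (and slightly more careful) addition that the paper leaves implicit.
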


\begin{proof} If $x$ is the Condorcet winner (resp.~loser), then for every $y\in X(\mathbf{P})\setminus\{x\}$,  we have $Margin_{\mathbf{P}}(x,y)>0$ (resp.~$Margin_{\mathbf{P}}(y,x)>0$). It follows that $x$ is not involved in any cycles, so for every $y\in X(\mathbf{P})\setminus\{x\}$, we have $Margin_{\mathbf{P}}(x,y)>Cycle\#_{\mathbf{P}}(x,y)=0$ (resp.~$Margin_{\mathbf{P}}(y,x)>Cycle\#_{\mathbf{P}}(y,x)=0$). Hence $x$ defeats every other candidate, so $SC(\mathbf{P})=\{x\}$ (resp.~is defeated by every other candidate, so $x\not\in SC(\mathbf{P}$)).\end{proof}

\subsubsection{Smith and Schwartz Criteria}\label{SmithSchwartz}

A strengthening of the Condorcet criterion is the \textit{Smith criterion} (\citealt{Smith1973}), according to which the set of winners must be a subset of the \textit{Smith set}---the smallest set of candidates such that every candidate inside the set is majority preferred to every candidate outside the set. Following the terminology of Schwartz \citeyearpar{Schwartz1986}, we also call the Smith set the GETCHA set (``GETCHA'' stands for ``generalized top-choice assumption'').

\begin{definition}\label{GETCHA} Let $\mathbf{P}$ be a profile and $S\subseteq X(\mathbf{P})$. Then $S$ is \textit{$\to_\mathbf{P}$-dominant} if $S\neq\varnothing$ and for all $x\in S$ and $y\in X(\mathbf{P})\setminus S$, we have $x\to_\mathbf{P}y$. Define
\[GETCHA(\mathbf{P})=\bigcap \{S\subseteq X(\mathbf{P})\mid S\mbox{ is $\to_\mathbf{P}$-dominant}\}.\]
\end{definition}

\begin{definition} A voting method $F$ satisfies the \textit{Smith criterion} if for any $\mathbf{P}\in\mathrm{dom}(F)$, we have $F(\mathbf{P})\subseteq GETCHA(\mathbf{P})$.
\end{definition}

\begin{proposition}\label{SmithProp} Split Cycle satisfies the Smith criterion.
\end{proposition}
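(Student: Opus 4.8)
The plan is to prove the contrapositive candidate-by-candidate: for any $z\in X(\mathbf{P})\setminus GETCHA(\mathbf{P})$, exhibit a candidate that defeats $z$, so that $z\notin SC(\mathbf{P})$. First, since $z\notin GETCHA(\mathbf{P})=\bigcap\{S\subseteq X(\mathbf{P})\mid S\text{ is }\to_\mathbf{P}\text{-dominant}\}$, there must be \emph{some} $\to_\mathbf{P}$-dominant set $S$ with $z\notin S$. Fix such an $S$ and pick any $a\in S$ (possible since $S\neq\varnothing$). By dominance, $a\to_\mathbf{P}z$, i.e.\ $Margin_\mathbf{P}(a,z)>0$. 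This already gives a candidate majority preferred to $z$; the work is to show $a$ in fact \emph{defeats} $z$, i.e.\ $Margin_\mathbf{P}(a,z)>Cycle\#_\mathbf{P}(a,z)$ via Lemma \ref{CycleNumLem}.

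The key structural observation is that in the margin graph, no edge leads from outside a $\to_\mathbf{P}$-dominant set into it: if $x\notin S$, $y\in S$, then dominance gives $y\to_\mathbf{P}x$, so by asymmetry of the margin graph there is no edge $x\to_\mathbf{P}y$. Consequently, any path in $\mathcal{M}(\mathbf{P})$ starting at $z$ stays entirely outside $S$ (induction on path length, using $z\notin S$). In particular there is no simple cycle of the form $a\to z\to x_1\to\dots\to x_n\to a$ in $\mathcal{M}(\mathbf{P})$, since such a cycle would force a path from $z$ to reach $a\in S$. Hence $Cycle\#_\mathbf{P}(a,z)=\max(\{0\})=0$, and $Margin_\mathbf{P}(a,z)>0=Cycle\#_\mathbf{P}(a,z)$, so $a$ defeats $z$ by Lemma \ref{CycleNumLem}. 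Thus $z\notin SC(\mathbf{P})$, and since $z$ was arbitrary, $SC(\mathbf{P})\subseteq GETCHA(\mathbf{P})$.

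I do not expect a genuine obstacle. The only points deserving a moment's care are: (i) extracting a single dominant set avoiding $z$ — this follows directly from the definition of intersection and does not even require the (true) fact that $GETCHA(\mathbf{P})$ is itself dominant; and (ii) being careful with the ``empty cycle'' case so that the cycle number is genuinely $0$. The heart of the argument is just the no-incoming-edge property of dominant sets, which makes every relevant cycle number collapse to zero and reduces defeat to the bare majority preference $a\to_\mathbf{P}z$ guaranteed by dominance.
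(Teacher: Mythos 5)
Your proof is correct and takes essentially the same route as the paper's: both arguments rest on the observation that no edge of $\mathcal{M}(\mathbf{P})$ can enter a $\to_\mathbf{P}$-dominant set from outside, so there is no simple cycle of the form $a\to z\to x_1\to\dots\to x_n\to a$, and hence the positive margin of $a$ over $z$ is a defeat (the paper phrases this as a contradiction via Lemma \ref{OnlySomeCycles}, you phrase it directly as $Cycle\#_\mathbf{P}(a,z)=0$). The only minor difference is that you take $a$ from an arbitrary dominant set avoiding $z$ rather than from $GETCHA(\mathbf{P})$ itself, which sidesteps the implicit appeal in the paper's proof to the fact that $GETCHA(\mathbf{P})$ is itself $\to_\mathbf{P}$-dominant.
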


\begin{proof}  Suppose $b\in SC(\mathbf{P})\setminus GETCHA(\mathbf{P})$. Since $b\not \in GETCHA(\mathbf{P})$, there is an $a\in GETCHA(\mathbf{P})$ such that $a\to_\mathbf{P} b$. Then since $b\in SC(\mathbf{P})$, it follows by Lemma \ref{OnlySomeCycles} that there is a simple cycle $\rho$ of the form $a\to b\to x_1\to\dots \to x_n\to a$. Hence one of the edges in $\rho$ goes from a candidate outside $GETCHA(\mathbf{P})$ to a candidate inside $GETCHA(\mathbf{P})$, which is a contradiction.\end{proof}

Next we consider a strengthening of the Smith criterion, based on the idea of the Schwartz set, or in Schwartz's \citeyearpar{Schwartz1986} terminology, the GOCHA set (``GOCHA'' stands for ``generalized optimal choice axiom'').

\begin{definition}\label{GOCHA} Let $\mathbf{P}$ be a profile and $S\subseteq X(\mathbf{P})$. Then $S$ is \textit{$\to_\mathbf{P}$-undominated} if for all $x\in S$ and $y\in X(\mathbf{P})\setminus S$, we have $y\not\to_\mathbf{P}x$. Define \[GOCHA(\mathbf{P})=\bigcup \{S\subseteq X(\mathbf{P})\mid S\mbox{ is $\to_\mathbf{P}$-undominated and no }S'\subsetneq S\mbox{ is $\to_\mathbf{P}$-undominated}\}.\]
\end{definition}

Note that if there are no zero margins between distinct candidates in $\mathbf{P}$, then $GOCHA(\mathbf{P})=GETCHA(\mathbf{P})$. Another useful characterization of the GOCHA set is given by the following lemma.

\begin{lemma}[\citealt{Schwartz1986}, Corollary 6.2.2]\label{GOCHALem}  Let $\mathbf{P}$ be any profile, and let $\to_\mathbf{P}^*$ be the transitive closure of $\to_\mathbf{P}$, i.e., $a\to_\mathbf{P}^*b$ if and only if there are $x_1,\dots , x_n\in X(\mathbf{P})$ with $a=x_1$ and $b=x_n$ such that $x_1\to_\mathbf{P}\dots \to_\mathbf{P}x_n$. Then \[GOCHA(\mathbf{P})=\{x\in X(\mathbf{P})\mid \mbox{there is no }y\in X(\mathbf{P}): y\to_\mathbf{P}^*x\mbox{ and }x\not\to_\mathbf{P}^*y\}.\]
\end{lemma}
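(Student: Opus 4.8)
The plan is to analyze both sides of the claimed identity through the strongly connected components (SCCs) of the majority graph, i.e., the classes of the mutual-reachability relation $x\approx y$ defined by $x\to_\mathbf{P}^* y$ and $y\to_\mathbf{P}^* x$. Since $\to_\mathbf{P}^*$ is a preorder (reflexive via the trivial one-vertex path, transitive by concatenation), it descends to a genuine partial order $\preceq$ on the set of SCCs, where $C\preceq D$ means $D\to_\mathbf{P}^* C$ ($D$ reaches $C$); its maximal elements are exactly the \emph{source} SCCs, those no other SCC points into. At the outset I would record the convention, standard in Schwartz's setting and in any case forced here on pain of triviality, that the sets quantified over in the definition of $GOCHA(\mathbf{P})$ are nonempty; otherwise $\varnothing$ would be vacuously undominated and the union would collapse.

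The first key step is to identify the right-hand side as the union of the source SCCs. If the SCC of $x$ is a source, then any $y$ with $y\to_\mathbf{P}^* x$ must lie in that same component, since a path reaching $x$ cannot enter a source from outside; hence $x\to_\mathbf{P}^* y$ as well, and $x$ meets the defining condition. Conversely, if the SCC of $x$ is not a source, some distinct component $D$ reaches it; choosing $y\in D$ gives $y\to_\mathbf{P}^* x$ while $x\not\to_\mathbf{P}^* y$ (were $x\to_\mathbf{P}^* y$, then $x$ and $y$ would be mutually reachable and $D$ would not be distinct), so $x$ fails the condition. Thus the right-hand set is precisely $\bigcup\{C\mid C\text{ a source SCC}\}$.

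The second key step is to show that the minimal nonempty undominated sets are exactly the source SCCs, so that $GOCHA(\mathbf{P})$ is the same union. I would first observe that undomination of $S$ (no outside candidate beats an inside one) is equivalent to $S$ being closed under $\to_\mathbf{P}$-predecessors, hence under $\to_\mathbf{P}^*$-predecessors; in particular every undominated set is a union of full SCCs and corresponds to an up-set (order filter) in $(\text{SCCs},\preceq)$. A source SCC $C$ is undominated, since any edge into $C$ from outside would exhibit an external dominator; and it is minimal, because for any nonempty proper $S'\subsetneq C$ and any $x\in S'$, $z\in C\setminus S'$, strong connectivity supplies a path from $z$ to $x$ inside $C$ whose first edge entering $S'$ witnesses an outside-to-inside domination, contradicting undomination of $S'$. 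Conversely, every minimal nonempty up-set in a finite poset is the principal filter of a maximal element, which here is a single source SCC. Hence the minimal undominated sets are exactly the source SCCs, and their union equals the union computed in the first step, giving $GOCHA(\mathbf{P})=\{x\mid \text{no }y\text{ with }y\to_\mathbf{P}^* x\text{ and }x\not\to_\mathbf{P}^* y\}$.

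I expect the only real friction—the \textbf{main obstacle}—to be bookkeeping the direction conventions together with the minimality argument for source components: one must check carefully that no proper nonempty subset of a source SCC can be undominated, which is exactly where strong connectivity is used. Everything else is routine once the SCC/condensation picture is fixed. Since the statement is Schwartz's \citeyearpar{Schwartz1986} Corollary 6.2.2, an alternative is simply to cite it, but the argument above is self-contained.
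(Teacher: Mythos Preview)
The paper does not supply its own proof of this lemma; it is stated as a citation of Schwartz's Corollary~6.2.2 and invoked without argument. Your proof via strongly connected components and the condensation poset is correct and self-contained, and your observation that the sets in Definition~\ref{GOCHA} must be taken to be nonempty is well placed: as literally written, $\varnothing$ is vacuously undominated and would be the unique minimal such set, collapsing the union; the intended (and standard) reading restricts to nonempty $S$.
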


Just as the Smith criterion states that the set of winners should always be a subset of the Smith set, the Schwartz criterion states that the set of winners should always be a subset of the Schwartz set. 

\begin{definition} A voting method $F$ satisfies the \textit{Schwartz criterion} if for any $\mathbf{P}\in\mathrm{dom}(F)$, $F(\mathbf{P})\subseteq GOCHA(\mathbf{P})$.
\end{definition}

In contrast to Proposition \ref{SmithProp}, Split Cycle does not satisfy the Schwartz criterion. After the proof, we will explain why we do not find the Schwartz criterion normatively plausible.

\begin{proposition}\label{SchwartzProp} Split Cycle does not satisfy the Schwartz criterion, even when restricted to linear profiles.
\end{proposition}

\begin{proof} By Debord's Theorem, there is a linear profile $\mathbf{P}$ with the following margin graph (simplifying our example for the third idea of Section \ref{ThreeIdeasSection}):

\begin{center}
\begin{tikzpicture}

\node[circle,draw,minimum width=0.25in] at (3,0) (a) {$a$}; 

\node[circle,draw, minimum width=0.25in] at (5,0) (e) {$e$}; 
\node[circle,draw,minimum width=0.25in] at (8,0) (d) {$d$}; 
\node[circle,draw,minimum width=0.25in] at (6.5,1.5) (f) {$f$};  

\path[->,draw,thick] (e) to node[fill=white] {$2$} (d);
\path[->,draw,thick] (f) to node[fill=white] {$2$} (e);
\path[->,draw,thick] (d) to node[fill=white] {$2$} (f);

\path[->,draw,thick, bend left] (a) to node[fill=white] {$2$} (f);

  \end{tikzpicture}
  \end{center}
First, note that $d\in SC(\mathbf{P})$ (indeed, $SC(\mathbf{P})=\{a,d,e\})$, because the only candidate with a positive margin over $d$ is $e$, but $Margin_\mathbf{P}(e,d)\not > Cycle\#_\mathbf{P}(e,d)$. Yet $d\not\in GOCHA(\mathbf{P})$, because $a\to_\mathbf{P}^*d$ and $d\not\to_\mathbf{P}^*a$.\end{proof}

For the reasons explained in Section \ref{ThreeIdeasSection} for the idea that \textit{defeat is direct}, we think that $d$ should not be kicked out of the winning set by $a$ in the profile $\mathbf{P}$ used in the proof of Proposition \ref{SchwartzProp}. Thus, we do not accept the Schwartz criterion. The profile used in the proof of Proposition \ref{SchwartzProp} also shows the following. 

\begin{proposition} There is no voting method on the domain of linear profiles satisfying anonymity, neutrality, strong stability for winners, and the Schwartz criterion.
\end{proposition}

\begin{proof} Where $\mathbf{P}$ is the linear profile used in the proof of Proposition \ref{SchwartzProp}, by anonymity and neutrality, $F(\mathbf{P}_{-a})=\{d,e,f\}$. Then since $Margin_\mathbf{P}(a,d)=0$, it follows by strong stability for winners that $d\in F(\mathbf{P})$, which contradicts the Schwartz criterion as in the proof of Proposition \ref{SchwartzProp}.\end{proof}

\subsection{Independence Criteria}\label{IndependenceSection}

\subsubsection{Independence of Smith-Dominated Alternatives}\label{ISDASection}

The Smith criterion of Section \ref{SmithSchwartz} can be strengthened to the criterion that deletion of candidates outside the Smith set should not change the set of winners.

\begin{definition} A voting method $F$ satisfies \textit{independence of Smith-dominated alternatives} (ISDA) if for any $\mathbf{P}\in\mathrm{dom}(F)$ and $x\in X(\mathbf{P})\setminus GETCHA(\mathbf{P})$, we have $F(\mathbf{P})=F(\mathbf{P}_{-x})$.
\end{definition}

\begin{remark} ISDA implies the Smith criterion, since if $x\in F(\mathbf{P})\setminus  GETCHA(\mathbf{P})$, then $F(\mathbf{P})\neq F(\mathbf{P}_{-x})$.
\end{remark}

\begin{remark}\label{RestrictionRemark} If $F$ satisfies ISDA, then $F$ satisfies the following inter-profile condition: for any profiles $\mathbf{P}$ and $\mathbf{P}'$, where $S=GETCHA(\mathbf{P})$ and $S'=GETCHA(\mathbf{P}')$, if $\mathbf{P}\vert _{ S}= \mathbf{P}'\vert _{ S'}$, then $F(\mathbf{P}) = F(\mathbf{P}')$. This inter-profile condition may be viewed as a weakening of the independence of irrelevant alternatives.
\end{remark}

\begin{proposition} Split Cycle satisfies ISDA.
\end{proposition}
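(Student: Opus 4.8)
The plan is to localize everything inside the Smith set $G := GETCHA(\mathbf{P})$. The key structural observation is that since $G$ is $\to_\mathbf{P}$-dominant, no edge of $\mathcal{M}(\mathbf{P})$ runs from a candidate in $X(\mathbf{P})\setminus G$ to a candidate in $G$; hence every simple cycle of $\mathcal{M}(\mathbf{P})$ lies entirely within $G$ or entirely within $X(\mathbf{P})\setminus G$. Because $x\notin G$, deleting $x$ from all ballots leaves the sub-margin-graph on $G$ untouched --- margins among candidates of $G$ are unchanged, and the simple cycles through $G$-candidates are exactly the same --- so the Split Cycle defeat relation restricted to $G$ should be identical in $\mathbf{P}$ and $\mathbf{P}_{-x}$. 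Combined with the Smith criterion (Proposition \ref{SmithProp}), which confines both winning sets to $G$, this yields $SC(\mathbf{P}) = SC(\mathbf{P}_{-x})$.

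Concretely, I would proceed as follows. First note that $G$ is still $\to_{\mathbf{P}_{-x}}$-dominant (removing a non-member of a dominant set preserves dominance, and margins among the remaining candidates are unchanged), so $GETCHA(\mathbf{P}_{-x})\subseteq G$; hence by Proposition \ref{SmithProp} applied to both profiles, $SC(\mathbf{P})\subseteq G$ and $SC(\mathbf{P}_{-x})\subseteq GETCHA(\mathbf{P}_{-x})\subseteq G$. It therefore suffices to show that for every $a\in G$, $a\in SC(\mathbf{P})$ iff $a\in SC(\mathbf{P}_{-x})$. Fix $a\in G$. Any candidate $b$ defeating $a$ in either profile has $Margin(b,a)>0$, hence $b\to a$, hence $b\in G$ by dominance of $G$ (the margin between $a$ and $b$ agrees in $\mathbf{P}$ and $\mathbf{P}_{-x}$, and $b\neq x$). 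So $a\in SC(\mathbf{P})$ iff no $b\in G$ defeats $a$ in $\mathbf{P}$, and likewise for $\mathbf{P}_{-x}$. It then remains to check that for $a,b\in G$, $a$ defeats $b$ in $\mathbf{P}$ iff $a$ defeats $b$ in $\mathbf{P}_{-x}$. By Lemma \ref{CycleNumLem} and $Margin_\mathbf{P}(a,b)=Margin_{\mathbf{P}_{-x}}(a,b)$, this reduces to $Cycle\#_\mathbf{P}(a,b)=Cycle\#_{\mathbf{P}_{-x}}(a,b)$. Every simple cycle of the form $a\to b\to x_1\to\dots\to x_n\to a$ in $\mathcal{M}(\mathbf{P})$ must stay inside $G$ (leaving $G$ would require an edge back into $G$), so it avoids $x$ and is a cycle of the same form, with the same splitting number, in $\mathcal{M}(\mathbf{P}_{-x})$; conversely, using dominance of $G$ in $\mathbf{P}_{-x}$, every such cycle in $\mathcal{M}(\mathbf{P}_{-x})$ stays inside $G$ and is one in $\mathcal{M}(\mathbf{P})$ (margins among $G$-candidates, and hence splitting numbers, coincide). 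Hence the two cycle numbers agree, completing the proof.

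The argument is essentially all bookkeeping once the boundary observation is in hand, so there is no serious obstacle; the one point that requires care is invoking the ``simple cycles never cross the $G\,/\,(X(\mathbf{P})\setminus G)$ boundary'' claim correctly for \emph{both} $\mathbf{P}$ and $\mathbf{P}_{-x}$ (it is the $\to$-dominance of $G$ in each profile, together with asymmetry of the margin graph, that does the work), and making sure the Smith-criterion step is what supplies \emph{both} inclusions $SC(\mathbf{P})\subseteq SC(\mathbf{P}_{-x})$ and $SC(\mathbf{P}_{-x})\subseteq SC(\mathbf{P})$ rather than only one of them.
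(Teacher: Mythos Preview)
Your proof is correct and follows essentially the same approach as the paper's: confine attention to $G=GETCHA(\mathbf{P})$ via the Smith criterion, observe that no edge runs from $X(\mathbf{P})\setminus G$ into $G$ so that simple cycles through $G$-candidates stay inside $G$ and avoid $x$, and conclude that margins and cycle numbers among $G$-candidates coincide in $\mathbf{P}$ and $\mathbf{P}_{-x}$, hence so does the defeat relation restricted to $G$. The paper additionally establishes the full equality $GETCHA(\mathbf{P})=GETCHA(\mathbf{P}_{-x})$ in a footnote, but as you correctly note, only the inclusion $GETCHA(\mathbf{P}_{-x})\subseteq G$ is needed for the argument.
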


\begin{proof} Suppose $x\in X(\mathbf{P})\setminus GETCHA(\mathbf{P})$. It follows that $GETCHA(\mathbf{P})=GETCHA(\mathbf{P}_{-x})$.\footnote{\label{GETCHAISDA}Clearly  $GETCHA(\mathbf{P})$ is  $\to_{\mathbf{P}_{-x}}$-dominant, so $GETCHA(\mathbf{P}_{-x})\subseteq GETCHA(\mathbf{P})$ by Definition \ref{GETCHA}. To see that $GETCHA(\mathbf{P}_{-x})$ is $\to_{\mathbf{P}}$-dominant, consider an $a\in GETCHA(\mathbf{P}_{-x})$ and $b\in X(\mathbf{P})\setminus GETCHA(\mathbf{P}_{-x})$. If $b\neq x$, then $b\in X(\mathbf{P}_{-x})\setminus GETCHA(\mathbf{P}_{-x})$ and hence $a\to_{\mathbf{P}_{-x}} b$ because $GETCHA(\mathbf{P}_{-x})$ is $\to_{\mathbf{P}_{-x}}$-dominant, which implies $a\to_{\mathbf{P}} b$. If $b=x$, then since $GETCHA(\mathbf{P}_{-x})\subseteq GETCHA(\mathbf{P})$ and $x\in X(\mathbf{P})\setminus GETCHA(\mathbf{P})$, again we have $a\to_{\mathbf{P}} b$. Thus, $GETCHA(\mathbf{P}_{-x})$ is $\to_{\mathbf{P}}$-dominant, so $GETCHA(\mathbf{P})\subseteq GETCHA(\mathbf{P}_{-x})$ by Definition \ref{GETCHA}.} Toward showing that $SC(\mathbf{P})=SC(\mathbf{P}_{-x})$, let $\mathbf{Q},\mathbf{Q}'\in \{\mathbf{P},\mathbf{P}_{-x}\}$ with $\mathbf{Q}\neq\mathbf{Q}'$. Suppose $y\in SC(\mathbf{Q})$. We will show $y\in SC(\mathbf{Q}')$. For any $z\in X(\mathbf{P}_{-x})$, we have \begin{equation} Margin_\mathbf{P}(z,y)=Margin_{\mathbf{P}_{-x}}(z,y).\label{EqualMarginsEq}\end{equation} Hence if $Margin_\mathbf{Q}(z,y) = 0$, then $Margin_{\mathbf{Q}'}(z,y) = 0$, so $z$ does not defeat $y$ in $\mathbf{Q}'$. Suppose instead that  $Margin_\mathbf{Q}(z,y) > 0$. Since $y\in SC(\mathbf{Q})$, $z$ does not defeat $y$ in $\mathbf{Q}$, so we have 
\begin{equation}Margin_\mathbf{Q}(z,y)< Cycle\#_\mathbf{Q}(z,y).\label{MarginLessCycle}
\end{equation}
Now we claim that \begin{equation}Cycle\#_\mathbf{P}(z,y)=Cycle\#_{\mathbf{P}_{-x}}(z,y).\label{NoCycleChange}\end{equation}  Since $y\in SC(\mathbf{Q})$, $y\in GETCHA(\mathbf{Q})$ by  Proposition \ref{SmithProp}. Then from $Margin_\mathbf{Q}(z,y) > 0$, it follows that $z\in GETCHA(\mathbf{Q})$. Since $x \in X(\mathbf{P})\setminus GETCHA(\mathbf{P})$ and $z\in GETCHA(\mathbf{Q})=GETCHA(\mathbf{P})$, there is no simple cycle in $\mathcal{M}(\mathbf{P})$ of the form $z\to y\to w_1\to \dots \to w_n\to z$ with $x\in \{w_1,\dots,w_n\}$, since there is no path from a candidate outside  $GETCHA(\mathbf{P})$, like $x$, to a candidate inside $GETCHA(\mathbf{P})$, like $z$. This establishes (\ref{NoCycleChange}). Then together (\ref{EqualMarginsEq}), (\ref{MarginLessCycle}), and (\ref{NoCycleChange}) entail $Margin_{\mathbf{Q}'}(z,y)< Cycle\#_{\mathbf{Q}'}(z,y)$. Hence $z$ does not defeat $y$ in $\mathbf{Q}'$. Finally, if $\mathbf{Q}'=\mathbf{P}$, then $x$ does not defeat $y$ in $\mathbf{Q}'$, since $y\in GETCHA(\mathbf{Q}')$ while $x\not\in GETCHA(\mathbf{Q}')$. Thus, no candidate defeats $y$ in $\mathbf{Q}'$, so $y\in SC(\mathbf{Q}')$.\end{proof}

\subsubsection{Independence of Clones}\label{ClonesSection}

In Section \ref{SpoilerSection}, we informally discussed the anti-vote-splitting axiom of \textit{independence of clones} (\citealt{Tideman1987}). Recall that a set $C$ of two or more candidates is a set of clones if no candidate outside of $C$ appears in between two candidates from $C$ on any voter's ballot.

\begin{definition}\label{CloneDef} Given a profile $\mathbf{P}$, a set $C\subseteq X(\mathbf{P})$ is a \textit{set of clones in $\mathbf{P}$} if  $2\leq \vert C\vert  < \vert X(\mathbf{P})\vert $ and for all $c,c'\in C$, $x\in X(\mathbf{P})\setminus C$, and $i\in V(\mathbf{P})$, if $c\mathbf{P}_ix$ then $c'\mathbf{P}_ix$, and if $x\mathbf{P}_ic$ then $x\mathbf{P}_ic'$. 
\end{definition}

The independence of clones criterion states that (i) removing a clone from a profile does not change which non-clones belong to the winning set, and (ii) a clone wins in a profile if and only if after removing that clone from the profile, one of the other clones wins in the resulting profile. 

\begin{definition}\label{IndependenceOfClones} A voting method $F$ is such that \textit{non-clone choice is independent of clones} if for every $\mathbf{P}\in\mathrm{dom}(F)$, set $C$ of clones in $\mathbf{P}$, $c\in C$, and $a\in X(\mathbf{P})\setminus C$, \[\mbox{$a\in F(\mathbf{P})$ if and only if $a\in F(\mathbf{P}_{-c})$.}\] $F$ is such that \textit{clone choice is independent of clones} if for every $\mathbf{P}\in\mathrm{dom}(F)$, set $C$ of clones in $\mathbf{P}$, and $c\in C$,
\[C\cap F(\mathbf{P})\neq\varnothing\mbox{ if and only if }C\setminus\{c\}\cap F(\mathbf{P}_{-c})\neq\varnothing.\]
Finally, $F$ satisfies \textit{independence of clones} if $F$ is such that non-clone choice is independent of clones and clone choice is independent of clones.\end{definition}

We prove the following in Appendix \ref{ClonesAppendix}.

\begin{theorem} Split Cycle satisfies independence of clones.
\end{theorem}

\begin{remark}\label{RPnote}Tideman \citeyearpar{Tideman1987} shows that the version of Ranked Pairs defined in Section \ref{Comparison} and Appendix~\ref{RankedPairsAppendix} satisfies the condition of independence of clones for all profiles $\mathbf{P}$ such that for all $a,b,x,y\in X(\mathbf{P})$, $Margin_\mathbf{P}(a,b)= 0$ only if $a=b$, and $Margin_\mathbf{P}(a,b)=Margin_\mathbf{P}(x,y)$ only if (i) $a=x$ or $a$ and $x$ belong to a set of clones, and (ii) $b=y$ or $b$ and $y$ belong to a set of clones. Zavist and Tideman \citeyearpar{ZavistTideman1989} show that the same version of Ranked Pairs does not satisfy independence of clones for all profiles. They propose a modified version of Ranked Pairs that satisfies independence of clones at the expense of violating~anonymity. However, as suggested by Tideman (p.c.), one can obtain an anonymous and fully clone-independent version of Ranked Pairs for linear profiles by declaring a candidate $x$ a winner in a linear profile $\mathbf{P}$ if there exists a voter $i\in V(\mathbf{P})$ such that the Zavist and Tideman version of Ranked Pairs declares $x$ a winner in $\mathbf{P}$ when $i$ is the designated voter used to generate their tiebreaking ranking of candidates (TBRC).\footnote{This formulation relies on the assumption that voters submit linear orders of the candidates. Zavist and Tideman allow ties in voter ballots and use a randomizing device to generate the linear TBRC from the designated voter's ballot, in case it contains ties. Thus, Zavist and Tideman define a probabilistic voting method.}
\end{remark}

\subsection{Resoluteness Criteria}\label{ResolutenessSection}

In this section, we discuss criteria concerning the ability of a voting method to narrow down the set of winners. Of course, any anonymous and neutral voting method will select \textit{all} candidates as winners in a profile in which all candidates are tied. But such profiles are highly unlikely. To rule out such cases, we can consider uniquely-weighted profiles as in Definition \ref{MarginGraphDef}. Still, some highly unlikely uniquely-weighted profiles will produce large winning sets for Split Cycle, as shown by part \ref{Exclude2b} of the following.

\begin{proposition}\label{Exclude2} $\,$
\begin{enumerate}
\item\label{Exclude2a} For any uniquely-weighted profile $\mathbf{P}$  such that $\vert X(\mathbf{P})\vert \geq 3$, we have $\vert SC(\mathbf{P})\vert \leq \vert X(\mathbf{P})\vert -2$.
\item\label{Exclude2b} For any $n\geq 4$, there is a uniquely-weighted profile $\mathbf{P}$ with $\vert X(\mathbf{P})\vert =n$ and $\vert SC(\mathbf{P})\vert = \vert X(\mathbf{P})\vert -2$.
\end{enumerate}
\end{proposition}
\begin{proof} For part \ref{Exclude2a}, pick $a,b\in X(\mathbf{P})$ such that $Margin_\mathbf{P}(a,b)$ is the largest margin of any edge in $\mathcal{M}(\mathbf{P})$, which implies that $Margin_\mathbf{P}(a,b)>0$. Then clearly $a$ defeats $b$ in $\mathbf{P}$. Now pick $c,d\in X(\mathbf{P})$ with $d\neq b$ such that $Margin_\mathbf{P}(c,d)$ is the largest margin in $\mathcal{M}(\mathbf{P})$ of any edge not going to $b$. Suppose for contradiction that $c$ does not defeat $d$ in $\mathbf{P}$. Then there is a simple cycle $\rho$ containing $c$ and $d$ such that $Margin_\mathbf{P}(c,d)$ is strictly less than the other margins along the cycle, at least one of which is a margin of an edge not going to $b$. But this contradicts the fact that $Margin_\mathbf{P}(c,d)$ is the largest margin in $\mathcal{M}(\mathbf{P})$ of any edge not going to $b$. Thus, $c$ defeats $d$. Hence $SC(\mathbf{P})$ contains neither $b$ nor $d$, so $\vert SC(\mathbf{P})\vert \leq \vert X(\mathbf{P})\vert -2$.

For part \ref{Exclude2b}, consider the sequence of margin graphs of the following form (for a definition and code to generate the sequence, see \href{https://github.com/epacuit/splitcycle}{https://github.com/epacuit/splitcycle}): 
\begin{center}
\begin{minipage}{1.5in}

\begin{tikzpicture}

\node[circle,draw,minimum width=0.1in] at (0,0) (a) {$$}; 
\node[circle,draw,minimum width=0.1in] at (2.5,0) (b) {$$}; 
\node[circle,draw,fill=medgreen!50,minimum width=0.1in] at (1.25,1.25) (c) {$$}; 
\node[circle,draw,fill=medgreen!50,minimum width=0.1in] at (1.25,2.5) (d) {$$}; 

\path[->,draw,thick, bend left=15] (c) to node[fill=white] {$10$} (b);
\path[->,draw,thick] (b) to node[fill=white] {$12$} (a);
\path[->,draw,thick, bend left=15] (a) to node[fill=white] {$4$} (c);
\path[->,draw,thick] (d) to node[fill=white] {$2$} (c);
\path[->,draw,thick, bend left] (a) to node[fill=white] {$6$} (d);
\path[->,draw,thick, bend left] (d) to node[fill=white] {$8$} (b);

\end{tikzpicture}
\end{minipage}\begin{minipage}{1.5in}

\begin{tikzpicture}

\node[circle,draw, minimum width=0.1in] at (0,0) (a) {$$}; 
\node[circle,draw,minimum width=0.1in] at (2.5,0) (b) {$$}; 
\node[circle,draw,fill=medgreen!50,minimum width=0.1in] at (1.25,1.25) (c) {$$}; 
\node[circle,draw,fill=medgreen!50,minimum width=0.1in] at (1.25,2.5) (d) {$$}; 
\node[circle,draw,fill=medgreen!50,minimum width=0.1in] at (1.25,3.75) (e) {$$}; 

\path[->,draw,thick,bend left=15] (c) to node[fill=white] {$18$} (b);
\path[->,draw,thick] (b) to node[fill=white] {$20$} (a);
\path[->,draw,thick,bend left=15] (a) to node[fill=white] {$8$} (c);
\path[->,draw,thick] (d) to node[fill=white] {$2$} (c);
\path[->,draw,thick] (e) to node[fill=white] {$6$} (d);
\path[->,draw,thick, bend left=20] (a) to node[fill=white] {$10$} (d);
\path[->,draw,thick, bend left=20] (d) to node[fill=white] {$16$} (b);
\path[->,draw,thick, bend left] (a) to node[fill=white] {$12$} (e);
\path[->,draw,thick, bend left] (e) to node[fill=white] {$14$} (b);
\path[->,draw,thick, bend left=42] (e) to node[fill=white] {$4$} (c);

\end{tikzpicture}
\end{minipage}\begin{minipage}{2in}

\begin{tikzpicture}

\node[circle,draw, minimum width=0.1in] at (0,0) (a) {$$}; 
\node[circle,draw,minimum width=0.1in] at (2.5,0) (b) {$$}; 
\node[circle,fill=medgreen!50,draw,minimum width=0.1in] at (1.25,1.25) (c) {$$}; 
\node[circle,fill=medgreen!50,draw,minimum width=0.1in] at (1.25,2.5) (d) {$$}; 
\node[circle,fill=medgreen!50,draw,minimum width=0.1in] at (1.25,3.75) (e) {$$}; 
\node[circle,fill=medgreen!50,draw,minimum width=0.1in] at (1.25,5) (f) {$$}; 

\path[->,draw,thick,bend left=15] (c) to node[fill=white] {$28$} (b);
\path[->,draw,thick] (b) to node[fill=white] {$30$} (a);
\path[->,draw,thick,bend left=15] (a) to node[fill=white] {$14$} (c);
\path[->,draw,thick] (d) to node[fill=white] {$2$} (c);
\path[->,draw,thick] (e) to node[fill=white] {$8$} (d);
\path[->,draw,thick, bend left=20] (a) to node[fill=white] {$16$} (d);
\path[->,draw,thick, bend left=20] (d) to node[fill=white] {$26$} (b);
\path[->,draw,thick, bend left] (a) to node[fill=white] {$18$} (e);
\path[->,draw,thick, bend left] (e) to node[fill=white] {$24$} (b);
\path[->,draw,thick, bend left=43] (e) to node[fill=white] {$4$} (c);
\path[->,draw,thick] (f) to node[fill=white] {$12$} (e);
\path[->,draw,thick, bend left=44] (a) to node[fill=white] {$20$} (f);
\path[->,draw,thick, bend left=44] (f) to node[fill=white] {$22$} (b);
\path[->,draw,thick, bend left=57.5] (f) to node[fill=white] {$6$} (c);
\path[->,draw,thick, bend left=43] (f) to node[fill=white] {$10$} (d);

\end{tikzpicture}
\end{minipage}\begin{minipage}{0in}$\cdots$
\end{minipage}
\end{center}
By Theorem \ref{DebordThm}, there are (linear) profiles realizing each margin graph in the sequence. In each margin graph, the arrow from the bottom right candidate to the bottom left candidate is a defeat; all arrows pointing to the bottom right candidate are defeats; but no other arrows are defeats, since each has the weakest margin in a cycle. Thus, all candidates are undefeated except the bottom two candidates.\end{proof}

Fortunately, the margin graphs used in the proof of Proposition \ref{Exclude2}.\ref{Exclude2b} are realized by an extremely small proportion of profiles, as we will see with some data in Section \ref{AsymResolve}  (Table \ref{NarrowingTable}). As worst-case winning set sizes are not the best measure of the general ability of a voting method to narrow down the set of winners, we will consider alternative resoluteness criteria in the next two subsections.

\subsubsection{Rejectability}\label{Rejectability}
 
The next criterion we propose concerns winnowing a set of winners down to a single winner. The \textit{rejectability} criterion states that if in a profile $\mathbf{P}$, candidate $x$ is \textit{among} the winners, then we should be able to make $x$ the \textit{unique} winner in a profile $\mathbf{P}^+$ obtained from $\mathbf{P}$ by adding voters who sufficiently strengthen the rejection of other candidates, i.e., sufficiently increase what were already non-negative margins against other candidates, so as to defeat the others (recall our idea in Section \ref{SplitCycleSection} that incoherence does not raise the threshold for defeat \textit{infinitely}). Thus, if candidate $a$ is majority preferred to $b$ in $\mathbf{P}$, then this still holds in $\mathbf{P}^+$ with a margin that is at least as large and possibly larger than in $\mathbf{P}$. No majority preferences are reversed from $\mathbf{P}$ to $\mathbf{P}^+$, for if we were to allow that, then we could simply make $x$ the Condorcet winner in $\mathbf{P}^+$, trivializing the criterion. 
 
 \begin{definition} A voting method $F$ satisfies \textit{rejectability} if for any $\mathbf{P}\in\mathrm{dom}(F)$ such that $\vert  F(\mathbf{P}) \vert  >1$ and $x\in F(\mathbf{P})$, there is a profile $\mathbf{P}^+\in\mathrm{dom}(F)$ with $X(\mathbf{P})=X(\mathbf{P}^+)$ and $V(\mathbf{P})\subseteq V(\mathbf{P}^+)$ such that for all $a,b\in X(\mathbf{P})$, if $Margin_{\mathbf{P}}(a,b)> 0$, then $Margin_{\mathbf{P}^+}(a,b)\geq Margin_{\mathbf{P}}(a,b)$, and $F(\mathbf{P}^+)=\{x\}$.
 \end{definition}
 
\noindent Thus, if a method \textit{fails} rejectability, then for some $\mathbf{P}$ and $x\in F(\mathbf{P})$, no matter how extremely we turn majority preferences against other candidates into enormous landslides, we cannot make $x$ the unique winner. 

Rejectability is a strong criterion insofar as it rules out all irresolute C1 voting methods (as does the resolvability criterion of Section \ref{Resolvability}). Recall that a voting method $F$ is C1 (\citealt{Fishburn1977}) if for any profiles $\mathbf{P}$ and $\mathbf{P}'$, if their majority graphs (Definition \ref{MarginGraphDef}) are the same---$M(\mathbf{P})=M(\mathbf{P}')$---then their winners are also the same---$F(\mathbf{P})=F(\mathbf{P}')$. Copeland, GETCHA/GOCHA, and Uncovered Set are all C1.
 
 \begin{proposition}\label{GETCHA_GOCHA_Rejectability} No anonymous and neutral C1 voting method (whose domain contains all linear profiles with three candidates) satisfies rejectability. 
 \end{proposition}
 
\begin{proof} Given a profile $\mathbf{P}$ with $X(\mathbf{P})=\{a,b,c\}$ and whose margin graph contains the cycle $a\to b\to c\to a$, no matter the margins, an anonymous and neutral C1 method $F$ must have $F(\mathbf{P})=\{a,b,c\}$; hence we can never increase any margins in such a way that one candidate becomes the unique winner.
\end{proof}
An example of a non-C1 method violating rejectability is the Weighted Covering method (\citealt{Dutta1999}, \citealt{Fernandez2018}), according to which $x\in WC(\mathbf{P})$ if there is no $y\in X(\mathbf{P})$ such that $Margin_\mathbf{P}(y,x)>0$ and for all $z\in X(\mathbf{P})$, $Margin_\mathbf{P}(y,z)\geq Margin_\mathbf{P}(x,z)$. Weighted Covering also selects all candidates in the profile $\mathbf{P}$ in the proof of Proposition \ref{GETCHA_GOCHA_Rejectability}.

In our proof that Split Cycle satisfies rejectability, we use the following lemma.

\begin{lemma}\label{OverwhelmingMaj} Split Cycle satisfies the \textit{overwhelming majority}\footnote{This is the terminology from \citealt{Myerson1995}. Cf.~Smith's \citeyearpar{Smith1973} ``Archimedean property'' and Young's \citeyearpar{Young1975} ``continuity.''} criterion: for all profiles $\mathbf{P}$ and $\mathbf{P}'$ with $X(\mathbf{P})=X(\mathbf{P}')$ and $V(\mathbf{P})\cap V(\mathbf{P}')=\varnothing$, there is an $n\in\mathbb{N}$ such that for all $m\in\mathbb{N}$ with $m\geq n$, we have $SC(\mathbf{P}+m\mathbf{P}')\subseteq SC(\mathbf{P}')$, where $m\mathbf{P}'=\gamma_1(\mathbf{P}')+\dots +\gamma_m(\mathbf{P}')$ with $\gamma_1(\mathbf{P}'),\dots,\gamma_m(\mathbf{P}')$ being copies of $\mathbf{P}'$ with pairwise disjoint sets of voters (recall Definition \ref{DisjointUnion}).
\end{lemma}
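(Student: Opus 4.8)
The engine is the fact that margins in $\mathbf{P}+m\mathbf{P}'$ are affine in $m$: since $m\mathbf{P}'$ is a disjoint union of $m$ copies of $\mathbf{P}'$ and voter sets add, we have $Margin_{\mathbf{P}+m\mathbf{P}'}(a,b)=Margin_{\mathbf{P}}(a,b)+m\cdot Margin_{\mathbf{P}'}(a,b)$ for all $a,b\in X(\mathbf{P})=X(\mathbf{P}')$. First I would record a stabilization step: since $X(\mathbf{P})$ is finite, there is an $m_0$ such that for every $m\geq m_0$ the majority graph $M(\mathbf{P}+m\mathbf{P}')$ equals a fixed graph $G$. Here a pair $(a,b)$ with $Margin_{\mathbf{P}'}(a,b)>0$ is permanently an edge of $G$ whose weight tends to $\infty$ (a ``growing'' edge); a pair with $Margin_{\mathbf{P}'}(a,b)=0$ keeps the constant weight $Margin_{\mathbf{P}}(a,b)$ (a ``static'' edge), so the edge direction is fixed; and a pair with $Margin_{\mathbf{P}'}(a,b)<0$ has, for $m\geq m_0$, no edge in that direction. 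Consequently, for $m\geq m_0$ the set $R$ of simple cycles of $\mathcal{M}(\mathbf{P}+m\mathbf{P}')$ is a fixed finite set, and every edge lying on a cycle in $R$ is either growing or static.

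Next I would fix any $a\in X(\mathbf{P}')\setminus SC(\mathbf{P}')$. By Lemma \ref{CycleNumLem} there is a $c$ with $Margin_{\mathbf{P}'}(c,a)>Cycle\#_{\mathbf{P}'}(c,a)\geq 0$, so in particular $(c,a)$ is a growing edge. I claim $c$ defeats $a$ in $\mathbf{P}+m\mathbf{P}'$ for all large $m$; by Lemmas \ref{OnlySomeCycles} and \ref{CycleNumLem} it suffices to show $Margin_{\mathbf{P}+m\mathbf{P}'}(c,a)>Split\#_{\mathbf{P}+m\mathbf{P}'}(\rho)$ for every $\rho\in R$ of the form $c\to a\to x_1\to\dots\to x_n\to c$. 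There are two cases. If $\rho$ contains a static edge, then $Split\#_{\mathbf{P}+m\mathbf{P}'}(\rho)$ is at most that edge's weight, a constant independent of $m$ (bounded by $|V(\mathbf{P})|$), whereas $Margin_{\mathbf{P}+m\mathbf{P}'}(c,a)\to\infty$; so the inequality holds once $m$ is large. If instead every edge of $\rho$ is growing, then $\rho$ is itself a simple cycle of that same form in $\mathcal{M}(\mathbf{P}')$, hence $Split\#_{\mathbf{P}'}(\rho)\leq Cycle\#_{\mathbf{P}'}(c,a)<Margin_{\mathbf{P}'}(c,a)$. Choosing an edge $e_0$ of $\rho$ that realizes $Split\#_{\mathbf{P}'}(\rho)$ (necessarily $e_0\neq(c,a)$, since its $\mathbf{P}'$-margin is strictly smaller), the desired inequality $Split\#_{\mathbf{P}+m\mathbf{P}'}(\rho)\leq Margin_{\mathbf{P}+m\mathbf{P}'}(e_0)<Margin_{\mathbf{P}+m\mathbf{P}'}(c,a)$ unwinds, via the affine formula, to $Margin_{\mathbf{P}}(e_0)-Margin_{\mathbf{P}}(c,a)<m\bigl(Margin_{\mathbf{P}'}(c,a)-Margin_{\mathbf{P}'}(e_0)\bigr)$, whose right-hand side is an unbounded positive multiple of $m$ while the left-hand side is constant; so it holds for $m$ large.

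Since $R$ is finite, there is a single threshold $n_a\geq m_0$ beyond which all the finitely many cycle-inequalities for this $a$ hold, whence $a\notin SC(\mathbf{P}+m\mathbf{P}')$ for all $m\geq n_a$. Taking $n=\max\{n_a\mid a\in X(\mathbf{P}')\setminus SC(\mathbf{P}')\}$, a maximum over a finite set, we get $SC(\mathbf{P}+m\mathbf{P}')\subseteq SC(\mathbf{P}')$ for every $m\geq n$, as required. The one place needing care---and the step I would flag as the main obstacle---is precisely this uniformization: moving from ``for each candidate cycle, the inequality eventually holds'' to ``one threshold works for all cycles and all non-winners $a$.'' It goes through only because the stabilization step freezes the combinatorial data (the majority graph, hence the candidate cycle set $R$) for $m\geq m_0$, leaving just finitely many affine-in-$m$ comparisons to dominate.
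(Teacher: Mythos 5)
Your proof is correct, and it rests on the same underlying mechanism as the paper's: margins of $\mathbf{P}'$-edges in $\mathbf{P}+m\mathbf{P}'$ grow linearly in $m$ while $\mathbf{P}$ can shift any margin by at most $|V(\mathbf{P})|$, so every defeat present in $\mathbf{P}'$ eventually persists, and hence non-winners of $\mathbf{P}'$ stay defeated. The execution differs, though. The paper gets the explicit uniform threshold $n=2|V(\mathbf{P})|$ in a few lines by combining the scaling identities $Margin_{m\mathbf{P}'}(a,b)=m\cdot Margin_{\mathbf{P}'}(a,b)$ and $Cycle\#_{m\mathbf{P}'}(a,b)=m\cdot Cycle\#_{\mathbf{P}'}(a,b)$ with the perturbation bounds $Margin_{\mathbf{P}+m\mathbf{P}'}(a,b)\geq Margin_{m\mathbf{P}'}(a,b)-|V(\mathbf{P})|$ and $Cycle\#_{\mathbf{P}+m\mathbf{P}'}(a,b)\leq Cycle\#_{m\mathbf{P}'}(a,b)+|V(\mathbf{P})|$. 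Your stabilization-plus-case-analysis does by hand exactly what that last inequality packages: your Case 1 (a cycle of $\mathcal{M}(\mathbf{P}+m\mathbf{P}')$ containing a static edge has splitting number at most $|V(\mathbf{P})|$) is the reason genuinely new cycles cost at most $|V(\mathbf{P})|$, and your Case 2 (an all-growing cycle is already a cycle of $\mathcal{M}(\mathbf{P}')$, so its splitting number is controlled by $Cycle\#_{\mathbf{P}'}(c,a)$) is the bound for old cycles. What your route buys is that you never need to state or justify that perturbation inequality for $Cycle\#$, nor invoke parity of margins; what it costs is explicitness---your threshold comes only from finiteness of the cycle set and the candidate set, and the uniformization you flag as ``the main obstacle'' is precisely what the paper's explicit $n=2|V(\mathbf{P})|$ renders automatic. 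Two trivial loose ends to tidy: note explicitly that $Margin_{\mathbf{P}+m\mathbf{P}'}(c,a)>0$ for large $m$ (immediate, since it tends to infinity), and handle the degenerate case $SC(\mathbf{P}')=X(\mathbf{P}')$, where your maximum defining $n$ ranges over the empty set (take $n=1$, say).
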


\begin{proof} Let $n = 2\vert V(\mathbf{P})\vert +1$. To show that $SC(\mathbf{P}+m\mathbf{P}')\subseteq SC(\mathbf{P}')$, it suffices to show that for any $a,b\in X(\mathbf{P}')$, if $a$ defeats $b$ in $\mathbf{P}'$, then $a$ defeats $b$ in $\mathbf{P}+m\mathbf{P}'$. Assume $a$ defeats $b$ in $\mathbf{P}'$, so $Margin_{\mathbf{P}'}(a,b)> Cycle\#_{\mathbf{P}'}(a,b)$, so $Margin_{\mathbf{P}'}(a,b)- Cycle\#_{\mathbf{P}'}(a,b)\geq 1$.  Then for all $m\geq n$, since \[\mbox{$Margin_{m\mathbf{P}'}(a,b)=m\times Margin_{\mathbf{P}'}(a,b)$ and $Cycle\#_{m\mathbf{P}'}(a,b)= m\times Cycle\#_{\mathbf{P}'}(a,b)$,}\] we have $Margin_{m\mathbf{P}'}(a,b)- Cycle\#_{m\mathbf{P}'}(a,b)\geq m\geq n= 2\vert V(\mathbf{P})\vert +1$. Also note that \[\mbox{$Margin_{\mathbf{P}+m\mathbf{P}'}(a,b) \geq Margin_{m\mathbf{P}'}(a,b) -\vert V(\mathbf{P})\vert $ and 
$Cycle\#_{\mathbf{P}+m\mathbf{P}'}(a,b) \leq  Cycle\#_{m\mathbf{P}'}(a,b)+\vert V(\mathbf{P})\vert $.}\]
 It follows that $Margin_{\mathbf{P}+m\mathbf{P}'}(a,b)> Cycle\#_{\mathbf{P}+m\mathbf{P}'}(a,b)$, so $a$ defeats $b$ in $\mathbf{P}+m\mathbf{P}'$.\end{proof}
 
 \begin{proposition}\label{RejectProp} Split Cycle satisfies rejectability.
 \end{proposition}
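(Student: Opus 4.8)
The plan is to derive this from the overwhelming majority criterion (Lemma~\ref{OverwhelmingMaj}). Fix $\mathbf{P}$ with $|SC(\mathbf{P})|>1$ and $x\in SC(\mathbf{P})$. It suffices to build a profile $\mathbf{P}'$ with $X(\mathbf{P}')=X(\mathbf{P})$ and $V(\mathbf{P})\cap V(\mathbf{P}')=\varnothing$ such that $SC(\mathbf{P}')=\{x\}$ and $Margin_{\mathbf{P}'}(a,b)\geq 0$ for every pair with $Margin_\mathbf{P}(a,b)>0$. For then, taking $\mathbf{P}^+=\mathbf{P}+m\mathbf{P}'$ with $m$ at least the bound from Lemma~\ref{OverwhelmingMaj}, we get $SC(\mathbf{P}^+)\subseteq SC(\mathbf{P}')=\{x\}$, hence $SC(\mathbf{P}^+)=\{x\}$ by nonemptiness (Lemma~\ref{NoCycles}); and for every $a,b$ with $Margin_\mathbf{P}(a,b)>0$ we have $Margin_{\mathbf{P}^+}(a,b)=Margin_\mathbf{P}(a,b)+m\cdot Margin_{\mathbf{P}'}(a,b)\geq Margin_\mathbf{P}(a,b)$, as required (pairs tied in $\mathbf{P}$ impose no constraint, so it is harmless if $\mathbf{P}'$ breaks such ties).

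To build $\mathbf{P}'$ I would construct greedily a linear order $x=u_0,u_1,\dots,u_k$ of $X(\mathbf{P})$ such that for each $i\geq 1$ there exists $j<i$ with $Margin_\mathbf{P}(u_i,u_j)\leq 0$; fix such a predecessor $u_{\pi(i)}$. Let $\mathcal{M}'$ be the margin graph on $X(\mathbf{P})$ whose edges are exactly $u_{\pi(i)}\to u_i$ for $i=1,\dots,k$, each of weight $2$ (a legitimate margin graph: asymmetric, all weights even). By Debord's Theorem some profile $\mathbf{P}'$, with voters chosen disjoint from $V(\mathbf{P})$, has $\mathcal{M}(\mathbf{P}')=\mathcal{M}'$. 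Every edge of $\mathcal{M}'$ runs from an earlier to a later vertex of the order, so $\mathcal{M}'$ is acyclic with $x$ its unique source; in an acyclic margin graph $a$ defeats $b$ precisely when $a\to b$ (there being no simple cycles to weigh against the margin), so each $u_i$ with $i\geq 1$ is defeated while $x$ is not, giving $SC(\mathbf{P}')=\{x\}$. Moreover $\mathcal{M}'$ never carries a positive margin from $b$ to $a$ when $Margin_\mathbf{P}(a,b)>0$: such a margin would be an edge $u_{\pi(i)}=b\to a=u_i$ with $Margin_\mathbf{P}(u_i,u_{\pi(i)})=Margin_\mathbf{P}(a,b)>0$, contradicting $Margin_\mathbf{P}(u_i,u_{\pi(i)})\leq 0$; hence $Margin_{\mathbf{P}'}(a,b)\geq 0$ in all such cases.

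The crux is that the greedy step never fails, and this is where $x\in SC(\mathbf{P})$ enters. Suppose $u_0,\dots,u_{i-1}$ are placed, $R=X(\mathbf{P})\setminus\{u_0,\dots,u_{i-1}\}\neq\varnothing$, and, for contradiction, no $v\in R$ has a usable predecessor among the placed vertices; that means $Margin_\mathbf{P}(v,w)>0$ for every $v\in R$ and every placed $w$, so in particular $Margin_\mathbf{P}(v,x)>0$ for all $v\in R$. Fix $v\in R$. Since $x$ is undefeated and $Margin_\mathbf{P}(v,x)>0$, Lemma~\ref{CycleNumLem} gives $Cycle\#_\mathbf{P}(v,x)\geq Margin_\mathbf{P}(v,x)>0$, so $\mathcal{M}(\mathbf{P})$ contains a simple cycle $v\to x\to z_1\to\dots\to z_l\to v$ with $l\geq 1$. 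From $x\to z_1$ we get $Margin_\mathbf{P}(x,z_1)>0$, hence $Margin_\mathbf{P}(z_1,x)<0$, so $z_1\notin R$, i.e.\ $z_1$ is placed. Walking along $z_1\to z_2\to\dots\to z_l\to v$ (writing $z_{l+1}=v$), there is a last placed $z_s$ immediately followed by some $z_{s+1}\in R$. Then $Margin_\mathbf{P}(z_s,z_{s+1})>0$, since $z_s\to z_{s+1}$ is an edge of $\mathcal{M}(\mathbf{P})$, while $Margin_\mathbf{P}(z_{s+1},z_s)>0$, since $z_{s+1}\in R$ and $z_s$ is placed; these cannot both hold. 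So the greedy step succeeds and the order---hence $\mathbf{P}'$---exists, completing the argument.

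I expect this last claim to be the main obstacle: one must extract from ``$x\in SC(\mathbf{P})$'' the fact that every candidate majority preferred to $x$ lies on a majority cycle through $x$ whose remaining members trace a path back to it, and then observe that a minimal obstruction to extending the greedy order would force two candidates to be strictly majority preferred to each other. Assembling the pieces with Lemma~\ref{OverwhelmingMaj} as above then yields rejectability.
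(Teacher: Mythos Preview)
Your argument is correct and takes a genuinely different route from the paper. Both proofs share the same opening reduction via Lemma~\ref{OverwhelmingMaj}: it suffices to produce a profile $\mathbf{P}'$ with $SC(\mathbf{P}')=\{x\}$ that does not reverse any strict majority preference of $\mathbf{P}$, and then set $\mathbf{P}^+=\mathbf{P}+m\mathbf{P}'$. Where you diverge is in constructing $\mathbf{P}'$. The paper retains the entire margin graph of $\mathbf{P}$ (adding edges from $x$ only to break ties with $x$) and then assigns each edge weight $n{+}1$ or $n{+}3$ according to whether it lies on a shortest path from $x$; the resulting Case~2 analysis, showing that the re-weighted graph makes every $y\neq x$ defeated, is the technical heart of the paper's proof. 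You instead discard almost everything and keep only a spanning out-tree rooted at $x$ in the weak-majority graph $\{(a,b):Margin_\mathbf{P}(a,b)\geq 0\}$, with all edges weighted $2$. Your greedy construction of this tree is essentially a direct proof that $x$ reaches every candidate along weak-majority edges (equivalently, that $x\in GETCHA(\mathbf{P})$ in the sense of Lemma~\ref{GETCHAlem}); since the tree is acyclic, every non-root is trivially defeated, and no edge of the tree can reverse a strict margin of $\mathbf{P}$.

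What each approach buys: yours is shorter and avoids the shortest-path bookkeeping entirely. The paper's construction is heavier but delivers more---because its $\mathcal{M}'$ has the same underlying directed graph as $\mathcal{M}(\mathbf{P})$ (plus tie-breaking edges from $x$), it immediately yields Proposition~5.9, characterizing which vertices of an arbitrary asymmetric directed graph can be made unique Split Cycle winners by a choice of weights alone. Your tree-based $\mathcal{M}'$ does not witness that. One minor note: your remark that ``pairs tied in $\mathbf{P}$ impose no constraint'' matches how the paper treats ties in its own construction, so you are aligned with the authors' intended reading of the definition.
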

 
 \begin{proof} We claim that to establish rejectability, it suffices to show that for any profile $\mathbf{P}$ such that $\vert  SC(\mathbf{P}) \vert  >1$ and $x\in SC(\mathbf{P})$, there is a profile $\mathbf{P}'$ with $X(\mathbf{P})=X(\mathbf{P}')$ such that for all $a,b\in X(\mathbf{P})$, if $Margin_{\mathbf{P}}(a,b)> 0$, then $Margin_{\mathbf{P}'}(a,b)\geq Margin_{\mathbf{P}}(a,b)$, and $SC(\mathbf{P}')=\{x\}$. For then by Lemma \ref{OverwhelmingMaj}, there is an $m\in\mathbb{N}$ such that $SC(\mathbf{P}+m\mathbf{P}')=\{x\}$, and for all $a,b\in X(\mathbf{P})$, if $Margin_{\mathbf{P}}(a,b)> 0$, then $Margin_{\mathbf{P}+m\mathbf{P}'}(a,b)\geq Margin_{\mathbf{P}}(a,b)$. As $V(\mathbf{P})\subseteq V(\mathbf{P}+m\mathbf{P}')$, we may take $\mathbf{P}^+=\mathbf{P}+m\mathbf{P}'$ for rejectability.
 
 Suppose $\vert  SC(\mathbf{P}) \vert  >1$ and $x\in SC(\mathbf{P})$. We show how to modify the margin graph $\mathcal{M}(\mathbf{P})$ to a margin graph $\mathcal{M}'$ on $X(\mathbf{P})$ such that (i) all edges between nodes are preserved from $\mathcal{M}(\mathbf{P})$ to $\mathcal{M}'$, (ii) no weights on edges decrease from $\mathcal{M}(\mathbf{P})$ to $\mathcal{M}'$,  and (iii) $SC(\mathcal{M}')=\{x\}$ (recall Remark \ref{SCofMarginGraph}). Then Debord's Theorem yields a profile $\mathbf{P}'$ whose margin graph is $\mathcal{M}'$. By (i)-(ii), we have that for all $a,b\in X(\mathbf{P})$, if $Margin_{\mathbf{P}}(a,b)> 0$, then $Margin_{\mathbf{P}'}(a,b)\geq Margin_{\mathbf{P}}(a,b)$. By (iii), $SC(\mathbf{P}')=\{x\}$.
 
 Let the set of edges in $\mathcal{M}'$ be the set of all edges in $\mathcal{M}(\mathbf{P})$ plus an edge from $x$ to any $y$ such that $Margin_\mathbf{P}(x,y)=0$. Let $n$ be the largest margin in $\mathcal{M}(\mathbf{P})$. Each edge $(a,b)$ in $\mathcal{M}'$ has weight either $n+1$ or $n+3$ according to the following rules (we use $Margin_{\mathcal{M}'}$ and $Cycle\#_{\mathcal{M}'}$ with their obvious meanings):
 \begin{enumerate}
 \item if the edge $(a,b)$ occurs on a shortest simple path\footnote{A \textit{simple path} in a graph is a sequence $\langle x_1,\dots,x_n\rangle$ of \textit{distinct} nodes with $x_i\to x_{i+1}$ for each $i\in \{1,\dots,n-1\}$. The \textit{length} of a path is the number of nodes in the path minus one.} from $x$ to $b$ in $\mathcal{M}'$, set $Margin_{\mathcal{M}'}(a,b)=n+3$;
 \item otherwise, set $Margin_{\mathcal{M}'}(a,b)=n+1$.
 \end{enumerate}
We now claim that every $y\in X(\mathbf{P})\setminus\{x\}$ is defeated in $\mathcal{M}'$. 
 
 Case 1: $Margin_\mathbf{P}(x,y)\geq 0$. Then $Margin_{\mathcal{M}'}(x,y)=n+3$ by rule 1. Moreover, for any simple cycle $\rho$ of the form $x\to y \to z_1\to\dots\to z_k\to x$ in $\mathcal{M}'$, we have $Margin_{\mathbf{P}}(z_k,x)>0$ by the construction of $\mathcal{M}'$ from $\mathcal{M}(\mathbf{P})$ and hence $Margin_{\mathcal{M}'}(z_k,x)=n+1$ by rule 2, so $Split\#(\rho)=n+1$. Hence $Cycle\#_{\mathcal{M}'}(x,y)=n+1$. Thus, $Margin_{\mathcal{M}'}(x,y)>Cycle\#_{\mathcal{M}'}(x,y)$, so $x$ defeats $y$ in $\mathcal{M}'$.
 
 Case 2: $Margin_\mathbf{P}(y,x)> 0$. Then since $x\in SC(\mathbf{P})$, it follows by Lemma \ref{OnlySomeCycles} that there is a simple cycle of the form $y\to x\to z_1\to\dots\to z_k\to y$ in $\mathbf{P}$ where $x\to z_1\to\dots\to z_k\to y$ is a shortest simple path from $x$ to $y$. Hence $Margin_{\mathcal{M}'}(z_k,y)=n+3$ by rule 1. We claim that $z_k$ defeats $y$ in $\mathcal{M}'$. If there is no simple cycle of the form $z_k\to w_1\to \dots \to w_\ell$ with $w_1=y$ and $w_\ell = z_k$ in $\mathcal{M}'$, then $z_k$ defeats $y$ in $\mathcal{M}'$. If there is such a simple cycle $\rho$, then we claim that one of the edges  $w_i\to w_{i+1}$ in $\rho$ has weight $n+1$. If there is no simple path from $x$ to any of $w_2,\dots,w_\ell$, this follows from rule 2 above. So suppose there is a simple path from $x$ to one of $w_2,\dots,w_\ell$. Then there is a $w_i$ such that (i) the shortest path $p$ from $x$ to $w_i$ is no longer than the shortest path from $x$ to any $w_j$. This setup is shown in Figure \ref{RejectDiagram}. Now we claim that the edge $w_{i-1}\to w_i$ in $\rho$ has weight $n+1$; for it to have weight $n+3$, the edge $w_{i-1}\to w_i$ must occur on a shortest path from $x$ to $w_i$, which is impossible. For suppose $p'$ is a path from $x$ to $w_i$ including the edge $w_{i-1}\to w_i$. By (i), the initial segment of $p'$ from $x$ to $w_{i-1}$ has length at least that of $p$, by our choice of $p$; so the length of $p'$ is at least the length of $p$ plus one; hence $p'$ is not a shortest path from $x$ to $w_i$. Thus, we have proved that one of the edges  $w_i\to w_{i+1}$ has weight $n+1$. Thus, $Split\#(\rho)=n+1$. It follows that $Cycle\#_{\mathcal{M}'}(z_k,y)=n+1$, which with $Margin_{\mathcal{M}'}(z_k,y)=n+3$  implies that $z_k$ defeats $y$ in $\mathcal{M}'$.\end{proof}
 
 \begin{corollary}\label{BPreject} Beat Path and Ranked Pairs satisfy rejectability.
 \end{corollary}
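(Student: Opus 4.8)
The plan is to derive rejectability for Beat Path and Ranked Pairs directly from Proposition \ref{RejectProp} together with the containments $BP(\mathbf{P})\subseteq SC(\mathbf{P})$ and $RP(\mathbf{P})\subseteq SC(\mathbf{P})$ established in Lemmas \ref{SubsetLem} and \ref{SubsetLem2}. The crucial observation is that these containments, combined with the fact that a voting method always returns a nonempty winning set, imply that whenever $SC(\mathbf{Q})$ is a singleton $\{x\}$, we must also have $BP(\mathbf{Q})=RP(\mathbf{Q})=\{x\}$. So the very same profile $\mathbf{P}^+$ that Proposition \ref{RejectProp} produces to witness rejectability for Split Cycle will witness it for Beat Path and Ranked Pairs.

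Concretely, I would fix a profile $\mathbf{P}$ with $|BP(\mathbf{P})|>1$ and some $x\in BP(\mathbf{P})$. Since $BP(\mathbf{P})\subseteq SC(\mathbf{P})$, we get $x\in SC(\mathbf{P})$ and $|SC(\mathbf{P})|\geq |BP(\mathbf{P})|>1$, so the hypotheses of Proposition \ref{RejectProp} are satisfied. That proposition yields a profile $\mathbf{P}^+$ with $X(\mathbf{P})=X(\mathbf{P}^+)$, $V(\mathbf{P})\subseteq V(\mathbf{P}^+)$, the property that $Margin_{\mathbf{P}^+}(a,b)\geq Margin_{\mathbf{P}}(a,b)$ whenever $Margin_{\mathbf{P}}(a,b)\geq 0$, and $SC(\mathbf{P}^+)=\{x\}$. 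By Lemma \ref{SubsetLem}, $\varnothing\neq BP(\mathbf{P}^+)\subseteq SC(\mathbf{P}^+)=\{x\}$, hence $BP(\mathbf{P}^+)=\{x\}$; since all the side conditions on $\mathbf{P}^+$ in the definition of rejectability mention only $\mathbf{P}$ and $\mathbf{P}^+$ (not the voting method) and are already guaranteed, $\mathbf{P}^+$ witnesses rejectability for Beat Path. The argument for Ranked Pairs is word-for-word the same, using Lemma \ref{SubsetLem2} in place of Lemma \ref{SubsetLem}, and noting that $|RP(\mathbf{P})|>1$ likewise forces $|SC(\mathbf{P})|>1$.

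There is essentially no obstacle here: all the substantive work has been done in Proposition \ref{RejectProp}, and the proof amounts to checking that irresoluteness is inherited upward along the containments (so that the hypothesis of Proposition \ref{RejectProp} is triggered) and that a unique Split Cycle winner forces a unique Beat Path, respectively Ranked Pairs, winner. The only thing worth stating carefully is the last point, since it is the single place where the argument uses more than the bare inclusion of winning sets.
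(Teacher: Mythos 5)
Your argument is correct and is essentially the same as the paper's own proof: both derive the result from Proposition \ref{RejectProp} via the containments in Lemmas \ref{SubsetLem} and \ref{SubsetLem2}, using nonemptiness of winning sets to conclude that $SC(\mathbf{P}^+)=\{x\}$ forces $BP(\mathbf{P}^+)=RP(\mathbf{P}^+)=\{x\}$. Nothing further is needed.
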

 \begin{proof} Let $F\in \{BP,RP\}$ and $\mathbf{P}$ be a profile such that $\vert F(\mathbf{P})\vert >1$ and $x\in F(\mathbf{P})$. Then by Lemmas \ref{SubsetLem} and \ref{SubsetLem2}, $\vert SC(\mathbf{P})\vert >1$ and $x\in SC(\mathbf{P})$. Hence by Proposition \ref{RejectProp}, there is a $\mathbf{P}'$ as in the definition of rejectability such that $SC(\mathbf{P}')=\{x\}$, which implies $F(\mathbf{P}')=\{x\}$ given Lemmas \ref{SubsetLem} and \ref{SubsetLem2} and $F(\mathbf{P}')\neq\varnothing$.\end{proof}

 \begin{figure}
 \begin{center}
  \begin{minipage}{2in}\begin{tikzpicture}[semithick]
  
  \draw[ 
        decoration={markings, mark=at position 0.135 with {\arrow{<}}},
        postaction={decorate}
        ]
        [ 
        decoration={markings, mark=at position 0.02 with {\arrow{<}}},
        postaction={decorate}
        ]
        [ 
        decoration={markings, mark=at position 0.4 with {\arrow{<}}},
        postaction={decorate}
        ]
            [ 
        decoration={markings, mark=at position 0.5175 with {\arrow{<}}},
        postaction={decorate}
        ]
            [ 
        decoration={markings, mark=at position 0.635 with {\arrow{<}}},
        postaction={decorate}
        ]
  
        (0,0) circle (1.5);

\node at (0,2.25) (x) {$x$};

\node at (-1.5,0) {$\bullet$};
\node at (-1.75,0) (y) {$y$}; 
\node at (-1.1,-1) {$\bullet$};
\node at (-.3,-1) (wl)  {$w_\ell=z_k$};

\node at (-.8,-.5) {$n+3$};

\node at (-1.1,1) {$\bullet$};
\node at (-.7,1) (w2) {$w_2$}; 

\node at (1.1,1) (wi1bullet) {$\bullet$};
\node at (.5,1) (wi1) {$w_{i-1}$}; 

\node at (.8,.5) {$n+1$};

\node at (1.5,0) (wibullet) {$\bullet$};
\node at (1.8,0) (wi) {$w_i$}; 

\path[->,draw,thick, bend left] (y) to  (x);
\path[->,draw,thick, dotted, bend left] (x) to  (wi);

  \end{tikzpicture}
\end{minipage}
\end{center}
\caption{diagram for the proof of Proposition \ref{RejectProp}.}\label{RejectDiagram}
\end{figure}
 
\begin{example} If we pick any candidate $x$ in the majority graph shown on the left below, the proof of Proposition \ref{RejectProp} give us an algorithm to weight the edges of the majority graph such that in the resulting margin graph $x$ is the unique Split Cycle winner. For example, we can make $a$ the unique winner with the weighting on the middle graph and $d$ the unique winner with the weighting on the right graph.
  \begin{center}
 \begin{minipage}{2in}\begin{tikzpicture}

\node[circle,draw, minimum width=0.25in] at (0,0) (b) {$b$}; 
\node[circle,draw,minimum width=0.25in] at (3,0) (a) {$a$}; 
\node[circle,draw,minimum width=0.25in] at (1.5,1.5) (c) {$c$}; 
\node[circle,draw,minimum width=0.25in] at (1.5,-1.5) (d) {$d$}; 

\path[->,draw,thick] (b) to (a);
\path[->,draw,thick] (d) to (c);
\path[->,draw,thick] (c) to  (b);
\path[->,draw,thick] (a) to  (c);
\path[->,draw,thick] (d) to  (a);
\path[->,draw,thick] (b) to  (d);

\end{tikzpicture}
\end{minipage}\begin{minipage}{2in}\begin{tikzpicture}

\node[circle,draw, minimum width=0.25in] at (0,0) (b) {$b$}; 
\node[circle,draw,minimum width=0.25in] at (3,0) (a) {$a$}; 
\node[circle,draw,minimum width=0.25in] at (1.5,1.5) (c) {$c$}; 
\node[circle,draw,minimum width=0.25in] at (1.5,-1.5) (d) {$d$}; 

\path[->,draw,thick] (b) to (a);
\path[->,draw,thick] (d) to (c);
\path[->,draw,thick] (c) to node[fill=white] {$3$} (b);
\path[->,draw,thick] (a) to node[fill=white] {$3$} (c);
\path[->,draw,thick] (d) to node[fill=white] {$1$} (a);
\path[->,draw,thick] (b) to node[fill=white] {$3$} (d);

\node[fill=white] at (1.5,.5)  {$1$}; 
\node[fill=white] at (2,0)  {$1$}; 

  \end{tikzpicture}
\end{minipage} \begin{minipage}{2in}\begin{tikzpicture}

\node[circle,draw, minimum width=0.25in] at (0,0) (b) {$b$}; 
\node[circle,draw,minimum width=0.25in] at (3,0) (a) {$a$}; 
\node[circle,draw,minimum width=0.25in] at (1.5,1.5) (c) {$c$}; 
\node[circle,draw,minimum width=0.25in] at (1.5,-1.5) (d) {$d$}; 

\path[->,draw,thick] (b) to (a);
\path[->,draw,thick] (d) to (c);
\path[->,draw,thick] (c) to node[fill=white] {$3$} (b);
\path[->,draw,thick] (a) to node[fill=white] {$1$} (c);
\path[->,draw,thick] (d) to node[fill=white] {$3$} (a);
\path[->,draw,thick] (b) to node[fill=white] {$1$} (d);

\node[fill=white] at (1.5,.5)  {$3$}; 
\node[fill=white] at (2,0)  {$1$}; 

  \end{tikzpicture}
\end{minipage}
\end{center}

\end{example}

In fact, from the proof of Proposition \ref{RejectProp} we can extract a proof of the following proposition about when it is possible, starting from an arbitrary graph, to turn the graph into a margin graph in which a given candidate is a (unique) winner for Split Cycle.

\begin{proposition} For any asymmetric directed graph $\mathcal{G}=(G,\to)$ and $a\in G$, the following are equivalent:
\begin{enumerate}
\item there is a margin graph $\mathcal{M}$ based on $\mathcal{G}$ such that $a\in SC(\mathcal{M})$ (recall Remark \ref{SCofMarginGraph});
\item there is a margin graph $\mathcal{M}$ based on $\mathcal{G}$ such that $SC(\mathcal{M})=\{a\}$;
\item for all $x\in G\setminus\{a\}$, if $x\to a$, then there is a simple cycle of the form $x\to a\to y_1\to \dots \to y_n \to x$ in~$\mathcal{G}$.
\end{enumerate}
\end{proposition}

\subsubsection{Resolvability}\label{Resolvability}

\setcounter{secnumdepth}{4}

Like the rejectability criterion of Section \ref{Rejectability}, the criteria considered in this section concern winnowing a set of winners down to a unique winner. 

\paragraph{Single-voter resolvability} The first criterion, \textit{single-voter resolvability}, says that any tied winner can be made the unique winner by adding \textit{just one new voter}. We see no justification for requiring that one voter is always sufficient, and as far as we know, no arguments for the normative necessity of this criterion are given in the literature. Tideman \citeyearpar{Tideman1987} uses single-voter resolvability to rule out the GOCHA method, but this can be accomplished by rejectability instead. Indeed, we suspect that some intuitions about winnowing sets of winners to a unique winner are better captured by rejectability than by single-voter~resolvability. 

\begin{definition}\label{TidemanResolve} Given a voting method $F$ and $\mathscr{D}\subseteq\mathrm{dom}(F)$, we say that $F$ satisfies \textit{single-voter resolvability with respect to $\mathscr{D}$} if for any $\mathbf{P}\in\mathscr{D}$, if $\vert F(\mathbf{P})\vert >1$, then for any $x\in F(\mathbf{P})$, there is a profile $\mathbf{P}'$ with $V(\mathbf{P})\cap V(\mathbf{P}')=\varnothing$ and $\vert V(\mathbf{P}')\vert =1$ such that $F(\mathbf{P}+\mathbf{P}')=\{x\}$.
\end{definition}

\begin{proposition} Split Cycle does not satisfy single-voter resolvability even with respect to linear profiles.
\end{proposition}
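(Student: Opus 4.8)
The plan is to exhibit a four‑candidate profile $\mathbf{P}$ with $a\in SC(\mathbf{P})$ and $|SC(\mathbf{P})|>1$ in which one particular candidate, $d$, is ``locked in'' as a Split Cycle winner: no single added voter can defeat $d$, so in particular no single added voter can make $a$ the unique winner. Concretely, by Debord's Theorem take $\mathbf{P}$ whose margin graph on $\{a,b,c,d\}$ has edges $a\overset{6}{\to}b$, $a\overset{6}{\to}c$, $b\overset{6}{\to}c$, $c\overset{2}{\to}d$, $d\overset{2}{\to}a$, and $d\overset{6}{\to}b$ (this is asymmetric with all weights even, so it is a legitimate margin graph). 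The key feature is that $d$'s only in‑neighbor is $c$, and the edge $c\to d$ is ``protected'' by the simple cycle $c\overset{2}{\to}d\overset{6}{\to}b\overset{6}{\to}c$ whose other two edges have weight $6\gg 2$.

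The first step is to verify that $\{a,d\}\subseteq SC(\mathbf{P})$ (in fact $SC(\mathbf{P})=\{a,d\}$). For $d$: its unique in‑neighbor is $c$, and by Lemma~\ref{CycleNumLem}, $c$ does not defeat $d$ because the cycle $c\to d\to b\to c$ has splitting number $2=Margin_\mathbf{P}(c,d)$, so $Cycle\#_\mathbf{P}(c,d)\geq 2$. For $a$: its unique in‑neighbor is $d$, and $d$ does not defeat $a$ because the cycle $d\to a\to b\to c\to d$ has splitting number $2=Margin_\mathbf{P}(d,a)$. (For completeness one checks that $a$ defeats $b$ and $c$ via the weight‑$6$ edges, each of whose only containing simple cycles have splitting number $2$, giving $SC(\mathbf{P})=\{a,d\}$.) Hence $a\in SC(\mathbf{P})$ and $|SC(\mathbf{P})|>1$, so single‑voter resolvability would demand a one‑voter profile $\mathbf{P}'$ (with $V(\mathbf{P}')\cap V(\mathbf{P})=\varnothing$) such that $SC(\mathbf{P}+\mathbf{P}')=\{a\}$.

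The crux is to show that $d\in SC(\mathbf{P}+\mathbf{P}')$ for \emph{every} such single‑voter $\mathbf{P}'$, which immediately gives $SC(\mathbf{P}+\mathbf{P}')\neq\{a\}$. Adding one voter changes each margin by at most $1$, so in $\mathbf{P}+\mathbf{P}'$ we still have $Margin(a,d)\leq -1<0$ and $Margin(b,d)\leq -5<0$, so $c$ remains the unique in‑neighbor of $d$; moreover $Margin(c,d)\leq 3$ while $Margin(d,b)\geq 5$ and $Margin(b,c)\geq 5$, so the simple cycle $c\to d\to b\to c$ in $\mathcal{M}(\mathbf{P}+\mathbf{P}')$ has splitting number exactly $Margin_{\mathbf{P}+\mathbf{P}'}(c,d)$. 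By Lemma~\ref{CycleNumLem}, $c$ does not defeat $d$ in $\mathbf{P}+\mathbf{P}'$, and since $c$ is $d$'s only in‑neighbor, $d$ is undefeated, i.e. $d\in SC(\mathbf{P}+\mathbf{P}')$. (Alternatively, one can invoke stability for winners: $d\in SC\big((\mathbf{P}+\mathbf{P}')_{-a}\big)$ by the same triangle computation in the three‑candidate profile, and $Margin_{\mathbf{P}+\mathbf{P}'}(d,a)\geq 1>0$, so $d\in SC(\mathbf{P}+\mathbf{P}')$ by Proposition~\ref{SpoilersProp}.) Since no one‑voter addition can delete $d$ from the winning set, Split Cycle fails single‑voter resolvability.

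I expect the main obstacle to be bookkeeping rather than conceptual: one must check carefully that $c$ stays the \emph{unique} in‑neighbor of $d$ after the voter is added, and that the two ``heavy'' edges of the protecting triangle remain heavy enough that $Margin(c,d)$ is forced to be the minimum edge of that triangle. Choosing the heavy weight ($6$) comfortably larger than the light weight ($2$) makes all of these inequalities hold automatically for any single added voter, so the argument reduces to the short defeat computations above.
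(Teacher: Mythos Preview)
Your proof is correct. Both you and the paper argue by exhibiting a specific profile with multiple Split Cycle winners in which one designated winner cannot be dislodged by any single additional voter, but the concrete examples differ. The paper reuses the four-candidate profile from Proposition~\ref{MinimaxStability} (the $a,b,c$ three-cycle with margin $3$ edges and $a\overset{1}{\to}d$, $b\overset{1}{\to}d$, $d\overset{3}{\to}c$), notes that $SC(\mathbf{P})=\{a,b,d\}$, and simply asserts that no one-voter addition yields $SC=\{a\}$ or $SC=\{b\}$, leaving the verification implicit. Your example instead engineers $d$ to have a \emph{unique} in-neighbor $c$ whose margin over $d$ is protected by a triangle with a deliberately wide gap ($6$ vs.\ $2$), so the check that $d$ survives any one-voter perturbation reduces to a handful of trivial inequalities; this makes the argument fully self-contained. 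The alternative route you mention via stability for winners (Proposition~\ref{SpoilersProp}) is also a nice touch not present in the paper.
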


\begin{proof} Recall the margin graph of the linear profile $\mathbf{P}$ from the proof of Proposition \ref{MinimaxBPstealers} showing that Minimax and Beat Path do not satisfy immunity to stealers:
\begin{center}
\begin{tikzpicture}

\node[circle,draw, minimum width=0.25in] at (0,0) (b) {$b$}; 
\node[circle,draw,minimum width=0.25in] at (3,0) (a) {$a$}; 
\node[circle,draw,minimum width=0.25in] at (1.5,1.5) (c) {$c$}; 
\node[circle,draw,minimum width=0.25in] at (1.5,-1.5) (d) {$d$}; 

\path[->,draw,thick] (b) to (a);
\path[->,draw,thick] (d) to (c);
\path[->,draw,thick] (c) to node[fill=white] {$3$} (b);
\path[->,draw,thick] (a) to node[fill=white] {$3$} (c);
\path[->,draw,thick] (a) to node[fill=white] {$1$} (d);
\path[->,draw,thick] (b) to node[fill=white] {$1$} (d);

\node[fill=white] at (1.5,.5)  {$3$}; 
\node[fill=white] at (2,0)  {$3$}; 

  \end{tikzpicture}\end{center}
Here $SC(\mathbf{P})=\{a,b,d\}$, but there is no one-voter profile $\mathbf{P}'$ with $SC(\mathbf{P}+\mathbf{P}')=\{a\}$ or ${SC(\mathbf{P}+\mathbf{P}')=\{b\}}$, since however each margin changes by at most one from $\mathbf{P}$ to $\mathbf{P}+\mathbf{P}'$, the margins of $a$ over $d$ and of $b$ over $d$ will still be the weakest in a cycle in $\mathcal{M}(\mathbf{P}+\mathbf{P}')$.\end{proof}
\noindent Below we will show a deep tension between single-voter resolvability and stability for winners.

Resolvability and rejectability can be related using the following additional criterion from \citealt{Smith1973}.

\begin{definition} A voting method $F$ satisfies \textit{homogeneity} if for any $\mathbf{P}\in\mathrm{dom}(F)$, if $\mathbf{P}^*$ is a copy of $\mathbf{P}$ with a disjoint set of voters (recall Definition \ref{DisjointUnion}), then $F(\mathbf{P})=F(\mathbf{P}+\mathbf{P}^*)$.
\end{definition}

\begin{lemma}\label{ResolveReject} If a voting method $F$ satisfies homogeneity and single-voter resolvability with respect to $\mathrm{dom}(F)$, then it satisfies rejectability.
\end{lemma}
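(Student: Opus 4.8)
The plan is to build $\mathbf{P}^{+}$ in two moves: first duplicate $\mathbf{P}$ so that every majority margin doubles and thereby acquires slack, and then append the single voter provided by single-voter resolvability to collapse the winning set down to $\{x\}$. The doubling is what guarantees that the lone extra voter cannot reverse, or even weaken, any pre-existing majority preference.

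In detail, suppose $F$ satisfies homogeneity and single-voter resolvability, and let $\mathbf{P}$ be a profile with $|F(\mathbf{P})|>1$ and $x\in F(\mathbf{P})$. First I would fix a copy $\mathbf{P}^{*}$ of $\mathbf{P}$ on a fresh voter set, so that $\mathbf{P}+\mathbf{P}^{*}$ is defined with $X(\mathbf{P}+\mathbf{P}^{*})=X(\mathbf{P})$ and $V(\mathbf{P})\subseteq V(\mathbf{P}+\mathbf{P}^{*})$. By homogeneity $F(\mathbf{P}+\mathbf{P}^{*})=F(\mathbf{P})$, so $|F(\mathbf{P}+\mathbf{P}^{*})|>1$ and $x\in F(\mathbf{P}+\mathbf{P}^{*})$. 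Next I would apply single-voter resolvability to $\mathbf{P}+\mathbf{P}^{*}$ and the winner $x$, obtaining a one-voter profile $\mathbf{Q}$ (with $X(\mathbf{Q})=X(\mathbf{P})$ and $V(\mathbf{Q})$ disjoint from $V(\mathbf{P}+\mathbf{P}^{*})$) such that $F\big((\mathbf{P}+\mathbf{P}^{*})+\mathbf{Q}\big)=\{x\}$, and set $\mathbf{P}^{+}:=(\mathbf{P}+\mathbf{P}^{*})+\mathbf{Q}$. By construction $X(\mathbf{P}^{+})=X(\mathbf{P})$, $V(\mathbf{P})\subseteq V(\mathbf{P}^{+})$, and $F(\mathbf{P}^{+})=\{x\}$.

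The remaining step is the margin check. Since the three voter sets are pairwise disjoint, margins add: $Margin_{\mathbf{P}^{+}}(a,b)=Margin_{\mathbf{P}}(a,b)+Margin_{\mathbf{P}^{*}}(a,b)+Margin_{\mathbf{Q}}(a,b)=2\,Margin_{\mathbf{P}}(a,b)+Margin_{\mathbf{Q}}(a,b)$, and $|Margin_{\mathbf{Q}}(a,b)|\le 1$ because $\mathbf{Q}$ has a single voter. Hence whenever $Margin_{\mathbf{P}}(a,b)>0$ (so $Margin_{\mathbf{P}}(a,b)\ge 1$) we get $Margin_{\mathbf{P}^{+}}(a,b)\ge 2\,Margin_{\mathbf{P}}(a,b)-1\ge Margin_{\mathbf{P}}(a,b)$; that is, no majority preference of $\mathbf{P}$ is reversed or weakened in $\mathbf{P}^{+}$, which is what rejectability requires.

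The point I would flag as the crux is the role of homogeneity. A single resolving voter does change the margin of a tied pair by $\pm 1$, so ties need not survive (this is harmless, and matches the fact that the construction in the proof of Proposition \ref{RejectProp} itself breaks ties toward $x$), but it also means that without first duplicating $\mathbf{P}$ a margin equal to $1$ could be dragged down to $0$, which would weaken a majority preference. Duplicating before invoking single-voter resolvability is exactly what buys the one unit of slack needed to rule this out; everything else is routine bookkeeping about disjoint voter sets, which is available since $\mathcal{V}$ is infinite.
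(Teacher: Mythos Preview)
Your proof is correct and follows the same approach as the paper: duplicate $\mathbf{P}$ via homogeneity, apply single-voter resolvability to $\mathbf{P}+\mathbf{P}^*$ to obtain $\mathbf{Q}$, and set $\mathbf{P}^+=\mathbf{P}+\mathbf{P}^*+\mathbf{Q}$. You are more explicit than the paper about why the duplication is needed (to give every positive margin at least one unit of slack against the $\pm 1$ perturbation from the single added voter), but the argument is identical.
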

\begin{proof} Let $\mathbf{P}\in\mathrm{dom}(F)$ be such that $\vert  F(\mathbf{P}) \vert  >1$ and $x\in F(\mathbf{P})$. Let $\mathbf{P}^*$ be a copy of $\mathbf{P}$ with a disjoint set of voters. Then by homogeneity, $F(\mathbf{P})=F(\mathbf{P}+\mathbf{P}^*)$. It follows by resolvability that there is a single voter profile $\mathbf{Q}$ such that $F(\mathbf{P}+\mathbf{P}^*+\mathbf{Q})=\{x\}$. Since for any $a,b\in X(\mathbf{P})$ with $Margin_\mathbf{P}(a,b)> 0$, we have $Margin_{\mathbf{P}+\mathbf{P}^*+\mathbf{Q}}(a,b)\geq Margin_\mathbf{P}(a,b)$, the profile $\mathbf{P}+\mathbf{P}^*+\mathbf{Q}$ is the desired profile $\mathbf{P}^+$ for rejectability.\end{proof}

\paragraph{Asymptotic resolvability}\label{AsymResolve} Another use of the term `resolvability'  (see \citealt[\S~4.2.1]{Schulze2011}) concerns the proportion of profiles with multiple winners as the number of voters goes to infinity. 

\begin{definition}\label{AsymptoticResolve} For $k\in \mathbb{N}$, a voting method $F$ satisfies \textit{asymptotic resolvability for $k$ candidates} if the proportion of profiles $\mathbf{P}\in\mathrm{dom}(F)$ with $\vert X(\mathbf{P})\vert =k$ and $\vert V(\mathbf{P})\vert =n$ for which $\vert F(\mathbf{P})\vert >1$ approaches 0 as $n$ approaches infinity.
\end{definition}
For comparison, recall the quasi-resoluteness condition from Section \ref{Comparison}, according to which $F$ picks a unique winner in any uniquely-weighted profile. Since the proportion of profiles that are uniquely weighted goes to $1$ as the number of voters goes to infinity, quasi-resoluteness implies  asymptotic resolvability. However, the converse implication does not hold. For example, the Borda  method (for a definition, see \citealt[\S~2.1]{Pacuit2019}) is asymptotically resolvable but not quasi-resolute (e.g., consider a three-candidate election in which $Margin_\mathbf{P}(a,b)=2$, $Margin_\mathbf{P}(b,c)=4$, and $Margin_\mathbf{P}(c,a)=6$, in which case Borda picks $b$ and $c$).

In Section \ref{NewCriteria}, we discussed the tradeoff between a voting method being quasi-resolute and satisfying stability for winners. The next result illustrates this tradeoff in the case of resolvability. We impose an assumption that is satisfied by all voting methods based on majority margins that we know of---not only Condorcet methods but also, e.g., Borda (see \citealt[p.~28]{Zwicker2016} for a formulation of Borda as a margin-based method). Say that a voting method $F$ satisfies the \textit{triangle property} if for any uniquely-weighted linear profile $\mathbf{P}$ with a majority cycle, if $x$ has the largest margin of victory and smallest margin of loss, then $x\in F(\mathbf{P})$ (a property also used in Example \ref{TwoCondorcetian}). 
  
The proof of Theorem \ref{ResolveStability} makes essential use of a theorem of Harrison-Trainor \citeyearpar{HT2020} that answers one of our conjectures. 

\begin{theorem}\label{ResolveStability} Suppose $F$ is a voting method on the domain of linear profiles satisfying stability for winners and the triangle property. Then $F$ does not satisfy single-voter resolvability with respect to its domain, and $F$ does not satisfy asymptotic resolvability for any $k>3$.\end{theorem}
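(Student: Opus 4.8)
The plan is to reduce both parts to a single combinatorial lemma: \emph{for every $k\geq 4$ there is a class $\mathcal{S}_k$ of uniquely weighted $k$-candidate profiles on which every voting method $G$ satisfying stability for winners and the triangle property has $|G(\mathbf{P})|\geq 2$} (indeed $G(\mathbf{P})$ will be forced to contain two designated candidates $x,t$). For $k=4$ the profiles in $\mathcal{S}_4$ have as margin graph a tournament on $\{x,t,b,e\}$ in which $\{x,t,e\}$ is a $3$-cycle $x\to t\to e\to x$ with $t$ having the largest outgoing and smallest incoming margin among its three edges, $\{x,b,e\}$ is a $3$-cycle $x\to b\to e\to x$ with $x$ having the largest outgoing and smallest incoming margin among its three edges, and the remaining edge is $t\to b$; explicitly one wants $Margin_\mathbf{P}(x,t)<Margin_\mathbf{P}(e,x)<Margin_\mathbf{P}(t,e)$ and $Margin_\mathbf{P}(e,x)<Margin_\mathbf{P}(b,e)<Margin_\mathbf{P}(x,b)$ with all six margins distinct and of a single parity, which Debord's Theorem realizes. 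For $k>4$ one enlarges an $\mathcal{S}_k$-structure by adjoining candidates $e_1,\dots,e_{k-4}$ to which every member of $\{x,t,b,e\}$ is majority preferred, keeping all margins distinct.

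To prove the lemma I would first apply the triangle property to the two $3$-candidate subprofiles, obtaining $t\in G(\mathbf{P}_{\mid\{x,t,e\}})$ and $x\in G(\mathbf{P}_{\mid\{x,b,e\}})$. Since $x$ is majority preferred to $t$ and to every $e_i$, iterating stability for winners (starting from $\mathbf{P}_{\mid\{x,b,e\}}$ and adding $t$, then the $e_i$, one candidate at a time) gives $x\in G(\mathbf{P})$. Now suppose for contradiction that $G(\mathbf{P})=\{x\}$. Starting instead from $t\in G(\mathbf{P}_{\mid\{x,t,e\}})$ and adding the $e_i$ (each of which is majority dispreferred to $t$), stability for winners gives $t\in G(\mathbf{P}_{-b})$; and since $t$ is majority preferred to $b$, one further application of stability for winners yields $t\in G(\mathbf{P})=\{x\}$, so $t=x$, a contradiction. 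Hence $|G(\mathbf{P})|\geq 2$.

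For part (1) I would pick $\mathbf{P}_0\in\mathcal{S}_k$ whose margins are all at least $7$ and pairwise at least $6$ apart. By the lemma, $x\in F(\mathbf{P}_0)$ and $|F(\mathbf{P}_0)|>1$, so single-voter resolvability would furnish a one-voter profile $\mathbf{Q}$ on the same candidates with $F(\mathbf{P}_0+\mathbf{Q})=\{x\}$. But adjoining a single voter shifts each margin by $\pm 1$, so $\mathbf{P}_0+\mathbf{Q}$ has the same underlying tournament, remains uniquely weighted, and still satisfies every defining inequality of $\mathcal{S}_k$; thus $\mathbf{P}_0+\mathbf{Q}\in\mathcal{S}_k$ and $|F(\mathbf{P}_0+\mathbf{Q})|>1$ by the lemma, contradicting $F(\mathbf{P}_0+\mathbf{Q})=\{x\}$. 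Hence $F$ fails single-voter resolvability.

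For part (2) I would observe that an $\mathcal{S}_k$-structure is a legitimate (qualitative) margin graph --- an asymmetric tournament together with an admissible linear ordering of its edge weights --- and invoke the theorem of Harrison-Trainor, by which, for each fixed $k$, the proportion of profiles $\mathbf{P}$ with $|X(\mathbf{P})|=k$ and $|V(\mathbf{P})|=n$ whose margin graph realizes a fixed such structure converges to a positive limit as $n\to\infty$; since every such profile has $|F(\mathbf{P})|>1$ by the lemma, the proportion with $|F(\mathbf{P})|>1$ does not tend to $0$, so $F$ fails asymptotic resolvability for every $k>3$. The hard part is really the lemma: one cannot force two winners by exhibiting two candidates each of which is ``top of a $3$-cycle and majority preferred to all the remaining candidates'' (no profile admits such a pair), so the second winner must be produced indirectly, by running stability for winners ``downward'' to contradict the assumption that $F(\mathbf{P})$ is a singleton --- and part (2) additionally relies on Harrison-Trainor's probabilistic result that this configuration occurs with non-vanishing frequency as the electorate grows.
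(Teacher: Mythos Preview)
Your proposal is correct and essentially the same as the paper's proof. Your four-candidate tournament on $\{x,t,b,e\}$ is isomorphic to the paper's on $\{x_1,x_2,x_3,x_4\}$ via $x\mapsto x_1$, $e\mapsto x_2$, $b\mapsto x_3$, $t\mapsto x_4$, with matching inequality constraints; the paper extends to $k>4$ by restricting to $GETCHA(\mathbf{P})$ (after proving $F(\mathbf{P}^G)\subseteq F(\mathbf{P})$ from stability for winners), which is equivalent to your device of adjoining dominated candidates $e_i$; and both proofs handle part (1) by making margins far enough apart that one voter cannot alter the qualitative margin graph (the paper via doubling, you via Debord directly) and part (2) via Harrison-Trainor.

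Two small remarks. First, your contradiction wrapper is unnecessary: you prove $t\in G(\mathbf{P})$ directly from $t\in G(\mathbf{P}_{\mid\{x,t,e\}})$ and $t\to b$ (and $t\to e_i$), so there is no need to suppose $G(\mathbf{P})=\{x\}$. Second, your closing aside that ``no profile admits such a pair'' is mistaken---your own construction exhibits exactly such a pair ($x$ tops the $3$-cycle $\{x,b,e\}$ and beats the remaining candidate $t$; $t$ tops $\{x,t,e\}$ and beats the remaining candidate $b$). What is impossible is doing this with \emph{disjoint} $3$-cycles, which is presumably what you had in mind.
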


\begin{proof} In this proof, all profiles are assumed to be linear. 

We will use the fact that stability for winners implies the following: for any profile $\mathbf{P}$, defining $\mathbf{P}^G=\mathbf{P}\vert _{ GETCHA(\mathbf{P})}$ (recall Section \ref{SmithSchwartz}), we have $F(\mathbf{P}^G)\subseteq F(\mathbf{P})$. To see this, let $X(\mathbf{P})\setminus GETCHA(\mathbf{P})=\{b_1,\dots,b_n\}$. Suppose $a\in F(\mathbf{P}^G)$, so $a\in GETCHA(\mathbf{P})$. Then $a\to b_1$, so stability for winners implies $a\in F(\mathbf{P}\vert _{ GETCHA(\mathbf{P})\cup\{b_1\}})$. Then since $a\to b_2$, stability for winners implies $a\in F(\mathbf{P}\vert _{ GETCHA(\mathbf{P})\cup\{b_1,b_2\}})$, and so on, until we obtain $a\in F(\mathbf{P})$.

We will also use the notion of a \textit{qualitative margin graph}, which is a pair $\mathbb{M}=(M,\prec)$ where $M$ is an asymmetric directed graph and $\prec$ is a strict weak order on the set of edges of $M$. We say that $\mathbb{M}$ is \textit{uniquely weighted} if $\prec$ is a strict linear order. Given a profile $\mathbf{P}$, let the \textit{qualitative margin graph} $\mathbb{M}(\mathbf{P})$ \textit{of} $\mathbf{P}$ be the pair $(M(\mathbf{P}),\prec_\mathbf{P})$ where $M(\mathbf{P})$ is the majority graph of $\mathbf{P}$, and $\prec_\mathbf{P}$ is the relation on the set of edges of $M(\mathbf{P})$ defined by $(a,b)\prec_\mathbf{P}(c,d)$ if $Margin_\mathbf{P}(a,b)<Margin_\mathbf{P}(c,d)$. It follows from Debord's Theorem that every qualitative margin graph is realized by some profile. Harrison-Trainor \citeyearpar{HT2020} proves that for any $k\geq 1$ and uniquely-weighted qualitative margin graph $\mathbb{M}$ with $k$ candidates, the proportion of profiles with $k$ candidates and $n$ voters realizing $\mathbb{M}$ does not go to 0 as $n$ goes to infinity. Thus, asymptotic resolvability for $k$ candidates implies the following condition $(\star)$: there is no uniquely-weighted qualitative margin graph $\mathbb{M}$ with $k$ candidates such that for every profile $\mathbf{P}$ realizing $\mathbb{M}$,  $\vert F(\mathbf{P})\vert >1$. This also follows from single-voter resolvability: for if there exists a uniquely-weighted $\mathbb{M}$ such that every $\mathbf{P}$ realizing $\mathbb{M}$ has $\vert F(\mathbf{P})\vert >1$, then we can pick a profile $\mathbf{P}$ realizing $\mathbb{M}$ with sufficiently many voters (note that if $\mathbf{P}$ realizes $\mathbb{M}$, so does $\mathbf{P}+\mathbf{P}^*$ where $\mathbf{P}^*$ is a copy of $\mathbf{P}$ with a disjoint set of voters) such that for any single-voter profile $\mathbf{P}'$, $\mathbf{P}+\mathbf{P}'$ still realizes $\mathbb{M}$ (since the differences between distinct margins are too large in $\mathbf{P}$ for one voter to change the qualitative margin graph), so that $\vert F(\mathbf{P}+\mathbf{P}')\vert >1$, in violation of single-voter resolvability.

Now consider any profile $\mathbf{P}$ with $X(\mathbf{P})>3$ realizing a qualitative margin graph $\mathbb{M}$ that \textit{when restricted to $GETCHA(\mathbf{P})$} has the following form, where $\alpha \prec  \gamma \prec \beta$ and $\gamma \prec \varphi\prec \psi$:
\begin{center}
\begin{minipage}{2in}\begin{tikzpicture}

\node[circle,draw, minimum width=0.25in] at (0,0) (b) {$x_1$}; 
\node[circle,draw,minimum width=0.25in] at (3,0) (a) {$x_3$}; 
\node[circle,draw,minimum width=0.25in] at (1.5,1.5) (c) {$x_2$}; 
\node[circle,draw,minimum width=0.25in] at (1.5,-1.5) (d) {$x_4$}; 

\path[->,draw,thick] (b) to (a);
\path[->,draw,thick] (d) to (c);
\path[->,draw,thick] (c) to node[fill=white] {$\gamma$} (b);
\path[->,draw,thick] (a) to node[fill=white] {$\varphi$} (c);
\path[->,draw,thick] (d) to node[fill=white] {$\chi$} (a);
\path[->,draw,thick] (b) to node[fill=white] {$\alpha$} (d);

\node[fill=white] at (1.5,.5)  {$\beta$}; 
\node[fill=white] at (2,0)  {$\psi$}; 

  \end{tikzpicture}
\end{minipage}
\end{center}
Since $\alpha \prec  \gamma \prec \beta$, by the triangle property we have $x_4\in F((\mathbf{P}^G)_{-x_3})$. Then given $x_4\to x_3$, from stability for winners we have $x_4\in F(\mathbf{P}^G)$ and hence $x_4\in F(\mathbf{P})$ by the first paragraph of the proof. Since  $\gamma \prec \varphi\prec \psi$, by the triangle property we have $x_1\in F((\mathbf{P}^G)_{-x_4})$. Then given $x_1\to x_4$, from stability for winners we have $x_1\in F(\mathbf{P}^G)$ and hence $x_1\in F(\mathbf{P})$ by the first paragraph of the proof. Thus, $\vert F(\mathbf{P})\vert >1$. Since this holds for every $\mathbf{P}$ realizing $\mathbb{M}$, condition $(\star)$ above does not hold, so neither version of resolvability holds either.\end{proof}

It is easy to see that Split Cycle satisfies asymptotic resolvability for $k=2$ and $k=3$ (for $k=3$, this follows from Proposition \ref{Exclude2}.\ref{Exclude2a}). For $k>3$, since Split Cycle satisfies the triangle property and stability for winners, Theorem \ref{ResolveStability} yields the following.

\begin{corollary}\label{NotAsymp} For $k>3$, Split Cycle does not satisfy asymptotic resolvability for $k$ candidates.
\end{corollary}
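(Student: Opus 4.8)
The plan is to obtain the corollary as an immediate application of Theorem \ref{ResolveStability}(2), whose hypotheses are that the voting method satisfies \emph{stability for winners} and the \emph{triangle property}. The first hypothesis is already in hand: Proposition \ref{SpoilersProp} shows that Split Cycle satisfies strong stability for winners, which trivially implies stability for winners. So the only genuine work is to verify that Split Cycle satisfies the triangle property, after which the conclusion ``Split Cycle does not satisfy asymptotic resolvability for any $k>3$'' follows verbatim from part (2) of the theorem.

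To check the triangle property, I would take an arbitrary uniquely weighted profile $\mathbf{P}$ whose margin graph is the three‑node configuration displayed just before Theorem \ref{ResolveStability}: edges $x\overset{k}{\rightarrow} a$, $a\overset{j}{\rightarrow} c$, and $c\overset{i}{\rightarrow} x$ with $i<j<k$. I want to show $x\in SC(\mathbf{P})$, i.e.\ that $x$ is undefeated. The only candidate with a positive margin over $x$ is $c$, so it suffices to show $c$ does not defeat $x$. By Lemma \ref{CycleNumLem} this reduces to the inequality $Margin_\mathbf{P}(c,x)\le Cycle\#_\mathbf{P}(c,x)$, and by Lemma \ref{OnlySomeCycles} the cycle number $Cycle\#_\mathbf{P}(c,x)$ is computed from the simple cycles of the form $c\rightarrow x\rightarrow\cdots\rightarrow c$. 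Here there is exactly one such cycle, namely $c\overset{i}{\rightarrow} x\overset{k}{\rightarrow} a\overset{j}{\rightarrow} c$, whose splitting number is $\min(i,k,j)=i$. Hence $Cycle\#_\mathbf{P}(c,x)=i=Margin_\mathbf{P}(c,x)$, so the strict inequality needed for defeat fails, $c$ does not defeat $x$, and $x\in SC(\mathbf{P})$. This establishes the triangle property.

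With both hypotheses of Theorem \ref{ResolveStability} verified for Split Cycle, part (2) immediately gives that Split Cycle fails asymptotic resolvability for every $k>3$, which is the corollary.

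I do not expect a substantive obstacle: the triangle‑property verification is a one‑line splitting‑number computation, and stability for winners is already available from Proposition \ref{SpoilersProp}. The only point requiring care is bookkeeping about the orientation and weighting in the triangle diagram — specifically, that in the relevant cycle through $x$ the edge $c\rightarrow x$ carries the \emph{smallest} margin, so that the splitting number of that cycle coincides with $Margin_\mathbf{P}(c,x)$ and the defeat condition is exactly (not strictly) violated. It is also worth noting in the write‑up that Corollary \ref{Exclude2Cor} already yields asymptotic resolvability for $k=3$ (and $k=2$ is trivial), so the restriction $k>3$ is sharp for Split Cycle.
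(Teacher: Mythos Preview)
Your proposal is correct and follows exactly the paper's approach: the paper derives the corollary from Theorem \ref{ResolveStability}(2) by noting that Split Cycle satisfies stability for winners (Proposition \ref{SpoilersProp}) and the triangle property. Your explicit verification of the triangle property via the splitting-number computation is accurate and in fact spells out a detail the paper leaves implicit.
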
 

Table \ref{NarrowingTable} shows estimates for the average sizes of winning sets in the limit as the number of voters goes to infinity for several voting methods that are not asymptotically resolvable. Estimates were obtained using the Monte Carlo simulation technique described in \citealt[\S~12]{HT2020}, sampling 10,000,000 profiles for each number of candidates.

\begin{table}[h]
\begin{center}
\begin{tabular}{l|rrrrrrrrrr}

  &    3 &    4 &    5 &    6 &    7 &    8 &    9 &   10 &    20 &    30 \\
\hline
 Split Cycle     & 1    & 1.01 & 1.03 & 1.06 & 1.08 & 1.11 & 1.14 & 1.16 &  1.42 &  1.62 \\
 Copeland        & 1.18 & 1.26 & 1.29 & 1.3  & 1.31 & 1.31 & 1.31 & 1.31 &  1.28 &  1.25 \\
 Uncovered Set   & 1.18 & 1.35 & 1.53 & 1.71 & 1.90  & 2.08 & 2.27 & 2.47 &  4.55 &  6.84 \\
 GETCHA          & 1.18 & 1.44 & 1.79  & 2.21 & 2.72 & 3.30 & 3.96 & 4.68 & 13.52 & 22.9 \\
\end{tabular}
\end{center}
\caption{Estimated average sizes of winning sets for profiles with a given number of candidates (top row) in the limit as the number of voters goes to infinity.}\label{NarrowingTable}
\end{table}

While it is certainly of theoretical interest to know whether the proportion of profiles with multiple winners goes to 0 as the number of voters goes to infinity, for real world applications, what matters is the proportion of profiles with multiple winners for realistic numbers of voters. In  Appendix \ref{Quant}, we provide a quantitative  analysis.   For instance, our results show that when there are 7 candidates and up to a few thousand voters,  Split Cycle produces multiple winners on only about 1\% more of such profiles than Beat Path, which satisfies resolvability in both forms above.  Our results also show that this difference in frequency of multiple winners decreases as the number of candidates decreases.  In addition, our analysis shows that Split Cycle is substantially more resolute than GETCHA. 

\setcounter{secnumdepth}{3}

\subsection{Monotonicity Criteria}\label{MonSection}

\subsubsection{Non-Negative Reponsiveness}\label{MonotonicitySection}

The term `monotonicity' has many meanings in voting theory. One of the standard meanings is given by the criterion of \textit{non-negative responsiveness} (\citealt{Tideman1987}): lifting the position of a winner $x$ on voters' ballots cannot result in $x$ becoming a loser.

\begin{definition}\label{SimpleLift} For any profiles $\mathbf{P}$ and $\mathbf{P}'$ with $V(\mathbf{P})=V(\mathbf{P}')$ and $x\in X(\mathbf{P})=X(\mathbf{P}')$, we say that \textit{$\mathbf{P}'$ is obtained from $\mathbf{P}$ by a simple lift of $x$} if the following conditions hold:
\begin{enumerate}
\item\label{SimpleLift1} for all $a,b\in X(\mathbf{P})\setminus\{x\}$ and $i\in V(\mathbf{P})$, $a\mathbf{P}_i b$ if and only if $a\mathbf{P}'_i b$;
\item\label{SimpleLift2} for all $a\in X(\mathbf{P})$ and $i\in V(\mathbf{P})$, if $x\mathbf{P}_ia$ then $x\mathbf{P}_i' a$;
\item\label{SimpleLift2} for all $a\in X(\mathbf{P})$ and $i\in V(\mathbf{P})$, if $a\mathbf{P}_i'x$ then $a\mathbf{P}_i x$.
\end{enumerate}
\end{definition}

\begin{definition} A voting method $F$ satisfies \textit{non-negative responsiveness} if for every $\mathbf{P}\in\mathrm{dom}(F)$ and $x\in X(\mathbf{P})$, if $x\in F(\mathbf{P})$ and $\mathbf{P}'\in\mathrm{dom}(F)$ is obtained from $\mathbf{P}$ by a simple lift of $x$, then $x\in F(\mathbf{P}')$.
\end{definition}

\begin{proposition} Split Cycle satisfies non-negative responsiveness.
\end{proposition}

\begin{proof} Suppose $x\in SC(\mathbf{P})$ and $\mathbf{P}'$ is obtained from $\mathbf{P}$ by a simple lift of $x$. Since $x\in SC(\mathbf{P})$, for all $y\in X(\mathbf{P})$, $y$ does not defeat $x$ in $\mathbf{P}$, so $Margin_\mathbf{P}(y,x)\leq Cycle\#_\mathbf{P}(y,x)$. We claim that $y$ does not defeat $x$ in $\mathbf{P}'$ either. Since $\mathbf{P}'$ is obtained from $\mathbf{P}$ by a simple lift of $x$, we have $Margin_{\mathbf{P}'}(y,x)\leq Margin_\mathbf{P}(y,x)$. If $Margin_{\mathbf{P}'}(y,x)\leq 0$, then $y$ does not defeat $x$ in $\mathbf{P}'$, so suppose  $Margin_{\mathbf{P}'}(y,x)>0$. We claim that 
\begin{equation}Cycle\#_{\mathbf{P}'}(y,x)\geq Cycle\#_\mathbf{P}(y,x) - (Margin_\mathbf{P}(y,x) - Margin_{\mathbf{P}'}(y,x)).\label{LiftCycleNumber}\end{equation}
For given any simple cycle $\rho = y \overset{\alpha}{\rightarrow} x\overset{\beta}{\rightarrow} z_1\overset{\gamma_1}{\rightarrow} \dots\overset{\gamma_{n-1}}{\rightarrow} z_n\overset{\gamma_n}{\rightarrow} y$ in $\mathcal{M}(\mathbf{P})$, by Definition \ref{SimpleLift} we have that $\rho' = y \overset{\alpha'}{\rightarrow} x\overset{\beta'}{\rightarrow} z_1\overset{\gamma_1}{\rightarrow} \dots\overset{\gamma_{n-1}}{\rightarrow} z_n\overset{\gamma_n}{\rightarrow} y$ is a simple cycle in $\mathcal{M}(\mathbf{P}')$ where $\alpha'\leq\alpha$ and  $\beta\leq\beta'$. Hence $Split\#(\rho')\geq Split\#(\rho) - (Margin_\mathbf{P}(y,x) - Margin_{\mathbf{P}'}(y,x))$. This proves (\ref{LiftCycleNumber}), which with $Margin_\mathbf{P}(y,x)\leq Cycle\#_\mathbf{P}(y,x)$ implies  $Margin_{\mathbf{P}'}(y,x)\leq Cycle\#_{\mathbf{P}'}(y,x)$. Hence $y$ does not defeat $x$ in $\mathbf{P}'$. Since $y$ was arbitrary, we conclude that $x\in SC(\mathbf{P}')$.\end{proof}

\subsubsection{Positive and Negative Involvement}\label{InvolvementSection}

Like rejectability and resolvability, the next two criteria we consider---positive and negative involvement---also concern adding voters to an election. In this case, the concern is about perverse changes to the set of winners in light of who the new voters rank as their favorite (resp.~least favorite) candidate. Recall our discussion in Section \ref{NoShowSection} of violations of positive or negative involvement as ``strong no show paradoxes.'' The criterion of positive (resp.~negative) involvement ensures that if $x$ is among the winners (resp.~losers) and we add a voter who ranks $x$ as their favorite (resp.~least favorite), then $x$ will still be a winner (resp.~loser).

\begin{definition} $F$ satisfies \textit{positive involvement} if for any profiles $\mathbf{P}\in\mathrm{dom}(F)$ and $\mathbf{P}'$ with $X(\mathbf{P})=X(\mathbf{P}')$, ${V(\mathbf{P})\cap V(\mathbf{P}')=\varnothing}$, and $\vert V(\mathbf{P}')\vert =1$, if $x\in F(\mathbf{P})$, $\mathbf{P}+\mathbf{P}'\in\mathrm{dom}(F)$, and for $i\in V(\mathbf{P}')$, $x \mathbf{P}_i' y$ for all $y\in X(\mathbf{P}')\setminus \{x\}$, then $x\in F(\mathbf{P}+\mathbf{P}')$.

$F$ satisfies \textit{negative involvement} if for any profiles $\mathbf{P}\in\mathrm{dom}(F)$ and $\mathbf{P}'$ with $X(\mathbf{P})=X(\mathbf{P}')$, $V(\mathbf{P})\cap V(\mathbf{P}')=\varnothing$, and $\vert V(\mathbf{P}')\vert =1$, if $x\not\in F(\mathbf{P})$, $\mathbf{P}+\mathbf{P}'\in\mathrm{dom}(F)$, and for $i\in V(\mathbf{P}')$, $y \mathbf{P}_i' x$ for all $y\in X(\mathbf{P}')\setminus \{x\}$, then $x\not\in F(\mathbf{P}+\mathbf{P}')$.
\end{definition}

\begin{lemma}\label{MoreThanOne} If $F$ satisfies positive involvement (resp.~negative involvement), then it satisfies the analogous coalitional properties that drop the restriction that $\vert V(\mathbf{P}')\vert =1$.
\end{lemma}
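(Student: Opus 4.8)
The plan is to prove the coalitional versions by induction on $|V(\mathbf{P}')|$, peeling off one new voter at a time and invoking the single-voter hypothesis at each step. Consider first the positive involvement case: assume $x\in F(\mathbf{P})$ and that every voter in $\mathbf{P}'$ ranks $x$ uniquely first. Enumerate $V(\mathbf{P}')=\{i_1,\dots,i_k\}$, and for each $j$ let $\mathbf{P}'_j$ be the one-voter profile with $V(\mathbf{P}'_j)=\{i_j\}$, $X(\mathbf{P}'_j)=X(\mathbf{P})$, and ballot $\mathbf{P}'_j(i_j)=\mathbf{P}'(i_j)$. Since the voters $i_1,\dots,i_k$ are distinct and disjoint from $V(\mathbf{P})$, all the sums below are defined, and by associativity and commutativity of the $+$ operation on profiles with pairwise disjoint voter sets (immediate from Definition \ref{DisjointUnion}) we have $\mathbf{P}+\mathbf{P}'=\mathbf{P}+\mathbf{P}'_1+\cdots+\mathbf{P}'_k$.

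Next I would set $\mathbf{Q}_0=\mathbf{P}$ and $\mathbf{Q}_j=\mathbf{Q}_{j-1}+\mathbf{P}'_j$ for $1\le j\le k$, so that $\mathbf{Q}_k=\mathbf{P}+\mathbf{P}'$, and show $x\in F(\mathbf{Q}_j)$ for all $j$ by induction on $j$. The base case $j=0$ is the hypothesis $x\in F(\mathbf{P})$. For the inductive step, note that $\mathbf{Q}_j=\mathbf{Q}_{j-1}+\mathbf{P}'_j$ where $X(\mathbf{Q}_{j-1})=X(\mathbf{P}'_j)$, $V(\mathbf{Q}_{j-1})\cap V(\mathbf{P}'_j)=\varnothing$, $|V(\mathbf{P}'_j)|=1$, the unique voter of $\mathbf{P}'_j$ ranks $x$ uniquely first, and $x\in F(\mathbf{Q}_{j-1})$ by the inductive hypothesis; hence single-voter positive involvement yields $x\in F(\mathbf{Q}_j)$. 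Taking $j=k$ gives $x\in F(\mathbf{P}+\mathbf{P}')$, as desired.

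The negative involvement case is entirely symmetric: starting from $x\notin F(\mathbf{P})$ and voters each ranking $x$ uniquely last, the same telescoping $\mathbf{Q}_0,\dots,\mathbf{Q}_k$ is used, and single-voter negative involvement at each step preserves $x\notin F(\mathbf{Q}_j)$, so $x\notin F(\mathbf{P}+\mathbf{P}')$. I do not expect any genuine obstacle in this lemma; the only points requiring a moment's care are (i) verifying that every profile sum appearing in the argument is well-defined because the relevant voter sets are pairwise disjoint, and (ii) recording the associativity and commutativity of $+$ in this setting so that adding the new voters one at a time indeed reconstructs $\mathbf{P}+\mathbf{P}'$. Both are straightforward consequences of Definition \ref{DisjointUnion}.
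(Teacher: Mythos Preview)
Your proof is correct and is exactly the same approach as the paper's: add the coalition's voters one at a time and apply the single-voter hypothesis at each step, noting that membership (resp.\ non-membership) of $x$ in the winning set is preserved throughout. You have simply spelled out in detail what the paper states in two sentences.
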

\begin{proof} To prove the properties for a coalition of more than one voter, add each voter in the coalition one at a time, applying positive (resp.~negative involvement) at each step. This can be iterated because the property of $x$ belonging to (resp.~not belonging to) the winning set is preserved at each step.\end{proof}

\begin{remark} It is important to distinguish positive and negative involvement from the \textit{participation} criterion (recall Section \ref{NoShowSection}), which we discuss further in Appendix \ref{ParticipationAppendix}. It is also important that positive (resp.~negative) involvement applies only when adding a voter for whom $x$ is their \textit{unique} favorite (resp.~least favorite) candidate. One may consider a related criterion concerning voters for whom $x$ is \textit{among} their favorite (resp.~least favorite) candidates (see \citealt{Duddy2014}). But we see no problem with the addition of voters who rank $x$ and $y$ as tied changing the winner of an election with majority cycles from $x$ to $y$, given how the new voters change $x$'s and $y$'s pairwise performance against other candidates.\end{remark}

None of Beat Path, Ranked Pairs, Copeland, GETCHA/GOCHA, or Uncovered Set satisfies positive or negative involvement. The failure of positive and negative involvement has been called ``a common flaw in Condorcet voting correspondences'' (\citealt{Perez2001}). However, Split Cycle does not have this flaw.

\begin{proposition} Split Cycle satisfies positive and negative involvement.
\end{proposition}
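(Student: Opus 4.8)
The plan is to prove positive and negative involvement directly from the characterization of defeat in terms of margins and cycle numbers (Lemmas \ref{OnlySomeCycles} and \ref{CycleNumLem}), tracking how adding a single voter changes $Margin$ and $Cycle\#$. By Lemma \ref{MoreThanOne} it suffices to handle the single-voter case. So fix a profile $\mathbf{P}$, let $\mathbf{P}'$ be a one-voter profile with $V(\mathbf{P})\cap V(\mathbf{P}')=\varnothing$, let $i$ be its sole voter, and set $\mathbf{Q}=\mathbf{P}+\mathbf{P}'$. The first observation is the key margin bookkeeping: for every pair $c,d\in X(\mathbf{P})$ we have $Margin_\mathbf{Q}(c,d)=Margin_\mathbf{P}(c,d)\pm 1$, with $+1$ exactly when $c\mathbf{P}'_i d$ and $-1$ exactly when $d\mathbf{P}'_i c$. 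In particular, since all margins in a fixed profile have the same parity, $a$ defeats $b$ in $\mathbf{P}$ means $Margin_\mathbf{P}(a,b)\geq Cycle\#_\mathbf{P}(a,b)+2$; this ``slack of $2$'' is what a single voter (who can move any margin by at most $1$) cannot overcome in the wrong direction, and it will be used repeatedly.

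\textbf{Positive involvement.} Here $x$ is the unique favorite of voter $i$, so for every $y\in X(\mathbf{P})\setminus\{x\}$ we have $x\mathbf{P}'_i y$, hence $Margin_\mathbf{Q}(x,y)=Margin_\mathbf{P}(x,y)+1$ and $Margin_\mathbf{Q}(y,x)=Margin_\mathbf{P}(y,x)-1$. We must show that no $y$ defeats $x$ in $\mathbf{Q}$, given that no $y$ defeats $x$ in $\mathbf{P}$, i.e.\ $Margin_\mathbf{P}(y,x)\leq Cycle\#_\mathbf{P}(y,x)$ for all $y$. Fix $y$ with $Margin_\mathbf{Q}(y,x)>0$. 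By Lemma \ref{OnlySomeCycles} we must bound $Split\#(\rho')$ for simple cycles $\rho'$ in $\mathcal{M}(\mathbf{Q})$ of the form $y\to x\to z_1\to\dots\to z_n\to y$. Every edge of $\rho'$ \emph{other than} the $y\to x$ edge and the incoming edge $z_n\to y$ is an edge \emph{not touching $x$}, and among those, all edges out of $x$ — in particular $x\to z_1$ — have margin increased by $1$ in passing from $\mathbf{P}$ to $\mathbf{Q}$ while all other non-$x$ edges are unchanged; meanwhile the edge $y\to x$ had its margin \emph{decreased} by $1$. The plan is: take such a $\rho'$, look at the corresponding sequence in $\mathcal{M}(\mathbf{P})$; since $Margin_\mathbf{Q}(y,x)=Margin_\mathbf{P}(y,x)-1$ and $Margin_\mathbf{P}(y,x)\le Cycle\#_\mathbf{P}(y,x)$, I will show the margin $Margin_\mathbf{Q}(y,x)$ is still at most $Split\#(\rho')$ by finding, inside the cycle, an edge whose $\mathbf{Q}$-margin is at least $Margin_\mathbf{Q}(y,x)+$ nothing — more precisely, by comparing to a witness cycle in $\mathbf{P}$ for $Cycle\#_\mathbf{P}(y,x)$ and showing every edge on it drops by at most what $y\to x$ drops. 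The cleanest version: the edge $x\to z_1$ only goes \emph{up}, edges not at $x$ stay the same, so $Split\#_\mathbf{Q}(\rho')\geq Split\#_\mathbf{P}(\rho')-1$ where $\rho'$ viewed in $\mathbf{P}$ includes $y\overset{Margin_\mathbf{P}(y,x)}{\to}x\to\cdots$; since that $\mathbf{P}$-cycle witnesses $Split\#_\mathbf{P}(\rho')\le Cycle\#_\mathbf{P}(y,x)$... this needs care because the splitting number could be achieved \emph{on} the $y\to x$ edge. That is the crux and I address it below.

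\textbf{Negative involvement.} Here $x$ is the unique least-favorite of voter $i$, so $Margin_\mathbf{Q}(z,x)=Margin_\mathbf{P}(z,x)+1$ and $Margin_\mathbf{Q}(x,z)=Margin_\mathbf{P}(x,z)-1$ for all $z\ne x$. Since $x\notin SC(\mathbf{P})$, by Lemma \ref{BeatPathFromWinner} there is a $w\in SC(\mathbf{P})$ and a defeat path $w=y_1 D y_2 D\dots D y_m = x$ in $\mathbf{P}$; in particular some candidate $z:=y_{m-1}$ defeats $x$ in $\mathbf{P}$ with $Margin_\mathbf{P}(z,x)\geq Cycle\#_\mathbf{P}(z,x)+2$. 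I will show $z$ still defeats $x$ in $\mathbf{Q}$. We have $Margin_\mathbf{Q}(z,x)=Margin_\mathbf{P}(z,x)+1$, which goes up, so it suffices to show $Cycle\#_\mathbf{Q}(z,x)\leq Cycle\#_\mathbf{P}(z,x)+1$: any simple cycle $z\to x\to u_1\to\dots\to u_n\to z$ in $\mathcal{M}(\mathbf{Q})$ has its edge $x\to u_1$ with $\mathbf{Q}$-margin $=$ its $\mathbf{P}$-margin $-1$ (that is the only edge out of $x$, so the only one that dropped), all other edges either equal to or greater than their $\mathbf{P}$-values, and in particular the incoming edge $u_n\to z$ and the edge $z\to x$ itself went \emph{up}; so $Split\#_\mathbf{Q}\leq Split\#_\mathbf{P}(\text{same cycle in }\mathbf{P})+0$ when the min is attained off $x\to u_1$ and $\leq (\text{that }\mathbf{P}\text{-value})$ otherwise since $x\to u_1$'s $\mathbf{P}$-value was larger. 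Either way $Split\#_\mathbf{Q}(\rho)\leq Cycle\#_\mathbf{P}(z,x)+1 < Margin_\mathbf{P}(z,x)+1 = Margin_\mathbf{Q}(z,x)$, so $z$ defeats $x$ in $\mathbf{Q}$, hence $x\notin SC(\mathbf{Q})$.

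\textbf{Main obstacle.} The delicate point, in both cases, is the interaction between the edge incident to $x$ whose margin \emph{decreases} and the splitting number of a cycle through that edge: a priori the minimum margin along a $\mathbf{Q}$-cycle could be realized precisely on that decreased edge, seemingly costing a full $-1$ that then must be absorbed by the ``slack of $2$.'' For positive involvement the decreased edge is $y\to x$ itself (not an interior edge of the cycle used for $Cycle\#$), and the subtlety is that its $\mathbf{Q}$-margin, though decreased, is what we are trying to bound by $Split\#$; the resolution is that we only care whether $Margin_\mathbf{Q}(y,x)>Split\#_\mathbf{Q}(\rho')$, and since in $\mathbf{P}$ we had $Margin_\mathbf{P}(y,x)\le Cycle\#_\mathbf{P}(y,x)$, the relevant interior edges of a $\mathbf{P}$-witness cycle never went \emph{down} (the only interior edge touching $x$ is $x\to z_1$, which went up), so $Split\#$ of the interior drops by at most $0$ while $Margin(y,x)$ dropped by $1$, preserving the non-defeat inequality. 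Making this ``only the $x$-out-edge matters and it only goes up'' argument airtight — i.e.\ carefully checking which single edge of any relevant cycle can possibly have a smaller margin in $\mathbf{Q}$ than in $\mathbf{P}$, and confirming it is never one that hurts us — is the real content; once that is nailed down, both proofs are a two-line inequality chase, and the full details are deferred to the appendix.
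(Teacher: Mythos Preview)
Your overall strategy matches the paper's: track how adding one voter perturbs $Margin$ and $Cycle\#$ and check that the defeat inequality $Margin>Cycle\#$ is preserved in the right direction. But your execution contains a recurring factual error.

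You write, in the positive case, that ``all other non-$x$ edges are unchanged,'' and in the negative case that $x\to u_1$ is ``the only one that dropped'' so ``all other edges [are] either equal to or greater than their $\mathbf{P}$-values.'' This is false. The new voter submits a \emph{full linear order}, so \emph{every} margin changes by exactly $\pm 1$, not only those involving $x$. An interior edge $z_j\to z_{j+1}$ of your cycle can perfectly well drop by $1$; your edge-by-edge bookkeeping, and the conclusion that ``$Split\#$ of the interior drops by at most $0$,'' does not stand.

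The good news is that you need far less than you are trying to prove. Because every margin moves by at most $1$, the splitting number of any fixed candidate sequence moves by at most $1$, and hence $Cycle\#$ moves by at most $1$ in either direction (the boundary case where an edge disappears forces the relevant splitting number to be at most $1$, which is harmless). That single observation is the paper's whole proof: for positive involvement, $Margin_{\mathbf{Q}}(z,x)=Margin_{\mathbf{P}}(z,x)-1$ while $Cycle\#_{\mathbf{Q}}(z,x)\geq Cycle\#_{\mathbf{P}}(z,x)-1$, so a defeat of $x$ in $\mathbf{Q}$ forces one in $\mathbf{P}$; for negative involvement, $Margin_{\mathbf{Q}}(z,x)=Margin_{\mathbf{P}}(z,x)+1$ while $Cycle\#_{\mathbf{Q}}(z,x)\leq Cycle\#_{\mathbf{P}}(z,x)+1$, so a defeat of $x$ in $\mathbf{P}$ persists in $\mathbf{Q}$. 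The parity ``slack of $2$,'' the invocation of Lemma~\ref{BeatPathFromWinner}, and the attempt to identify exactly which edge realizes the minimum are all unnecessary once you stop insisting that interior edges are fixed.
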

\begin{proof}
First, consider positive involvement. We prove the contrapositive. Suppose $x\not\in SC(\mathbf{P}+\mathbf{P}')$. Hence there is a $z\in X(\mathbf{P})$ that defeats $x$ in $\mathbf{P}+\mathbf{P}'$, i.e., such that 
\begin{equation}Margin_{\mathbf{P}+\mathbf{P}'}(z,x)>Cycle\#_{\mathbf{P}+\mathbf{P}'}(z,x).\label{ParticipationEq1}\end{equation}
Since $\vert V(\mathbf{P}')\vert =1$, we have 
\begin{equation}Cycle\#_{\mathbf{P}+\mathbf{P}'}(z,x)\geq Cycle\#_{\mathbf{P}}(z,x)-1,\end{equation} and since
$x\mathbf{P}_i'z$, we have 
\begin{equation}Margin_{\mathbf{P}+\mathbf{P}'}(z,x)=Margin_{\mathbf{P}}(z,x)-1.\label{ParticipationEq3}\end{equation}
It follows from (\ref{ParticipationEq1})--(\ref{ParticipationEq3}) that
\[Margin_{\mathbf{P}}(z,x)>Cycle\#_{\mathbf{P}}(z,x),\]
so $x\not\in SC(\mathbf{P})$.

Next, consider negative involvement. Suppose $x\not\in F(\mathbf{P})$. Hence there is a $z\in X(\mathbf{P})$ that defeats $x$ in $\mathbf{P}$, i.e., such that 
\begin{equation}Margin_{\mathbf{P}}(z,x)>Cycle\#_{\mathbf{P}}(z,x).\label{ParticipationEq1'}\end{equation}
Since $\vert V(\mathbf{P}')\vert =1$, we have 
\begin{equation} Cycle\#_{\mathbf{P}+\mathbf{P}'}(z,x)\leq Cycle\#_{\mathbf{P}}(z,x)+1,\end{equation} and since
$z\mathbf{P}_i'x$, we have 
\begin{equation}Margin_{\mathbf{P}+\mathbf{P}'}(z,x)=Margin_{\mathbf{P}}(z,x)+1.\label{ParticipationEq3'}\end{equation}
It follows from (\ref{ParticipationEq1'})--(\ref{ParticipationEq3'}) that
\[Margin_{\mathbf{P}+\mathbf{P}'}(z,x)>Cycle\#_{\mathbf{P}+\mathbf{P}'}(z,x),\]
so $x\not\in SC(\mathbf{P}+\mathbf{P}')$.\end{proof}

Thus, with Split Cycle the strong no show paradox discussed in Section \ref{NoShowSection} is impossible.

\section{Conclusion}\label{Conclusion}

In this paper, we have proposed the Split Cycle voting method, which can be distinguished from all methods we know of in any of the following three ways:
\begin{itemize}
\item Only Split Cycle satisfies independence of clones, positive involvement, and at least one of  Condorcet consistency, monotonicity, and immunity to spoilers.
\item Only Split Cycle satisfies independence of clones and negative involvement.
\item Only Split Cycle satisfies independence of clones, immunity to spoilers, and rejectability.
\end{itemize}
Moreover, Split Cycle can be motivated by the three key ideas of Section \ref{SplitCycleSection}:
\begin{enumerate}
\item Group incoherence raises the threshold for defeat, but not infinitely.
\item Incoherence can be localized.
\item Defeat is direct.
\end{enumerate}
We think the third idea is especially important for justifying election outcomes to supporters of a candidate who was not among the winners of the election. To try to explain to supporters of a candidate $x$ that the reason $x$ is not among the winners is that another candidate $y$ ``defeated'' $x$ \textit{even though a majority of voters prefer $x$ to $y$} (as is possible with the Beat Path voting method, for example) seems a recipe for complaints of illegitimacy and resulting social instability.

There are several natural next steps for future research. For theoretical purposes, it would be desirable to have a set of axioms that single out Split Cycle not only from known voting methods but from all possible voting methods, providing a complete axiomatic characterization of Split Cycle (see \citealt{HP2021} and \citealt{Ding2022}). For both theoretical and applied purposes, it would be desirable to have a more detailed quantitative analysis of how Split Cycle performs on profiles with various numbers of candidates and voters. The code we are making available  (recall Remark \ref{CodeRemark}) allows any researcher to perform such analyses. Ultimately, of course, the best test of Split Cycle will come from its continued use in~practice.

\subsection*{Acknowledgements}

For helpful comments, we wish to thank Felix Brandt, Yifeng Ding, Mikayla Kelley, John Moser, Chase Norman, Dominik Peters, Markus Schulze, Warren D. Smith, Nicolaus Tideman, John~Weymark, students in the Fall 2019 seminar on Voting and Democracy at UC Berkeley,  students in the Spring 2020 seminar on Preference and Judgment Aggregation at the University of Maryland, the referees for \textit{Public Choice}, and the handling editor, Marek Kaminski. We are also grateful for feedback received from the audiences at presentations of this work in the Logic Seminar at Stanford University on November 20,~2019, the 2nd Games, Agents, and Incentives Workshop (GAIW@AAMAS 2020) on May 10, 2020, and the Online Social Choice and Welfare Seminar Series on February 2, 2021.

\appendix

\section{Independence of Clones}\label{ClonesAppendix}

In this appendix, we prove that Split Cycle satisfies independence of clones. In the following, fix a profile $\mathbf{P}$ with a set $C$ of clones and $c\in C$.  Then obviously we have the following.

\begin{lemma}\label{SameMargin} 
\begin{enumerate}
\item\label{SameMargin1} For any $a,b\in X(\mathbf{P})\setminus \{c\}$, $Margin_{\mathbf{P}}(a,b)=Margin_{\mathbf{P}_{-c}}(a,b)$.
\item\label{SameMargin2} For any $b\in X(\mathbf{P})\setminus C$ and $e\in C\setminus\{c\}$, $Margin_{\mathbf{P}}(c,b)=Margin_{\mathbf{P}_{-c}}(e,b)$.
\end{enumerate}
\end{lemma}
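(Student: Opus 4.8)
The plan is to prove both parts by directly unwinding Definition \ref{MarginGraphDef}, reducing each claim to an identity between two pairs of cardinalities of voter sets; no cycle-theoretic machinery is needed.

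For part \ref{SameMargin1}, fix $a,b\in X(\mathbf{P})\setminus\{c\}$. Since $V(\mathbf{P}_{-c})=V(\mathbf{P})$ and, for each voter $i$, $\mathbf{P}_{-c}(i)$ is by definition the restriction of $\mathbf{P}(i)$ to $X(\mathbf{P})\setminus\{c\}$, and since $a,b\neq c$, we get $a\mathbf{P}_i b$ if and only if $a\,(\mathbf{P}_{-c})_i\,b$, and likewise $b\mathbf{P}_i a$ if and only if $b\,(\mathbf{P}_{-c})_i\,a$. Hence $|\{i\in V(\mathbf{P})\mid a\mathbf{P}_i b\}|=|\{i\in V(\mathbf{P}_{-c})\mid a\,(\mathbf{P}_{-c})_i\,b\}|$ and symmetrically with $a,b$ swapped, so $Margin_\mathbf{P}(a,b)=Margin_{\mathbf{P}_{-c}}(a,b)$ by Definition \ref{MarginGraphDef}.

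For part \ref{SameMargin2}, fix $b\in X(\mathbf{P})\setminus C$ and $e\in C\setminus\{c\}$. The key step is to show $Margin_\mathbf{P}(c,b)=Margin_\mathbf{P}(e,b)$; the claim then follows by applying part \ref{SameMargin1} to the pair $(e,b)$, which is legitimate since $e\neq c$ and $b\neq c$ (as $c\in C$ and $b\notin C$). To establish $Margin_\mathbf{P}(c,b)=Margin_\mathbf{P}(e,b)$, I would appeal to Definition \ref{CloneDef}: taking the pair $c,e\in C$ and the candidate $b\in X(\mathbf{P})\setminus C$, the clause ``if $c\mathbf{P}_i x$ then $c'\mathbf{P}_i x$'' yields $c\mathbf{P}_i b\Rightarrow e\mathbf{P}_i b$, and the same clause with the roles of $c$ and $e$ exchanged yields $e\mathbf{P}_i b\Rightarrow c\mathbf{P}_i b$; hence $c\mathbf{P}_i b\iff e\mathbf{P}_i b$ for every voter $i$. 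Symmetrically, the clause ``if $x\mathbf{P}_i c$ then $x\mathbf{P}_i c'$'', used in both directions, gives $b\mathbf{P}_i c\iff b\mathbf{P}_i e$. Counting voters then gives $|\{i\mid c\mathbf{P}_i b\}|=|\{i\mid e\mathbf{P}_i b\}|$ and $|\{i\mid b\mathbf{P}_i c\}|=|\{i\mid b\mathbf{P}_i e\}|$, so $Margin_\mathbf{P}(c,b)=Margin_\mathbf{P}(e,b)$, and then $Margin_\mathbf{P}(e,b)=Margin_{\mathbf{P}_{-c}}(e,b)$ by part \ref{SameMargin1}.

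There is essentially no substantive obstacle here---this is why the paper prefaces the statement with ``obviously''---but the one point demanding care is the bookkeeping in part \ref{SameMargin2}: one must invoke the clone condition in both directions, using that Definition \ref{CloneDef} quantifies over \emph{all} pairs $c,c'\in C$ so that it applies symmetrically to $c$ and $e$, and one must verify the side conditions $e\neq c$ and $b\neq c$ before citing part \ref{SameMargin1}. Everything else is a routine translation between ballot-order statements and the cardinalities appearing in the definition of $Margin$.
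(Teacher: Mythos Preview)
Your proof is correct. The paper omits the argument entirely, simply writing ``obviously we have the following'' before the lemma, so your unwinding of Definitions \ref{MarginGraphDef} and \ref{CloneDef}---including the careful two-step decomposition of part \ref{SameMargin2} via $Margin_\mathbf{P}(c,b)=Margin_\mathbf{P}(e,b)$ followed by part \ref{SameMargin1}---is exactly the routine verification the authors leave implicit.
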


Next we show that certain cycle numbers do not change from $\mathbf{P}$ to $\mathbf{P}_{-c}$.  For this we use the following key lemma.

\begin{lemma}\label{KeyLem} For any $c_1,c_2\in C$ with $c_1\neq c_2$ and simple cycle $\rho$ in $\mathcal{M}(\mathbf{P})$ that contains $c_1$ and some non-clone, the sequence $\rho'$ obtained from $\rho$ by replacing all clones in $\rho$ by $c_2$ and then replacing any subsequence $c_2,\dots,c_2$ by $c_2$ is a simple cycle in $\mathcal{M}(\mathbf{P})$ such that $Split\#(\rho')\geq Split\#(\rho)$.\end{lemma}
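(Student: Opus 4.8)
The plan is to take a simple cycle $\rho = z_1 \to z_2 \to \dots \to z_k \to z_1$ in $\mathcal{M}(\mathbf{P})$ containing the clone $c_1$ and at least one non-clone, and to track what happens edge-by-edge when every clone appearing in $\rho$ is replaced by the single clone $c_2$. The first task is to verify that the resulting sequence $\rho'$, after collapsing any run $c_2, \dots, c_2$ down to a single $c_2$, is indeed a \emph{simple} cycle. Since $\rho$ is simple, the only repetitions among the $z_i$ are the endpoints; after the substitution, the clones in $\rho$ all become $c_2$, but the clones occupied by $\rho$ form (by simplicity, together with the cloning condition) a contiguous block along the cycle — this is the crucial structural observation, and I would prove it by noting that if two clones $c_i, c_j$ appeared non-contiguously in $\rho$, then between them on each side $\rho$ would have to pass through a non-clone, and the defining property of clones (no non-clone sits between two clones on any ballot, hence no non-clone lies ``strictly between'' clones in margin terms in the relevant sense) would force a contradiction with simplicity or with the edge structure. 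Once the clones form one contiguous arc, collapsing that arc to a single $c_2$ yields a sequence with all distinct nodes except the repeated endpoint, i.e.\ a simple cycle on the node set $(X(\mathbf{P}) \setminus \{\text{clones in }\rho\}) \cup \{c_2\}$.

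Next I would check that each edge of $\rho'$ is genuinely an edge of $\mathcal{M}(\mathbf{P})$ and compare its weight to the corresponding edge of $\rho$. There are three kinds of edges in $\rho'$. Edges between two non-clones are literally edges of $\rho$, with unchanged weight, so they contribute the same margin. The edge \emph{into} the collapsed $c_2$ comes from the edge of $\rho$ entering the clone-arc, say $x \to c_i$ where $x$ is a non-clone; by Lemma \ref{SameMargin}(\ref{SameMargin2}) (and the cloning condition, which makes every voter rank $x$ the same way relative to every clone), $Margin_{\mathbf{P}}(x, c_i) = Margin_{\mathbf{P}}(x, c_2)$, so this edge has the same weight. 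Similarly the edge \emph{out of} $c_2$ corresponds to the edge of $\rho$ leaving the clone-arc, $c_j \to y$, and again $Margin_{\mathbf{P}}(c_j, y) = Margin_{\mathbf{P}}(c_2, y)$. The only edges of $\rho$ that are \emph{discarded} in passing to $\rho'$ are the internal edges of the clone-arc, $c_i \to c_{i+1} \to \dots \to c_j$ (if the arc has length $\geq 2$). Since $Split\#$ is the minimum margin over the edges of the cycle, and every edge of $\rho'$ has the same margin as a corresponding edge of $\rho$ while $\rho'$ simply has \emph{fewer} edges (we dropped the internal clone-arc edges), the minimum over $\rho'$'s edges is taken over a subset of the margins appearing in $\rho$'s edge-multiset, hence $Split\#(\rho') \geq Split\#(\rho)$.

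The main obstacle I anticipate is the contiguity claim: rigorously establishing that the clones occurring in the simple cycle $\rho$ form a single contiguous arc, rather than several separate arcs. This is where the definition of a set of clones (Definition \ref{CloneDef}) does the real work — one must argue that if the cycle entered and exited the clone-set more than once, we could splice together a shorter cycle or contradict simplicity, using that for any non-clone $x$ and any two clones $c, c'$, the margins $Margin_{\mathbf{P}}(x,c)$ and $Margin_{\mathbf{P}}(x,c')$ (resp.\ the reverse margins) are equal, so $x$'s position relative to the clone-block is completely determined and the clones are ``interchangeable'' with respect to the rest of the graph. I expect the rest (edge weights, the $Split\#$ inequality, simplicity of $\rho'$ once contiguity is in hand) to be routine. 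A minor point to handle carefully is the degenerate case where the clone-arc has length $0$ or $1$ (i.e.\ $\rho$ passes through exactly one clone), in which case no collapsing is needed and $\rho' = \rho$ with at most a renaming of that one clone to $c_2$; the inequality is then an equality.
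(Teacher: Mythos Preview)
Your central structural claim---that the clones appearing in a simple cycle $\rho$ must form a single contiguous arc---is false, and your whole plan depends on it. Take $C=\{c_1,c_2\}$ and four non-clones $u,v,w,z$. The clone condition forces $Margin_\mathbf{P}(x,c_1)=Margin_\mathbf{P}(x,c_2)$ for each non-clone $x$ but imposes nothing further on the margin graph, so (via Debord's Theorem applied to $\{u,v,w,z,g\}$ and then expanding the placeholder $g$ into the adjacent block $c_1c_2$ on every ballot) one can build a profile in which $u\to c_i$, $w\to c_i$, $c_i\to v$, $c_i\to z$ for each $i$, together with $v\to w$ and $z\to u$. Then
\[
\rho \;=\; u\to c_1\to v\to w\to c_2\to z\to u
\]
is a simple cycle containing $c_1$ and non-clones, yet $c_1$ and $c_2$ occupy non-adjacent positions. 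Replacing both by $c_2$ leaves no consecutive $c_2$'s to collapse, and the resulting sequence $u,c_2,v,w,c_2,z,u$ is not simple. Your sketch for contiguity (``splice together a shorter cycle or contradict simplicity'') cannot be completed: nothing forbids a simple cycle from entering and leaving $C$ several times, provided a different clone is used at each visit.

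The paper's proof never attempts contiguity. It uses exactly your ``three kinds of edges'' observation---every clone--nonclone edge $a\to d$ (or $d\to a$) of $\rho$ becomes $a\to c_2$ (or $c_2\to a$) with the \emph{same} margin---and concludes directly that each edge of $\rho'$ carries a margin already present on some edge of $\rho$, giving $Split\#(\rho')\geq Split\#(\rho)$. The paper is admittedly brisk about simplicity of $\rho'$, and the example above shows that $\rho'$ as literally constructed need not be simple; but the repair does not require contiguity. Since the non-clones of $\rho$ are all distinct, the only node that can repeat in the closed walk $\rho'$ is $c_2$, so $\rho'$ decomposes into simple cycles through $c_2$, each with splitting number at least $Split\#(\rho)$; any one of them---in particular the one containing the edge needed in Lemmas~\ref{SameCycleNumber1} and~\ref{SameCycleNumber2}---does the job.
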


\begin{proof} For any $a\in X(\mathbf{P})\setminus C$ and $d\in C$, if $a\to d$ (resp.~$d\to a$) occurs in $\rho$ with margin $\alpha$ in $\mathcal{M}(\mathbf{P})$, then by the definition of a set of clones (Definition \ref{CloneDef}), we have $a\to c_2$ (resp.~$c_2\to a$) with margin $\alpha$ in $\mathcal{M}(\mathbf{P})$. It follows that $\rho'$ is a simple cycle in $\mathcal{M}(\mathbf{P})$ and that the margins between successive candidates in $\rho'$ already occurred as margins between successive candidates in $\rho$, which implies $Split\#(\rho')\geq Split\#(\rho)$.\end{proof}

\begin{lemma}\label{SameCycleNumber1} For any $a\in X(\mathbf{P})\setminus\{c\}$ and $b\in X(\mathbf{P})\setminus C $, we have $Cycle\#_\mathbf{P}(a,b)=Cycle\#_{\mathbf{P}_{-c}}(a,b)$.
\end{lemma}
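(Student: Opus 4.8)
The plan is to prove the two inequalities $Cycle\#_{\mathbf{P}_{-c}}(a,b)\le Cycle\#_\mathbf{P}(a,b)$ and $Cycle\#_\mathbf{P}(a,b)\le Cycle\#_{\mathbf{P}_{-c}}(a,b)$ separately. For the first, I would observe that $\mathcal{M}(\mathbf{P}_{-c})$ is just the restriction of $\mathcal{M}(\mathbf{P})$ to the node set $X(\mathbf{P})\setminus\{c\}$, with all relevant edge weights unchanged by Lemma~\ref{SameMargin}, part~(\ref{SameMargin1}). Hence every simple cycle of the form $a\to b\to x_1\to\dots\to x_n\to a$ in $\mathcal{M}(\mathbf{P}_{-c})$ is also such a cycle in $\mathcal{M}(\mathbf{P})$ with the same splitting number; taking the maximum over all such cycles (the $\{0\}$ term in the definition of $Cycle\#$ causing no trouble) gives the inequality.

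For the second inequality, if $Cycle\#_\mathbf{P}(a,b)=0$ there is nothing to prove, so I would fix a simple cycle $\rho$ of the form $a\to b\to x_1\to\dots\to x_n\to a$ in $\mathcal{M}(\mathbf{P})$ with $Split\#(\rho)=Cycle\#_\mathbf{P}(a,b)$ (such a $\rho$ exists since there are only finitely many simple cycles). If $c$ does not occur in $\rho$, then as above $\rho$ already lives in $\mathcal{M}(\mathbf{P}_{-c})$, with the same form and splitting number, so it witnesses $Cycle\#_{\mathbf{P}_{-c}}(a,b)\ge Split\#(\rho)$. If $c$ does occur in $\rho$, I would reroute $\rho$ off of $c$ using Lemma~\ref{KeyLem}: since $|C|\ge 2$ by Definition~\ref{CloneDef}, pick a target clone $c'\in C\setminus\{c\}$, choosing $c'=a$ in the case $a\in C\setminus\{c\}$; then apply Lemma~\ref{KeyLem} with $c_1=c$ and $c_2=c'$ to obtain a simple cycle $\rho'$ in $\mathcal{M}(\mathbf{P})$ with $Split\#(\rho')\ge Split\#(\rho)$. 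Because $b\notin C$, the node $b$ is untouched by the substitution, and because $c'$ is the only clone occurring in $\rho'$, the cycle $\rho'$ avoids $c$ and still has the form $a\to b\to x_1'\to\dots\to x_m'\to a$; hence $\rho'$ is a simple cycle of the required form in $\mathcal{M}(\mathbf{P}_{-c})$ with splitting number at least $Cycle\#_\mathbf{P}(a,b)$, which yields $Cycle\#_{\mathbf{P}_{-c}}(a,b)\ge Cycle\#_\mathbf{P}(a,b)$. Combining the two inequalities gives the claim.

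The hard part will be the bookkeeping in the last step: checking that $\rho'$ is genuinely simple and genuinely of the form $a\to b\to\dots\to a$, with the same endpoint $a$ and with $b$ still immediately following $a$, even when $\rho$ passes through clones other than $c$ so that the substitute-then-collapse operation performs a nontrivial identification of nodes. This is precisely the content of Lemma~\ref{KeyLem}, so the role of this argument is mainly to invoke that lemma with the correct choice of target clone --- in particular $c'=a$ when $a$ itself is a clone, so that the distinguished endpoint of the cycle is preserved --- and to record that the resulting cycle contains no clone other than $c'$ and therefore lies in $\mathcal{M}(\mathbf{P}_{-c})$.
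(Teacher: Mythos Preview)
Your proposal is correct and follows essentially the same approach as the paper's own proof: both directions are handled identically, with the nontrivial direction reduced to Lemma~\ref{KeyLem} via the same case split on whether $a\in C$ (choosing the target clone $c_2=a$ in that case, and any $c_2\in C\setminus\{c\}$ otherwise). The only cosmetic difference is that you fix a single cycle $\rho$ of maximal splitting number, whereas the paper phrases the argument for an arbitrary $\rho$; these are equivalent.
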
 

\begin{proof} First, observe that any simple cycle in $\mathcal{M}(\mathbf{P}_{-c})$ is also a simple cycle with the same margins in $\mathcal{M}(\mathbf{P})$. Hence $Cycle\#_\mathbf{P}(a,b)\geq Cycle\#_{\mathbf{P}_{-c}}(a,b)$.

Second, to show $Cycle\#_\mathbf{P}(a,b)\leq Cycle\#_{\mathbf{P}_{-c}}(a,b)$, it suffices to show that for every simple cycle $\rho$ in $\mathcal{M}(\mathbf{P})$ extending $a\to b$, there is a simple cycle $\rho'$ in $\mathcal{M}(\mathbf{P}_{-c})$ extending $a\to b$ such that $Split\#(\rho ')\geq Split\#(\rho)$. If $\rho$ does not contain $c$, then take $\rho'=\rho$. Suppose $\rho$ does contain $c$. Case 1: $a\in C$. Then apply Lemma \ref{KeyLem} with $c_1:=c$ and $c_2:=a$ to obtain a simple cycle $\rho'$ in $\mathcal{M}(\mathbf{P})$ extending $a\to b$, but not containing $c$, such that $Split\#(\rho ')\geq Split\#(\rho)$; since $\rho'$ does not contain $c$, it is also a simple cycle in $\mathcal{M}(\mathbf{P}_{-c})$ with the desired properties. Case 2: $a\not\in C$. Then apply Lemma \ref{KeyLem} with $c_1:=c$, $c_2\in C\setminus\{c\}$ and reason as in~Case~1.\end{proof}

\begin{lemma}\label{SameCycleNumber2} Let $d\in C$ and $e\in C\setminus\{c\}$.
\begin{enumerate}
\item\label{SameCycleNumber21} For any $b\in X(\mathbf{P})\setminus C$, $Cycle\#_\mathbf{P}(d,b)=Cycle\#_{\mathbf{P}_{-c}}(e,b)$;
\item\label{SameCycleNumber22} For any $a\in X(\mathbf{P})\setminus C$, $Cycle\#_\mathbf{P}(a,d)=Cycle\#_{\mathbf{P}_{-c}}(a,e)$.
\end{enumerate}
\end{lemma}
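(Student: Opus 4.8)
Proof proposal for Lemma~\ref{SameCycleNumber2}. The plan is to mimic the proof of Lemma~\ref{SameCycleNumber1}, now using Lemma~\ref{KeyLem} to do double duty: it simultaneously eliminates $c$ from a cycle \emph{and} relabels the distinguished clone from $d$ (or $e$) to the other one. I will prove each part by a two-directional argument transferring simple cycles between $\mathcal{M}(\mathbf{P})$ and $\mathcal{M}(\mathbf{P}_{-c})$ without decreasing their splitting numbers, and then take maxima. The two parts are mirror images of one another (in part~1 the clone occupies the second position of the relevant cycles, $d\to b\to\cdots\to d$, and in part~2 the second position the other way, $a\to d\to\cdots\to a$), so I will carry out part~1 in detail and indicate that part~2 is obtained by the symmetric argument.

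For part~1: recall that $Cycle\#_\mathbf{P}(d,b)$ is the maximum of $0$ together with the splitting numbers of all simple cycles in $\mathcal{M}(\mathbf{P})$ of the form $d\to b\to x_1\to\cdots\to x_n\to d$, and similarly for $Cycle\#_{\mathbf{P}_{-c}}(e,b)$; it thus suffices to match such cycles on the two sides. For $Cycle\#_\mathbf{P}(d,b)\leq Cycle\#_{\mathbf{P}_{-c}}(e,b)$, take any simple cycle $\rho$ of the form $d\to b\to\cdots\to d$ in $\mathcal{M}(\mathbf{P})$, and suppose first $d\neq e$. Apply Lemma~\ref{KeyLem} with $c_1:=d$ (which occurs in $\rho$) and $c_2:=e$: every clone in $\rho$ is replaced by $e$, so $c$ no longer occurs, and since $b\not\in C$ it is untouched and still immediately follows the new (now $e$-labelled) initial node; hence we obtain a simple cycle $\rho'$ of the form $e\to b\to\cdots\to e$, all of whose nodes lie in $X(\mathbf{P}_{-c})$, with $Split\#(\rho')\geq Split\#(\rho)$. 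Because margins between candidates other than $c$ are unchanged on removing $c$ (Lemma~\ref{SameMargin}), $\rho'$ is a simple cycle with the same margins in $\mathcal{M}(\mathbf{P}_{-c})$, as required. For the reverse inequality, given a simple cycle $\rho'$ of the form $e\to b\to\cdots\to e$ in $\mathcal{M}(\mathbf{P}_{-c})$ — which is automatically a simple cycle with the same margins in $\mathcal{M}(\mathbf{P})$ — apply Lemma~\ref{KeyLem} with $c_1:=e$, $c_2:=d$ (when $d\neq e$) to produce a simple cycle of the form $d\to b\to\cdots\to d$ in $\mathcal{M}(\mathbf{P})$ whose splitting number is at least $Split\#(\rho')$. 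Taking maxima in both directions gives $Cycle\#_\mathbf{P}(d,b)=Cycle\#_{\mathbf{P}_{-c}}(e,b)$.

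The only point requiring extra care is the case $d=e$, since Lemma~\ref{KeyLem} requires $c_1\neq c_2$; I expect this to be the lone (minor) obstacle, and it is handled directly. In the $\geq$ direction the case is trivial: a simple cycle $e\to b\to\cdots\to e$ in $\mathcal{M}(\mathbf{P}_{-c})$ already has the form $d\to b\to\cdots\to d$ and lies in $\mathcal{M}(\mathbf{P})$. In the $\leq$ direction, if $c$ does not occur in $\rho$ then $\rho$ itself already lies in $\mathcal{M}(\mathbf{P}_{-c})$; if $c$ does occur in $\rho$, apply Lemma~\ref{KeyLem} with $c_1:=c$ and $c_2:=e=d$ (legitimate since $c\neq e$), which again replaces all clones by $e$, removes $c$, preserves the form $e\to b\to\cdots\to e$, and does not decrease the splitting number.

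For part~2 the argument is verbatim the same after replacing cycles of the form $d\to b\to\cdots\to d$ and $e\to b\to\cdots\to e$ by cycles of the form $a\to d\to\cdots\to a$ and $a\to e\to\cdots\to a$ respectively; here $a\not\in C$ plays the role that $b\not\in C$ played above, guaranteeing that the edge out of $a$ into the clone block survives the substitution once the clone is relabelled, and Lemma~\ref{SameMargin} again transfers cycles between $\mathcal{M}(\mathbf{P})$ and $\mathcal{M}(\mathbf{P}_{-c})$. No genuinely new difficulty arises, and the $d=e$ subcase is dispatched exactly as before.
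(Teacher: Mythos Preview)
Your proposal is correct and follows essentially the same approach as the paper: in each direction, apply Lemma~\ref{KeyLem} to replace all clones in a given simple cycle by the desired clone representative, then transfer between $\mathcal{M}(\mathbf{P})$ and $\mathcal{M}(\mathbf{P}_{-c})$ using Lemma~\ref{SameMargin}. You are in fact slightly more careful than the paper's own proof: the paper invokes Lemma~\ref{KeyLem} with $c_1:=d$ and $c_2:=e$ without separately treating the case $d=e$ (which the hypotheses $d\in C$, $e\in C\setminus\{c\}$ do allow, and which would violate the $c_1\neq c_2$ requirement of Lemma~\ref{KeyLem}), whereas you correctly isolate that case and dispatch it directly.
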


\begin{proof}For part \ref{SameCycleNumber21}, for any simple cycle $\rho$ in $\mathcal{M}(\mathbf{P})$ extending $d\to b$, by Lemma \ref{KeyLem} with $c_1:=d$ and $c_2:=e$, there is a simple cycle $\rho'$ in $\mathcal{M}(\mathbf{P})$ extending $e\to b$, but not containing $c$ (since $e\in C\setminus\{c\}$), such that $Split\#(\rho')\geq Split\#(\rho)$. Since $\rho'$ does not contain $c$, it is also a simple cycle in $\mathcal{M}(\mathbf{P}_{-c})$ extending $e\to b$ with the same margins. Hence $Cycle\#_\mathbf{P}(d,b)\leq Cycle\#_{\mathbf{P}_{-c}}(e,b)$. Next, suppose $\rho$ is a simple cycle in $\mathcal{M}(\mathbf{P}_{-c})$ extending $e \to b$. Then $\rho$ is also a simple cycle in $\mathcal{M}(\mathbf{P})$ extending $e\to b$ with the same margins. Thus, by Lemma \ref{KeyLem} with $c_1:=e$ and $c_2:=d$, there is a simple cycle $\rho'$ in $\mathcal{M}(\mathbf{P})$ extending $d\to b$  such that $Split\#(\rho')\geq Split\#(\rho)$. Hence $Cycle\#_\mathbf{P}(d,b)\geq Cycle\#_{\mathbf{P}_{-c}}(e,b)$.

The proof of part \ref{SameCycleNumber22} is analogous.\end{proof}

\begin{proposition}\label{NonCloneIndependence} For any $b\in X(\mathbf{P})\setminus C$, we have $b\in SC(\mathbf{P})$ if and only if $b\in SC(\mathbf{P}_{-c})$. Hence Split Cycle is such that non-clone choice is independent of clones. 
\end{proposition}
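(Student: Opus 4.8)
The plan is to reduce everything to the cycle-number characterization of defeat. By Lemma \ref{CycleNumLem}, for any profile $\mathbf{Q}$ we have $b\in SC(\mathbf{Q})$ if and only if $Margin_\mathbf{Q}(a,b)\leq Cycle\#_\mathbf{Q}(a,b)$ for every $a\in X(\mathbf{Q})$. So it suffices to prove the contrapositive biconditional: some candidate defeats $b$ in $\mathbf{P}$ if and only if some candidate defeats $b$ in $\mathbf{P}_{-c}$. I would prove each implication by a short case split on whether the defeating candidate is the removed clone $c$ (or a sibling clone of $c$), since everything else is handled directly by the margin- and cycle-number-preservation lemmas already established, using crucially that $b\in X(\mathbf{P})\setminus C$.

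For the direction from $\mathbf{P}$ to $\mathbf{P}_{-c}$: suppose $a$ defeats $b$ in $\mathbf{P}$, i.e., $Margin_\mathbf{P}(a,b)>Cycle\#_\mathbf{P}(a,b)$. If $a\neq c$, then $a\in X(\mathbf{P}_{-c})$, and Lemma \ref{SameMargin}(\ref{SameMargin1}) gives $Margin_{\mathbf{P}_{-c}}(a,b)=Margin_\mathbf{P}(a,b)$ while Lemma \ref{SameCycleNumber1} gives $Cycle\#_{\mathbf{P}_{-c}}(a,b)=Cycle\#_\mathbf{P}(a,b)$, so $a$ still defeats $b$ in $\mathbf{P}_{-c}$. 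If $a=c$, pick any $e\in C\setminus\{c\}$ (which exists since $|C|\geq 2$); then Lemma \ref{SameMargin}(\ref{SameMargin2}) gives $Margin_{\mathbf{P}_{-c}}(e,b)=Margin_\mathbf{P}(c,b)$ and Lemma \ref{SameCycleNumber2}(\ref{SameCycleNumber21}) gives $Cycle\#_{\mathbf{P}_{-c}}(e,b)=Cycle\#_\mathbf{P}(c,b)$, so $e$ defeats $b$ in $\mathbf{P}_{-c}$. Either way $b\notin SC(\mathbf{P}_{-c})$.

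For the converse: suppose $a$ defeats $b$ in $\mathbf{P}_{-c}$. If $a\notin C$, the same two lemmas read in the other direction show $a$ defeats $b$ in $\mathbf{P}$. If $a\in C$, then in fact $a\in C\setminus\{c\}$, and Lemma \ref{SameMargin}(\ref{SameMargin2}) together with Lemma \ref{SameCycleNumber2}(\ref{SameCycleNumber21}) yield $Margin_\mathbf{P}(c,b)=Margin_{\mathbf{P}_{-c}}(a,b)>Cycle\#_{\mathbf{P}_{-c}}(a,b)=Cycle\#_\mathbf{P}(c,b)$, so $c$ defeats $b$ in $\mathbf{P}$. Either way $b\notin SC(\mathbf{P})$. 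Contraposing both implications gives $b\in SC(\mathbf{P})$ iff $b\in SC(\mathbf{P}_{-c})$, and since $\mathbf{P}$, $C$, and $c$ were arbitrary, this is precisely the statement that non-clone choice is independent of clones.

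I do not expect a real obstacle: the substantive content has been absorbed into Lemma \ref{KeyLem} and the cycle-number lemmas, so what remains is bookkeeping over the four cases (defeater equal to $c$ or not, on each of the two sides). The only point requiring a little care is that, on the $\mathbf{P}_{-c}$ side, a defeater lying in $C\setminus\{c\}$ must be compared to $c$ in $\mathbf{P}$ rather than to itself — which is exactly the shape of Lemmas \ref{SameMargin}(\ref{SameMargin2}) and \ref{SameCycleNumber2}(\ref{SameCycleNumber21}), so no additional argument is needed.
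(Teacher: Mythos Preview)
Your proof is correct and follows essentially the same approach as the paper's. One minor simplification: in your converse direction you split into the subcases $a\notin C$ and $a\in C\setminus\{c\}$, but this is unnecessary, since Lemma~\ref{SameCycleNumber1} applies to any $a\in X(\mathbf{P})\setminus\{c\}$ (not just $a\notin C$), so the paper handles the whole converse in one line without routing through $c$ as a surrogate defeater.
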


\begin{proof} Suppose $b\not\in SC(\mathbf{P}_{-c})$, so there is an $a\in X(\mathbf{P})\setminus \{c\}$ such that $a$ defeats $b$ in $\mathbf{P}_{-c}$. Then by Lemmas \ref{SameMargin}.\ref{SameMargin1} and \ref{SameCycleNumber1}, $a$ defeats $b$ in $\mathbf{P}$. Now suppose $b\not\in SC(\mathbf{P})$, so there is an $a\in X(\mathbf{P})$ such that $a$ defeats $b$ in $\mathbf{P}$. Case 1: $a\neq c$. Then by Lemmas \ref{SameMargin}.\ref{SameMargin1} and \ref{SameCycleNumber1} again, $a$ defeats $b$ in $\mathbf{P}_{-c}$. Case 2: $a=c$. Then by Lemmas \ref{SameMargin}.\ref{SameMargin2} and \ref{SameCycleNumber2}.\ref{SameCycleNumber21} with $d:=c$, each $e\in C\setminus\{c\}$ defeats $b$ in~$\mathbf{P}_{-c}$.
\end{proof}

\begin{proposition}\label{CloneIndependence} $C\cap SC(\mathbf{P})\neq\varnothing\mbox{ if and only if }C\setminus\{c\}\cap SC(\mathbf{P}_{-c})\neq\varnothing$. Hence Split Cycle is such that clone choice is independent of clones.
\end{proposition}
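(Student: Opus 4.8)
The plan is to mirror the structure of the proof of Proposition \ref{NonCloneIndependence}, working throughout with the defeat relation via Lemma \ref{CycleNumLem} and the margin/cycle-number transfer facts already established in Lemmas \ref{SameMargin}, \ref{SameCycleNumber1}, and \ref{SameCycleNumber2}. The key observation is that clones in $C$ are pairwise interchangeable from the point of view of margins against non-clones and of cycle numbers: by Lemma \ref{SameMargin}.\ref{SameMargin2} and Lemma \ref{SameCycleNumber2}.\ref{SameCycleNumber21}, for any $d\in C$, $e\in C\setminus\{c\}$, and $b\in X(\mathbf{P})\setminus C$, we have $Margin_\mathbf{P}(d,b)=Margin_{\mathbf{P}_{-c}}(e,b)$ and $Cycle\#_\mathbf{P}(d,b)=Cycle\#_{\mathbf{P}_{-c}}(e,b)$, and symmetrically with the roles of $d$ and $b$ swapped via Lemma \ref{SameCycleNumber2}.\ref{SameCycleNumber22}. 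A first easy sub-claim, which I would record up front, is that within $\mathbf{P}$ no clone defeats another clone: if $c_1,c_2\in C$ then any simple cycle of the form $c_1\to c_2\to x_1\to\dots\to x_n\to c_1$ has $Margin_\mathbf{P}(c_1,c_2)$ appearing on one of its own edges, so $Margin_\mathbf{P}(c_1,c_2)\not> Split\#$ of that cycle unless there is no such cycle at all; either way one checks directly (using the clone condition, so that the edge $c_1\to c_2$ can be ``folded'' into a shorter cycle among the rest, or is itself the minimum) that $c_1$ does not defeat $c_2$ — more simply, if $c_1$ defeats $c_2$, replacing every occurrence of a clone other than $c_1$ by $c_2$ in a witnessing cycle yields a contradiction exactly as in Lemma \ref{KeyLem}. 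Hence a clone $d\in C\cap SC(\mathbf{P})$ is undefeated iff no non-clone defeats it.

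For the forward direction, suppose $d\in C\cap SC(\mathbf{P})$; pick any $e\in C\setminus\{c\}$ and I claim $e\in SC(\mathbf{P}_{-c})$. Suppose not: then some $a\in X(\mathbf{P}_{-c})$ defeats $e$ in $\mathbf{P}_{-c}$. Either $a$ is a clone (i.e. $a\in C\setminus\{c\}$) or $a\in X(\mathbf{P})\setminus C$. If $a\in X(\mathbf{P})\setminus C$, then by Lemma \ref{SameMargin}.\ref{SameMargin2} and Lemma \ref{SameCycleNumber2}.\ref{SameCycleNumber22} (with the clone on the losing side), $Margin_\mathbf{P}(a,d)=Margin_{\mathbf{P}_{-c}}(a,e)>Cycle\#_{\mathbf{P}_{-c}}(a,e)=Cycle\#_\mathbf{P}(a,d)$, so $a$ defeats $d$ in $\mathbf{P}$, contradicting $d\in SC(\mathbf{P})$. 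If instead $a\in C\setminus\{c\}$, then $a$ is a clone defeating the clone $e$ in $\mathbf{P}_{-c}$, which is impossible by the sub-claim above applied to the profile $\mathbf{P}_{-c}$ (whose clone set is $C\setminus\{c\}$, still of size $\geq 2$ since $2\le|C|$; if $|C|=2$ this case is vacuous and one argues directly). Hence $e\in SC(\mathbf{P}_{-c})$, so $C\setminus\{c\}\cap SC(\mathbf{P}_{-c})\neq\varnothing$.

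For the converse, suppose $e\in C\setminus\{c\}$ with $e\in SC(\mathbf{P}_{-c})$; I claim $e\in SC(\mathbf{P})$ as well (so in particular $C\cap SC(\mathbf{P})\neq\varnothing$). If some $a\in X(\mathbf{P})$ defeated $e$ in $\mathbf{P}$, then $a\neq$ any clone by the sub-claim (clones don't defeat clones), so $a\in X(\mathbf{P})\setminus C$; then by Lemma \ref{SameMargin}.\ref{SameMargin2} and Lemma \ref{SameCycleNumber2}.\ref{SameCycleNumber22} again, $a$ defeats $e$ in $\mathbf{P}_{-c}$, contradiction. Combining the two directions gives the equivalence, and the final sentence of the proposition follows by Definition \ref{IndependenceOfClones}. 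The step I expect to be the main obstacle — and the one requiring care rather than routine citation — is the ``clones do not defeat clones'' sub-claim and the bookkeeping when $|C|=2$, where $\mathbf{P}_{-c}$ has only a single ``clone'' left and the clone set is technically no longer a set of clones; there one simply treats the remaining clone as an ordinary candidate and the argument goes through unchanged, but it should be stated explicitly to avoid a gap.
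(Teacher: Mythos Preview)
Your sub-claim that ``no clone defeats another clone'' is false, and this is a genuine gap that breaks both directions of your argument. Take three clones $c_1,c_2,c_3$ (with one non-clone $a$ ranked last by everyone) forming a cycle $c_1\overset{3}{\to}c_2\overset{1}{\to}c_3\overset{5}{\to}c_1$: the splitting number is $1$, so $c_3$ defeats $c_1$ and $c_1$ defeats $c_2$. Your justification---that $Margin_\mathbf{P}(c_1,c_2)$ appears as an edge of any cycle through $c_1\to c_2$, hence cannot exceed the splitting number---only yields $Margin_\mathbf{P}(c_1,c_2)\geq Split\#(\rho)$, not the strict inequality you need; and the appeal to Lemma~\ref{KeyLem} does nothing here, since replacing clones by $c_2$ in a cycle through $c_1\to c_2$ just collapses the cycle. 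More fundamentally, your forward direction tries to show that if \emph{some} $d\in C$ is a Split Cycle winner in $\mathbf{P}$ then \emph{every} $e\in C\setminus\{c\}$ is a winner in $\mathbf{P}_{-c}$, and your backward direction that the same $e$ remains a winner in $\mathbf{P}$; both of these stronger claims are false (in the example above with $c=c_3$, we have $c_1\in SC(\mathbf{P}_{-c_3})$ but $c_1\notin SC(\mathbf{P})$).

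The paper's proof avoids this by arguing the contrapositive and using a different structural fact: if $C\cap SC(\mathbf{P})=\varnothing$, then every clone is defeated, and since the defeat graph is \emph{acyclic} (Lemma~\ref{NoCycles}), the defeats of clones cannot all come from other clones---some non-clone $a$ must defeat some clone $d$. Once you have a non-clone defeater, the transfer Lemmas~\ref{SameMargin} and~\ref{SameCycleNumber2} (together with the clone condition that $a$'s margin over any clone is the same) show $a$ defeats every $e\in C\setminus\{c\}$ in $\mathbf{P}_{-c}$. The same shape works in the other direction. So the key idea you are missing is not a transfer lemma but the acyclicity of defeat, which is what lets you extract a non-clone defeater from the hypothesis that the whole clone set loses.
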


\begin{proof} Suppose $C\cap SC(\mathbf{P})=\varnothing$. Hence every clone in $C$ is defeated in $\mathbf{P}$. Since the defeat graph for $\mathbf{P}$ contains no cycles (Lemma \ref{NoCycles}), it follows that there is some $a\in X(\mathbf{P})\setminus C$ that defeats some $d\in C$ in $\mathbf{P}$. It then follows by the definition of a set of clones (Definition \ref{CloneDef}), Lemma \ref{SameMargin}.\ref{SameMargin1}, and Lemma \ref{SameCycleNumber2}.\ref{SameCycleNumber22} that $a$ defeats every $e\in C\setminus\{c\}$ in $\mathbf{P}_{-c}$. Hence $C\setminus\{c\}\cap SC(\mathbf{P}_{-c})=\varnothing$. Similarly, if $C\setminus\{c\}\cap SC(\mathbf{P}_{-c})=\varnothing$, then there is some $a\in X(\mathbf{P}_{-c})\setminus C$ that defeats some $e\in C\setminus\{c\}$ in $\mathbf{P}_{-c}$. It then follows by Definition \ref{CloneDef}, Lemma \ref{SameMargin}.\ref{SameMargin1}, and Lemma \ref{SameCycleNumber2}.\ref{SameCycleNumber22} that $a$ defeats every $d\in C$ in $\mathbf{P}$. Hence $C\cap SC(\mathbf{P})=\varnothing$.\end{proof}

\begin{theorem} Split Cycle satisfies independence of clones.
\end{theorem}
\begin{proof} By Propositions \ref{NonCloneIndependence} and \ref{CloneIndependence}.
\end{proof}

\section{Participation}\label{ParticipationAppendix}

In Sections \ref{NoShowSection} and \ref{InvolvementSection} on positive and negative involvement, we mentioned the related \textit{participation} criterion. Participation is usually stated for resolute voting methods: if $x$ is the winner in a profile, and we add to the profile a new voter who strictly prefers $x$ to $y$, then $y$ is not the winner in the resulting profile. (Note that there is no requirement that $x$ be at the top of the new voter's ballot or that $y$ be at the bottom, a point to which we return below.) When applied to irresolute voting methods, we call this ``resolute'' participation.\footnote{Several authors have investigated what could be called ``irresolute'' participation-like criteria (recall Footnote \ref{IrresNote}), where one changes the initial assumption from $F(\mathbf{P})=\{x\}$ to $x\in F(\mathbf{P})$. For example, Perez \citeyearpar{Perez2001} considers the following axiom, called \textit{VC-participation}: if $x\in F(\mathbf{P})$ and $\mathbf{P}'$ is a one-voter profile with a new voter $i$ having $x \mathbf{P}_i' y$, then $y\in F(\mathbf{P}+\mathbf{P}')$ implies $x\in F(\mathbf{P}+\mathbf{P}')$. He then observes that no Condorcet consistent voting method satisfies VC-participation. However, it is not clear that this criterion is a plausible normative requirement on a voting method. Suppose, for example, that $i$'s ranking is $zxyw$, and $i$'s joining the election results in a change from $F(\mathbf{P})=\{x,w\}$ to $F(\mathbf{P}+\mathbf{P}')=\{z,y\}$. It is not clear that we should impose a criterion that prohibits such a change, which seems to be a strict improvement from $i$'s point of view.}

\begin{definition} A voting method $F$ satisfies \textit{resolute participation} if for any  $\mathbf{P}\in\mathrm{dom}(F)$ and $\mathbf{P}'$ with $X(\mathbf{P})=X(\mathbf{P}')$, $V(\mathbf{P})\cap V(\mathbf{P}')=\varnothing$,  $\vert V(\mathbf{P}')\vert =1$, and $\mathbf{P}+\mathbf{P}'\in\mathrm{dom}(F)$, and any $x,y\in X(\mathbf{P})$, if $F(\mathbf{P})=\{x\}$ and $x\mathbf{P}_i'y$ for $i\in V(\mathbf{P}')$, then $F(\mathbf{P}+\mathbf{P}')\neq\{y\}$.
\end{definition}

It turns out that for linear profiles, Split Cycle satisfies resolute participation---but for a reason unrelated to the main idea of participation, namely that Split Cycle satisfies the following stronger property.

\begin{definition} A voting method $F$ satisfies \textit{winner continuity} if for any $\mathbf{P}\in\mathrm{dom}(F)$ and $\mathbf{P}'$ with $X(\mathbf{P})=X(\mathbf{P}')$, $V(\mathbf{P})\cap V(\mathbf{P}')=\varnothing$, and $\vert V(\mathbf{P}')\vert =1$, if $F(\mathbf{P})=\{x\}$ and $\mathbf{P}+\mathbf{P}'\in\mathrm{dom}(F)$, then $x\in F(\mathbf{P}+\mathbf{P}')$.
\end{definition}

Note, for example, that Plurality satisfies winner continuity, while the Borda voting method does not.

\begin{proposition}\label{WinnerCon} Restricted to linear profiles, Split Cycle satisfies winner continuity.
\end{proposition}

\begin{proof} Suppose $SC(\mathbf{P})=\{x\}$. Further suppose $x\not\in SC(\mathbf{P}+\mathbf{P}')$, so there is some $z\in X(\mathbf{P})$ such that 
\begin{equation}Margin_{\mathbf{P}+\mathbf{P}'}(z,x)>Cycle\#_{\mathbf{P}+\mathbf{P}'}(z,x).\label{PickUpz}\end{equation}
Since $z\not\in SC(\mathbf{P})$,  by Lemma \ref{BeatPathFromWinner} there are distinct $y_1,\dots,y_n$ with $y_1=x$ and $y_n=z$ such that $y_1Dy_{2}D\dots Dy_{n-1} D y_n$ in the defeat graph of $\mathbf{P}$.

Since $\vert V(\mathbf{P}')\vert =1$, it follows from (\ref{PickUpz}) that $Margin_{\mathbf{P}}(z,x)=0\mbox{ or }Margin_{\mathbf{P}}(z,x)>0$.

Case 1: $Margin_{\mathbf{P}}(z,x)=0$. Then since $\mathbf{P}$ is a linear profile, for each $i\in \{1,\dots,n-1\}$, $Margin_{\mathbf{P}}(y_i,y_{i+1})$ is even, and since $y_iDy_{i+1}$, it is greater than 0, so $Margin_{\mathbf{P}}(y_i,y_{i+1})\geq  2$. Since $\vert V(\mathbf{P}')\vert =1$, together $Margin_{\mathbf{P}}(z,x)=0$ and (\ref{PickUpz}) imply $Margin_{\mathbf{P}+\mathbf{P}'}(z,x)=1$. In addition, since $\vert V(\mathbf{P}')\vert =1$, from $Margin_{\mathbf{P}}(y_i,y_{i+1})\geq  2$ we have $Margin_{\mathbf{P}+\mathbf{P}'}(y_i,y_{i+1})\geq  1$. Thus, we have a simple cycle \[\rho=y_1 \overset{\gamma_1}{\longrightarrow} y_{2} \overset{\gamma_{2}}{\longrightarrow} \dots \overset{\gamma_{n-1}}{\longrightarrow} y_n \overset{\delta}{\longrightarrow} y_1\]
in the margin graph of $\mathbf{P}+\mathbf{P}'$ in which $\delta$, i.e., $Margin_{\mathbf{P}+\mathbf{P}'}(z,x)$, is not greater than any $\gamma_i$. But this contradicts (\ref{PickUpz}).

Case 2: $Margin_{\mathbf{P}}(z,x)>0$. Together with $y_1Dy_{2}D\dots Dy_{n-1} D y_n$, this means there is a simple cycle  \[\rho=y_1 \overset{\alpha_1}{\longrightarrow} y_{2} \overset{\alpha_{2}}{\longrightarrow} \dots \overset{\alpha_{n-1}}{\longrightarrow} y_n \overset{\beta}{\longrightarrow} y_1\]
in the margin graph of $\mathbf{P}$. Moreover, from $y_1Dy_{2}D\dots Dy_{n-1} D y_n$, it follows that for each $i\in \{1,\dots,n-1\}$, $\alpha_i$ is greater than the splitting number of $\rho$; hence $\beta$, i.e.,  $Margin_{\mathbf{P}}(z,x)$, is the splitting number of $\rho$. Thus, for each $i\in \{1,\dots,n-1\}$, we have $Margin_{\mathbf{P}}(y_i,y_{i+1})\geq  Margin_{\mathbf{P}}(z,x)+2$ since the parity of all margins must be the same, given that $\mathbf{P}$ is a linear profile. Since $\vert V(\mathbf{P}')\vert =1$, it follows that there is a simple cycle
 \[\rho^\star=y_1 \overset{\alpha_1^\star}{\longrightarrow} y_{2} \overset{\alpha_{2}^\star}{\longrightarrow} \dots \overset{\alpha_{n-1}^\star}{\longrightarrow} y_n \overset{\beta^\star}{\longrightarrow} y_1\]
 in the margin graph of $\mathbf{P}+\mathbf{P}'$ in which $\beta^\star$, i.e., $Margin_{\mathbf{P}+\mathbf{P}'}(z,x)$, is not greater than any $\alpha_i^\star$. But this contradicts (\ref{PickUpz}).\end{proof}

\begin{corollary} Restricted to linear profiles, Split Cycle satisfies resolute participation.
\end{corollary}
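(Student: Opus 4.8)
The plan is to derive resolute participation for Split Cycle as an easy consequence of the winner continuity and positive involvement results already established. The key observation is that resolute participation concerns two candidates $x,y$ and the hypotheses are that $SC(\mathbf{P})=\{x\}$ and the single new voter $i$ ranks $x$ above $y$; we must show $SC(\mathbf{P}+\mathbf{P}')\neq\{y\}$. First I would apply winner continuity (Proposition \ref{WinnerCon}): since $SC(\mathbf{P})=\{x\}$ and $|V(\mathbf{P}')|=1$, we get $x\in SC(\mathbf{P}+\mathbf{P}')$. Then $SC(\mathbf{P}+\mathbf{P}')$ contains $x$, and since $x\mathbf{P}_i'y$ we have $x\neq y$, so $SC(\mathbf{P}+\mathbf{P}')\neq\{y\}$. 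That is the entire argument — resolute participation follows immediately from winner continuity alone, without even needing positive involvement.

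Concretely, the write-up would be: ``Suppose $SC(\mathbf{P})=\{x\}$ and $x\mathbf{P}_i'y$ for the unique voter $i\in V(\mathbf{P}')$. By winner continuity (Proposition \ref{WinnerCon}), $x\in SC(\mathbf{P}+\mathbf{P}')$. Since $x\mathbf{P}_i'y$, we have $x\neq y$, so $\{y\}\neq SC(\mathbf{P}+\mathbf{P}')$.'' I would double-check that the statement of resolute participation as given does indeed only require $\mathbf{P}'$ to be a one-voter profile (it does: $|V(\mathbf{P}')|=1$) and that winner continuity's hypotheses match ($X(\mathbf{P})=X(\mathbf{P}')$, $V(\mathbf{P})\cap V(\mathbf{P}')=\varnothing$, $|V(\mathbf{P}')|=1$ — all present).

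There is essentially no obstacle here; the only thing to be careful about is to notice that we do not need the full strength of positive involvement (which would handle the case where $x$ is merely \emph{among} the winners), because resolute participation assumes $x$ is the \emph{unique} winner, which is exactly the hypothesis of winner continuity. So the corollary is a one-line deduction from Proposition \ref{WinnerCon}. If one wanted a self-contained alternative not invoking winner continuity, one could instead argue directly: if $y$ were the unique winner of $\mathbf{P}+\mathbf{P}'$, then removing voter $i$ (who ranks $x$ above $y$) would, by an argument dual to positive involvement applied to $x$ or via the margin/cycle-number bookkeeping used throughout Section \ref{InvolvementSection}, force $x$ out of $SC(\mathbf{P})$, contradicting $SC(\mathbf{P})=\{x\}$ — but the winner continuity route is cleaner and I would present that one.
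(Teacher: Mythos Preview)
Your proposal is correct and takes essentially the same approach as the paper, which simply says ``Immediate from Proposition \ref{WinnerCon}.'' You have merely spelled out the one-line deduction that the paper leaves implicit.
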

\begin{proof} Immediate from Proposition \ref{WinnerCon}.
\end{proof}

While positive and negative involvement entail the analogous coalitional properties (recall Lemma \ref{MoreThanOne}), resolute participation does not entail the analogous coalitional property.

\begin{definition} A voting method $F$ satisfies \textit{resolute coalitional participation} if for any  $\mathbf{P}\in\mathrm{dom}(F)$ and $\mathbf{P}'$ with $X(\mathbf{P})=X(\mathbf{P}')$, $V(\mathbf{P})\cap V(\mathbf{P}')=\varnothing$, and $\mathbf{P}+\mathbf{P}'\in\mathrm{dom}(F)$, and any $x,y\in X(\mathbf{P})$, if $F(\mathbf{P})=\{x\}$ and $x\mathbf{P}_i'y$ for all $i\in V(\mathbf{P}')$, then $F(\mathbf{P}+\mathbf{P}')\neq\{y\}$.
\end{definition}

\begin{proposition}\label{ResCoalPart} $\,$
\begin{enumerate}
\item\label{ResCoalPart1} Split Cycle does not satisfy resolute participation on strict weak order profiles.
\item\label{ResCoalPart2} Split Cycle does not satisfy resolute coalitional participation even on linear profiles.
\end{enumerate}
\end{proposition}

\begin{proof} For part \ref{ResCoalPart1}, by Debord's Theorem, there is a strict weak order profile $\mathbf{P}$ whose margin graph is shown on the left below:

\begin{center}
\begin{minipage}{2in}\begin{tikzpicture}

\node[circle,draw, minimum width=0.25in] at (0,0) (c) {$c$}; 
\node[circle,draw,minimum width=0.25in] at (3,0) (a) {$a$}; 
\node[circle,draw,minimum width=0.25in] at (1.5,1.5) (b) {$b$}; 
\node[circle,draw,minimum width=0.25in] at (1.5,-1.5) (d) {$d$}; 

\path[->,draw,thick] (b) to node[fill=white] {$2$} (a);
\path[->,draw,thick] (d) to node[fill=white] {$3$} (c);
\path[->,draw,thick] (c) to node[fill=white] {$1$} (b);

\path[->,draw,thick] (a) to node[fill=white] {$2$} (d);

  \end{tikzpicture}
\end{minipage}\begin{minipage}{2in}\begin{tikzpicture}

\node[circle,draw, minimum width=0.25in] at (0,0) (c) {$c$}; 
\node[circle,draw,minimum width=0.25in] at (3,0) (a) {$a$}; 
\node[circle,draw,minimum width=0.25in] at (1.5,1.5) (b) {$b$}; 
\node[circle,draw,minimum width=0.25in] at (1.5,-1.5) (d) {$d$}; 

\path[->,draw,thick] (b) to node[fill=white] {$3$} (a);
\path[->,draw,thick] (d) to node[fill=white] {$2$} (c);
\path[->,draw,thick] (c) to node[fill=white] {$2$} (b);
\path[->,draw,thick] (c) to (a);
\path[->,draw,thick] (a) to node[fill=white] {$1$} (d);
\path[->,draw,thick] (b) to (d);

\node[fill=white] at (1.5,-.5)  {$1$}; 
\node[fill=white] at (2,0)  {$1$}; 

  \end{tikzpicture}
\end{minipage}
\end{center}
On the right, we show the margin graph of the profile $\mathbf{P}+\mathbf{P}'$ where $\mathbf{P}'$ is a one-voter profile whose voter has $c\mathbf{P}_i'b\mathbf{P}_i'd\mathbf{P}_i'a$. Although $b\mathbf{P}'_id$, we go from $SC(\mathbf{P})=\{b\}$ to $SC(\mathbf{P}+\mathbf{P}')=\{d\}$.

For part \ref{ResCoalPart2}, by Debord's Theorem, there is a linear profile $\mathbf{P}$ whose margin graph is shown on the left below:

\begin{center}
\begin{minipage}{2in}\begin{tikzpicture}

\node[circle,draw, minimum width=0.25in] at (0,0) (c) {$c$}; 
\node[circle,draw,minimum width=0.25in] at (3,0) (a) {$a$}; 
\node[circle,draw,minimum width=0.25in] at (1.5,1.5) (b) {$b$}; 
\node[circle,draw,minimum width=0.25in] at (1.5,-1.5) (d) {$d$}; 

\path[->,draw,thick] (b) to node[fill=white] {$3$} (a);
\path[->,draw,thick] (d) to node[fill=white] {$5$} (c);
\path[->,draw,thick] (c) to node[fill=white] {$1$} (b);
\path[->,draw,thick] (a) to (c);
\path[->,draw,thick] (a) to node[fill=white] {$3$} (d);
\path[->,draw,thick] (d) to (b);

\node[fill=white] at (1.5,.5)  {$1$}; 
\node[fill=white] at (1,0)  {$1$}; 

  \end{tikzpicture}
\end{minipage}\begin{minipage}{2in}\begin{tikzpicture}

\node[circle,draw, minimum width=0.25in] at (0,0) (c) {$c$}; 
\node[circle,draw,minimum width=0.25in] at (3,0) (a) {$a$}; 
\node[circle,draw,minimum width=0.25in] at (1.5,1.5) (b) {$b$}; 
\node[circle,draw,minimum width=0.25in] at (1.5,-1.5) (d) {$d$}; 

\path[->,draw,thick] (b) to node[fill=white] {$5$} (a);
\path[->,draw,thick] (d) to node[fill=white] {$3$} (c);
\path[->,draw,thick] (c) to node[fill=white] {$3$} (b);
\path[->,draw,thick] (c) to (a);
\path[->,draw,thick] (a) to node[fill=white] {$1$} (d);
\path[->,draw,thick] (b) to (d);

\node[fill=white] at (1.5,-.5)  {$1$}; 
\node[fill=white] at (2,0)  {$1$}; 

  \end{tikzpicture}
\end{minipage}
\end{center}
On the right, we show the margin graph of the profile $\mathbf{P}+\mathbf{P}'$ where $\mathbf{P}'$ is a two-voter profile whose two voters both have $c\mathbf{P}_i'b\mathbf{P}_i'd\mathbf{P}_i'a$. Although both voters have $b\mathbf{P}'_id$, we go from $SC(\mathbf{P})=\{b\}$ to $SC(\mathbf{P}+\mathbf{P}')=\{d\}$.\end{proof}

In our view, the examples in the proof of Proposition \ref{ResCoalPart} show that participation is too strong to require. Its violation can be rationalized as follows. In $\mathbf{P}$, $d$ is defeated by $a$; yet with the new voter(s) having $d\mathbf{P}_i'a$, $d$ is no longer defeated by $a$ (or anyone else) in $\mathbf{P}+\mathbf{P}'$. In $\mathbf{P}$, $b$ is not defeated by $c$ (or anyone else); yet with the new voter(s) having $c\mathbf{P}_i'b$, $b$ becomes defeated by $c$ in $\mathbf{P}+\mathbf{P}'$. In short, the new voters help $d$ against its main threat, $a$, and hurt $b$ against its main threat, $c$, resulting in the change of the winning set from $\{b\}$ to $\{d\}$. It does not matter, in this case, that the new voters help $b$ against $d$, because $b$ and $d$ do not threaten to defeat each other in the presence of the cycles. 

In the example used in the proof of Proposition \ref{ResCoalPart}.\ref{ResCoalPart2}, there is a powerful symmetry argument: if $b$ is the unique winner for the margin graph on the left above, then $d$ must be the unique winner for the margin graph on the right above, assuming a neutrality property for margin graphs---that the names assigned to nodes do not matter---satisfied by Split Cycle (and the methods in Appendices~\ref{RankedPairsAppendix}-\ref{UncoveredAppendix}).

\begin{definition} A voting method $F$ satisfies \textit{margin graph neutrality} if for any profiles $\mathbf{P}$ and $\mathbf{P}'$, if there is a weighted directed graph isomorphism $h: \mathcal{M}(\mathbf{P})\to \mathcal{M}(\mathbf{P}')$, then $F(\mathbf{P}')=h[F(\mathbf{P})]$.
\end{definition}
\noindent For example, the map $c\mapsto a$, $a\mapsto c$, $b\mapsto d$, $d\mapsto b$ is a weighted directed graph isomorphism from the margin graph on the left in the proof of Proposition \ref{ResCoalPart}.\ref{ResCoalPart2} to the margin graph on the right (imagine turning the margin graph on the left $180^\circ$---then it matches the margin graph on the right except for the names of nodes). Despite the fact that the right margin graph is obtained from the left margin graph by adding two voters who rank $b$ over $d$, candidate $b$ on the left and candidate $d$ on the right are in isomorphic situations. Thus, if $b$ is the winner on the left, $d$ must be the winner on the right by margin graph neutrality.

Finally, note that the phenomenon with $b$ and $d$ above can happen only when $b$ and $d$ are in a cycle. Indeed, we have the following version of participation when the two relevant candidates are cycle free.

\begin{proposition}  For any profiles $\mathbf{P}$ and $\mathbf{P}'$ with $X(\mathbf{P})=X(\mathbf{P}')$ and $V(\mathbf{P})\cap V(\mathbf{P}')=\varnothing$ and any $x,y\in X$, if $x\in SC(\mathbf{P})$ and $x\mathbf{P}_i'y$ for all $i\in V(\mathbf{P}')$, and there is no cycle in $\mathcal{M}(\mathbf{P})$ or $\mathcal{M}(\mathbf{P}+\mathbf{P}')$ containing $x$ and $y$, then $y\not \in SC(\mathbf{P}+\mathbf{P}')$.
\end{proposition}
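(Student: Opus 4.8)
The plan is to prove the statement directly by exhibiting a single candidate that defeats $y$ in $\mathbf{P}+\mathbf{P}'$, namely $x$ itself. First I would record the consequence of the two ``no cycle'' hypotheses: since the cycle number $Cycle\#_{\mathbf{P}}(a,b)$ is by definition the maximum of $0$ together with the splitting numbers of the simple cycles of the form $a\to b\to\dots\to a$ in $\mathcal{M}(\mathbf{P})$, the absence of any cycle in $\mathcal{M}(\mathbf{P})$ through $x$ and $y$ forces $Cycle\#_{\mathbf{P}}(x,y)=Cycle\#_{\mathbf{P}}(y,x)=0$, and likewise the absence of any cycle in $\mathcal{M}(\mathbf{P}+\mathbf{P}')$ through $x$ and $y$ forces $Cycle\#_{\mathbf{P}+\mathbf{P}'}(x,y)=0$. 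Thus, by Lemma~\ref{CycleNumLem}, the question of whether $x$ defeats $y$ (or $y$ defeats $x$) in either profile reduces to the sign of the direct margin between $x$ and $y$.

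Next I would use the hypothesis $x\in SC(\mathbf{P})$: since $y$ does not defeat $x$ in $\mathbf{P}$ and $Cycle\#_{\mathbf{P}}(y,x)=0$, Lemma~\ref{CycleNumLem} gives $Margin_{\mathbf{P}}(y,x)\le 0$, i.e.\ $Margin_{\mathbf{P}}(x,y)\ge 0$. Now add the new voters: because every $i\in V(\mathbf{P}')$ has $x\mathbf{P}_i'y$, we have $Margin_{\mathbf{P}+\mathbf{P}'}(x,y)=Margin_{\mathbf{P}}(x,y)+|V(\mathbf{P}')|\ge |V(\mathbf{P}')|\ge 1>0$, using that every profile has a nonempty voter set. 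Combining $Margin_{\mathbf{P}+\mathbf{P}'}(x,y)>0$ with $Cycle\#_{\mathbf{P}+\mathbf{P}'}(x,y)=0$ and Lemma~\ref{CycleNumLem}, we conclude that $x$ defeats $y$ in $\mathbf{P}+\mathbf{P}'$, hence $y\notin SC(\mathbf{P}+\mathbf{P}')$.

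There is no serious obstacle here; the argument is essentially bookkeeping. The only points requiring care are (i) justifying that ``no cycle containing $x$ and $y$'' kills the relevant cycle numbers, which is immediate from the definition of $Cycle\#$ (one could equivalently invoke Lemma~\ref{OnlySomeCycles} to phrase defeat in terms of all simple cycles through $x$ and $y$), and (ii) noting $|V(\mathbf{P}')|\ge 1$, which in particular guarantees $x\neq y$ since some voter strictly ranks $x$ above $y$. It is worth remarking in the write-up why the cycle-freeness hypothesis is essential: without it, adding voters who rank $x$ above $y$ can simultaneously help $y$ against a different threat and hurt $x$ against a different threat, which is precisely the phenomenon exhibited in the proof of Proposition~\ref{ResCoalPart}.
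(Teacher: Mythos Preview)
Your proof is correct and follows essentially the same approach as the paper's own proof: use cycle-freeness in $\mathcal{M}(\mathbf{P})$ together with $x\in SC(\mathbf{P})$ to deduce $Margin_\mathbf{P}(x,y)\ge 0$, then use that all new voters rank $x$ over $y$ to get $Margin_{\mathbf{P}+\mathbf{P}'}(x,y)>0$, and finally invoke cycle-freeness in $\mathcal{M}(\mathbf{P}+\mathbf{P}')$ to conclude that $x$ defeats $y$. Your version is just a bit more explicit in unpacking the cycle numbers via Lemma~\ref{CycleNumLem} and in noting $|V(\mathbf{P}')|\ge 1$, but the argument is the same.
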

\begin{proof} Since $x\in SC(\mathbf{P})$, $y$ does not defeat $x$ in $\mathbf{P}$. Since there is no cycle in $\mathcal{M}(\mathbf{P})$ involving $x$ and $y$, it follows that $Margin_\mathbf{P}(x,y)\geq 0$. Hence $Margin_{\mathbf{P}+\mathbf{P}'}(x,y)> 0$, and by hypothesis, there is no cycle involving in $\mathcal{M}(\mathbf{P}+\mathbf{P}')$. Hence $x$ defeats $y$ in $\mathbf{P}+\mathbf{P}'$, so $y\not\in SC(\mathbf{P}+\mathbf{P}')$.\end{proof}

 \section{Other Methods}\label{OtherMethodsAppendix}

In this appendix, we give definitions of the voting methods in Figure \ref{AxiomTable} other than Split Cycle, as well citations or proofs for the claims about their properties in the figure. Typically it is assumed that the domain of these voting methods is the domain of linear profiles, though many properties continue to hold for the domain of strict weak order profiles. To avoid repetition, we note the following: anonymity and neutrality for each method are immediate from the definitions;  Pareto is obvious for all methods except GETCHA/GOCHA, for which we give an example violation; expansion consistency implies strong stability for winners, and stability for winners implies immunity to spoilers and immunity to stealers. Finally, additional examples of violations of positive involvement can be found in \citealt[Appendix A]{HP2021b}.

\subsection{Ranked Pairs (\citealt{Tideman1987})}\label{RankedPairsAppendix}
Let $\mathbf{P}$ be a profile and $T$ a linear order on the set $X(\mathbf{P})\times X(\mathbf{P})$ of pairs of candidates (the tiebreaking ordering). We say that a pair $(x,y)$ of candidates has a \textit{higher priority} than a pair $(x',y')$ of candidates using the tiebreaking ordering $T$ when either  $Margin_\mathbf{P}(x,y) > Margin_\mathbf{P}(x',y')$ or $Margin_\mathbf{P}(x,y) = Margin_\mathbf{P}(x',y')$ and $(x,y)\mathrel{T} (x',y')$.  Given a profile $\mathbf{P}$ and a tiebreaking ordering $T$ of $X(\mathbf{P})\times X(\mathbf{P})$, we construct a \textit{Ranked Pairs ranking} $\succ_{\mathbf{P},T}$ of $X(\mathbf{P})$ according to the following procedure: 
\begin{enumerate}
\item Initialize $\succ_{\mathbf{P},T}$ to $\varnothing$.
\item If all pairs $(x,y)$ with $x\neq y$ and $Margin_\mathbf{P}(x,y)\geq 0$ have been considered, then return $\succ_{\mathbf{P},T}$.  Otherwise let $(a,b)$ be the pair with the highest priority  among those with $a\neq b$ and $Margin_\mathbf{P}(a,b)\geq 0$ that have not been considered so far.  
\item If $\succ_{\mathbf{P},T}\cup\, \{(a,b)\}$ is acyclic, then add $(a,b)$ to $\succ_{\mathbf{P},T}$; otherwise, add $(b,a)$ to $\succ_{\mathbf{P},T}$.   Go to step 2. 
\end{enumerate}
When the procedure terminates, $\succ_{\mathbf{P},T}$ is a linear order.\footnote{This is a standard algorithm for Ranked Pairs, but if our goal is only to select winners rather than to produce a linear order of the whole set of candidates, then to save some steps we can run the procedure above for only those pairs $(x,y)$ with $Margin_\mathbf{P}(x,y)> 0$, in line with our description of Ranked Pairs in Section \ref{Comparison}. Then a winner according to $\succ_{\mathbf{P},T}$ is a \textit{maximal} element of $\succ_{\mathbf{P},T}$, i.e., an $x$ for which there is no $y\succ_{\mathbf{P},T} x$, and $x$ is a Ranked Pairs winner in $\mathbf{P}$ if $x$ is a winner according to $\succ_{\mathbf{P},T}$ for some tiebreaking ordering $T$.} A linear order $L$ on $X(\mathbf{P})$ is a \textit{Ranked Pairs ranking for $\mathbf{P}$} if $L=\,\succ_{\mathbf{P},T}$ for some tiebreaking ordering $T$ of $X(\mathbf{P})\times X(\mathbf{P})$.  Then the set $RP(\mathbf{P})$ of Ranked Pairs winners is the set of all $x\in X(\mathbf{P})$ such that $x$ is the maximum of some Ranked Pairs ranking for $\mathbf{P}$. 

See \citealt{Lamboray2008}, \citealt{BrillFischer2012}, and \citealt{Wangetal2019} for  discussion of the  axiomatic and computational properties of Ranked Pairs. 

\begin{paragraph}{Stability criteria} See Propositions \ref{RPspoiler} and \ref{RPStealer}.

 \end{paragraph}
\begin{paragraph}{Other criteria} Proofs that Ranked Pairs satisfies the Condorcet winner and loser criteria,  single-voter resolvability, and non-negative responsiveness can be found in \citealt{Tideman1987}.  For the satisfaction of reversal symmetry, Smith, and ISDA, see \citealt[Table 2]{Schulze2011}.  See Remark \ref{RPnote} on the status of independence of clones. For the satisfaction of rejectability, see Corollary \ref{BPreject}. Asymptotic resolvability follows from the fact that the proportion of profiles that are uniquely weighted goes to 1 as the number of voters goes to infinity, and Ranked Pairs selects a unique winner in any uniquely-weighted profile.  For the failure of positive involvement and negative involvement, see \citealt[p.~612]{Perez2001}.
 \end{paragraph}
  
\subsection{Beat Path  (\citealt{Schulze2011})}\label{BeatpathAppendix}

Let $\mathcal{M}$ be a margin graph.   A (\textit{simple}) \textit{path from $x$ to $y$ in $\mathcal{M}$}  is a sequence $\langle x_1, \ldots, x_k\rangle$ of distinct nodes in $\mathcal{M}$ where $x_1=x$, $x_k=y$, and for $i\in\{1,\ldots, k-1\}$, $x_i\overset{\alpha_i}{\to} x_{i+1}$.  The strength of a path $\langle x_1, \ldots, x_k\rangle$   in $\mathcal{M}$ is \[S_\mathcal{M}(\langle x_1, \ldots, x_k\rangle)=\min \{\alpha_i \mid x_i\overset{\alpha_i}{\to} x_{i+1},1\leq i\leq k-1\}.\] Given a profile $\mathbf{P}$, let  $Path_\mathbf{P}(x,y)$ be the set of all paths from $x$ to $y$ in $\mathcal{M}(\mathbf{P})$.    The strength of $x$ over $y$ in $\mathbf{P}$ is 
$$Strength_\mathbf{P}(x,y)=\begin{cases} \max\{S_{\mathcal{M}(\mathbf{P})}(p) \mid p\in Path_\mathbf{P}(x,y)\} & Path_\mathbf{P}(x,y)\ne\varnothing\\
0 & \mbox{otherwise.}
\end{cases}$$
 Then the set $BP(\mathbf{P})$ of Beat Path winners  is the set of all $x\in X(\mathbf{P})$ such that there is no  $y\in X(\mathbf{P})$ such that $Strength_\mathbf{P}(y,x)> Strength_\mathbf{P}(x,y)$.    

\begin{paragraph}{Stability criteria} See Propositions \ref{BPspoiler} and \ref{MinimaxBPstealers}.\end{paragraph}

\begin{paragraph}{Other criteria}For proofs that Beat Path satisfies reversal symmetry, the Condorcet winner and loser criteria, Smith, ISDA, independence of clones, single-voter and asymptotic resolvability, and non-negative responsiveness, see \citealt{Schulze2022}. For the satisfaction of rejectability, see Corollary \ref{BPreject}. For an example of a simultaneous failure of positive and negative involvement, where adding two voters with the ranking $aefcbd$ changes the unique Beat Path winner from $a$ to $d$, see Example 7 of \citealt{Schulze2022}.\end{paragraph}

\subsection{Minimax (\citealt{Simpson1969}, \citealt{Kramer1977})}\label{MinimaxSection} The set of winners for Minimax, also known as the Simpson-Kramer method, are the candidates whose largest majority loss is the smallest, i.e., for a profile $\mathbf{P}$, \[{Minimax}(\mathbf{P})= \mathrm{argmin}_{x\in X(\mathbf{P})} \mathrm{max}(\{Margin_\mathbf{P}(y,x) \mid y\in X(\mathbf{P})\}).\]

\begin{paragraph}{Stability criteria} For the satisfaction of immunity to spoilers, if $a\in Minimax(\mathbf{P}_{-b})$, $Margin_\mathbf{P}(a,b)>0$, and $b\not\in Minimax(\mathbf{P})$, then $a$ must still be among the candidates in $\mathbf{P}$ whose largest majority loss is smallest, so $a\in Minimax(\mathbf{P})$. For the violation of partial immunity to stealers, see Proposition \ref{MinimaxBPstealers}.\end{paragraph}
 
\begin{paragraph}{Other criteria} See \citealt{Felsenthal2012} for violations of reversal symmetry (under `preference inversion') and the Condorcet loser criterion (also shown in the proof of Proposition \ref{MinimaxBPstealers}), as well  as proofs of the Condorcet winner and non-negative reponsiveness criteria. Violation of independence of clones is discussed in \citealt{Tideman1987}. For violations of Smith and hence ISDA, see \citealt[p.~10]{Darlington2016}. For the satisfaction of single-voter resolvability, see \citealt{Tideman1987}, and for asymptotic resolvability, the argument is the same as given for Ranked Pairs in Appendix \ref{RankedPairsAppendix}. 

\begin{fact} Minimax satisfies rejectability.
\end{fact}
\begin{proof} Given $x\in Minimax(\mathbf{P})$, modify the margin graph $\mathcal{M}(\mathbf{P})$ to $\mathcal{M}'$ such that for all $y\in X(\mathbf{P})\setminus\{x\}$, (i) if there is no edge from $x$ to $y$ in $\mathcal{M}(\mathbf{P})$, add an edge from $x$ to $y$ in $\mathcal{M}'$, and (ii) increase the weights of all incoming edges to $y$ to be larger than the largest majority loss of $x$ in $\mathbf{P}$, such that all weights in $\mathcal{M}'$ have the same parity as the weights in $\mathcal{M}(\mathbf{P})$. Then $x$ is clearly the unique Minimax winner in $\mathcal{M}'$, and $\mathcal{M}'$ is the margin graph of a profile $\mathbf{P}'$ (which is linear if $\mathbf{P}$ is) by Debord's Theorem. Finally, since Minimax clearly satisfies the overwhelming majority criterion (recall Lemma \ref{OverwhelmingMaj}), $\mathbf{P}'$ may be used to obtain the $\mathbf{P}^+$ required for rejectability as in the proof of Proposition \ref{RejectProp}. \end{proof}
For the satisfaction of positive and negative involvement, see \citealt[p.~613]{Perez2001}.\end{paragraph}

\subsection{Copeland (\citealt{Copeland1951})}\label{CopelandAppendix}

The Copeland score of a candidate $x$ is the number of candidates to whom $x$ is majority preferred minus the number majority preferred to $x$.  The Copeland winners are the candidates with maximal Copeland score:
\[{Copeland}(\mathbf{P})= \mathrm{argmax}_{x\in X(\mathbf{P})}\, \vert \{y\in X(\mathbf{P})\mid Margin_\mathbf{P}(x,y)>0\}\vert -\vert \{y\in X(\mathbf{P})\mid Margin_\mathbf{P}(y,x)>0\}\vert .\]

\begin{paragraph}{Stability criteria} 
\begin{fact} Copeland satisfies immunity to spoilers.
\end{fact}
\begin{proof} If $a\in Copeland (\mathbf{P}_{-b})$, so $a$'s Copeland score is  maximal  in $\mathbf{P}_{-b}$, and $Margin_\mathbf{P}(a,b)>0$, then $a$'s Copeland score in $\mathbf{P}$ is maximal \textit{among the original candidates in $X(\mathbf{P}_{-b})$}; if in addition $b\not\in Copeland(\mathbf{P})$, then $a$'s Copeland score in $\mathbf{P}$ is maximal among all candidates in $X(\mathbf{P})$, so $a\in Copeland(\mathbf{P})$.
\end{proof} 
\noindent However, if we do not assume $b\not\in Copeland(\mathbf{P})$, then it is easy to construct a profile $\mathbf{P}$ in which $b$ has a higher Copeland score in $\mathbf{P}$ than  $a$ does (this requires $\vert X(\mathbf{P})\vert \geq 6$ if $M(\mathbf{P})$ is a tournament and $\vert X(\mathbf{P})\vert \geq 5$  otherwise), so that $a\not\in Copeland(\mathbf{P})$. Thus, Copeland violates immunity to stealers. 

\begin{fact} Copeland satisfies partial immunity to stealers.
\end{fact}

\begin{proof} Suppose $a$ is the unique Condorcetian candidate in $\mathbf{P}$, but  $b$ steals the election from $a$ in $\mathbf{P}$, so (i)~$a\in Copeland(\mathbf{P}_{-b})$, (ii) $a\to_\mathbf{P} b$, (iii) $a\not\in Copeland(\mathbf{P})$, and (iv) $b\in Copeland(\mathbf{P})$. Since $a$'s Copeland score is maximal in $\mathbf{P}_{-b}$ by (i) and increases by 1 from  $\mathbf{P}_{-b}$ to $\mathbf{P}$ by (ii), together (iii) and (iv) imply that $b$ has the maximum Copeland score in $\mathbf{P}$, so $Copeland(\mathbf{P})=\{b\}$ and there is some $c\in X(\mathbf{P})$ with $b\to_\mathbf{P}c$. Now since $b$ is not Condorcetian, for any $c\in X(\mathbf{P})$ such that $b\to_\mathbf{P}c$, we have $b\not\in Copeland(\mathbf{P}_{-c})$; then since $b$'s Copeland score decreases by only 1 from $\mathbf{P}$ to $\mathbf{P}_{-c}$, it follows from $b\not\in Copeland(\mathbf{P}_{-c})$ that there is a $d\in F(\mathbf{P}_{-c})$ whose Copeland score in $\mathbf{P}$ is one less than that of $b$ in $\mathbf{P}$. We claim that $d\to_\mathbf{P}b$. For if $d\not\to_\mathbf{P}b$, then since $d$'s Copeland score in $\mathbf{P}$, which is one less than that of $b$, is at least that of $a$ by (iii)-(iv), and $d$'s Copeland score does not decrease from $\mathbf{P}$ to $\mathbf{P}_{-b}$ given $d\not\to_\mathbf{P}b$, whereas $a$'s Copeland score does decrease from $\mathbf{P}$ to $\mathbf{P}_{-b}$ by (ii), it follows that $d$'s Copeland score is greater than that of $a$ in $\mathbf{P}_{-b}$, contradicting~(i). Thus, $d\to_\mathbf{P}b$. But then since $d$'s Copeland score is at least that of $a$ in $\mathbf{P}$, it follows by (i)-(ii) that $d\in Copeland(\mathbf{P}_{-b})$, which with $d\to_\mathbf{P}b$ implies that $d$ is Condorcetian, contradicting the assumption that $a$ is the unique Condorcetian candidate in $\mathbf{P}$.\end{proof}

\noindent However, Copeland does not satisfy stability for winners with tiebreaking, as shown by the following.

\begin{example} Consider a profile $\mathbf{P}$ with $X(\mathbf{P})=\{a,b,c,d\}$ whose majority graph $M(\mathbf{P})$ has $a\to b$, $b\to c$, and $b\to d$, but no other edges; then $a$ is the unique Condorcetian candidate, but $Copeland(\mathbf{P})=\{a,b\}$.\end{example}\end{paragraph}

\begin{paragraph}{Other criteria} It is easy to see that Copeland satisfies reversal symmetry, the Condorcet winner and loser criteria, Smith, ISDA, and non-negative responsiveness. For the failure of independence of clones, see \citealt{Tideman1987}. For the failure of rejectability and single-voter resolvability, see Proposition \ref{GETCHA_GOCHA_Rejectability}.  For the failure of asymptotic resolvability for $k\geq 3$, consider a majority graph with three candidates in a top cycle followed by a linear order of the remaining candidates, so the top three candidates are Copeland winners; the proportion of profiles realizing such a majority graph does not go to 0 as the number of voters goes to infinity (\citealt{HT2020}).  For the failure of positive and negative involvement, see \citealt[\S~4.1]{Perez2001}.
\end{paragraph}

\subsection{GETCHA (\citealt{Smith1973})}\label{GETCHAAppendix}
For the definition of GETCHA, see Definition \ref{GETCHA} in Section \ref{SmithSchwartz}.

\begin{paragraph}{Stability criteria}   It is easy to see that GETCHA satisfies expansion consistency.\end{paragraph}

\begin{paragraph}{Other criteria} To see that GETCHA fails Pareto, consider the following.

\begin{example} In the following profile, all voters prefer $a$ to $x$, but $x$ is among the GETCHA winners:
\begin{center}
\begin{minipage}{2in}\begin{tabular}{ccc}
$1$ & $1$ & $1$   \\\hline
$a$ & $b$ &  $c$ \\
$x$ &  $c$ & $a$ \\
$b$ &  $a$ &  $x$ \\
$c$ &  $x$ &  $b$ \\
\end{tabular}\end{minipage} \begin{minipage}{2in}\begin{tikzpicture}

\node[circle,draw, minimum width=0.25in] at (0,1.5) (a) {$a$}; 
\node[circle,draw,minimum width=0.25in] at (0,-1.5) (c) {$c$}; 
\node[circle,draw,minimum width=0.25in] at (2,1.5) (b) {$b$}; 
\node[circle,draw,minimum width=0.25in] at (2,-1.5) (x) {$x$}; 

\path[->,draw,thick] (b) to[pos=.7] node[fill=white] {$1$}  (c);
\path[->,draw,thick] (c) to node[fill=white] {$1$} (a);
\path[->,draw,thick] (a) to node[fill=white] {$1$} (b);

\path[->,draw,thick] (a) to[pos=.7] node[fill=white] {$3$}  (x);
\path[->,draw,thick] (x) to node[fill=white] {$1$} (b);
\path[->,draw,thick] (c) to node[fill=white] {$1$} (x);

 \end{tikzpicture}
\end{minipage}
\end{center}
\end{example}
It is easy to see that  GETCHA satisfies the Condorcet winner and loser criteria, as well as non-negative responsiveness. For reversal symmetry, note that if $\mathbf{P}$ is a profile with $GETCHA(\mathbf{P})=\{x\}$, then $x$ is a Condorcet winner.   Thus, $x$ is a Condorcet loser in $\mathbf{P}^r$, so $x\not\in GETCHA(\mathbf{P}^r)$.   It is also not difficult to see that GETCHA satisfies independence of clones, using an alternative characterization of GETCHA.   Given a profile $\mathbf{P}$, let  $a \leadsto _\mathbf{P} b$ mean that $Margin_\mathbf{P}(a,b)\geq 0$. Let $\leadsto _\mathbf{P}^*$ be the transitive closure of $\leadsto _\mathbf{P}$.  
\begin{lemma}[\citealt{Schwartz1986}, Corollary 6.2.2]\label{GETCHAlem} For any profile $\mathbf{P}$, \[GETCHA(\mathbf{P})=\{x\in X(\mathbf{P})\mid \mbox{for all }y\in X(\mathbf{P}): x\leadsto _\mathbf{P}^*y\}.\]
\end{lemma}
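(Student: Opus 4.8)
The plan is to prove the two inclusions against the explicitly described set $T=\{x\in X(\mathbf{P})\mid x\succsim_\mathbf{P}^*y\text{ for all }y\in X(\mathbf{P})\}$ on the right-hand side. The workhorse facts are that $\succsim_\mathbf{P}$ is reflexive (since $Margin_\mathbf{P}(a,a)=0$) and complete (since $Margin_\mathbf{P}(a,b)=-Margin_\mathbf{P}(b,a)$, at least one of $Margin_\mathbf{P}(a,b),Margin_\mathbf{P}(b,a)$ is $\geq 0$), so that $\succsim_\mathbf{P}^*$ is a complete preorder on the finite set $X(\mathbf{P})$. I will write $a\succ b$ for its asymmetric part, which is then a strict partial order.

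First I would establish that $T\neq\varnothing$ and, more strongly, that $T$ is $\to_\mathbf{P}$-dominant. Nonemptiness: by finiteness $\succ$ has a maximal element $x^{*}$, and completeness of $\succsim_\mathbf{P}^{*}$ forces $x^{*}\succsim_\mathbf{P}^{*}y$ for every $y$ (otherwise $y\succ x^{*}$, contradicting maximality), so $x^{*}\in T$. Dominance: given $x\in T$ and $y\in X(\mathbf{P})\setminus T$, I would argue that $y\succsim_\mathbf{P}^{*}x$ must fail --- for if it held, transitivity would give $y\succsim_\mathbf{P}^{*}w$ for all $w$, putting $y\in T$. Hence in particular $y\not\succsim_\mathbf{P}x$, i.e.\ $Margin_\mathbf{P}(y,x)<0$, i.e.\ $Margin_\mathbf{P}(x,y)>0$, i.e.\ $x\to_\mathbf{P}y$. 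Since $T$ is therefore one of the $\to_\mathbf{P}$-dominant sets intersected in the definition of $GETCHA(\mathbf{P})$, this yields $GETCHA(\mathbf{P})\subseteq T$.

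For the reverse inclusion $T\subseteq GETCHA(\mathbf{P})$, I would fix an arbitrary $\to_\mathbf{P}$-dominant set $S$ and show $T\subseteq S$ by a ``first exit'' argument along a $\succsim_\mathbf{P}$-chain: if $x\in T\setminus S$, pick $s\in S$ and a chain $x=z_1\succsim_\mathbf{P}z_2\succsim_\mathbf{P}\cdots\succsim_\mathbf{P}z_k=s$ witnessing $x\succsim_\mathbf{P}^{*}s$; at the least index $i$ with $z_i\in S$ (which satisfies $i\geq 2$ since $z_1=x\notin S$) we have $z_{i-1}\notin S$, so dominance of $S$ gives $Margin_\mathbf{P}(z_i,z_{i-1})>0$, contradicting $z_{i-1}\succsim_\mathbf{P}z_i$. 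Intersecting over all such $S$ gives $T\subseteq GETCHA(\mathbf{P})$, and combining the two inclusions finishes the proof.

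The one place I would be most careful is the dominance of $T$ in the second paragraph: one must not conflate $\succsim_\mathbf{P}$ with its transitive closure $\succsim_\mathbf{P}^{*}$, and the argument crucially uses that failure of $y\succsim_\mathbf{P}^{*}x$ already implies failure of the single-step relation $y\succsim_\mathbf{P}x$ --- this is the step that converts a statement about the transitive closure back into a statement about margins and hence into the edge relation $\to_\mathbf{P}$. Everything else is routine order theory once completeness of $\succsim_\mathbf{P}$ is in hand.
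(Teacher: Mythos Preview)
Your proof is correct. The paper does not actually prove this lemma; it simply cites it as Corollary~6.2.2 of \citealt{Schwartz1986}, so there is no in-paper argument to compare against. Your two-inclusion argument---showing $T$ is $\to_\mathbf{P}$-dominant (giving $GETCHA(\mathbf{P})\subseteq T$) and then using a first-exit argument along a $\succsim_\mathbf{P}$-chain to show $T\subseteq S$ for every dominant $S$---is the standard route and is carried out cleanly. The point you flag as needing care (that $y\not\succsim_\mathbf{P}^{*}x$ implies $y\not\succsim_\mathbf{P}x$, which is immediate from $\succsim_\mathbf{P}\subseteq\succsim_\mathbf{P}^{*}$) is indeed the crux that converts the order-theoretic statement back into a margin statement, and you handle it correctly.
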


By definition, GETCHA satisfies the Smith criterion, and we proved ISDA in Footnote~\ref{GETCHAISDA}.  For the failure of rejectability and single-voter resolvability, see Proposition \ref{GETCHA_GOCHA_Rejectability}. The failure of asymptotic resolvability for $k\geq 3$ follows from the fact that GETCHA selects a unique winner only if there is a Condorcet winner, and for 3 or more candidates, the proportion of profiles with a Condorcet winner does not go to 1 as the number of voters goes to infinity (\citealt{DeMeyer1970}). To see that GETCHA fails positive and negative involvement, consider the following examples.

\begin{example} In a three-candidate, two-voter profile $\mathbf{P}$ with $a\mathbf{P}_ib\mathbf{P}_ic$ and $c\mathbf{P}_ja\mathbf{P}_jb$,  $GETCHA(\mathbf{P})=\{a,b,c\}$, yet adding one voter $k$ such that $b\mathbf{P}_k' a \mathbf{P}_k' c$ results in a profile $\mathbf{P}'$ in which $a$ is the Condorcet winner, so $GETCHA(\mathbf{P}')=\{a\}$.
\end{example}

\begin{example} Consider any three-candidate profile $\mathbf{P}$ with a Condorcet winner $a$, so $GETCHA(\mathbf{P})=\{a\}$, in which $Margin_\mathbf{P}(a,c)=1$; then adding a voter with the ranking $c\mathbf{P}_i' a \mathbf{P}_i'  b$ results in a profile $\mathbf{P}'$ in which $Margin_{\mathbf{P}'}(a,b)>0$ but $Margin_{\mathbf{P}'}(a,c)=0$, which implies $GETCHA(\mathbf{P}')=\{a,b,c\}$.\end{example}\end{paragraph}

\subsection{GOCHA (\citealt{Schwartz1986})}\label{GOCHAAppendix}

For the definition of GOCHA, see Definition \ref{GOCHA} in Section \ref{SmithSchwartz}. 

\begin{paragraph}{Stability criteria}  It is easy to see that GOCHA satisfies stability for winners using Lemma \ref{GOCHALem}.  The proof of Proposition \ref{SchwartzProp} shows that GOCHA does not satisfy strong stability for winners.\end{paragraph}

\begin{paragraph}{Other criteria}  For the violation of Pareto, the same example given for GETCHA in Section \ref{GETCHAAppendix} works for GOCHA. It is easy to see that GOCHA satisfies reversal symmetry,  the  Condorcet winner and loser criteria, and non-negative responsiveness (see \citealt{Felsenthal2012}). It is well known that GOCHA satisfies the Smith criterion, and ISDA can be proved using Lemma \ref{GOCHALem}.\footnote{Suppose $a\not\in GOCHA(\mathbf{P}_{-x})$, so there is a $b\in X(\mathbf{P}_{-x})$ such that $b\to_{\mathbf{P}_{-x}}^*a$ but $a\not\to_{\mathbf{P}_{-x}}^*b$. Then $b\to_{\mathbf{P}}^*a$. If $a\not\to_{\mathbf{P}}^*b$, then we are done: $a\not\in GOCHA(\mathbf{P})$. If $a\to_{\mathbf{P}}^*b$, then given $a\not\to_{\mathbf{P}_{-x}}^*b$, it follows that $x$ is on the path witnessing $a\to_{\mathbf{P}}^*b$, which together with $b\to_{\mathbf{P}}^*a$ implies that there is a path from $x$ to $a$ in $\mathcal{M}(\mathbf{P})$. Then since $x\not\in GETCHA(\mathbf{P})$ and there can be no path from a candidate outside $GETCHA(\mathbf{P})$ to one inside $GETCHA(\mathbf{P})$, it follows that $a\not\in GETCHA(\mathbf{P})$. Hence $a\not\in GOCHA(\mathbf{P})$, since GOCHA satisfies the Smith criterion. Conversely, suppose $a\not\in GOCHA(\mathbf{P})$, so there is a $b\in X(\mathbf{P})$ such that $b\to_\mathbf{P}^*a$ but $a\not\to_\mathbf{P}^*b$. If $x$ is not on the path witnessing $b\to_\mathbf{P}^*a$, then $b\to_{\mathbf{P}_{-x}}^*a$ but $a\not\to_{\mathbf{P}_{-x}}^*b$, so we are done: $a\not \in GOCHA(\mathbf{P}_{-x})$. If $x$ is on the path witnessing $b\to_\mathbf{P}^*a$, then since $x\not\in GETCHA(\mathbf{P})$ and there can be no path from a candidate outside $GETCHA(\mathbf{P})$ to one inside $GETCHA(\mathbf{P})$, it follows that $a\not\in GETCHA(\mathbf{P})$. Hence by ISDA for GETCHA, $a\not\in GETCHA(\mathbf{P}_{-x})$ and hence $a\not\in GOCHA(\mathbf{P}_{-x})$, since GOCHA satisfies the Smith criterion.}  For the satisfaction of independence of clones, see \citealt{Tideman1987}.    GOCHA fails rejectability, single-voter resolvability, and asymptotic resolvability for $k\geq 3$ by the same reasoning as for GETCHA, using the fact that in the limit as the number of voters goes infinity, GOCHA is equivalent to GETCHA. For the failure of positive involvement and negative involvement, see \citealt[\S~4.1]{Perez2001}, where GOCHA is called ``Top Cycle'', or \citealt{Felsenthal2016}, where GOCHA is called ``Schwartz''. The following is a simple example of the failure of positive involvement.

\begin{example}Consider a three-cycle with $y\overset{5}{\to} x\overset{3}{\to} z\overset{1}{\to} y$, so $GOCHA(\mathbf{P})=\{x,y,z\}$, and add a new voter with $x\mathbf{P}_i'y\mathbf{P}_i'z$ to obtain $y\overset{4}{\to} x\overset{4}{\to} z $ with $y$ and $z$ tied, so $GOCHA(\mathbf{P}')=\{y\}$.\end{example}
\end{paragraph}

\subsection{Uncovered Set (\citealt{Fishburn1977}, \citealt{Miller1980})}\label{UncoveredAppendix}

The Uncovered Set in voting is usually attributed to Fishburn \citeyearpar{Fishburn1977} and Miller \citeyearpar{Miller1980}, though the covering relation appears in earlier game-theoretic work of Gillies \citeyearpar{Gillies1959}. Fishburn defined his version of the Uncovered Set for arbitrary margin graphs, whereas Miller defined his only for \textit{tournaments}, i.e., directed graphs in which the edge relation $\to$ is not only asymmetric but also \textit{weakly complete}: for all distinct nodes $x,y$, either $x\to y$ or $y\to x$. Fishburn and Miller's definitions are equivalent for tournaments but not for margin graphs that are not weakly complete, which may arise from profiles with an even number of voters or non-linear ballots. Several non-equivalent definitions of the Uncovered Set for arbitrary margin graphs appear in the literature (see \citealt{Bordes1983,Peris1999,Penn2006,Duggan2013}), and some of these versions differ in their axiomatic properties. As examples, we will consider the versions due to Fishburn and Gillies.

Given a margin graph $\mathcal{M}$ and nodes $x,y$ in $\mathcal{M}$, say that $y$ \textit{left-covers $x$ in $\mathcal{M}$} if for all nodes $z$ in $\mathcal{M}$, if $z\to y$, then $z\to x$.\footnote{Miller's \citeyearpar{Miller1980} definition uses the right-sided version: $y$ \textit{right-covers} $x$ in $\mathcal{M}$ if for all $z$, if $x\to z$, then $y\to z$. If $\to $ is weakly complete, then left-covering and right-covering are equivalent.} Then the Fishburn and Gillies versions of the Uncovered Set are defined by:
\begin{eqnarray*}
UC_{Fish}(\mathbf{P})&=&\{x\in X(\mathbf{P})\mid \mbox{there is no $y\in X(\mathbf{P})$: $y$ left-covers $x$ but $x$ does not left-cover $y$ in $\mathcal{M}(\mathbf{P})$}\};\\
UC_{Gill}(\mathbf{P})&=&\{x\in X(\mathbf{P})\mid \mbox{there is no $y\in X(\mathbf{P})$: $y\to x$ and $y$ left-covers $x$ in $\mathcal{M}(\mathbf{P})$}\}.
\end{eqnarray*}
Note that $UC_{Fish}(\mathbf{P})\subseteq UC_{Gill}(\mathbf{P})$. A useful alternative characterization of $UC_{Gill}$ is given by the following ``two-step'' principle (see, e.g., \citealt[Proposition 12(ii)]{Duggan2013}): $x\in UC_{Gill}(\mathbf{P})$ if and only if for all $y\in X(\mathbf{P})\setminus\{x\}$, $Margin_\mathbf{P}(x,y)\geq 0$ or there is a $z\in X(\mathbf{P})$ such that $Margin_\mathbf{P}(x,z)\geq 0$  and $Margin_\mathbf{P}(z,y)> 0$.

\begin{paragraph}{Stability criteria} The method $UC_{Fish}$ satisfies stability for winners and hence immunity to spoilers, since if $Margin_\mathbf{P}(a,b)>0$, then $b$ does not left-cover $a$. However, it violates strong stability for winners. 

\begin{example}\label{UncoveredEx} Consider the profile $\mathbf{P}$ below where $UC_{Fish}(\mathbf{P}_{-b})=\{a,c,d\}$, $Margin_{\mathbf{P}}(a,b)\geq 0$, but $UC_{Fish}(\mathbf{P})=\{b\}$:

\begin{center}
\begin{minipage}{2in}\begin{tabular}{cccccc}
$1$ & $1$ & $1$ & $1$ & $1$ & $1$\\\hline 
$a$ & $d$ & $c$ & $b$ & $b$ & $b$\\ 
$d$ & $c$ & $a$ & $a$ & $d$ & $c$\\ 
$c$ & $a$ & $d$ & $d$ & $c$ & $a$\\ 
$b$ & $b$ & $b$ & $c$ & $a$ & $d$
\end{tabular}\end{minipage} \begin{minipage}{2in}
\begin{tikzpicture}

\node[circle,draw,minimum width=0.25in] at (3,0) (b) {$b$}; 

\node[circle,draw, minimum width=0.25in] at (5,0) (e) {$a$}; 
\node[circle,draw,minimum width=0.25in] at (8,0) (d) {$d$}; 
\node[circle,draw,minimum width=0.25in] at (6.5,1.5) (f) {$c$};  

\path[->,draw,thick] (e) to node[fill=white] {$2$} (d);
\path[->,draw,thick] (f) to node[fill=white] {$2$} (e);
\path[->,draw,thick] (d) to node[fill=white] {$2$} (f);

  \end{tikzpicture}
  \end{minipage}
  \end{center}
\end{example}

\noindent  Hence $UC_{Fish}$ also violates expansion consistency. By contrast, $UC_{Gill}$ satisfies expansion consistency, as one can easily see using the two-step characterization above.\end{paragraph}

\begin{paragraph}{Other criteria} For a proof that the Uncovered Set satisfies Pareto, see \citealt[Proposition 52]{Duggan2013}. For reversal symmetry, the argument is the same as we gave for GETCHA in Appendix \ref{GETCHAAppendix}. For the satisfaction of the Condorcet criterion under various definitions of the Uncovered Set, see \citealt[Propositions 4, 5, and 13]{Duggan2013}. The Condorcet loser and non-negative responsiveness criteria are also straightforward to check. For the satisfaction of the Smith criterion under all standard definitions of the Uncovered Set, see \citealt[Propositions 4, 5, and 14]{Duggan2013}. 

\begin{fact} $UC_{Fish}$ and $UC_{Gill}$ satisfy ISDA.
\end{fact}

\begin{proof} Suppose $x\in X(\mathbf{P})\setminus GETCHA(\mathbf{P})$ and $UC\in \{UC_{Fish},UC_{Gill}\}$. To see that $UC(\mathbf{P}_{-x})\subseteq UC(\mathbf{P})$, suppose $a\in UC(\mathbf{P}_{-x})$. By the Smith criterion for $UC$ and ISDA for GETCHA, $UC(\mathbf{P}_{-x})\subseteq GETCHA(\mathbf{P}_{-x})=GETCHA(\mathbf{P})$, so together $a\in UC(\mathbf{P}_{-x})$ and $x\in X(\mathbf{P})\setminus GETCHA(\mathbf{P})$ imply $Margin_\mathbf{P}(a,x)>0$, so $a\in UC(\mathbf{P})$ by stability for winners. Now we claim that $UC(\mathbf{P})\subseteq UC(\mathbf{P}_{-x})$. By definition,  $a\in UC_{Gill}(\mathbf{P})$ (resp.~$a\in UC_{Fish}(\mathbf{P})$) if and only if for all $b\in X(\mathbf{P})\setminus\{a\}$, we have $a\leadsto b$ (resp.~$a$ left-covers $b$) in $\mathbf{P}$ or there is a $c\in X(\mathbf{P})$ such that $a\leadsto  c\to b$. Now suppose $a\in UC(\mathbf{P})$. To show $a\in UC_{Gill}(\mathbf{P}_{-x})$ (resp.~$a\in UC_{Fish}(\mathbf{P}_{-x})$), we must show that for any $b\in X(\mathbf{P}_{-x})\setminus\{a\}$, we have $a\leadsto b$ (resp.~$a$ left-covers~$b$) in $\mathbf{P}_{-x}$ or there is a $c\in X(\mathbf{P}_{-x})$ such that $a\leadsto c\to b$. Since $a\in UC_{Gill}(\mathbf{P})$ (resp.~$a\in UC_{Fish}(\mathbf{P})$), we have $a\leadsto b$ (resp.~$a$ left-covers $b$) in $\mathbf{P}$ or there is a $c\in X(\mathbf{P})$ such that $a\leadsto c\to b$. If $a\leadsto b$ (resp.~$a$ left-covers $b$) in $\mathbf{P}$, then this holds in $\mathbf{P}_{-x}$, so we are done. So suppose it is not the case that $a\leadsto b$ (resp.~$a$ left-covers $b$) in $\mathbf{P}$, but instead there is a $c\in X(\mathbf{P})$ such that $a\leadsto c\to b$. For $UC_{Gill}$, since it is not the case that $a\leadsto b$, we have $b\to a$.  Then since $a\in UC_{Gill}(\mathbf{P})\subseteq GETCHA(\mathbf{P})$, it follows from $c\to b\to a$ that $c\in GETCHA(\mathbf{P})$, so $c\neq x$.  For $UC_{Fish}$, since $a\in UC_{Fish}(\mathbf{P})\subseteq GETCHA(\mathbf{P})$, $a$ left-covers every candidate  in $\mathbf{P}$ outside $GETCHA(\mathbf{P})$. Hence from our assumption that $a$ does not left-cover $b$ in $\mathbf{P}$, it follows that $b\in GETCHA(\mathbf{P})$. Then since $c\to b$, we have $c\in GETCHA(\mathbf{P})$, so $c\neq x$. Thus, in either case, $c\in X(\mathbf{P}_{-x})$ and $a\leadsto c\to b$, so we are~done.\end{proof}

That $UC_{Gill}$ satisfies independence of clones can be seen using the two-step characterization above. However, $UC_{Fish}$ does not, as shown by the following.

\begin{example} In the profile $\mathbf{P}$ from Example \ref{UncoveredEx},  observe that $\{a,c,d\}$ is a set of clones in $\mathbf{P}$, $UC_{Fish}(\mathbf{P})=\{b\}$, but $UC_{Fish}(\mathbf{P}_{-c})=\{a,b\}$, so $UC_{Fish}$ does not satisfy the condition that clone choice is independent of clones (recall Definition \ref{IndependenceOfClones}). 
\end{example}

Uncovered Set violates rejectability, single-voter resolvability, and asymptotic resolvability for $k\geq 3$ by the same reasoning as for GETCHA (i.e., Uncovered Set selects a unique winner only if there is a Condorcet winner). For the failure of positive and negative involvement under all standard definitions of the Uncovered Set, see \citealt[\S~4.1]{Perez2001}.\end{paragraph}

\subsection{Instant Runoff}\label{RankedChoiceAppendix}

Instant Runoff is usually defined only for profiles $\mathbf{P}$ in which each ballot $\mathbf{P}_i$ is a linear order of some subset of $X(\mathbf{P})$. Instant Runoff iteratively removes the candidate with the least number of first-place votes, until there is a  candidate  with a majority of the first-place votes. The question then arises of what to do if there are two or more candidates with the least number of first-place votes. One version of Instant Runoff (\citealt[p.~7]{Taylor2008}) removes all such candidates;  and if, at some stage of the removal process, all remaining candidates have the same number of first-place votes (so all candidates would be removed), then all remaining candidates  are selected as winners. Alternatively, the ``parallel universe'' version of Instant Runoff (cf.~\citealt[\S~3]{Freeman2015}) says that $a$ wins in $\mathbf{P}$ if there is a candidate $b$ with the least number of first-place votes in $\mathbf{P}$ such that $a$ wins according to the parallel universe version of Instant Runoff in $\mathbf{P}_{-b}$. See \citealt{Freemanetal2014} and \citealt{Wangetal2019}  for axiomatic and computational properties of Instant Runoff.  
 
\begin{paragraph}{Stability criteria}  Example \ref{IRVExample} shows that Instant Runoff violates (even partial) immunity to spoilers. If we consider removing the Progressive from the election instead of the Republican, then it also shows that Instant Runoff violates  (even partial) immunity to stealers. \end{paragraph}

\begin{paragraph}{Other criteria} For the failure of reversal symmetry, see \citealt{Felsenthal2012} (under ``preference inversion'' for ``Alternative vote''). It is well known that Instant Runoff violates the Condorcet winner  and non-negative responsiveness criteria but satisfies the Condorcet loser criterion (\citealt{Felsenthal2012}). The parallel universe version of Instant Runoff satisfies independence of clones (\citealt{Tideman1987}), while the simultaneous removal version does not: if there are 4 voters with the ranking $abc$, 3 with $bca$, and 3 with $cba$, then the clones $b$ and $c$ are both eliminated in the first round, so $a$ wins, whereas in the election with just $a$ and $b$, $b$ wins.\footnote{Thanks to Dominik Peters for this example.} The failure of  Smith and ISDA  follows from the failure of the Condorcet winner criterion.  For the satisfaction of single-voter resolvability, see \citealt{Tideman1987} (where Instant Runoff is called ``Alternative vote''). That Instant Runoff satisfies rejectability follows from Lemma \ref{ResolveReject} given that it satisfies singer-voter resolvability and clearly homogeneity. Asymptotic resolvability follows from the fact that for any number of candidates, the proportion of profiles in which there is a tie in the number of first place votes for two candidates goes to 0.  For positive involvement, suppose $x\in IRV(\mathbf{P})$, so $x$ is not eliminated at any stage of the iteration procedure starting from $\mathbf{P}$. Then where $\mathbf{P}'$ is a one-voter profile whose voter ranks $x$ in first place, clearly $x$ is not eliminated at any stage of the iteration procedure starting from $\mathbf{P}+\mathbf{P}'$, so $x\in IRV(\mathbf{P}+\mathbf{P}')$. For the failure of negative involvement, see \citealt{Fishburn1983} (under the ``no show paradox'').\end{paragraph}

\subsection{Plurality}

The Plurality score of a candidate is the number of voters who rank that candidate uniquely in first place. The Plurality voting method selects as winners all candidates whose Plurality scores are maximal. The problems with Plurality  are well known (see \citealt{Laslier2012}). 

\begin{paragraph}{Stability criteria}  Example \ref{BushNaderGore} shows that Plurality violates immunity to spoilers. If we consider removing Bush from the election instead of Nader, then it also shows that Plurality violates immunity to stealers.\end{paragraph}

\begin{paragraph}{Other criteria} For the failure of reversal symmetry, see \citealt{Felsenthal2012}.   It is well known that Plurality violates the Condorcet winner and  Condorcet loser criteria (again see \citealt{Felsenthal2012}).   It is clear that Plurality satisfies  non-negative responsiveness.   The failure of the Smith and ISDA criteria follows from the failure of the Condorcet winner criterion. Example \ref{BushNaderGore} shows that Plurality does not satisfy independence of clones. The satisfaction of single-voter and asymptotic resolvability and positive and negative involvement is obvious.  That Plurality satisfies rejectability follows from Lemma \ref{ResolveReject} given that it satisfies single-voter resolvability and clearly homogeneity. \end{paragraph}

\section{Frequency of Irresoluteness}\label{Quant}

\renewcommand{\thefigure}{D.\arabic{figure}}
\setcounter{figure}{0} 

The graphs in this appendix show the frequency with which Split Cycle and several other voting methods select more than one winner, as well as the sizes of the winning sets conditional on there being more than one winner, before tiebreaking. Thus, they show how often a tiebreaking procedure must be applied. 

We evaluated the voting methods on profiles with different numbers of candidates and voters, using different probability models to generate profiles. Probability models for generating linear profiles are more common, so in this section all profiles are linear. Below we explain the probability models, the different types of graphs, and some conclusions drawn from the data.

\begin{paragraph}{Probability models}We considered several probability models for generating profiles with $n$ candidates and $m$ voters. According to the impartial culture (IC) model, each such profile is equally likely. Equivalently, each voter chooses a linear order of the $n$ candidates at random, and the voters' choices are independent. 

In the P\'{o}lya-Eggenberger urn model (\citealt{Berg1985}), to generate a profile given a parameter $\alpha\in [0,\infty)$, each voter in turn randomly draws a linear order from an urn. Initially the urn is the set of all linear orders of the $n$ candidates. If a voter randomly chooses $L$ from the urn, we return $L$ to the urn plus $\alpha n!$ copies of $L$. IC is the special case where $\alpha=0$. The Impartial Anonymous Culture (IAC) is the special case where $\alpha=1/n!$. Following \citealt{Boehmer2021}, for each generated profile, we chose $\alpha$ according to a Gamma distribution with shape parameter $k=0.8$ and scale parameter $\theta=1$ for the model we call ``urn.''

In the Mallow's model (see \citealt{Mallow1957,Marden1995}), to generate a profile, the main idea is to fix a reference  linear ordering of the candidates and to assign to each voter a ranking that is ``close'' to this reference ranking.   Closeness to the reference ranking is defined using the Kendall-tau distance between rankings, depending on a {\em dispersion} parameter $\phi$.   Setting $\phi= 0$ means that every voter is assigned the reference ranking, and setting $\phi=1$ is equivalent to the IC model. Formally, to generate a profile given a reference ranking $L_0$ of the set $X$ of candidates and $\phi\in (0,1]$, the probability that a voter's ballot is the linear order $L$ of $X$ is $Pr_{L_0,\phi}(L)=\phi^{\tau(L,L_0)}/C$ where $\tau(L,L_0)= {{\vert X\vert }\choose{2}} - \vert L\cap L_0\vert $ is the Kendell-tau distance of $L$ to $L_0$, and $C$ is a normalization constant. For each profile, we chose $\phi$ by first randomly selecting what Boehmer et al.~\citeyearpar{Boehmer2021} call a rel-$\phi$ value, which together with the number $m$ of candidates determines $\phi$. See \citealt{Boehmer2021} for details on this parameterization of the Mallow's model in terms of rel-$\phi$ values. 

In addition to generating profiles using a single reference ranking $L_0$, we considered generating profiles using two reference rankings, which are the reverse of each other. E.g., $L_0$ ranks candidates from more liberal to more conservative, while $L_0^{-1}$ ranks candidates in the opposite order. The set of voters is divided into two groups, each associated with one of the reference rankings. Each voter is equally likely to be assigned to either of the two groups. Formally, the probability that a voter's ballot is $L$ is $\frac{1}{2} Pr_{L_0,\phi}(L)+\frac{1}{2}Pr_{L_0^{-1},\phi}(L)$.
\end{paragraph}

\begin{paragraph}{Types of graphs} We include three types of graphs with data from simulated elections. For all three types, for each data point in a graph, we sampled 25,000 profiles with an even number $n$ of voters and 25,000 profiles with the next odd number $n+1$ of voters, displayed in the graph below the even number, in order to have a mix of even and odd-sized electorates.

The first type of graph concerns the  \textit{frequency of irresoluteness}. The graphs on the left of Figures \ref{icgraphs}, \ref{urngraphs}, \ref{MallowsSingleRef}, and \ref{MallowsDoubleRef} show the frequency of multiple winners for several voting methods as the number of candidates ranges from 5 to 30 and the number of voters range from 4 to 5,001. On the right of the figures, we use the boxen plot or letter-valued plot (\citealt{Hofmann2017}) representation of the quantiles of the sizes of the winning sets \textit{for profiles with multiple winners}. The black dots outside of the boxes are the ``outliers.''

The second type of graph concerns the \textit{frequency of different winners}. This can be understood in two ways. First, how often do two methods produce different sets of tied winners? Second, if we assume that given a set of tied winners, the ultimate winner will be chosen randomly according to a uniform probability distribution, how often do two methods not only produce different sets of tied winners but  also produce different ultimate winners after random tiebreaking? Figure~\ref{PercentDiffBPvsSC} shows the frequency of different winners for Split Cycle vs.~Beat Path in these two senses according to several probability models. This is related to irresoluteness because in either of the two senses of having different winners, Split Cycle can differ from Beat Path only when Split Cycle outputs multiple winners before tiebreaking.\end{paragraph}

\begin{paragraph}{Choice of methods} We selected the voting methods with which to compare Split Cycle as follows. In light of Section \ref{Comparison}, a key comparison is to the most well-known refinements of Split Cycle, namely Beat Path and Ranked Pairs. However, we did not include Ranked Pairs due to the computational difficulty of determining the Ranked Pairs winners for elections with small numbers of voters (see \citealt{BrillFischer2012}); but for those combinations of voters and candidates for which we were able to compute Ranked Pairs for all sampled profiles, Ranked Pairs was similar to Beat Path in irresoluteness. We chose to include Copeland because like Split Cycle, GETCHA, and GOCHA, Copeland does not satisfy the resolvability criteria of Definitions \ref{TidemanResolve} and \ref{AsymptoticResolve}, yet Copeland is one of the most discriminating of all C1 voting methods (recall Section \ref{Rejectability} for the definition of C1, and see \citealt{Brandt2014} on the discriminating power of different C1 methods).  As for our choice to include Uncovered Set,\footnote{We used the Gillies version of the Uncovered Set for profiles with an even number of voters. For an odd number of voters with linear ballots, the different versions of Uncovered Set are equivalent.} it follows from a result of Moulin \citeyearpar[Theorem 1]{Moulin1986} that for any C1 voting method $F$ satisfying neutrality and expansion consistency and linear profile $\mathbf{P}$ with an odd number of voters, $UC(\mathbf{P})\subseteq F(\mathbf{P})$ (for an analogous result for an even number of voters, using a definition of the Uncovered Set that satisfies expansion consistency, see \citealt[Theorem 1]{Peris1999}). Thus, by comparing Split Cycle to the Uncovered Set, we are comparing Split Cycle to the most discriminating of all C1 methods satisfying expansion consistency, and by comparing Split Cycle to Copeland, we are comparing Split Cycle to one of the most discriminating of all C1 methods.\end{paragraph}

\begin{paragraph}{Discussion} We highlight the following takeaway points about the results of our simulations:

\begin{itemize}
\item The IC model can be viewed as a ``worst case scenario'' for irresoluteness (cf.~\citealt{Tsetlin2003}), so we expect that in practice the frequency of multiple winners will be substantially lower. We also ran our simulations for the IAC model but the graphs were almost indistinguishable from those of IC, so we omit them here. By contrast, we also tried one, two, and three-dimensional spatial models and a single-peaked model from \citealt{Boehmer2021}, and in these models, all of the methods were essentially resolute except with small numbers of voters, so we omit these graphs as well.
\item The boxen plots show that when there are multiple winners, generally there are very few for Split Cycle, Beat Path, and Copeland, whereas there are significantly more for Uncovered Set and many for GETCHA. This pattern holds across the different probability models for profiles.
\item Unlike for the other voting methods, for Copeland the proportion of profiles with multiple winners actually increases slightly as we increase the number of voters from 4 to 5,001, except under the Mallow's model with one reference ranking in which all methods tend toward resoluteness.
\item Unlike for the other voting methods, for Copeland the proportion of profiles with multiple winners is largely insensitive to the number of candidates---in fact, it decreases slightly from 10 to 30 candidates under the IC model.
\item Graphs for the urn model appear roughly as compressed versions of the graphs for the IC model, but there are some qualitative differences. For example, Split Cycle compares more favorably with Copeland in terms of irresoluteness according to the urn model than according to IC. For example, for 20 candidates and 5,000/5,001 voters, Copeland is more resolute than Split Cycle according to the IC model but has about the same frequency of irresoluteness as Split Cycle according to the urn model.
\item For the Mallow's model, the difference between one and two reference rankings is striking. With only one reference ranking, representing a society in which voters gravitate to different degrees toward a single ranking of the candidates, all of the methods are nearly resolute by 5,000/5,001 voters. By contrast, with two, reversed reference rankings, representing a society in which voters are divided into two groups gravitating toward reversed rankings, the results are not far from those of the IC model.
\end{itemize}
\end{paragraph}

\begin{figure} 
\begin{center}
\includegraphics[scale=0.5]{./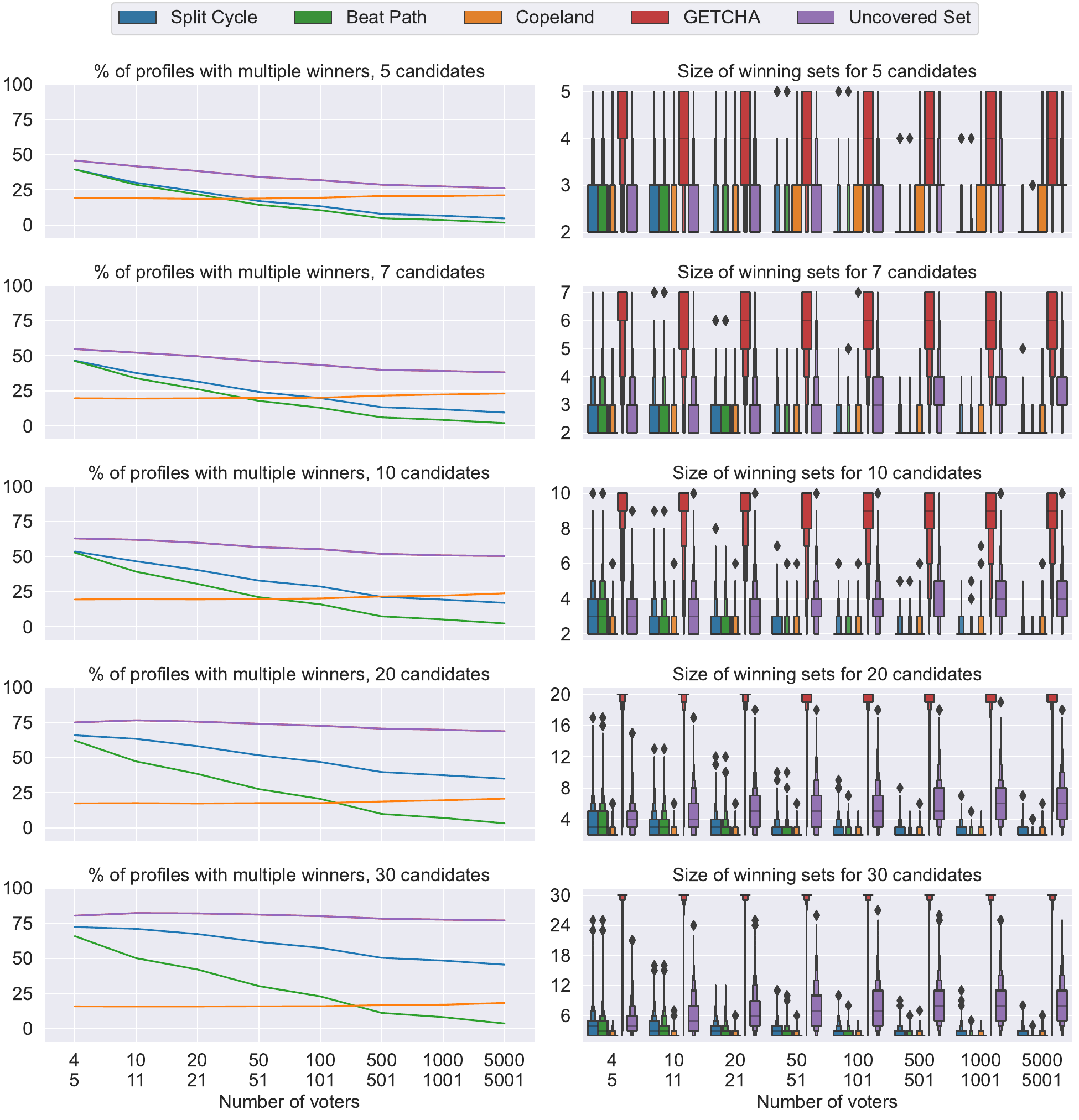}
\end{center}
\caption{The profiles were generated using the IC model. Results for the IAC model are almost the same. On the left, the purple line for the Uncovered Set is on top of the red line for GETCHA.}
 \label{icgraphs}
\end{figure}

\begin{figure} 
\begin{center}
\includegraphics[scale=0.5]{./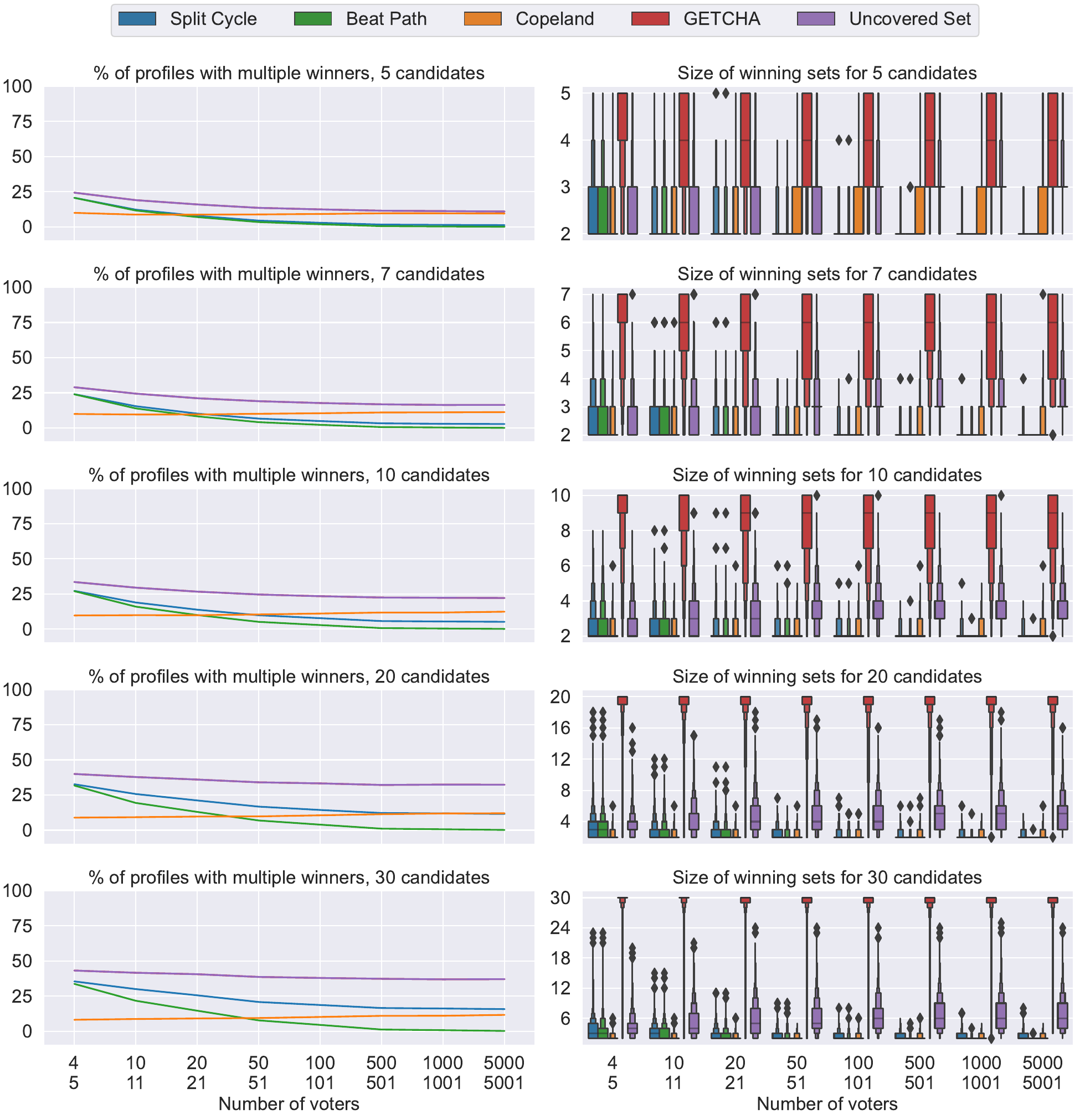}
\end{center}
\caption{The profiles were generated using the urn model with $\alpha$ chosen according to a Gamma distribution with shape parameter $k=0.8$ and scale parameter $\theta=1$ as in \citealt{Boehmer2021}. On the left, the purple line for the Uncovered Set is on top of the red line for GETCHA.}
 \label{urngraphs}
\end{figure}
 
\begin{figure} 
\begin{center}
\includegraphics[scale=0.5]{./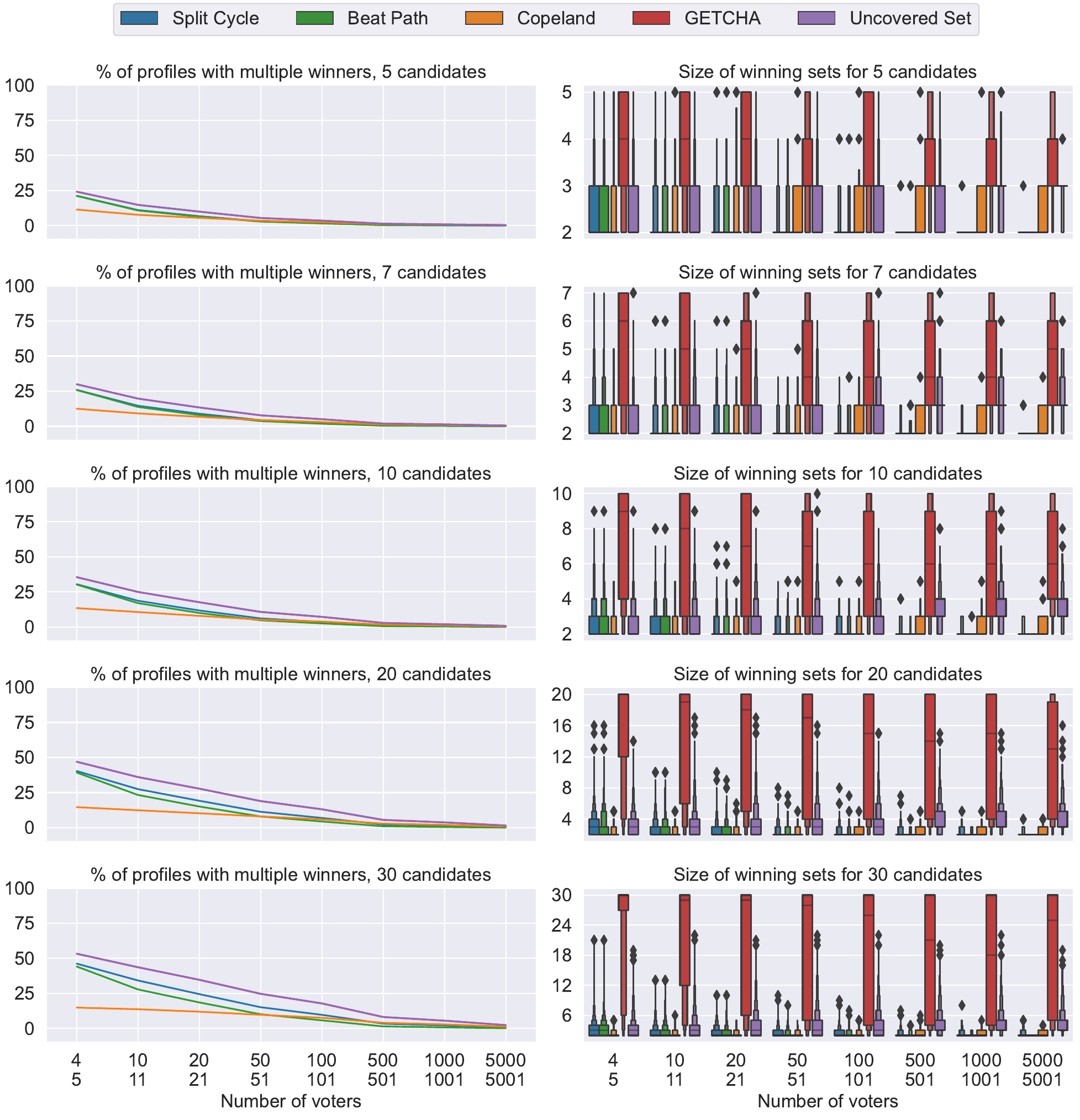}
\end{center}
\caption{The profiles were generated using the Mallows model with dispersion parameter $\phi$ chosen as described in the main text. On the left, the purple line for the Uncovered Set is on top of the red line for GETCHA.}
 \label{MallowsSingleRef}
\end{figure}

\begin{figure} 
\begin{center}
\includegraphics[scale=0.5]{./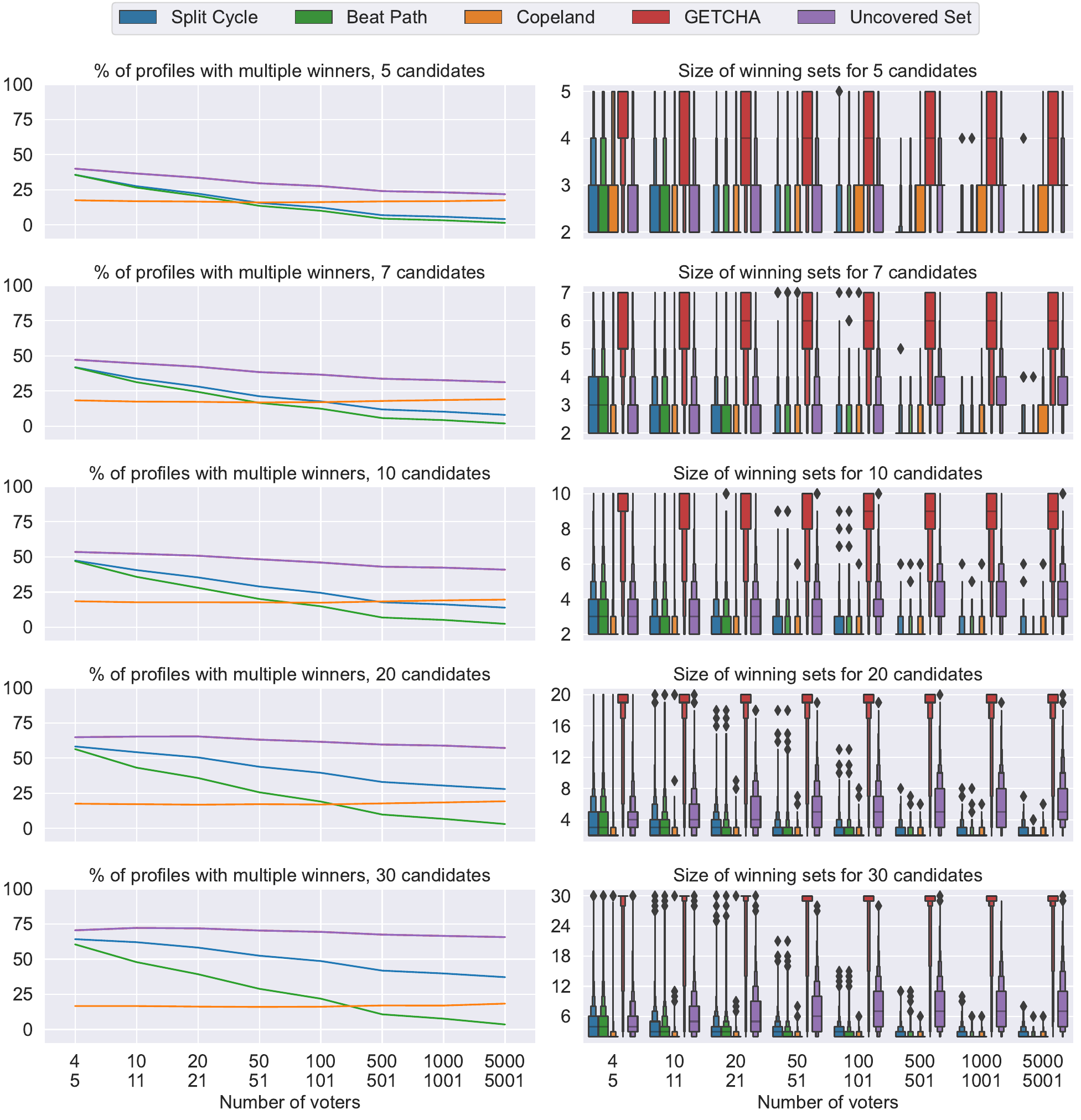}
\end{center}
\caption{The profiles were  generated using the Mallows model with {\em two} reference rankings, which are the reverse of each other. On the left, the purple line for the Uncovered Set is on top of the red line for GETCHA.}
 \label{MallowsDoubleRef}
\end{figure}

\begin{figure} 
\begin{center}
\includegraphics[scale=0.5]{./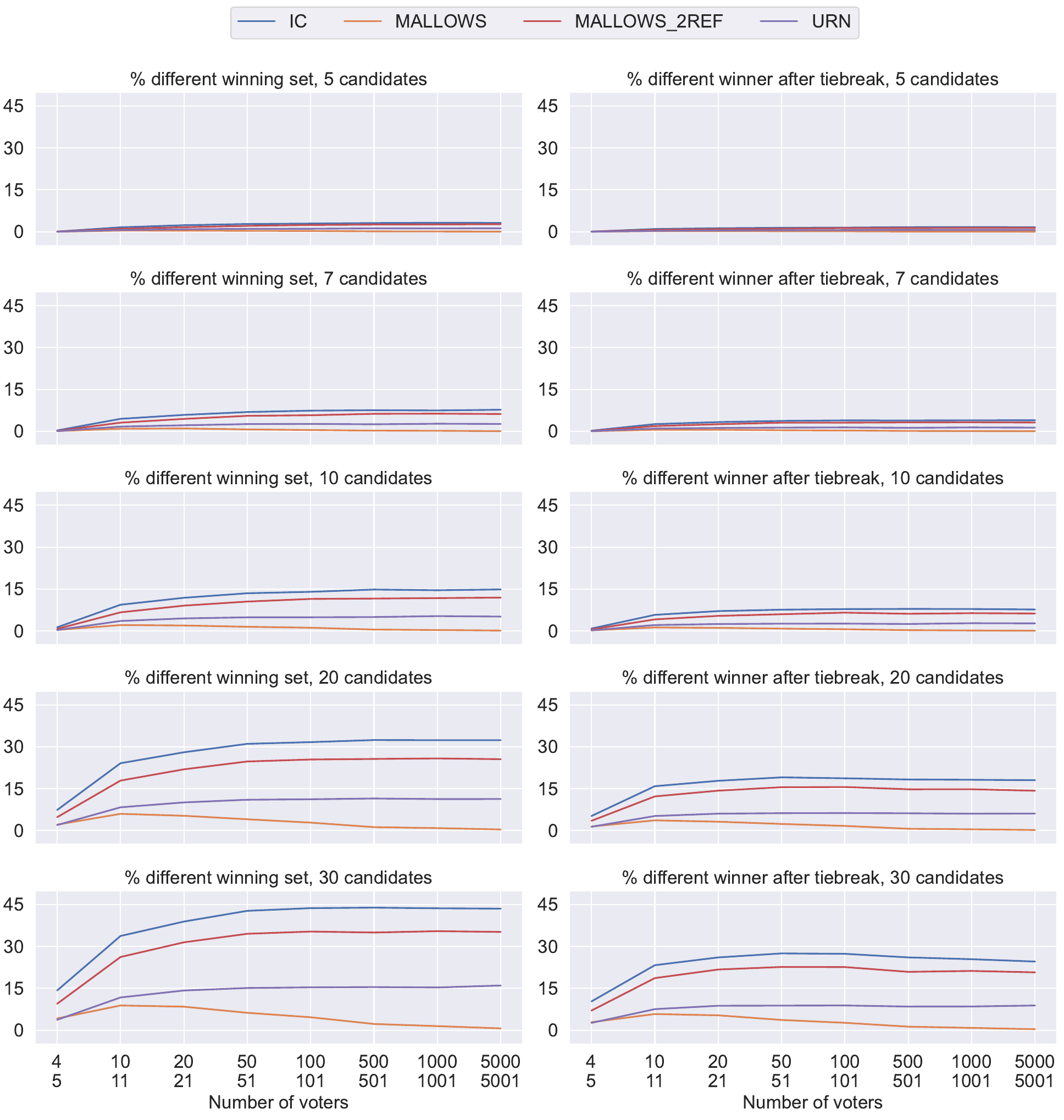}
\end{center}
\caption{The graphs in the left column show the percentage of profiles in which Split Cycle and Beat Path output different sets, sampling profiles according to five probability models. The graphs in the right column show the percentage of profiles such that (i) Split Cycle and Beat Path output different sets of winners and (ii) randomly selecting a Split Cycle winner and randomly selecting a Beat Path winner resulted in different ultimate winners.}
 \label{PercentDiffBPvsSC}
\end{figure}

\newpage

\onehalfspace

\bibliographystyle{plainnat}
\bibliography{splitcycle}

\end{document}